\newtheorem{theorem}{Theorem}
\newtheorem{prop}{Proposition}
\newtheorem{lemma}{Lemma}
\theoremstyle{remark}
\newtheorem{remark}{Remark}
\DeclareMathOperator*{\argmin}{arg\,min}
\newcommand{\blind}{1}
\newcommand{\beginsupplement}{%
	\setcounter{table}{0}
	\renewcommand{\thetable}{S\arabic{table}}%
	\setcounter{figure}{0}
	\renewcommand{\thefigure}{S\arabic{figure}}%
	\setcounter{section}{0}
	\renewcommand{\thesection}{S\arabic{section}}%
}
\begin{document}

\def\spacingset#1{\renewcommand{\baselinestretch}%
{#1}\small\normalsize} \spacingset{1}

\makeatletter
\def\singlespace{\def\baselinestretch{1}\@normalsize}
\def\endsinglespace{}
\def\boxit#1{\vbox{\hrule\hbox{\vrule\kern6pt
          \vbox{\kern6pt#1\kern6pt}\kern6pt\vrule}\hrule}}


\if1\blind
{
  \title{\bf  A general framework for circular local likelihood regression}
  \author{\normalsize Mar\'ia Alonso-Pena\thanks{
    	M. Alonso-Pena and R.M. Crujeiras acknowledge the support from project PID2020-116587GB-I00, funded by MCIN/AEI/10.13039/501100011033 and the Competitive Reference Groups 2021-2024 (ED431C 2021/24) from the Xunta de Galicia.  M. Alonso-Pena and I. Gijbels gratefully acknowledge support from project C16/20/002 of the Research Fund KU Leuven, Belgium. This work was completed while the first author was visiting the Department of Mathematics, KU Leuven, supported by the Xunta de Galicia through the grant ED481A-2019/139 from the Conseller\'ia de Educaci\'on, Universidade e Formaci\'on Profesional. The authors also acknowledge the Supercomputing Center of Galicia (CESGA) for the computational resources.}\hspace{.2cm}\\
    	\normalsize ORSTAT, KU Leuven and\\  \normalsize CITMAga, Universidade de Santiago de Compostela\\
    	\normalsize and \\
    	\normalsize Ir\`ene Gijbels\ \\
    	\normalsize Department of Mathematics and \\  \normalsize Leuven Statistics Research Center (LStat), KU Leuven\\
    	\normalsize and\\
    	\normalsize Rosa M. Crujeiras \\
    	\normalsize CITMAga, Universidade de Santiago de Compostela\\
}
\date{}
  \maketitle
} \fi

\if0\blind
{
  \bigskip
  \bigskip
  \bigskip
  \begin{center}
    {\LARGE\bf A general framework for circular local likelihood regression}
\end{center}
  \medskip
} \fi

\bigskip
\begin{abstract}
This paper presents a general framework for the estimation of regression models with circular covariates, where the conditional distribution of the response given the covariate can be specified through a parametric model. The estimation of a conditional characteristic is carried out nonparametrically, by maximizing the circular local likelihood, and the estimator is shown to be asymptotically normal.  The problem of selecting the smoothing parameter is also addressed, as well as bias and variance computation. The performance of the estimation method in practice is studied through an extensive simulation study, where we cover the cases of Gaussian, Bernoulli, Poisson and Gamma distributed responses. The generality of our approach is illustrated with several real-data examples from different fields.
\end{abstract}
\noindent%
{\it Keywords:}  Circular data, Data-driven smoothing selection, Local likelihood, Nonparametric regression
\vfill

\newpage
\spacingset{1.2} 

	\section{Introduction}\label{sec:intro}

	Classical statistical techniques are usually devised for modeling data taking values in  euclidean spaces. However, with modern measurement tools it is possible to obtain data that, for a complete analysis, require embedding in other spaces beyond the euclidean context \citep{Patrangenaru_Ellingson2016}. This is the case of circular data, which have received marked attention in recent years \citep{Jammalamadaka_SenGupta2001,Pewsey_etal2013}. See, for the more general case of hyperspherical or directional data, \citet{Mardia_Jupp2000} and \citet{Ley_Verdebout2017}. 
	
	An interesting problem involving circular data is to estimate a regression function when the covariate is of a circular nature. Several parametric models for this setting are described in \citet[][Ch. 8]{Jammalamadaka_SenGupta2001}. However, these parametric models are often either not flexible enough, or include a large number of parameters to estimate.  In order to overcome these problems, \citet{DiMarzioetal2009} proposed a kernel-type estimator of the regression function based on a local sine-polynomial, and its performance in practice was studied by \citet{Oliveira_etal_2013}. Generalizations for a hyperspherical covariate were proposed by \citet{DiMarzio_etal_2014} and \citet{Garcia-Portugues_etal_2016}. Regarding other regression scenarios involving circular predictors, \citet{DiMarzio2018} proposed a kernel-type logistic regression, focusing on classification purposes.
	
	\begin{figure}[!h]
		\centering
		\subfloat{
			\includegraphics[width=0.32\textwidth]{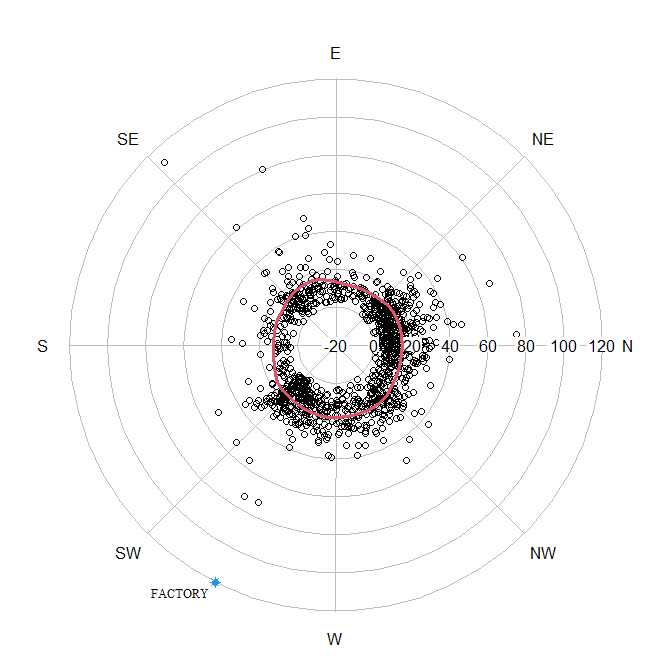}}
		\hfill
		\subfloat{
			\includegraphics[width=0.32\textwidth]{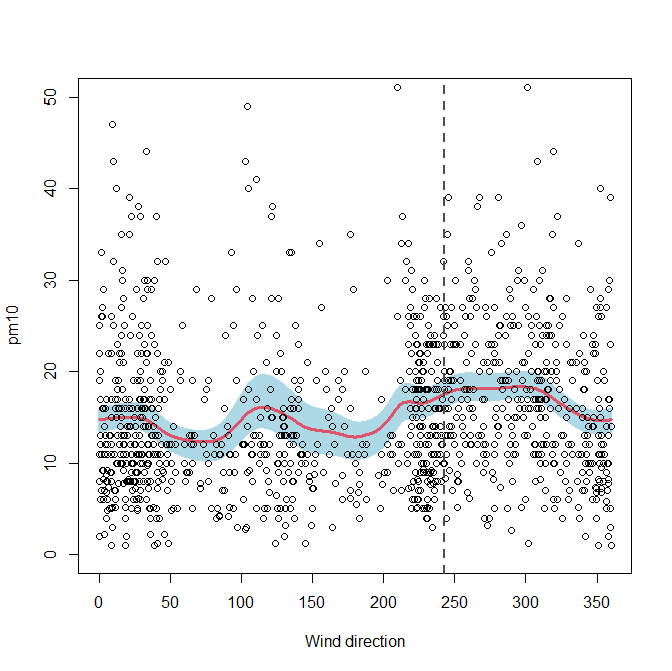}}
		\hfill
		\subfloat{
			\includegraphics[width=0.32\textwidth]{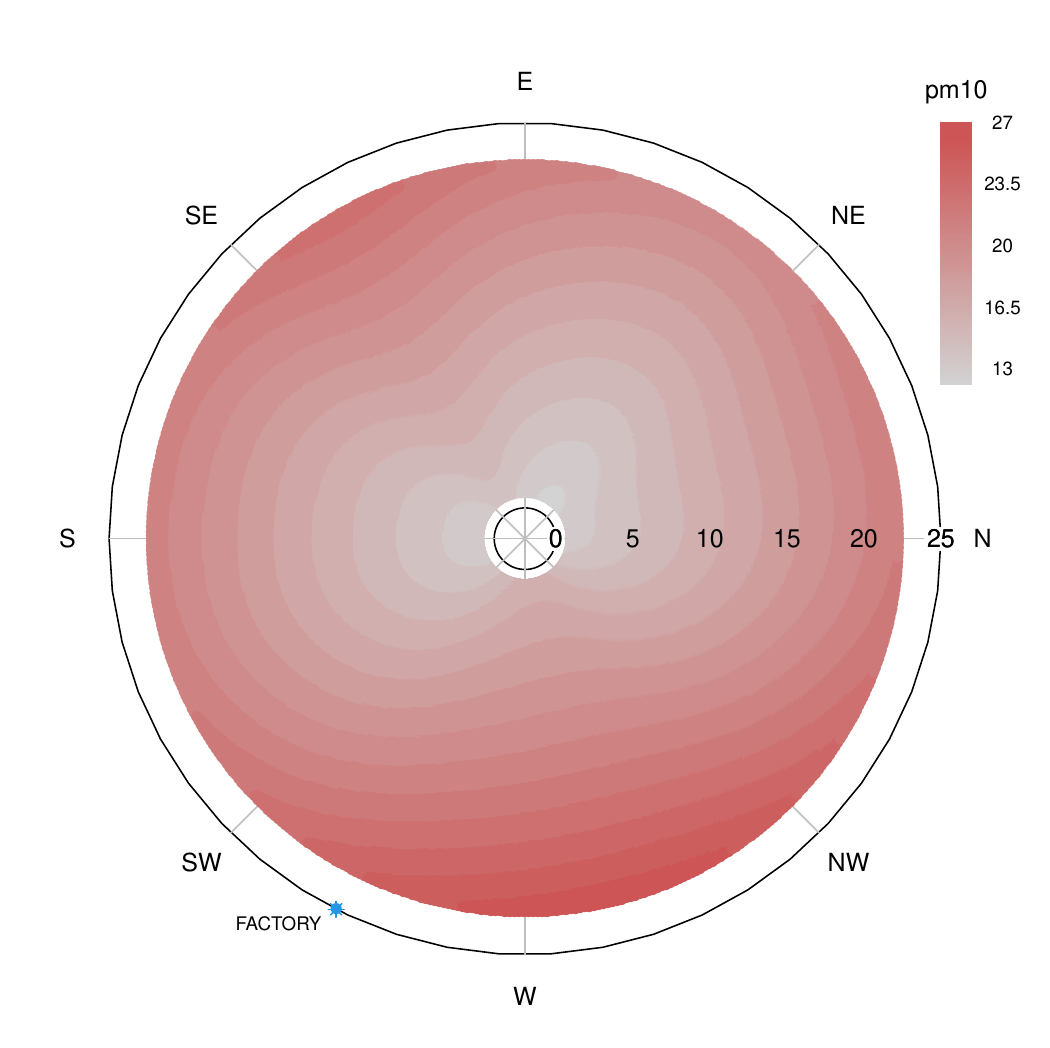}}
		
%
%
%
		
		\caption{Polar representation of pm10 concentration  against  wind direction, with estimated regression curve (left). Planar zoomed representation of pm10 against  wind direction, with estimated regression curve, 95\% point-wise confidence band (center). Estimation of the mean pm10 concentration as a function of the wind direction and wind speed (right). The star (left and right) or vertical dashed line (center) indicate the direction of the factory.  
		}
		\label{fig:datospm10_semipar}
	\end{figure}
	
	We present a broad methodology to nonparametrically estimate conditional characteristics involving a circular covariate and a general response variable, by maximizing the local  log-likelihood weighted by a circular kernel. The idea of maximizing the local kernel weighted log-likelihood for the estimation of regression curves was studied by \citet{Fan_etal_1998} for real-valued variables. Our approach allows to estimate curves representing a conditional characteristic of a general response (which can be discrete or continuous) given a circular covariate. For this, the conditional distribution has to be specified and, then, any conditional characteristic of interest can be estimated via maximum local likelihood, taking into account the periodic behaviour of the covariate. This method englobes, as particular cases, the proposal of \citet{DiMarzioetal2009} when using a normal likelihood and the kernel logistic method of \citet{DiMarzio2018} if the conditional density is set to a Bernoulli distribution. Additionally, many other types of regression can be performed, such as nonparametric Poisson, binomial or gamma regression. 
	
	The asymptotic properties of the circular kernel log-likelihood estimator are explored in this paper, and accurate approximations of the bias and variance of the estimator are derived. These allow the construction of inferential tools, such as confidence intervals. In addition, an automatic criterion for selecting the smoothing parameter is proposed. All the results derived in this manuscript are general in the way that they are valid for a large class of regression settings and for the estimation of general conditional characteristics. In addition, although for Gaussian and Bernoulli particular cases the estimator coincides with estimators already proposed in the literature, the present work sheds more light on these topics, providing asymptotic normality results, approximations for bias and variance and a reliable criterion for selecting the smoothing parameter.

	As an example of the broad applicability of the present methodology, the consideration of a gamma conditional distribution allows us to investigate the relationship between  pm10 particle concentration and wind direction in the city of Pontevedra, Spain. The dataset is represented in the left panel of Figure~\ref{fig:datospm10_semipar} and more details about it can be found in Section~\ref{sec:data}. In this example, it is of special interest to ascertain if the concentration is higher for wind directions around 250 degrees, direction in which there is a possibly contaminating factory. The generality of the proposed methodology can be extended to more complex scenarios, such as partially linear models involving both circular and real-valued covariates. This then allows to broaden our study of the pm10 concentration by including the wind speed as a covariate, as shown in the right panel of Figure~\ref{fig:datospm10_semipar}.

	The organization of the manuscript is as follows: Section~\ref{sec:estimation} presents the general local maximum likelihood estimation procedure for circular covariates, presenting some important particular cases and exploring its asymptotic properties. Section~\ref{seq:bias_variance} shows how to compute both the bias and variance of the estimators. The selection of the smoothing parameter is discussed in Section~\ref{sec:bw}, while the empirical performance of the estimators is studied via simulations in Section~\ref{sec:simus} for several models, including continuous and discrete responses. Applications to real datasets are shown in Section~\ref{sec:data} and extensions of the method to include more covariates with different nature and to  data defined on the hypersphere are discussed in Section~\ref{sec:extensions}. Finally, a  discussion is provided  in Section~\ref{sec:discussion}.

	\section{Local likelihood estimation for circular regression}\label{sec:estimation}
	
	Let $\Theta$ be a continuous circular variable defined on $\mathbb{T}=[0,2\pi)$ and $Y$ a random variable which can be either a discrete or a real-valued continuous variable. Given a random bivariate sample $\{(\Theta_i,Y_i)\}_{i=1}^n$, we are interested in estimating a generic unknown function $g$, which may represent, for example, the conditional mean regression function of $Y$ given $\Theta=\theta_0$ or a transformed conditional mean function. For a given $\theta_0\in \mathbb{T}$, we approximate the function of interest, $g$, by employing the Taylor-like expansion introduced by \citet{DiMarzioetal2009} when dealing with kernel regression involving circular predictors. For data points $\Theta_i$ in a neighborhood of $\theta_0$, and assuming that the target function $g$ is at least $p$ times continuously differentiable, we have
	\begin{equation}
	g(\Theta_i)\approx g(\theta_0) + g'(\theta_0)\sin(\Theta_i-\theta_0) + ... + \frac{g^{(p)}(\theta_0)}{p!} \sin^p(\Theta_i-\theta_0), \label{eq:sine_poly}
	\end{equation}
	where $g^{(p)}$ denotes the $p$th derivative of $g$. This approximation can be expressed as
	\begin{equation}
	g(\Theta_i)\approx \bm{\Theta}_i^{\top}\bm{\beta},
	\label{eq:approx_g}
	\end{equation}
	where $\bm{\Theta}_i=\left(1,	\sin(\Theta_i-\theta_0),\ldots, \sin^p(\Theta_i-\theta_0)\right)^{\top} $ and $\bm{\beta}=(\beta_0,\beta_1,\ldots,\beta_p)^{\top}$,
	with $\beta_{\nu}=g^{(\nu)}(\theta)/\nu!$, for $\nu=0,1,...,p$. Now, for each observation  $(\Theta_i,Y_i)$, let $l(g(\Theta_i),Y_i)$ be the log-likelihood function evaluated at $(g(\Theta_i),Y_i)$. If $\Theta_i$ belongs to a neighbourhood of $\theta_0$, the approximation in \eqref{eq:approx_g} yields that the contribution of $(\Theta_i,Y_i)$ to the log-likelihood is $l(\bm{\Theta}_i^{\top}\bm{\beta},Y_i)$, weighted by $K_\kappa(\Theta_i-\theta_0)$, where $K_\kappa$ is a circular kernel function with concentration parameter $\kappa$ (which acts as the smoothing parameter), tending to infinity as $n\rightarrow\infty$. A widely used circular kernel is the von Mises density, $K_\kappa(\theta)=\exp\{\kappa \cos\theta\}/\left(2\pi I_0(\kappa)\right)$, with $I_0(\kappa)$ the modified Bessel function of the first kind and order zero.  Consequently, we can define the local circular kernel weighted log-likelihood as
	\begin{equation}
	\mathcal{L}_p(\bm{\beta};\kappa,\theta_0)=\sum_{i=1}^{n}l(\bm{\Theta}^{\top}_i\bm{\beta},Y_i)K_\kappa(\Theta_i-\theta_0),
	\label{eq:local_loglikelihood}
	\end{equation}
	where the subscript $p$ denotes the degree of the trigonometric polynomial used for the approximation in (\ref{eq:sine_poly}). By maximizing the local log-likelihood in (\ref{eq:local_loglikelihood}) with respect to $\bm{\beta}$ we obtain the estimations of the local parameters, $\hat{\bm{\beta}}=(\hat{\beta}_0,...,\hat{\beta}_p)^{\top}$. Then, the estimators of the target function $g$ and  its derivatives, at the point $\theta_0$, are given by
	$ \hat{g}^{(\nu)}(\theta_0)=\nu!\hat{\beta}_{\nu},$ for $ \nu=0,...,p,$
	where $\nu$ represents the order of the derivative. In practice, an adequate choice of the order of the sine-polynomial to estimate $g$ is $p=1$, leading to a local-linear type estimator. Note that the maximization of (\ref{eq:local_loglikelihood}) may not have an explicit solution in some cases, in which numerical methods must be employed in order to obtain the estimators.
	
	The methodology proposed in this section is a general approach which allows to obtain local sine-polynomial estimators for a broad class of regression contexts involving a circular covariate. Apart from including the two particular cases already studied in the literature (normal and Bernoulli), it allows to estimate the transformed regression function in a large class of settings, for example, when having a Poisson or gamma likelihood. In Section~\ref{subsec:particular_cases}, we will shortly describe two particular cases: the normal and the Poisson distributions. Details on the particular case of the Bernoulli distribution, which was already studied in the context of classification by \citet{DiMarzio2018}, are given in the Supplementary Material. In addition, in Section~\ref{subsec:asymp_norm}, the asymptotic properties of the estimator in case the conditional distribution is a member of the exponential family are derived.

	\subsection{Particular cases: normal \& Poisson distributions}\label{subsec:particular_cases}

	As a particular case, consider the scenario where $g$ is the regression function in the model
	\begin{equation}
	Y=g(\Theta)+\sigma(\Theta)\varepsilon, \quad \text{where} \ \mathbb{E}(\varepsilon|\Theta=\theta_0)=0, \ \mathbb{E}(\varepsilon^2|\Theta=\theta_0)=1, \ 
	\label{eq:regression_LeastSquares}
	\end{equation}
	which implies $\mbox{Var}(\varepsilon|\Theta=\theta_0)=1$. If the errors are normally distributed, we have that $[Y|\Theta=\theta_0]\sim N(g(\theta_0),\sigma^2(\theta_0))$. Consequently, the local circular kernel weighted log-likelihood for a fixed $\theta_0\in \mathbb{T}$, $\mathcal{L}_p(\bm{\beta};\kappa,\theta_0)$, is given by
	\[\begin{aligned}
	-\log(\sigma(\theta)\sqrt{2\pi})\sum_{i=1}^{n}K_\kappa(\Theta_i-\theta_0) - \frac{1}{2\sigma^2(\theta)}\sum_{i=1}^{n}\bigg ( Y_i-\sum_{j=0}^{p}\beta_j\sin^j(\Theta_i-\theta_0) \bigg )^2K_\kappa(\Theta_i-\theta_0).
	\end{aligned}\]
	Maximizing the previous expression with respect to $\bm{\beta}$ is equivalent to minimizing 
	\begin{equation*}
	\sum_{i=1}^{n}\bigg ( Y_i-\sum_{j=0}^{p}\beta_j\sin^j(\Theta_i-\theta_0) \bigg )^2K_\kappa(\Theta_i-\theta_0),
	\label{eq:normal_likelihood}
	\end{equation*}
	which corresponds to the local-polynomial least-squares problem studied by \citet{DiMarzioetal2009} and  by \citet{Oliveira_etal_2013}. Note that, in this case, the only proposal available in practice for the selection of the smoothing parameter is a cross-validation criterion.

	Another interesting case arises when $Y$ is a count variable, following a Poisson distribution where the mean parameter depends on the value of $\Theta$. We will consider $g$ as the logarithm of the mean function,
	$ g(\theta_0)=\log[\mathbb{E}(Y|\Theta=\theta_0)]. $
	Therefore, we have  $ \mathbb{E}(Y|\Theta=\theta_0)=\exp\{g(\theta_0)\}. $ The local log-likelihood is then
	$$
	\begin{aligned}
	\mathcal{L}_p(\bm{\beta};\kappa,\theta_0)= \sum_{i=1}^{n}\left( Y_i\bm{\Theta}^{\top}_i\bm{\beta} - \exp\left\{\bm{\Theta}^{\top}_i\bm{\beta}\right\} - \log(Y_i!) \right)K_\kappa(\Theta_i-\theta_0).
	\end{aligned} $$
	Since the last term does not depend on $\bm{\beta}$, the maximization of the previous expression is equivalent to the maximization of
	$$ \sum_{i=1}^{n}\left( Y_i\sum_{j=0}^{p}\beta_j\sin^j(\Theta_i-\theta_0) - \exp\left\{\sum_{j=0}^{p}\beta_j\sin^j(\Theta_i-\theta_0)\right\}  \right)K_\kappa(\Theta_i-\theta_0). $$
	
	\subsection{Asymptotic properties in the exponential family case}\label{subsec:asymp_norm}
	
	In the particular cases described above, the conditional densities belong to the exponential family, which is definitely a very important setting. Thus, in this section we derive some asymptotic properties of the circular local likelihood estimator when the conditional distribution is part of the exponential family. The generalization of these results to a broader setting is discussed briefly at the end of the section.

	We  assume that the conditional distribution belongs to the one-parameter exponential family and that the function of interest $g(\theta)$ is the natural parameter. Then, the contribution to the local likelihood of each observation is given by
	\begin{equation}
		l[g(\Theta_i),Y_i] = \psi^{-1}\{Y_ig(\Theta_i)-b[g(\Theta_i)]\}+c(Y_i,\psi), \label{eq:exp_loglik}
	\end{equation}
	where $b$ and $c$ are known functions and $\psi$ is assumed to be a known parameter. We will also denote $l^{(q)}(a,b)=\frac{\partial^q}{\partial a^q}l(a,b)$ and $\rho(\theta)=l^{(2)}[g(\theta),\mu(\theta)]$ where, because of the first Barlett identity,  $\mu(\theta)=\mathbb{E}[Y|\Theta=\theta]=b'[g(\theta)]$. In addition, we have $\mbox{Var}[Y|\Theta=\theta]=\psi b''[g(\theta)]$. The marginal density of $\Theta$ will be denoted by $f$. 
	
	In order to study the properties of the estimator, we restrict to the class of kernels
	\begin{equation}
		K_\kappa(\theta)=c_\kappa(K)K[\kappa(1-\cos\theta)], \quad \text{where} \ K:[0,\infty)\rightarrow [0,\infty) \ \text{with}
		\label{eq:condition_kernel}
	\end{equation}
	\begin{equation}
		\int_{0}^{\infty}r^{\frac{j-1}{2}}K^l(r)dr<\infty \quad j\in\mathbb{N} \ \mbox{and} \ l=1,2,4.
		\label{eq:condition_K}
	\end{equation}
	The factor $c_\kappa(K)$ is a normalization constant given by
	\begin{equation}
		c_\kappa(K)^{-1}=\int_{0}^{2\pi}K[\kappa(1-\cos\theta)]d\theta = \kappa^{-1/2}\lambda_\kappa(K) ,  
		\label{eq:cLambdEdu}
	\end{equation}
	where $\lambda_\kappa(K)=2\int_{0}^{2\kappa}r^{-\frac{1}{2}}\left(2-\dfrac{r}{\kappa}\right)^{-\frac{1}{2}}K(r)dr$. Recall that $\kappa$ is a sequence depending on $n$ and $\kappa\rightarrow\infty$ as $n\rightarrow\infty$. Thus, for a large $\kappa$ we have $	c_\kappa(K)^{-1} \sim \kappa^{-1/2}\lambda(K)$, with $\lambda(K)=2^{\frac{1}{2}}\int_{0}^{\infty}r^{-\frac{1}{2}}K(r)dr$.  Condition (\ref{eq:condition_kernel}) is usually assumed in the hyperspherical setting \citep{Hall_etal_1987,Bai_etal_1988,GarciaPortugues_etal2013}. The von Mises kernel is an example of a kernel satisfying (\ref{eq:condition_kernel}). In this case, the normalization constant is given by $ c_\kappa(K)=\exp\{\kappa\}/(2\pi I_0(\kappa))$ and $K(r)=\exp\{-r\}$. 	

	Furthermore, let  $\bm{A}$ be a $(p+1)\times(p+1)$ matrix with $(i,j)$th element given by  
	$$(\bm{A})_{ij}=\dfrac{2^{\frac{i+j-1}{2}} \rho(\theta_0)f(\theta_0)}{(i-1)!(j-1)!}{b}^*_{i+j-2}(K), \quad  {b}_{j}^*(K)=\left\{\begin{matrix}
		0 & \text{if}  \ j \ \text{is odd},\\
		\int_0^{\infty}r^{\frac{j-1}{2}}K(r)dr & \text{if} \ j \ \text{is even};
	\end{matrix}\right.$$
	$\bm{C}$ a  $(p+1)\times(p+1)$ matrix  with $(i,j)$th element given by 
	$$   (\bm{C})_{ij}=\dfrac{2^{\frac{i+j-1}{2}} b''[g(\theta_0)]f(\theta_0)}{(i-1)!(j-1)!\psi} {d}_{i+j-2}^*(K), \quad  {d}_{j}^*(K)=\left\{\begin{matrix}
		0 & \text{if}  \ j \ \text{is odd},\\
		\int_0^{\infty}r^{\frac{j-1}{2}}K^2(r)dr & \text{if} \ j \ \text{is even};
	\end{matrix}\right. $$
	and $\bm{q}$ a vector of length $(p+1)$ with $j$th element given by
	$$ 2^{\frac{p+j+1}{2}}n^{\frac{1}{2}}\kappa^{-\frac{3}{4}} \frac{\kappa^{-\frac{p}{2}}}{(j-1)!}\rho(\theta_0)f(\theta_0)\frac{g^{(p+1)}(\theta_0)}{(p+1)!}{b}^*_{p+j}(K).$$
	The normalized estimator, given by 
	$$\widehat{\bm{\beta}}_N=n^{\frac{1}{2}}\kappa^{-\frac{1}{4}}\left( \hat{\beta}_0-g(\theta_0), \kappa^{-\frac{1}{2}}[\hat{\beta}_1-g'(\theta_0)], \ldots, \kappa^{-\frac{p}{2}}[p!\hat{\beta}_p-g^{(p)}(\theta_0)] \right)^{\top}
	$$
	will  be considered. Theorem~\ref{prop:asymp_N} establishes the asymptotic normality of the estimator.
	
	\begin{theorem}\label{prop:asymp_N}
		Assume that the  function $l[g(\Theta_i),Y_i]$ is given by \eqref{eq:exp_loglik} and that $\kappa~\rightarrow~ \infty$, $n\kappa^{-\frac{1}{2}}\rightarrow~\infty$  as $n \rightarrow \infty$. In addition, assume that the following conditions hold:
		\begin{itemize}[noitemsep,topsep=0pt]
			\item[C1.] $f(\theta_0)>0$ and the function $f$ is uniformly bounded,
			\item[C2.] the functions $g^{(p+2)}$,  $f'$ and $b^{(3)}$ exist and are continuous,
			\item[C3.] the matrix $\bm{A}$ is invertible,
			\item[C4.] the function $K$ in \eqref{eq:condition_kernel} has exponential decay: $K(r)\leq B e^{-\alpha r}$, with $B,\alpha>0$.
		\end{itemize}
		Then, as $n \rightarrow \infty$,
		\vspace{-0.65cm}
		\begin{center}
			$\left(\widehat{\bm{\beta}}_N -  \bm{A}^{-1}\bm{q}[1+o(1)]\right)\stackrel{D}{\rightarrow}N(0,\bm{A}^{-1}\bm{C}\bm{A}^{-1}).$
		\end{center}
	\end{theorem}
	
	The proof of Theorem~\ref{prop:asymp_N} is given in Section S2.1 of the Supplementary Material.
	
	\begin{remark}
		The asymptotic normality of the estimator can be generalized for conditional densities that are not members of the exponential family by considering suitable regularity assumptions on the log-likelihood function. In the manuscript, however, the result is provided explicitly for the exponential family case given that, on the one hand, this is a very important family regarding applications and, on the other hand, the explicit proof of a more general result would involve more tedious expressions and would be more difficult to follow. Indications of which changes should be made in order to have a more general result are given in Section S2.1 of the Supplementary Material.
	\end{remark}

	Theorem~\ref{prop:asymp_N} gives expressions of the bias and variance of the estimator in the  conditional exponential family case. Note that these expressions, however, depend on unknown quantities. Thus, the next section gives finite-sample approximations of the bias and variance which can be computed in practice and do not rely on the exponential family assumption.

	\section{Bias and variance of the estimator}\label{seq:bias_variance}
	For many inferential tasks it is important to compute the bias and variance of the estimator $\hat{\bm{\beta}}$, obtained after maximizing (\ref{eq:local_loglikelihood}). Estimating these quantities will also be useful in order to select a smoothing parameter. In this section, we follow the approach of \citet{Fan_etal_1998} and give approximations of the bias and variance of the estimators presented in Section~\ref{sec:estimation}.

	\subsection{Bias of the estimator}\label{subseq:bias}
	The bias of $\hat{\bm{\beta}}$ comes from the approximation of the target function by the sine-polynomial in (\ref{eq:sine_poly}). Consequently, the bias can be approximated by computing the difference of two maximum local likelihood fits with different accuracies. 
	Denote the error approximation at $\Theta_i$, resulting from (\ref{eq:sine_poly}), by $\epsilon(\Theta_i)=g(\Theta_i)-\sum_{\nu=0}^{p}\frac{g^{(\nu)}(\theta_0)}{\nu!}\sin^{\nu}(\Theta_i-\theta_0)$.
	Assume the existence of the $(p+a+1)$th derivative of the function $g$ at the point $\theta_0$ for some $a\in \mathbb{N}$. Then, the error term can be approximated by a further sine-polynomial expansion,
	\begin{equation}
	\epsilon(\Theta_i)\approx \beta_{p+1}\sin^{p+1}(\Theta_i-\theta_0)+...+\beta_{p+a}\sin^{p+a}(\Theta_i-\theta_0)=\epsilon_i.
	\label{eq:ri}
	\end{equation}
	The choice of $a$ will affect how well the bias is estimated, but a large value of $a$ will lead to a higher computational time. For simplicity, in practice we restrict to the choice $a=2$, which gives a good performance when estimating the bias with a feasible computational time. Suppose that the quantities $\epsilon_i, \ i=1,...,n$ are known. We could approximate the local log-likelihood in a more precise way as
	\begin{equation}
	\mathcal{L}_p^*(\bm{\beta};\kappa,\theta_0)=\sum_{i=1}^{n}l(\bm{\Theta}_i^{\top}\bm{\beta}+\epsilon_i,Y_i)K_\kappa(\Theta_i-\theta_0).
	\label{eq:L_p_ast}
	\end{equation}
	Denote by $\hat{\bm{\beta}}^*$ the maximizer of $\mathcal{L}_p^*(\bm{\beta};\kappa,\theta_0)$. The bias of $\hat{\bm{\beta}}$ can be estimated by $\hat{\bm{\beta}}-\hat{\bm{\beta}}^*$. However, it would not be necessary to compute $\hat{\bm{\beta}}-\hat{\bm{\beta}}^*$, as we will see below. Let
	$ \mathcal{L}_p^{*'}(\bm{\beta};\kappa,\theta_0)=\frac{\partial }{\partial \bm{\beta}}\mathcal{L}_p^*(\bm{\beta};\kappa,\theta_0), \quad  \mathcal{L}_p^{*''}(\bm{\beta};\kappa,\theta_0)=\frac{\partial^2 }{\partial \bm{\beta}^2}\mathcal{L}_p^*(\bm{\beta};\kappa,\theta_0)$
	be, respectively, the gradient vector and the Hessian matrix of $\mathcal{L}_p^*(\bm{\beta};\kappa,\theta_0)$. Since $\hat{\bm{\beta}}^*$ is the maximizer of $\mathcal{L}_p^*(\bm{\beta};\kappa,\theta_0)$, we have $\mathcal{L}_p^{*'}(\hat{\bm{\beta}}^*;\kappa,\theta_0)=0$ and, hence, by using a Taylor expansion, it holds that
	$$0=\mathcal{L}_p^{*'}(\hat{\bm{\beta}}^*;\kappa,\theta_0)\approx \mathcal{L}_p^{*'}(\hat{\bm{\beta}};\kappa,\theta_0) + \mathcal{L}_p^{*''}(\hat{\bm{\beta}};\kappa,\theta_0)(\hat{\bm{\beta}}^*-\hat{\bm{\beta}}). $$ 
	Consequently, we obtain the approximated bias vector of $\hat{\bm{\beta}}=(\beta_0,\ldots,\beta_{p})^{\top}$, defined as
	\begin{equation}
	\hat{\bm{b}}_p(\theta_0;\kappa)=\left[\mathcal{L}_p^{*''}(\hat{\bm{\beta}};\kappa,\theta_0)\right]^{-1}\mathcal{L}_p^{*'}(\hat{\bm{\beta}};\kappa,\theta_0). 
	\label{eq:bias_approx}
	\end{equation}

	Note that the bias in (\ref{eq:bias_approx}) cannot be computed in practice, since it depends on the unknown quantities $\epsilon_1,\ldots,\epsilon_n$. In order to obtain a data-driven approximation of the bias, we proceed as follows. First, a pilot concentration parameter, namely $\kappa^*$, is selected. This pilot concentration is used to fit a sine-polynomial of degree $(p+a)$, obtaining estimates $ \hat{\bm{\beta}}^{(p+a)}=(\hat{\beta}_0,\ldots,\hat{\beta}_{p+a})^{\top}$.
	Second, these quantities are substituted into (\ref{eq:ri}) and, thus, we obtain the estimators of $\epsilon_1,\ldots,\epsilon_n$, denoted by $\hat{\epsilon}_1,\ldots,\hat{\epsilon}_n$. Plugging $\hat{\epsilon}_1,\ldots,\hat{\epsilon}_n$ into (\ref{eq:L_p_ast}) we can obtain the estimated bias vector from (\ref{eq:bias_approx}), namely $\hat{\bm{b}}^e_p(\theta_0;\kappa)$. Then, recalling that $\beta_{\nu}=g^{(\nu)}(\theta)/\nu!$, the estimated bias of $\hat{g}^{(\nu)}(\theta_0)$ is given by
	\begin{equation}
	\hat{B}_{p,\nu}(\theta_0;\kappa)=\nu!\bm{e}_{\nu+1}^{\top}\hat{\bm{b}}^e_p(\theta_0;\kappa),
	\label{eq:bias_estimation}
	\end{equation}
	where $\bm{e}_{\nu+1}$ denotes the vector with all entries equal to zero except the one in the $(\nu+1)$th position. The choice of the pilot concentration, $\kappa^*$, will be discussed in Section~\ref{sec:bw}.

	\subsection{Variance of the estimator}\label{subsec:variance}
	
	Now we proceed to obtain the variance of the estimated vector of local parameters. Since our estimate $\hat{\bm{\beta}}$ is the maximizer of the circular local likelihood, we have $\mathcal{L}_p'(\hat{\bm{\beta}};\kappa,\theta_0)=0$, and with a Taylor expansion we obtain
	$ 0=\mathcal{L}_p'(\hat{\bm{\beta}};\kappa,\theta_0)\approx  \mathcal{L}_p'(\bm{\beta};\kappa,\theta_0) + \mathcal{L}_p''(\bm{\beta};\kappa,\theta_0)(\hat{\bm{\beta}}-\bm{\beta}).  $ Therefore, 
	$ \hat{\bm{\beta}}-\bm{\beta}\approx -\left[\mathcal{L}_p''(\bm{\beta};\kappa,\theta_0)\right]^{-1}\mathcal{L}_p'(\bm{\beta};\kappa,\theta_0)$.
	Now, notice that
	$$ \begin{aligned}
	& \mbox{Var}[\hat{\bm{\beta}}|\Theta_1,\ldots, \Theta_n] = \mbox{Var}[\hat{\bm{\beta}}-\bm{\beta}|\Theta_1,\ldots, \Theta_n] \approx \mbox{Var}\left[ -\left[\mathcal{L}_p''(\bm{\beta};\kappa,\theta_0)\right]^{-1}\mathcal{L}_p'(\bm{\beta};\kappa,\theta_0)|\Theta_1,\ldots, \Theta_n \right] \\
	& \approx \left[ \mathcal{L}_p''(\bm{\beta};\kappa,\theta_0) \right]^{-1}\mbox{Var}\left[\mathcal{L}_p'(\bm{\beta};\kappa,\theta_0)|\Theta_1,\ldots, \Theta_n\right]\left[ \mathcal{L}_p''(\bm{\beta};\kappa,\theta_0) \right]^{-1}.
	\end{aligned}$$
	The matrix $ \mathcal{L}_p''(\bm{\beta};\kappa,\theta_0) $ can be estimated by $ \mathcal{L}_p''(\hat{\bm{\beta}};\kappa,\theta_0)$, but  $\mbox{Var}\left[\mathcal{L}_p'(\bm{\beta};\kappa,\theta_0)|\Theta_1,\ldots, \Theta_n\right]$ is  unknown and it is necessary to estimate it.	From the definition of the circular local log-likelihood in (\ref{eq:local_loglikelihood}), we have
	$\mathcal{L}_p'(\bm{\beta};\kappa,\theta_0)=\sum_{i=1}^{n}l'(\bm{\Theta}_i^{\top}\bm{\beta},Y_i)\bm{\Theta}_iK_\kappa(\Theta_i-\theta_0)$
	and, consequently,
	\begin{equation}
	\mbox{Var}\left[\mathcal{L}_p'(\bm{\beta};\kappa,\theta_0)|\Theta_1,\ldots,\Theta_n\right] = \sum_{i=1}^{n}\mbox{Var}\left[l'(\bm{\Theta}_i^{\top}\bm{\beta},Y_i)|\Theta_1,\ldots,\Theta_n\right]\bm{\Theta}_i\bm{\Theta}_i^{\top} K_\kappa^2(\Theta_i-\theta_0). 
	\label{eq:variance}
	\end{equation}
	Now, because of (\ref{eq:approx_g}), the expression in (\ref{eq:variance}) can be approximated by
	$$ \mbox{Var}\left[l'(g(\theta_0),Y)|\Theta=\theta_0\right]\sum_{i=1}^{n}\bm{\Theta}_i \bm{\Theta}_i^{\top}K_\kappa^2(\Theta_i-\theta_0) = \mbox{Var}\left[l'(g(\theta_0),Y)|\Theta=\theta_0\right]\bm{\Gamma}_n, $$
	where $\bm{\Gamma}_n$ is a $(p+1)\times(p+1)$ matrix with $(i,j)\mbox{th}$ element given by $\gamma_{n,i+j-2}$ and 
	$\gamma_{n,j}=\sum_{i=1}^{n}\sin^j(\Theta_i-\theta_0)K_\kappa^2(\Theta_i-\theta_0)$.
	Then, we have
	$$ \mbox{Var}[\hat{\bm{\beta}}|\Theta_1,\ldots, \Theta_n] \approx \mbox{Var}\left[l'(g(\theta_0),Y)| \Theta=\theta_0 \right]\left[ \mathcal{L}_p''(\hat{\bm{\beta}};\kappa,\theta_0) \right]^{-1}\bm{\Gamma}_n\left[ \mathcal{L}_p''(\hat{\bm{\beta}};\kappa,\theta_0) \right]^{-1}=\Xi_p(\theta_0;\kappa).$$
	For the estimation of $\mbox{Var}\left[l'(g(\theta_0),Y)| \Theta=\theta_0 \right]$, we distinguish two cases:
	\begin{itemize}[noitemsep,topsep=0pt]
		\item[A.]  $\mbox{Var}\left[l'(g(\theta_0),Y)| \Theta=\theta_0 \right]=v[g(\theta_0)]$ for some known function $v$, as it happens for the Bernoulli or Poisson likelihoods. In this case we estimate it as $v[\hat{g}(\theta_0)]$.
		
		\item[B.] When the form in A is not available, we use a pilot estimator $\hat{\bm{\beta}}^{(p+a)}$ obtained by fitting a  local polynomial of degree $p+a$, with a pilot concentration $\kappa^*$, as in the bias calculations. Then,  $\mbox{Var}\left[l'(g(\theta_0),Y)| \Theta=\theta_0 \right]$ is estimated by 
		\begin{equation*}
		\dfrac{\sum_{i=1}^{n}[l'(\bm{\tilde{\Theta}}_{i}^{\top}\hat{\bm{\beta}}^{(p+a)},Y_i)]^2K_{\kappa^*}(\Theta_i-\theta_0)}{\sum_{i=1}^{n}K_{\kappa^*}(\Theta_i-\theta_0)}, 
		\label{eq:estim_varB}
		\end{equation*}
		with $\bm{\tilde{\Theta}}_{i}^{\top}=\left(1,\sin(\Theta_i-\theta_0),\ldots,\sin^{p+a}(\Theta_i-\theta_0)\right)$.
	\end{itemize}
	We will use $\hat{V}_{p,\nu}(\theta_0;\kappa)$ to denote the variance of $\hat{g}^{(\nu)}(\theta_0)$ constructed with a sine-polynomial of degree $p$, \textit{i.e.},
	\begin{equation}
	\hat{V}_{p,\nu}(\theta_0;\kappa)=\nu!^2\bm{e}_{\nu+1}^{\top}\Xi_p(\theta_0;\kappa)\bm{e}_{\nu+1}. 
	\label{eq:variance_estimation}
	\end{equation}

	\section{Selection of the smoothing parameter}\label{sec:bw}
	As in all kernel methods, the selection of the smoothing parameter is of great importance, since it substantially affects the performance of the estimator. However, when employing circular kernels, the role of the smoothing (concentration) parameter is opposite to the role of the bandwidth when using \textit{linear} kernels. When the concentration $\kappa$ is very small, the estimation procedure leads to a global fit of a sine-polynomial of degree $p$, whereas if $\kappa$ is very large, the estimation results in the interpolation of the data. Thus, it is necessary to select a smoothing parameter which correctly balances the bias and variance of the estimator, and therefore minimizes the MSE. 
	
	In the particular case of the normal likelihood (least-squares regression),  \citet{DiMarzioetal2009} derived an expression for the optimal smoothing parameter minimizing the asymptotic MSE of the estimator when $p=1$ and $\nu=0$, and specifying the von Mises density as the kernel. In the hyperspherical setting, where it is assumed that the predictor lies on a hypersphere of arbitrary dimension, \citet{Garcia-Portugues2014} derived an optimal expression for the concentration minimizing the MSE which, in the particular case of the circumference and a von Mises kernel, is equivalent to the optimal parameter obtained by \citet{DiMarzioetal2009}. Note that, however, in order to select a smoothing parameter in practice, the only proposals available in the literature are a rule of thumb based on a preliminary parametric estimator \citep{Garcia-Portugues2014} and a cross-validation method implemented by \citet{Oliveira_etal_2013}. 
	
	In this paper, a new selection rule for the concentration parameter, not only for the least-squares setting but also for the general likelihood scenario, is proposed. In order to automatically select a smoothing parameter for the estimation of $ g^{(\nu)}(\theta_0)$, we can minimize an estimation of the integrated version of the Mean Squared Error (MSE):
	\begin{equation}
		\hat{\kappa}_{p,\nu}=\argmin_{\kappa>0}\int_0^{2\pi}\widehat{\mbox{MSE}}_{p,\nu}(\alpha;\kappa)d\alpha,
		\label{eq:refined_rule} 
	\end{equation}
	where $\widehat{\mbox{MSE}}_{p,\nu}(\theta_0;\kappa)$ denotes an estimator of the MSE of $ g^{(\nu)}(\theta_0)$ constructed with the concentration parameter $\kappa$. This quantity can be obtained by approximating the bias and variance of the estimator as described in Sections~\ref{subseq:bias} and \ref{subsec:variance}, respectively, obtaining the bias and variance estimates given by (\ref{eq:bias_estimation}) and (\ref{eq:variance_estimation}). Then, the estimated MSE is given by
	$ \widehat{\mbox{MSE}}_{p,\nu}(\theta_0;\kappa)=\hat{B}^2_{p,\nu}(\theta_0;\kappa) + \hat{V}_{p,\nu}(\theta_0;\kappa)$.
	Note that in order to estimate the MSE it is necessary to first select a pilot smoothing parameter, $\kappa^*$, and fit locally a sine-polynomial of degree $p+a$. Thus, we will refer to this smoothing parameter selection method as the refined rule, since first we have to select a preliminary smoothing parameter.

	In what follows, we will discuss the selection of the pilot concentration parameter. Although the role of the smoothing parameter is reversed when employing circular kernels, a similar approach to that of \citet{Fan_etal_1998} could be used to select the pilot concentration. The main idea is to come up with a criterion which, when minimized, leads to an approximated optimal smoothing parameter. In Section~\ref{subsec:pilot_least_squares} we study the problem of obtaining a pilot concentration in the least-squares case. The general case is discussed in Section~\ref{subsec:pilot_general}.

	\subsection{Selection of the pilot concentration: least-squares case}\label{subsec:pilot_least_squares}
	
	In this section, we consider the least-squares scenario exposed in Section \ref{sec:estimation}. Let the relationship between the variables $\Theta$ and $Y$ be modeled as in (\ref{eq:regression_LeastSquares}). As shown in Section~\ref{sec:estimation}, the function $g$ and its derivatives can be estimated by minimizing the least-squares function weighted by a circular kernel, which is equivalent to maximizing the local circular kernel weighted log-likelihood function when assuming a Normal likelihood. In this case, the estimator can be explicitly expressed as 
	\begin{equation}
	\hat{\bm{\beta}}=(\bm{\Theta}^{\top}\bm{W}\bm{\Theta})^{-1}\bm{\Theta}^{\top}\bm{W}\bm{Y}, 
	\label{eq:estimator_least_squares}
	\end{equation}
	where $\bm{Y}$ is the vector of responses $\bm{Y}=(Y_1,\ldots,Y_n)^{\top}$, $\bm{W}=\{\mbox{diag}(K_\kappa(\Theta_i-\theta_0))\}_{i=1,...,n}$ and $\bm{\Theta}$ is a $n\times(p+1)$ matrix with $(i,j)$th element given by $\sin^j(\Theta_i-\theta_0)$. Following \citet{Fan_Gijbels_1995}, the Circular Residual Squares Criterion (CRSC) is defined as
	\begin{equation*}
	\mbox{CRSC}(\theta_0;\kappa)=\hat{\sigma}^2(\theta_0)\left( 1+\frac{p+1}{N}\right),
	\label{eq:RSC_circ}
	\end{equation*}
	where
	\begin{equation}
	\hat{\sigma}^2(\theta_0)=\frac{\sum_{i=1}^{n}(Y_i-\hat{Y}_i)^2K_\kappa(\Theta_i-\theta_0)}{\mbox{tr}(\bm{W})-\mbox{tr}((\bm{\Theta}^{\top}\bm{W}\bm{\Theta})^{-1}\bm{\Theta}^{\top}\bm{W}^2\bm{\Theta})},
	\label{eq:sigmahat_circ}
	\end{equation}
	and $N^{-1}$ being the first diagonal element of the matrix $(\bm{\Theta}^{\top}\bm{W}\bm{\Theta})^{-1}\bm{\Theta}^{\top}\bm{W}^2\bm{\Theta}(\bm{\Theta}^{\top}\bm{W}\bm{\Theta})^{-1} =\bm{S}_n^{-1}\bm{\Gamma}_n\bm{S}_n^{-1}$. Note that, here, the notation $\bm{W}\bm{W}=\bm{W}^2$ is used. The quantities $\hat{Y}_i$, with $i=1,\ldots,n$, denote the fitted values obtained after fitting a $p$th order sine-polynomial locally. When $\kappa$ is very small, the bias of the estimator will be large and so will be  $\hat{\sigma}^2(\theta_0)$, obtaining a large $\mbox{CRSC}(\theta_0;\kappa)$. On the contrary, if $\kappa$ is too small, the variance will be large and, hence, $N^{-1}$ will also be large, resulting in a large $\mbox{CRSC}(\theta_0;\kappa)$.

	Proposition \ref{prop:Expect_CRSC}  gives the conditional expectation of the CRSC quantity. The following notation will be used. Let $\widetilde{b}_j^*(K)= b_j^*(K)/\int_{0}^{\infty}r^{-\frac{1}{2}}K(r)dr$ and $\widetilde{d}_j^*(K)= d_j^*(K)/\left(\int_{0}^{\infty}r^{-\frac{1}{2}}K(r)dr\right)^2$.
	Further, let $\bm{B}$ and $\bm{D}$ be the $(p+1)\times(p+1)$ matrices having, respectively, the $(i,j)\text{th}$ element given by $\widetilde{b}_{i+j-2}^*(K)$ and $\widetilde{d}_{i+j-2}^*(K)$. By $\bm{c}_p$ we denote the vector $\left(\widetilde{b}_{p+1}^*(K),\ldots,\widetilde{b}_{2p+1}^*(K)\right)^{\top}$.

	\begin{prop}\label{prop:Expect_CRSC}
		Assume that the circular kernel satisfies (\ref{eq:condition_kernel}) and (\ref{eq:condition_K}) and that $f>0$, the density function of $\Theta$, is continuously differentiable. Additionally, assume that $\mbox{Var}[Y|\Theta=\theta]=\sigma^2(\theta)>0$ exists and is continuous at $\theta=\theta_0$. If $\kappa\rightarrow\infty$ and $n\kappa^{-\frac{1}{2}}\rightarrow\infty$, then the expression of $\mathbb{E}[\mbox{CRSC}(\theta_0;\kappa)|\Theta_1,\ldots,\Theta_n]$ is given by
		$$  \sigma^2(\theta_0)+C_p\beta_{p+1}^2 2^{p+1}\kappa^{-(p+1)}+(p+1)\dfrac{\sigma^2(\theta_0)a_0\kappa^{\frac{1}{2}}}{2^{\frac{1}{2}}nf(\theta_0)} + o_P\left( \frac{1}{\kappa^{p+1}}+\frac{\kappa^{\frac{1}{2}}}{n} \right), $$
		where 
		$C_p=\widetilde{b}_{2p+2}^*(K)-\bm{c}_p^{\top}\bm{B}\bm{c}_p$
		and $a_0$ is the first diagonal element of the matrix $\bm{B}^{-1}\bm{D}\bm{B}^{-1}$.
	\end{prop}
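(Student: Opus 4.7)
The plan is to factor $\mathrm{CRSC}(\theta_0;\kappa)=\hat{\sigma}^2(\theta_0)\bigl(1+(p+1)/N\bigr)$ and, since $N$ is $\sigma(\Theta_1,\ldots,\Theta_n)$-measurable, expand the conditional expectation of each factor separately: I would develop $\mathbb{E}[\hat{\sigma}^2(\theta_0)\mid\Theta_1,\ldots,\Theta_n]$ to order $\kappa^{-(p+1)}+\kappa^{1/2}/n$, compute $(p+1)/N$ at its leading order, and then multiply the two expansions.

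For the conditional mean of $\hat{\sigma}^2(\theta_0)$, introduce the local-least-squares hat matrix $\bm{H}=\bm{\Theta}(\bm{\Theta}^T\bm{W}\bm{\Theta})^{-1}\bm{\Theta}^T\bm{W}$. The identity $\bm{H}\bm{\Theta}=\bm{\Theta}$ gives $(\bm{I}-\bm{H})^T\bm{W}(\bm{I}-\bm{H})=\bm{W}-\bm{W}\bm{H}$, so the numerator of (\ref{eq:sigmahat_circ}) equals $\bm{Y}^T(\bm{W}-\bm{W}\bm{H})\bm{Y}$. Write $Y_i=g(\Theta_i)+e_i$ with $\mathbb{E}[e_i\mid\Theta_i]=0$, $\mathrm{Var}[e_i\mid\Theta_i]=\sigma^2(\Theta_i)$, and $g(\Theta_i)=\bm{\Theta}_i^T\bm{\beta}+\epsilon(\Theta_i)$. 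Because $(\bm{I}-\bm{H})\bm{\Theta}=\bm{0}$, the deterministic part reduces to $\bm{\epsilon}^T(\bm{W}-\bm{W}\bm{H})\bm{\epsilon}$, which after replacing $\epsilon(\Theta_i)$ by its leading term $\beta_{p+1}\sin^{p+1}(\Theta_i-\theta_0)$ from (\ref{eq:ri}) becomes $\beta_{p+1}^2\,\bm{s}^T(\bm{W}-\bm{W}\bm{H})\bm{s}$ with $s_i=\sin^{p+1}(\Theta_i-\theta_0)$; the stochastic part contributes $\mathrm{tr}\bigl(\bm{\Sigma}(\bm{W}-\bm{W}\bm{H})\bigr)$ with $\bm{\Sigma}=\mathrm{diag}(\sigma^2(\Theta_i))$.

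All the resulting quadratic forms reduce to circular-kernel moments. Under (\ref{eq:condition_kernel})--(\ref{eq:condition_K}), the substitution $r=\kappa(1-\cos\theta)$ together with $c_\kappa(K)^{-1}\sim\kappa^{-1/2}\lambda(K)$ yields, for even $j$, the leading asymptotics $\int\sin^j(\theta)K_\kappa(\theta)f(\theta_0+\theta)\,d\theta\sim f(\theta_0)2^{j/2}\kappa^{-j/2}b_j^*(K)$ and $\int\sin^j(\theta)K_\kappa^2(\theta)f(\theta_0+\theta)\,d\theta\sim f(\theta_0)2^{(j-1)/2}\kappa^{(1-j)/2}d_j^*(K)$, while odd $j$ gives a higher-order correction through $f'(\theta_0)$. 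Introducing the diagonal scaling $\bm{D}_\kappa=\mathrm{diag}\bigl((2/\kappa)^{(j-1)/2}\bigr)_{j=1}^{p+1}$, one obtains $\bm{S}_n/n\sim f(\theta_0)\bm{D}_\kappa\bm{B}\bm{D}_\kappa$ and $\bm{\Gamma}_n/n\sim f(\theta_0)(\kappa^{1/2}/\sqrt{2})\bm{D}_\kappa\bm{D}\bm{D}_\kappa$. A direct computation then gives the bias contribution $nf(\theta_0)\beta_{p+1}^2 2^{p+1}\kappa^{-(p+1)}C_p$ (since $\bm{s}^T\bm{W}\bm{s}/n\sim f(\theta_0)2^{p+1}\kappa^{-(p+1)}b^*_{2p+2}(K)$ and $\bm{s}^T\bm{W}\bm{H}\bm{s}/n\sim f(\theta_0)2^{p+1}\kappa^{-(p+1)}\bm{c}_p^T\bm{B}^{-1}\bm{c}_p$), the traces $\mathrm{tr}(\bm{\Sigma}\bm{W})\sim n\sigma^2(\theta_0)f(\theta_0)$ and $\mathrm{tr}(\bm{\Sigma}\bm{W}\bm{H})\sim\sigma^2(\theta_0)(\kappa^{1/2}/\sqrt{2})\mathrm{tr}(\bm{B}^{-1}\bm{D})$, the denominator $\mathrm{tr}(\bm{W})-\mathrm{tr}(\bm{S}_n^{-1}\bm{\Gamma}_n)\sim nf(\theta_0)-(\kappa^{1/2}/\sqrt{2})\mathrm{tr}(\bm{B}^{-1}\bm{D})$, and $N^{-1}\sim\kappa^{1/2}a_0/(\sqrt{2}nf(\theta_0))$. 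In the ratio defining $\hat{\sigma}^2$, the two $\mathrm{tr}(\bm{B}^{-1}\bm{D})$ corrections cancel exactly, leaving $\mathbb{E}[\hat{\sigma}^2(\theta_0)\mid\cdots]=\sigma^2(\theta_0)+C_p\beta_{p+1}^2 2^{p+1}\kappa^{-(p+1)}+o_P(\kappa^{-(p+1)})$; multiplying by $1+(p+1)/N$ then produces the announced variance-type term $(p+1)\sigma^2(\theta_0)a_0\kappa^{1/2}/(\sqrt{2}nf(\theta_0))$, with the cross terms of order $\kappa^{-(p+1)}\cdot\kappa^{1/2}/n$ absorbed in the stated remainder.

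The main obstacle is the moment bookkeeping: one has to show that the odd-$j$ corrections (which bring in $f'(\theta_0)$) and the error from truncating $\epsilon(\Theta_i)$ beyond its leading sine-power are both $o_P(\kappa^{-(p+1)}+\kappa^{1/2}/n)$, which is where $\kappa\to\infty$, $n\kappa^{-1/2}\to\infty$, and the assumed smoothness of $f$ and $g$ enter. A secondary technical point is replacing $\bm{S}_n$ and $\bm{\Gamma}_n$ by their conditional means uniformly enough that $\bm{S}_n^{-1}$ admits a convergent expansion; this follows from positive definiteness of $\bm{B}$ together with a standard variance bound on the kernel-weighted sums at rate $n\kappa^{-1/2}$.
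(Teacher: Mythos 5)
Your proposal is correct and follows essentially the same route as the paper: writing $\hat{\sigma}^2(\theta_0)$ as the normalized quadratic form $d_n^{-1}\bm{Y}^T[\bm{W}-\bm{W}\bm{\Theta}(\bm{\Theta}^T\bm{W}\bm{\Theta})^{-1}\bm{\Theta}^T\bm{W}]\bm{Y}$, splitting it into the sine-polynomial residual part (leading term $\beta_{p+1}\sin^{p+1}(\Theta_i-\theta_0)$) and the noise part, invoking the kernel moment asymptotics for $s_{n,j}$ and $\gamma_{n,j}$ (the paper's Lemma~\ref{lem:snj_gamnj}, which you re-derive in sketch), and then appending the $N^{-1}$ expansion. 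One remark: your bias constant comes out as $b_{2p+2}^*(K)-\bm{c}_p^T\bm{B}^{-1}\bm{c}_p$, which agrees with the paper's own proof (where the term is $d_n^{-1}\bm{c}_n^T\bm{S}_n^{-1}\bm{c}_n$), so the $\bm{c}_p^T\bm{B}\bm{c}_p$ appearing in the statement of $C_p$ is evidently a typo rather than a discrepancy in your argument.
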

	The proof is given in Appendix~\ref{ap:proof_prop}. Given the previous result, it follows that the minimizer of $\mathbb{E}[\mbox{CRSC}(\theta_0;\kappa)|\Theta_1,\cdots,\Theta_n]$ with respect to $\kappa$ is approximately equal to 
	\begin{equation}
	\kappa_0(\theta_0)=\left( \dfrac{2^{\frac{2p+5}{2}}C_p\beta^2_{p+1}nf(\theta_0)}{a_0 \sigma^2(\theta_0)} \right)^{\frac{2}{2p+3}}. 
	\label{eq:kappa_0}
	\end{equation}
	It can be seen that the expression of $\kappa_0(\theta_0)$ shares some similarities with the  optimal $\kappa$ minimizing the asymptotic mean squared error of $\hat{g}^{(\nu)}(\theta_0)$ which, for odd ($p-\nu)$, is given by
	\begin{equation}
	\kappa_{\mbox{opt},p,\nu}(\theta_0)=\left( \dfrac{ (p+1-\nu) n f(\theta_0) \beta_{p+1}^2 [\bm{e}^{\top}_{\nu+1}\bm{B}^{-1}\bm{c}_p]^2 2^{\frac{2p+5}{2}} }{ (1+2\nu)a_\nu \sigma^2(\theta_0) } \right)^{\frac{2}{2p+3}},
	\label{eq:optimal_kappa_MSE}
	\end{equation}
	where $a_\nu$ is the $(\nu+1)$th diagonal element of the matrix  $\bm{B}^{-1}\bm{D}\bm{B}^{-1}$.  The derivation of (\ref{eq:optimal_kappa_MSE}) is given in Section S2.3 of the Supplementary Material.
	
	\begin{remark}
		Equation~(\ref{eq:optimal_kappa_MSE}) is a generalization of the optimal parameter obtained by \citet{DiMarzioetal2009}, who studied the case $p=1$ and $\nu=0$, with the von Mises kernel. This is easy to see by noting that, when  $p=1$, $\nu=0$ and $K_\kappa$ is the von Mises kernel, $a_\nu=2^{1/2}\pi^{1/2}$ and  $\bm{e}^{\top}_{\nu+1}\bm{B}^{-1}\bm{c}_p= 1/2$. In this particular case, (\ref{eq:optimal_kappa_MSE})  also coincides with the optimal parameter obtained by \citet{Garcia-Portugues2014} in the hyperspherical setting.
	\end{remark}
	
	Comparing equations (\ref{eq:kappa_0}) and (\ref{eq:optimal_kappa_MSE}), it is easy to see that
	$
	\kappa_{\mbox{opt},p,\nu}(\theta_0)=\xi_{p,\nu}(K)\kappa_0(\theta_0),$
	where $\xi_{p,\nu}(K)$ only depends on $p$, $\nu$ and $K$ and is given by
	$$  \xi_{p,\nu}(K)=\left(\frac{(p+1-\nu)a_0 [\bm{e}^{\top}_{\nu+1}\bm{B}^{-1}\bm{c}_p]^2}{(1+2\nu)a_\nu C_p}\right)^{\frac{2}{2p+3}}. $$
	Thus, a simple  approach for selecting a global pilot concentration parameter is as follows. First, we obtain the value of the concentration minimizing the integrated CRSC:
	$ \hat{\kappa}_p=\argmin_{\kappa>0}\int_{0}^{2\pi}\mbox{CRSC}(\alpha;\kappa)d\alpha$.
	Afterwards, we select the pilot concentration $\hat{\kappa}^{\tiny \mbox{CRSC}}_{p,\nu}$ as 
	$ \hat{\kappa}^{\tiny \mbox{CRSC}}_{p,\nu}= \xi_{p,\nu}(K)\hat{\kappa}_p. $

	\subsection{Selection of the pilot concentration: general case}\label{subsec:pilot_general}
	
	In the general circular local likelihood problem, we may distinguish two options to select the pilot concentration parameter. First, if the target function $g$ is a transformed mean function, \textit{i.e.}, $g(\theta_0)=T(\mu(\theta_0))$ with $\mu(\theta_0)=\mathbb{E}[Y|\Theta=\theta_0]$, we can still use the CRSC criterion, but substituting $\hat{Y}_i=T^{-1}(\bm{\Theta}_i^{\top}\hat{\bm{\beta}})$ in (\ref{eq:sigmahat_circ}).
	
	Another possibility is to use an extended version of the CRSC, namely ECRSC, regarding the local likelihood problem as an iterative local least-squares problem, as in \citet{Fan_etal_1998}. In the following we give some details about this extended criterion. 
	
	Consider the Fisher scoring method for updating the vector of estimated parameters $\hat{\bm{\beta}}$ in which, for a current value $\bm{\beta}_c$, we update
	\begin{equation}
	\hat{\bm{\beta}}=\bm{\beta}_c - \left[\mathbb{E}[\mathcal{L}^{''}_p(\bm{\beta}_c;\kappa,\theta_0)|\Theta_1,\ldots,\Theta_n]\right]^{-1}\mathcal{L}_p'(\bm{\beta}_c;\kappa,\theta_0).
	\label{eq:F-S}
	\end{equation}
	\vspace{-0.4cm}
	Now, 
	\vspace{-0.35cm}
	$$
	\begin{aligned}
	\mathbb{E}[\mathcal{L}^{''}_p(\bm{\beta}_c;\kappa,\theta_0)|\Theta_1,\ldots,\Theta_n]&=\sum_{i=1}^{n}\mathbb{E}[l''(\bm{\Theta}_i^{\top}\bm{\beta}_c,Y_i)]\bm{\Theta}_i\bm{\Theta}_i^{\top}K_\kappa(\Theta_i-\theta_0)\\
	& \approx \mathbb{E}[l''(g(\theta_0),Y)|\Theta=\theta_0]\sum_{i=1}^{n}\bm{\Theta}_i\bm{\Theta}_i^{\top}K_\kappa(\Theta_i-\theta_0),
	\end{aligned}  $$
	given that the expectation $\mathbb{E}[l''(g(\cdot),Y)]$ is continuous. Plugging this expression into equation (\ref{eq:F-S}), we have that the updating rule with the approximated expectation is
	$$ \begin{aligned}
	\hat{\bm{\beta}} &= \bm{\beta}_c -\left[\mathbb{E}[l''(g(\theta_0),Y)|\Theta=\theta_0]\sum_{i=1}^{n}\bm{\Theta}_i\bm{\Theta}_i^{\top}K_\kappa(\Theta_i-\theta_0)\right]^{-1}\sum_{i=1}^{n}\bm{\Theta}_i l'(\bm{\Theta}_i^{\top}\bm{\beta}_c,Y_i)K_\kappa(\Theta_i-\theta_0)\\
	& = \left[\sum_{i=1}^{n}\bm{\Theta}_i\bm{\Theta}_i^{\top}K_\kappa(\Theta_i-\theta_0)\right]^{-1}\sum_{i=1}^{n}Z_i\bm{\Theta}_iK_\kappa(\Theta_i-\theta_0)
	,
	\end{aligned} $$ 
	where
	$ Z_i=\bm{\Theta}_i^{\top}\bm{\beta}_c - l'(\bm{\Theta}_i^{\top}\bm{\beta}_c,Y_i)/\mathbb{E}[l^{''}(g(\theta_0),Y)|\Theta=\theta_0] $
	and the conditional expectation in the expression of $Z_i$ can be computed with the value of $\bm{\beta}_c$. Thus, at the end of the iteration process the estimator $\hat{\bm{\beta}}$ is obtained by regressing $Z_i$ over $\Theta_i$ using the local sine-polynomial of order $p$. The ECRSC criterion is defined then as
	$\mbox{ECRSC}(\theta_0;\kappa)=\hat{\sigma}^2_*(\theta_0)\left[ 1+\frac{p+1}{N}\right],$
	where $\hat{\sigma}^2_*(\theta_0)$ is computed as in equation (\ref{eq:sigmahat_circ}) but using the variable $Z_i$. More details on the justification of this can be found in \citet{Fan_etal_1998}. The concentration selector based on the ECRSC criterion can be obtained by first computing
	\begin{equation}
		\hat{\kappa}^*_p=\argmin_{\kappa>0}\int_{0}^{2\pi}\mbox{ECRSC}(\alpha;\kappa)d\alpha
		\label{eq:ECRSC_integral}
	\end{equation}
	and then evaluating
	$ \hat{\kappa}_{p,\nu}^{\tiny\mbox{ECRSC}}=\xi_{p,\nu}(K)\hat{\kappa}^*_p$.
	In the case where  $g(\theta)=T(\mu(\theta))$, the ECRSC criterion will be approximately the same as the CRSC criterion, but including weights  $[T'(\mu(\alpha))]^{-2}$ in equation (\ref{eq:ECRSC_integral}).

	\section{Simulation experiments}\label{sec:simus}
	
	In this section, we study the empirical performance of the  estimator presented in Section~\ref{sec:estimation}, as well as the behavior of the concentration selection methods introduced in Section~\ref{sec:bw}. The code for all the methods can be found as Supplementary Material. We consider responses from normal, Bernoulli, Poisson and gamma models, described in Table~\ref{tab:models} and represented in Figure~\ref{fig:simus_representatives}. 	For each model, we simulate $B=500$ replications of the data, and estimate the target function $g$ with the local sine-polynomial estimator ($p=1$, $\nu=0$). Sample sizes are  $n=70,100,250,500,1500$ and the covariate, $\Theta$, is drawn from a circular uniform distribution. The concentration parameter was selected by the refined rule in Section~\ref{sec:bw} (see equation (\ref{eq:refined_rule})), where the pilot estimator was constructed with a local sine-polynomial of degree 3 and the pilot concentration parameter was selected by the CRSC criterion (in the normal case) and the ECRSC criterion (in the other cases). For comparison purposes, we also compute the estimators obtained by selecting the smoothing parameter directly with the CRSC/ECRSC criterion and with a cross-validation method. 	The quality of the estimators was obtained by approximating the Integrated Squared Error (ISE) as
	\begin{equation}
	\dfrac{\int_0^{2\pi} [\hat{g}_{(b)}(\theta)-g(\theta)]^2d\theta}{\int_0^{2\pi} g(\theta)^2d\theta},
	\label{eq:estimated_ISE}
	\end{equation}
	where $\hat{g}_{(b)}(\theta)$ represents the estimator of $g(\theta)$ for the $b$th replication of the data and the integrals are approximated numerically by Simpson's rule.
	
	For the Bernoulli, Poisson and gamma models, the estimator involves an iterative solution, which may not even exist for very large concentrations. We avoid these situations by only considering values of the concentration parameter for which the estimators exist.

	\begin{table}
		\caption{Description of the simulated models.\label{tab:models} }
		\begin{center}
			\footnotesize
			\begin{tabular}{llll}
				Model & ($Y | \Theta=\theta$) & \multicolumn{2}{l}{Model elements} \\
				\hline
				Normal (N) & $\mbox{N} \left ( \mu(\theta), \sigma^2 \right ) $ & N1: & $g(\theta) = \sin(2 \theta)\cos(\theta)$;\\
				& $\quad g(\theta)=\mu(\theta) $ & N2: & $g(\theta) = 1.75 \cos(\theta-\pi) \sin (\theta) + \cos(\theta)$;  \\
				& & & $\sigma=0.35$, $\sigma=0.5$ \\
				
				\hline 
				Bernoulli (B) & $\mbox{Bernoulli} \left (p(\theta) \right ) $ & B1: & $g(\theta) = 2 \sin( \theta)\cos(2\theta) $  \\
				& $\quad g(\theta)=\mbox{logit}(p(\theta)) $  & B2: & $g(\theta) = \log\left ( 1.6 + 1.5 \sin(\theta) + 0.1 \exp\{ \cos(\theta) \} \right )  $  \\
				\hline 
				Poisson (P) & $\mbox{Poisson} \left (\mu(\theta) \right ) $ & P1: & $\mu(\theta) =5 + \exp\left \{ 1.5 \sin(2 \theta-3) \right \} $  \\
				& $\quad g(\theta)=\mbox{log}(\mu(\theta)) $ & P2: & $\mu(\theta) = 40 + 20 \sin(3 \theta)$ \\
				\hline
				Gamma (G) & $\mbox{Gamma} \left (\alpha, \beta(\theta) \right ) $ & G1: & $\mu(\theta) = 4 + 4 \sin(2 \theta) \cos(\theta) $; \\
				& $\quad \mu(\theta)=\alpha \beta^{-2}(\theta) $ & G2: & $\mu(\theta) = 5 + 2 \cos\left (3+ 1.5 \sin(\theta) \right )$;  \\
				& $\quad g(\theta)=\mbox{log}(\mu(\theta)) $  & &  $\alpha=2$,  $\alpha=1$
			\end{tabular}
		\end{center}
	\end{table}

	\begin{figure}[!h]
		\centering
		\subfloat[N1]{
			\includegraphics[width=0.24\textwidth]{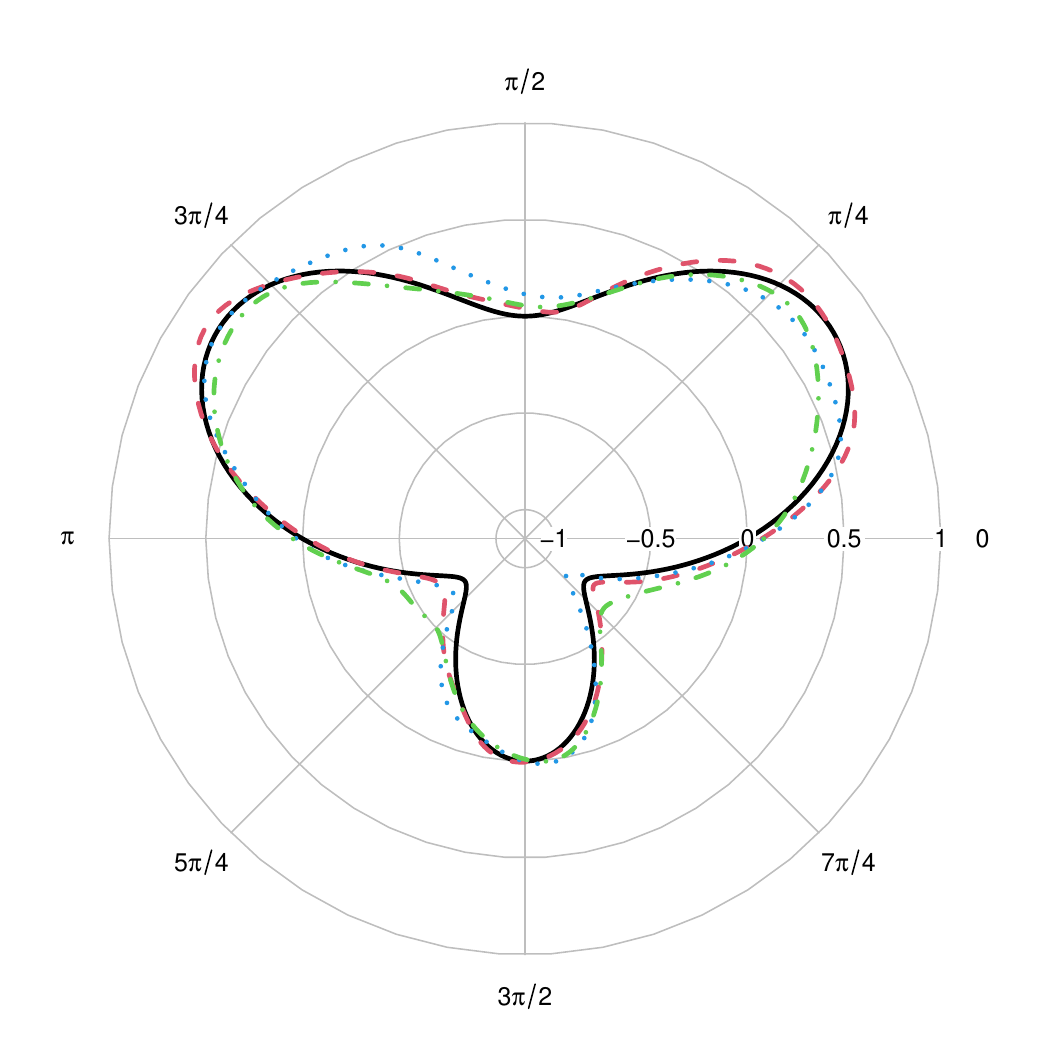}}
		\hfill
		\subfloat[N2]{
			\includegraphics[width=0.24\textwidth]{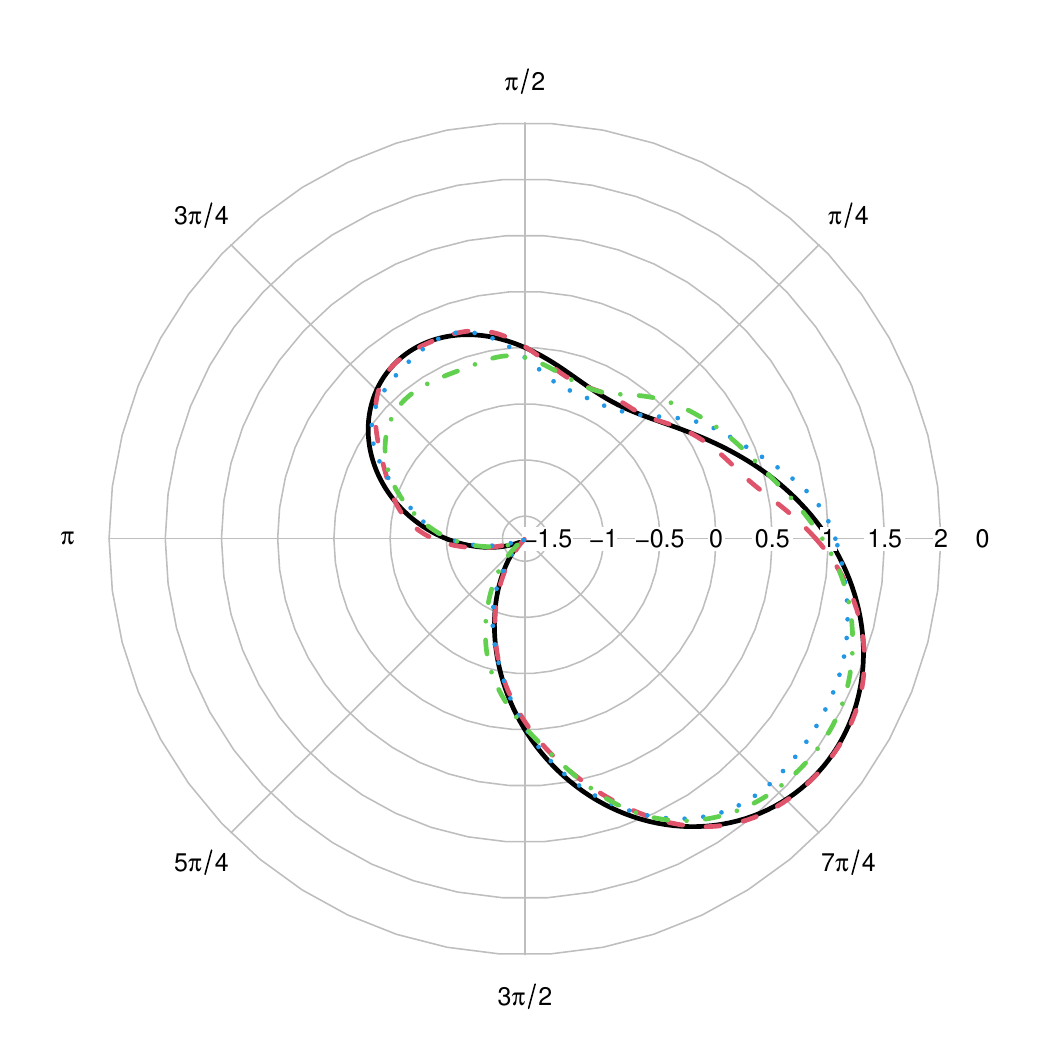}}
		\hfill 
		\subfloat[B1]{
			\includegraphics[width=0.24\textwidth]{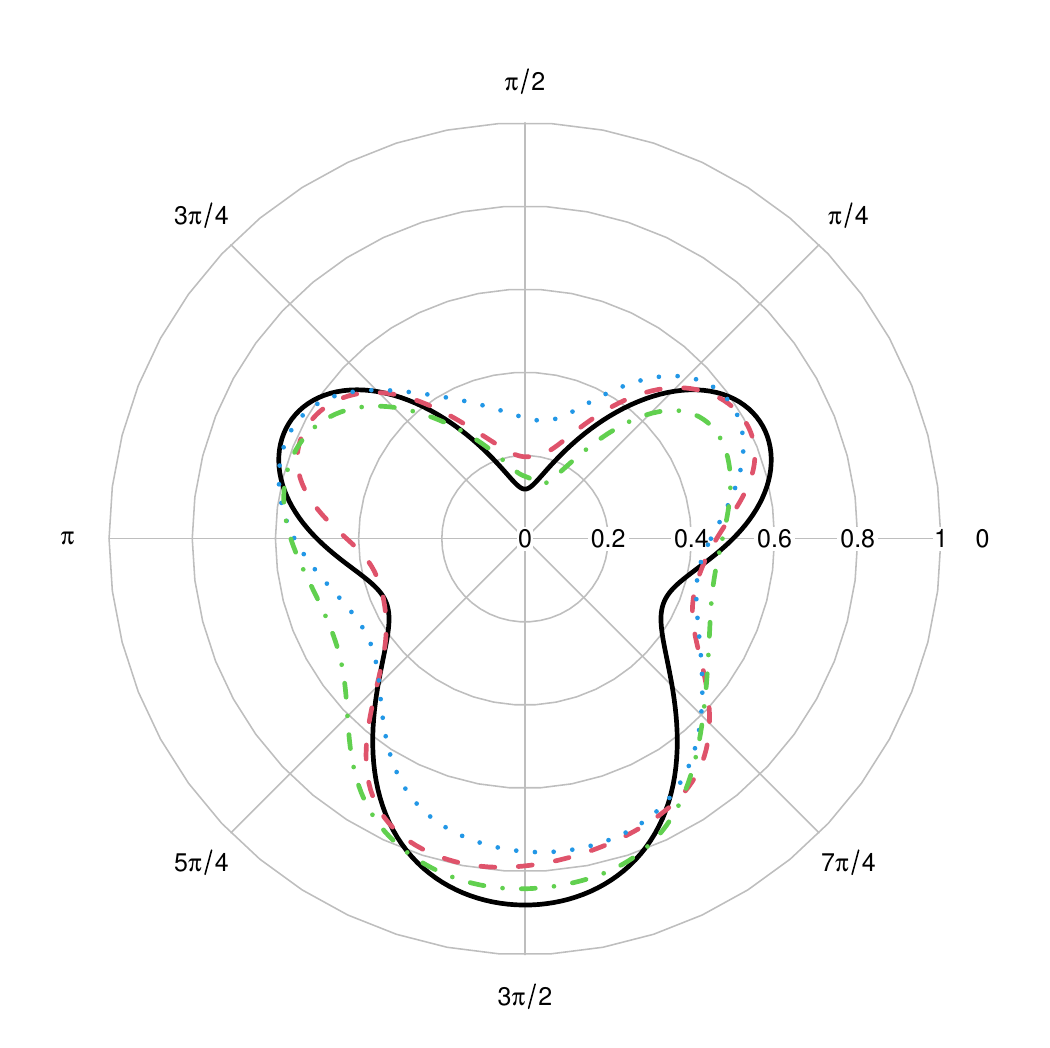}}
		\hfill
		\subfloat[B2]{
			\includegraphics[width=0.24\textwidth]{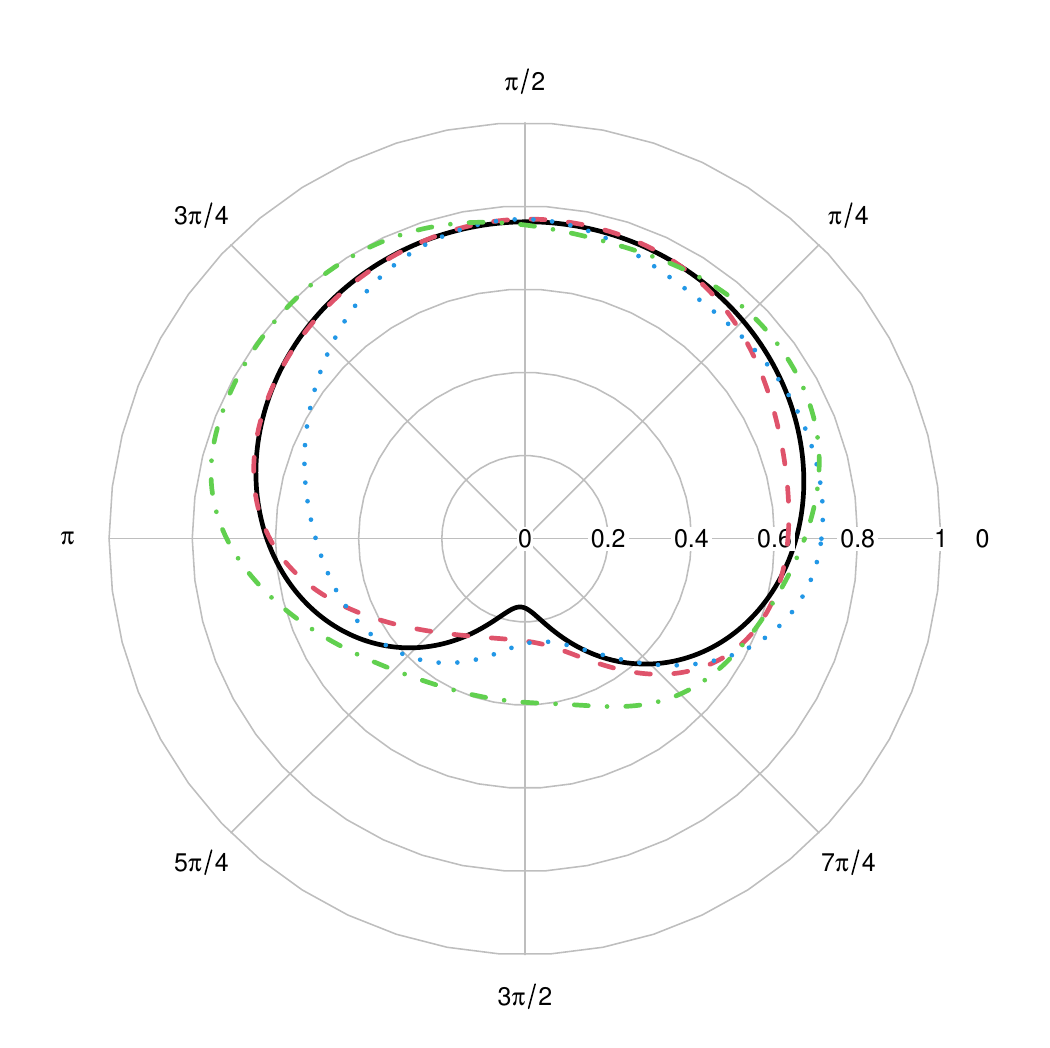}}
		
		\bigskip
		
		\vspace{-0.8cm}
		
		\subfloat[P1]{
			\includegraphics[width=0.24\textwidth]{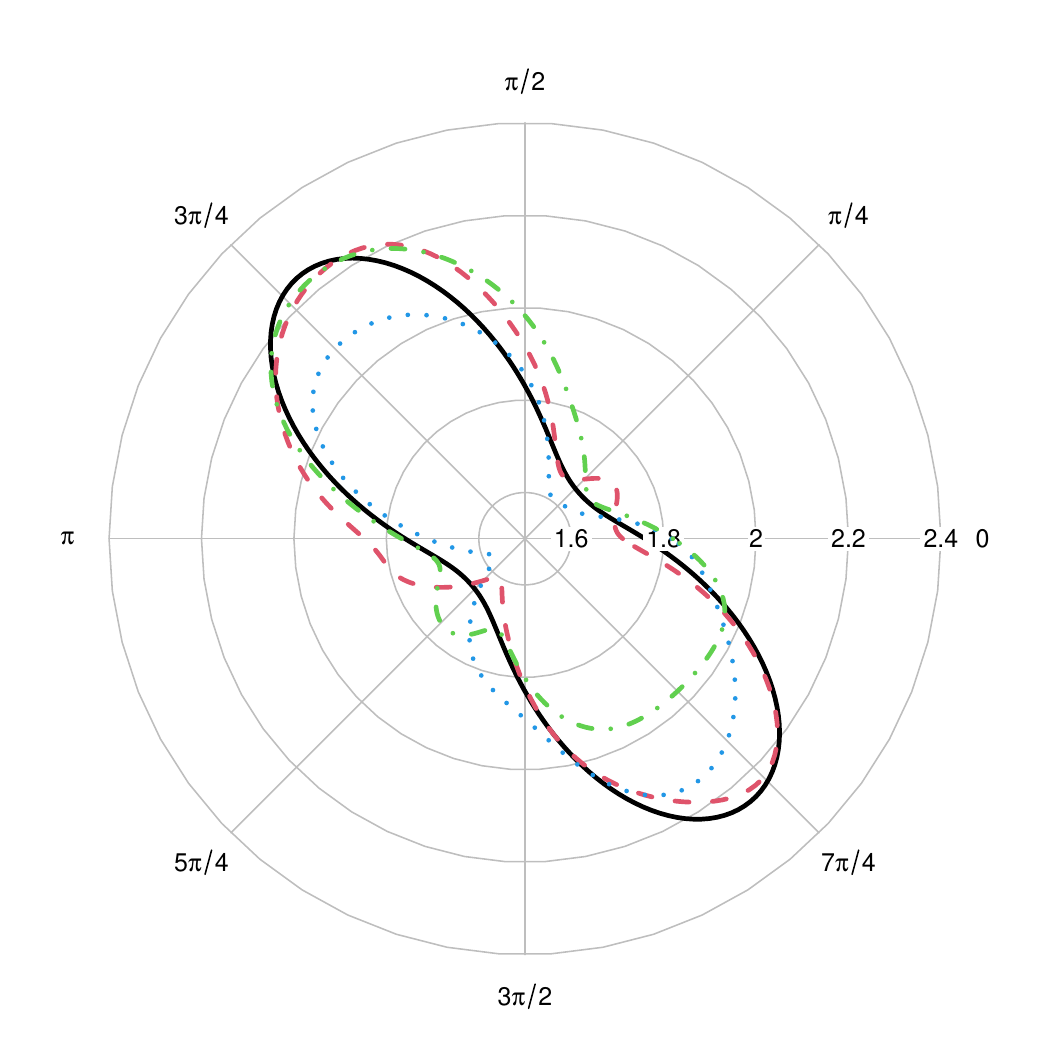}}
		\hfill
		\subfloat[P2]{
			\includegraphics[width=0.24\textwidth]{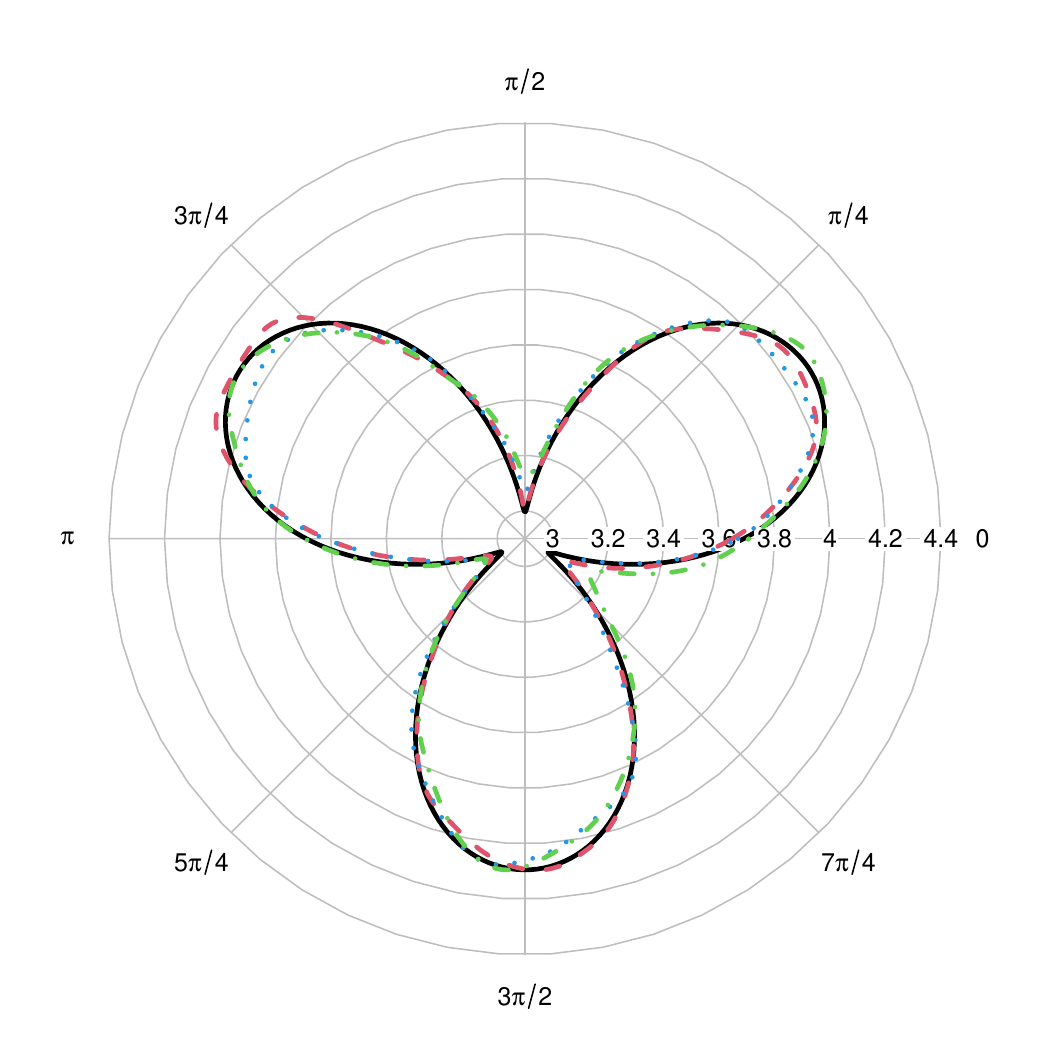}}
		\hfill
		\subfloat[G1]{
			\includegraphics[width=0.24\textwidth]{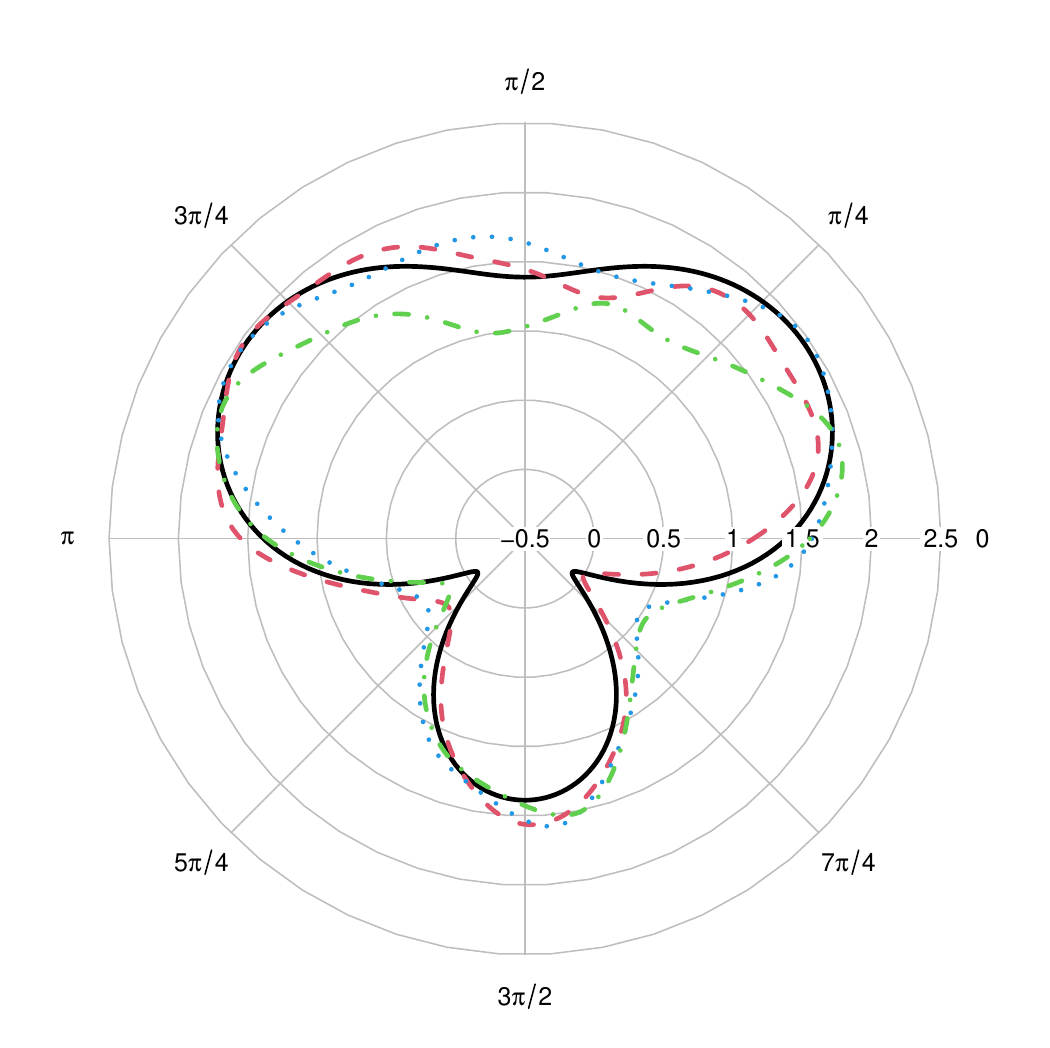}}
		\hfill
		\subfloat[G2]{
			\includegraphics[width=0.24\textwidth]{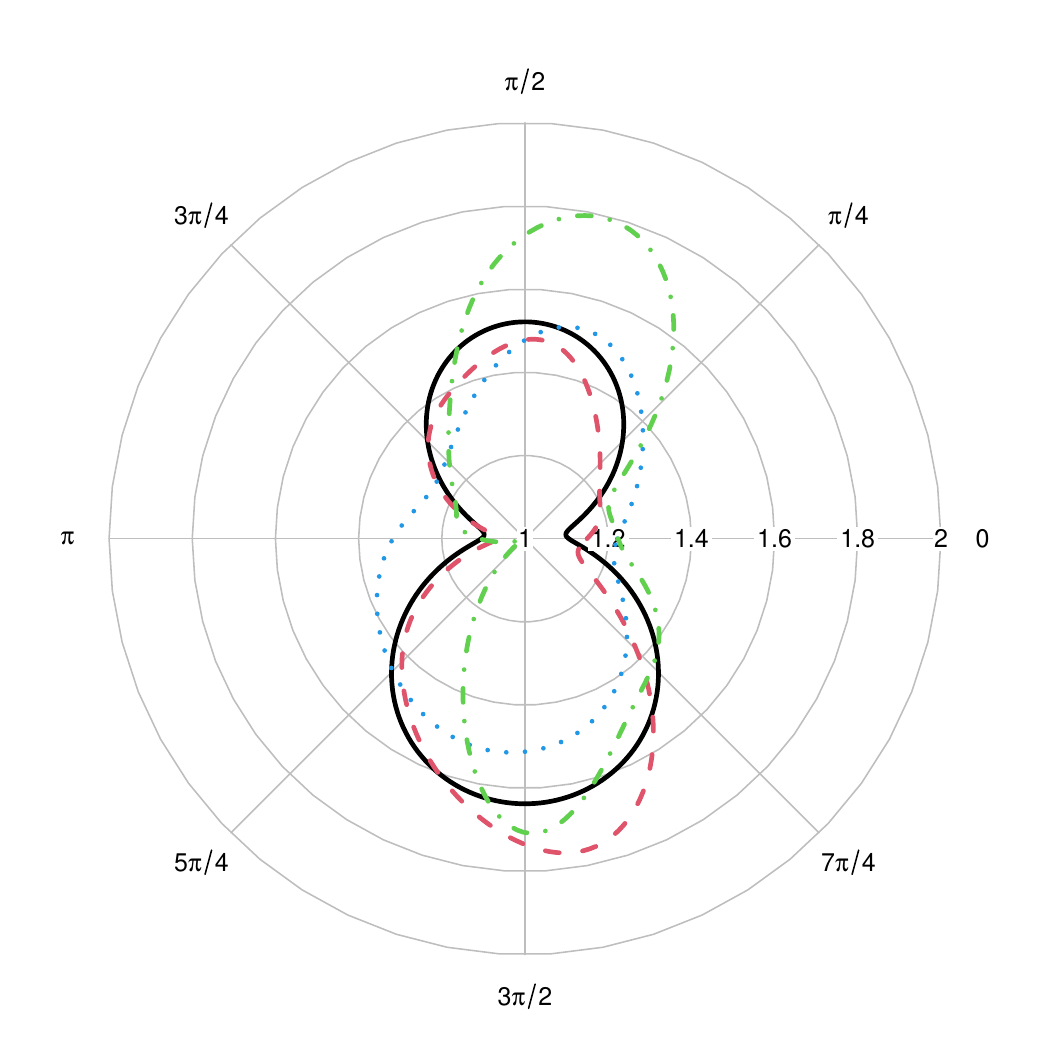}}
		
		\caption{Radial representations of the true mean functions (continuous line) in all models, with representative estimates when $n=250$ (5th ISE percentile with dashed line, 50th ISE percentile with dotted line and 95th ISE percentile with dotted-dashed line). The smoothing parameters were selected by the refined rule. }
		\label{fig:simus_representatives}
	\end{figure}

		\begin{table}
		\caption{\label{tab:ISE_normal}Monte Carlo averages of the numerically approximated ISE obtained with the CRSC/ECRSC	rule, the refined rule and cross-validation for all models. }
		\begin{center}
			\footnotesize
			\begin{tabular}{cccccccc}
				& & \multicolumn{3}{c}{Model 1}& \multicolumn{3}{c}{Model 2} \\
				\cline{3-5}  \cline{6-8}  
				& $n$  & (E)CRSC	& Refined & CV & (E)CRSC	& Refined & CV   \\
				\multirow{3}{*}{N} & 70 & $1.17 \cdot 10^{-1}$ & $9.86 \cdot 10^{-2}$ & $1.01 \cdot 10^{-1}$ & $5.80\cdot 10^{-2}$ & $4.21 \cdot 10^{-2}$ & $4.71 \cdot 10^{-2}$ \\ 
				& 100 & $7.58 \cdot 10^{-2}$ & $6.98 \cdot 10^{-2}$ & $7.06 \cdot 10^{-2}$ & $3.94 \cdot 10^{-2}$ & $2.96 \cdot 10^{-2}$ & $3.34 \cdot 10^{-2}$ \\ 
				&250 &  $3.18 \cdot 10^{-2}$& $3.13 \cdot 10^{-2}$& $3.16 \cdot 10^{-2}$ & $1.44 \cdot 10^{-2}$ & $1.34 \cdot 10^{-2}$ & $1.44 \cdot 10^{-2}$ \\ 
				&500 & $1.76 \cdot 10^{-2}$& $1.73 \cdot 10^{-2}$ & $1.77 \cdot 10^{-2}$& $8.06 \cdot 10^{-3}$ & $7.54\cdot 10^{-3}$ & $8.18 \cdot 10^{-3}$ \\ 
				&1500 & $6.99 \cdot 10^{-3}$& $6.84 \cdot 10^{-3}$ & $6.95 \cdot 10^{-3}$& $3.30 \cdot 10^{-3}$ & $3.15\cdot 10^{-3}$ & $3.30 \cdot 10^{-3}$ \\ 
				\hline
				\multirow{3}{*}{B} & 70 & $1.32\cdot 10^{-1}$ & $9.21 \cdot 10^{-2}$ & $9.12 \cdot 10^{-2}$ & $6.86 \cdot 10^{-2}$ & $4.85 \cdot 10^{-2}$ & $4.36 \cdot 10^{-2}$ \\ 
				 & 100 & $8.78\cdot 10^{-2}$ & $7.31 \cdot 10^{-2}$ & $6.95 \cdot 10^{-2}$ & $4.49 \cdot 10^{-2}$ & $3.50 \cdot 10^{-2}$ & $3.25 \cdot 10^{-2}$ \\ 
				&250 &  $3.18 \cdot 10^{-2}$& $2.97 \cdot 10^{-2}$& $2.91 \cdot 10^{-2}$ & $1.58 \cdot 10^{-2}$ & $1.32 \cdot 10^{-2}$ & $1.50 \cdot 10^{-2}$ \\ 
				&500 & $1.77 \cdot 10^{-2}$& $1.69 \cdot 10^{-2}$ & $1.66 \cdot 10^{-2}$& $8.08 \cdot 10^{-3}$ & $7.32\cdot 10^{-3}$ & $8.28 \cdot 10^{-3}$ \\ 
				&1500 & $7.06 \cdot 10^{-3}$& $6.62 \cdot 10^{-3}$ & $6.68 \cdot 10^{-3}$& $3.09 \cdot 10^{-3}$ & $2.92\cdot 10^{-3}$ & $3.19 \cdot 10^{-3}$ \\ 
				\hline
				\multirow{3}{*}{P} & 70 & $6.00 \cdot 10^{-3}$ & $4.63 \cdot 10^{-3}$ & $5.69 \cdot 10^{-3}$ & $6.84 \cdot 10^{-4}$ & $6.75 \cdot 10^{-4}$ & $7.15 \cdot 10^{-4}$ \\
				& 100 & $4.04 \cdot 10^{-3}$ & $3.46 \cdot 10^{-3}$ & $4.09 \cdot 10^{-3}$ & $4.69 \cdot 10^{-4}$ & $4.66 \cdot 10^{-4}$ & $4.81 \cdot 10^{-4}$ \\ 
				&250 &  $1.78 \cdot 10^{-3}$& $1.67 \cdot 10^{-3}$& $1.87 \cdot 10^{-3}$ & $2.02 \cdot 10^{-4}$ & $2.00 \cdot 10^{-4}$ & $2.05 \cdot 10^{-4}$ \\ 
				&500 & $9.76 \cdot 10^{-4}$& $9.32 \cdot 10^{-4}$ & $1.02 \cdot 10^{-3}$& $1.10 \cdot 10^{-4}$ & $1.09\cdot 10^{-4}$ & $1.11 \cdot 10^{-4}$ \\ 
				&1500 & $3.98 \cdot 10^{-4}$& $3.87 \cdot 10^{-4}$ & $4.11 \cdot 10^{-4}$& $4.45 \cdot 10^{-5}$ & $4.40\cdot 10^{-5}$ & $4.48 \cdot 10^{-5}$ \\ 
				\hline
				\multirow{3}{*}{G} & 70 & $1.05 \cdot 10^{-1}$ & $9.05\cdot 10^{-2}$ & $1.09 \cdot 10^{-1}$ & $2.61 \cdot 10^{-2}$ & $2.01 \cdot 10^{-2}$ & $2.55 \cdot 10^{-2}$ \\
				 & 100 & $7.35 \cdot 10^{-2}$ & $6.13 \cdot 10^{-2}$ & $8.19 \cdot 10^{-2}$ & $1.99 \cdot 10^{-2}$ & $1.50 \cdot 10^{-2}$ & $1.79 \cdot 10^{-2}$ \\ 
				&250 &  $3.54 \cdot 10^{-2}$& $2.75 \cdot 10^{-2}$& $3.67 \cdot 10^{-2}$ & $9.22 \cdot 10^{-3}$ & $6.53 \cdot 10^{-3}$ & $7.88 \cdot 10^{-3}$ \\ 
				&500 & $2.01 \cdot 10^{-2}$& $1.53 \cdot 10^{-2}$ & $1.95\cdot 10^{-2}$& $5.01 \cdot 10^{-3}$ & $3.72\cdot 10^{-3}$ & $4.38 \cdot 10^{-3}$ \\ 	
				&1500 & $8.00 \cdot 10^{-3}$& $6.27 \cdot 10^{-3}$ & $7.80\cdot 10^{-3}$& $1.92 \cdot 10^{-3}$ & $1.50\cdot 10^{-3}$ & $1.67 \cdot 10^{-3}$ \\ 	
			\end{tabular}
		\end{center}
	\end{table}

	Table~\ref{tab:ISE_normal} shows the average approximated ISE for both models and all methods. It can be seen that, as expected and for all scenarios, the average approximated ISE diminishes as the sample size increases. Regarding the selection of the smoothing parameter, although the orders of magnitude are usually the same, the refined rule obtains the lowers values for all models with Normal, Poisson and gamma distributions, for all samples sizes. The only scenario where the refined rule does not completely outperform the cross-validation method is the case with a binary response. For model B1, results are slightly better when employing the cross-validation rule, except for the largest sample size ($n=1500$), where the refined rule gives a moderately lower value of the average approximated ISE. For B2, cross-validation gives better results only for $n=70$ and $n=100$, but for larger sample sizes the refined rule seems the best alternative.  
	
	The distribution of the approximated ISE values and the smoothing parameters selected by each method are further analyzed in the Supplementary Material, where boxplots of the approximated ISE and kernel density estimates of the selected parameters are provided (for each model, sample size and concentration parameter selection method). In summary, the distribution of the concentrations selected by the refined rule is usually more concentrated, while (E)CRSC and cross-validation often select concentrations which are too large. This leads to a better performance (in general) of the refined rule in term of approximated ISE. 
	 
	In order to graphically assess the quality of the estimators when selecting the concentration parameter with the refined rule, Figure~\ref{fig:simus_representatives} shows the true target functions of all models, along with three representatives of the estimators corresponding to the 5th, 50th and 95th percentiles of the approximated ISE, where the sample size was fixed to $n=250$.

	\section{Real data examples}\label{sec:data}
	
	In this section we illustrate the practical use of the circular local likelihood estimator and the refined concentration selection rule with different real datasets. In all examples we estimate the target function by fitting a local sine-polynomial of degree $p=1$, where the concentration parameter is obtained with the refined rule selector. The bias and variance are estimated with a sine-polynomial of degree 3, and the pilot smoothing parameter is selected by the CRSC/ECRSC criterion. Employing the asymptotic normal distribution of the local likelihood estimator (see Theorem~\ref{prop:asymp_N}), we also compute point-wise confidence bands for the estimated functions, with a confidence level of 95\%. 
	
	For the construction of the confidence intervals, the bias and variance approximations obtained in Section~\ref{seq:bias_variance} could be employed, with the asymptotic normality result in Section~\ref{subsec:asymp_norm}. Then, point-wise confidence bands for the estimated target functions, with approximately $1-\alpha$ coverage, could be computed as $ \hat{g}(\theta_0) -\hat{B}_{p,\nu}(\theta_0;\kappa) \pm \Phi(1-\alpha/2)\hat{V}_{p,\nu}(\theta_0;\kappa)^{1/2},$ where $\hat{B}_{p,\nu}(\theta_0;\kappa)$ and $\hat{V}_{p,\nu}(\theta_0;\kappa)$ are, respectively, the approximations of the bias and variance of the estimator, obtained in Section~\ref{seq:bias_variance} and $\Phi$ denotes the quantile of the Gaussian distribution.	However, both  $\hat{B}_{p,\nu}(\theta_0;\kappa)$ and $\hat{V}_{p,\nu}(\theta_0;\kappa)$ may change abruptly, since they are constructed by using higher order derivatives, and the estimation of these may be unstable. Therefore, for the construction of confidence intervals, we employ local weighted averages of the bias and variance approximations, obtained by using kernel weights. Let 
	$$ \hat{B}^k_{p,\nu}(\theta_0;\kappa)=\int_0^{2\pi}\hat{B}_{p,\nu}(\theta;\kappa)K_\kappa(\theta-\theta_0)d\theta \quad \text{and}  \quad 	\hat{V}^k_{p,\nu}(\theta_0;\kappa)=\int_0^{2\pi}\hat{V}_{p,\nu}(\theta;\kappa)K_\kappa(\theta-\theta_0)d\theta $$
	be the kernel weighted averages of the bias and variance approximations, respectively. The use of the weighted quantities prevent them from abrupt change along the covariate range. Then, we construct the point-wise confidence bands  for the estimated functions with, approximately, $1-\alpha$ coverage probability as
	$$ \hat{g}(\theta_0) -\hat{B}^k_{p,\nu}(\theta_0;\kappa) \pm \Phi(1-\alpha/2)\hat{V}^k_{p,\nu}(\theta_0;\kappa)^{1/2}.  $$
	It is necessary to note, though, that the point-wise character of these confidence bands means that the approximately $1-\alpha$ coverage is only for a given value $\theta_0$. In order to obtain a band for which the true target function lies inside with probability $1-\alpha$, one should consider the use of simultaneous confidence bands, although it is known that this type of bands are quite conservative. More details on this topic are discussed in Section~\ref{sec:discussion}.

	\paragraph{Human motor resonance data: normal response.}~\label{subsec:HMRdata}
	An example of normal data is the human motor resonance dataset obtained by \citet{Puglisi_etal2017}. In this experiment, subjects were requested to observe a movement of a rhythmic hand flexion-extension in front of them. For each angular position of the hand, the H-reflex technique was used to quantitatively measure the resonance response \citep[see][for details]{Puglisi_etal2017}. Our goal is to explore the relationship between the angular position of the hand (circular predictor variable) and the H-reflex amplitude (real-valued response variable). The dataset, which is composed of $n=70$ observations, is depicted in the left panels of Figure~\ref{fig:Data} with the estimated regression function. Note that the H-reflex amplitude increases when the angular position ranges from $\pi/2$ to $5\pi/4$ and decreases between $5\pi/4$ and $2\pi$.
	
	\begin{figure}[!h]
		\centering
		\subfloat{
			\includegraphics[width=0.29\textwidth]{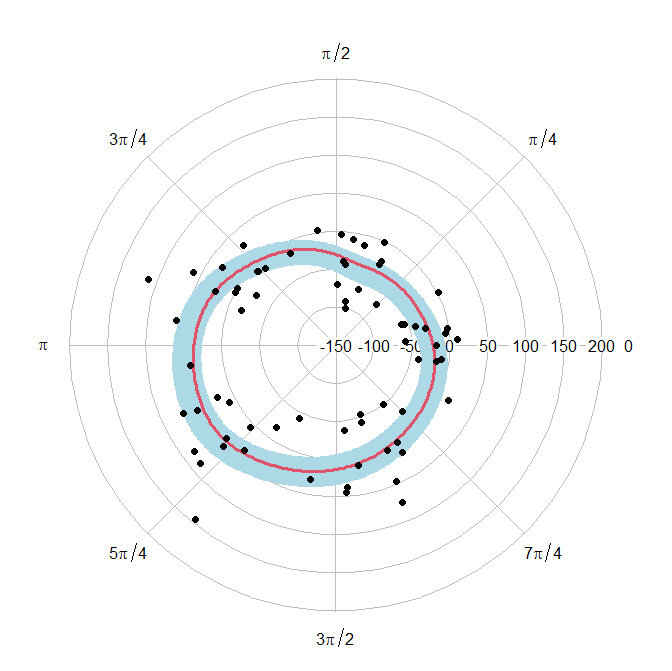}}
		\hfill
		\subfloat{
			\includegraphics[width=0.29\textwidth]{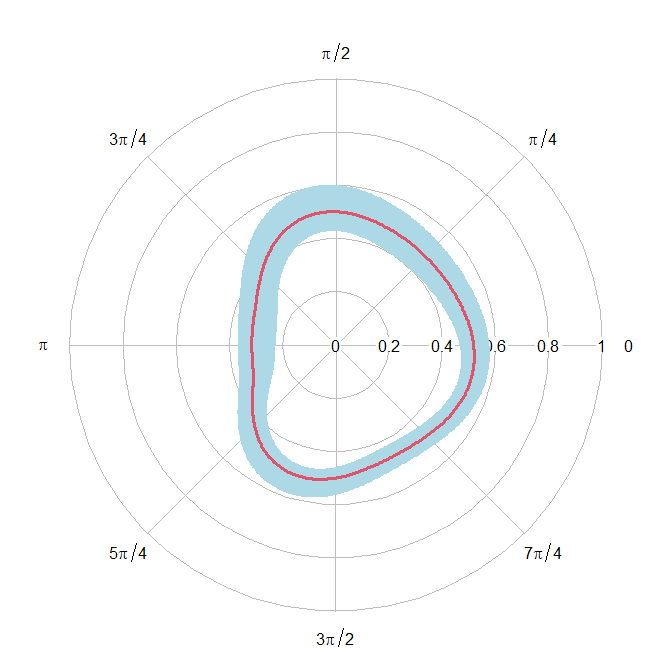}}
		\hfill
		\subfloat{
			\includegraphics[width=0.29\textwidth]{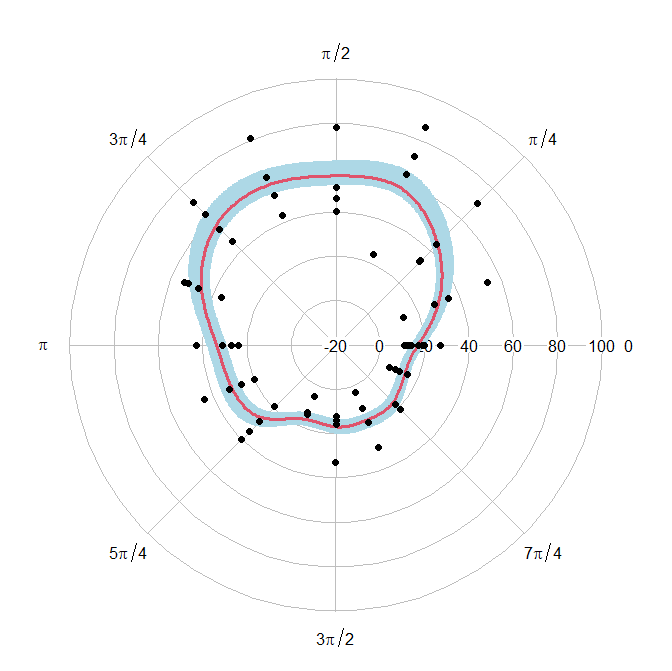}}
		
		\bigskip
		\vspace{-0.9cm}
		
		\subfloat{
			\includegraphics[width=0.29\textwidth]{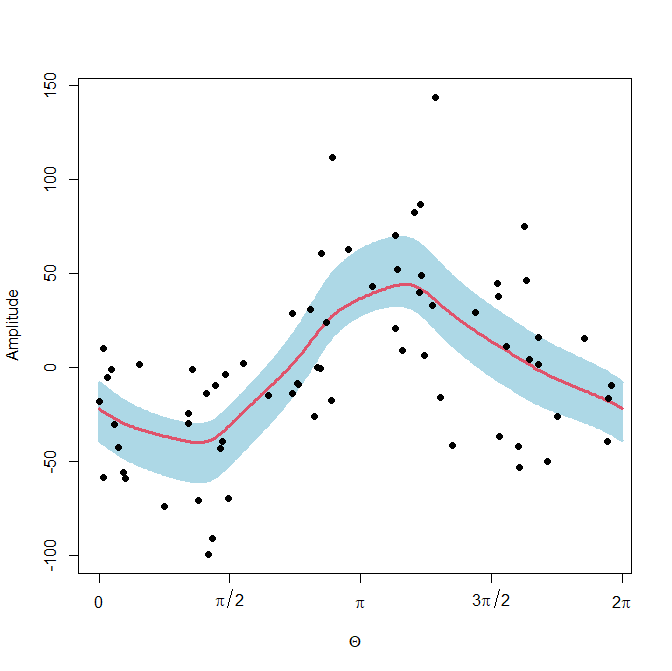}}
		\hfill
		\subfloat{
			\includegraphics[width=0.29\textwidth]{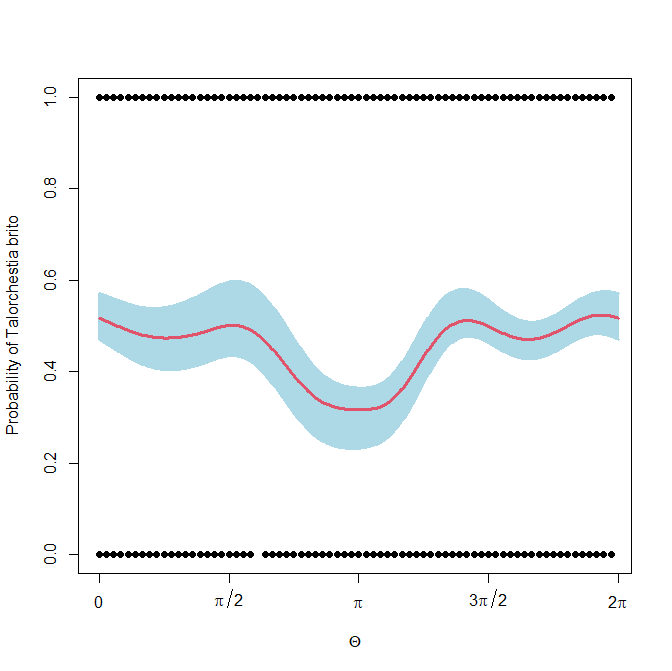}}
		\hfill
		\subfloat{
			\includegraphics[width=0.29\textwidth]{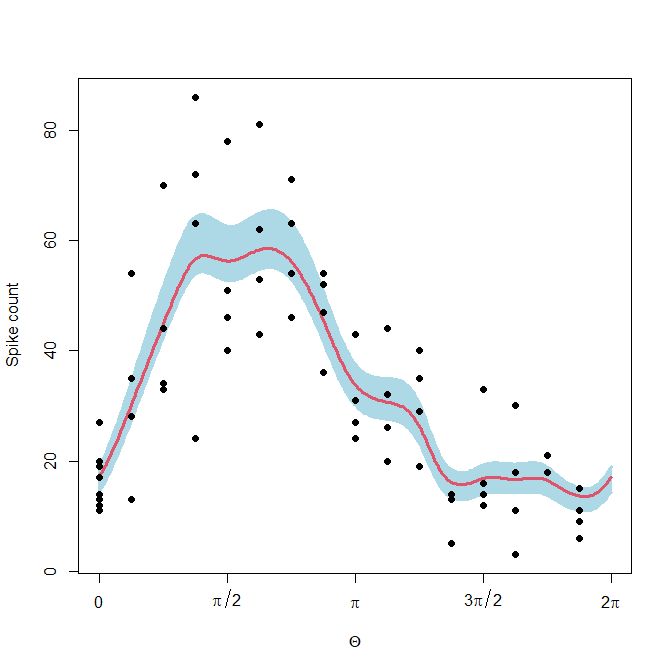}}
		
		\caption{Radial and planar representations of the human resonance data with the nonparametric estimator of the regression function (left), the sandhopper data with the estimated probability of belonging to the \textit{Talorchestia brito} species (center) and the spike count data with the nonparametric estimator of the mean function. The concentration parameter was selected by the refined rule in all cases and  point-wise 95\% confidence bands are displayed. }
		\label{fig:Data}
	\end{figure}
	
	\paragraph{Sandhoppers data: Bernoulli response.} 
	The sandhoppers data corresponds to an experimental study carried out by \citet{Scapinietal2002} in which the aim was to investigate how sandhoppers from two different species (\textit{Talitrus saltator} and \textit{Talorchestia brito}) behaved when released in the sand. Here, we focus on estimating the probability of the animals belonging to each of the two species according to the direction in which they escape, by using the estimator presented in Section~\ref{sec:estimation} in the particular case of a Bernoulli  likelihood. The sample size is $n=1644$, where 867 animals belonged to the \textit{Talitrus saltator} species and 777 to the \textit{Talorchestia brito} species.  We estimated the logit function in equation (S1) of the Supplementary Material and represented the estimated probability of belonging to the \textit{Talorchestia brito} species in the central panels of Figure~\ref{fig:Data}. The estimated probability is around 0.5 for most of the escape directions, but there is a reasonable reduction of the probability of belonging to the \textit{Talorchestia brito} species for escape directions around $\pi$. The point-wise 95\% confidence band suggests that this reduction in the probability is not just an artifact due to the randomness of the data.

	\paragraph{Spike count data: Poisson model.}
	
	We illustrate the estimator in the case of a Poisson likelihood with the spike count dataset, obtained from an experiment with an anesthetized and paralyzed adult male monkey (\textit{Macaca nemestrina}), who was presented with a visual stimuli consisting on moving dots with different moving directions. The experiment is fully described in \citet{Kohn_Movshon_2003}. The data consists of the total number of spikes per trial and the stimulus direction in a V5/MT cell and the sample size is $n=68$. 
	
	Radial and planar representations of the dataset are shown in the right panels of Figure~\ref{fig:Data}. To get more insight into the relationship between the variables, we maximize the circular local likelihood function to obtain the estimate of the transformed mean function $g(\theta)=\log(\mu(\theta))$. The estimated mean function, $\hat{\mu}(\theta)=\exp\{\hat{g}(\theta)\}$, is represented in the right panels of Figure~\ref{fig:Data}. According to the estimator, the mean count of spikes is larger when the stimulus direction is approximately $\pi/2$, and a lower expected number of spikes is estimated when the stimulus direction is opposite from that, approximately at $3\pi/2$.

	\paragraph{Pm10 particles data: gamma model.}\label{subsec:pm10}
	
	Now we consider the pm10 particle dataset introduced in Section~\ref{sec:intro}, where the response variable presents an asymmetric conditional distribution. The dataset consists of measurements of pm10 particles, expressed in $\mu g/m^3$, and recordings of the wind direction (with 0 degrees representing the north direction and a clockwise sense of rotation) and wind speed, expressed in km/hour, in a meteorological station in Pontevedra, Spain. Data on pm10 particles is measured hourly, while meteorological data is measured on a 10-minute base, and the period of study is the year 2019. In order to account for temporal dependence, we have used a subsample of the data, with observations taken every six hours. In this section we only consider the wind direction as the predictor (an extension accounting for the effect of the wind speed is discussed in Section~\ref{sec:extensions}). We aim to study how the pm10 particles change with the direction of the wind. One must note that there is a pulp factory approximately 2.5 km south-west from the meteorological station and, hence, it would be of interest to determine if the concentration of pm10 particles is actually higher when the wind blows from the factory's direction. For this aim, we have also removed observations in which the wind speed was lower than 1 km/hour, since winds with speed between 0 and 1 km/hour are considered as periods of calm according to the Beaufort scale. The final sample size is $n= 1156$.

	In the left panel of Figure~\ref{fig:datospm10_semipar} we have represented the pm10 measurements against the wind direction. By assuming a gamma likelihood, we have estimated the logarithm of the conditional mean function with the estimator presented in Section~\ref{sec:estimation}. The estimator of the mean is represented in the left panel of Figure~\ref{fig:datospm10_semipar}, where the approximate direction of the wind that blows from the pulp factory is represented with a star. Since the values of the response variable distort the representation of the mean function, the top right panel of Figure~\ref{fig:datospm10_semipar} shows a zoomed planar representation of the data and the estimated mean, with the 95\% point-wise confidence band. It can be observed that the mean concentration of pm10 particles changes with the direction of the wind and it seems that the concentration is higher when the wind blows from the South-West/West direction.

	\section{Extensions }\label{sec:extensions}

	\subsection{Partially linear and additive models}\label{subsec:PLM}
	
	Throughout this work, we have considered the estimation of regression functions (or transformed regression functions) with just a single circular predictor. However, in many practical situations, several covariates (possibly of different nature) may influence a response variable. Consider, for example, the pm10 particles dataset represented in Figure~\ref{fig:datospm10_semipar}. In order to study the concentration of pm10 particles, it seems reasonable to consider not only the wind direction as the covariate, but also the wind speed. Furthermore, these covariates could enter the model in a nonparametric form (for example maximizing the local log-likelihood as in \citet{Fan_etal_1998}) or in a parametric (possibly linear) way. 
	
	Consider a more general model with circular and real-valued covariates, with some of them entering the model parametrically and others having a nonparametric effect on the response. For the parametric part, assume that real-valued covariates enter the model linearly, while circular covariates can also enter the model linearly by means of their sine and cosine components. Let $g$ be the target function to estimate, which depends on real-valued covariates $X_1,\ldots,X_k,X_{k+1},\ldots,X_{r}$ and circular covariates $\Theta_1,\ldots,\Theta_j,\Theta_{j+1},\ldots,\Theta_s$. Assume that  $X_1,\ldots,X_k$ have a linear effect on the response, $\Theta_1,\ldots,\Theta_j$ also have a linear effect on the response through their sine and cosine components and $X_{k+1},\ldots,X_{r},\Theta_{j+1},\ldots,\Theta_s$ have an unknown (nonparametric) effect on the response. Then, we can model the target function $g$ as
	$$
	\begin{aligned}
	& g(X_1,\ldots,X_k,X_{k+1},\ldots,X_{r},\Theta_1,\ldots,\Theta_j,\Theta_{j+1},\ldots,\Theta_s)\\
	& =\alpha_0+\alpha_1X_1+\ldots+\alpha_{k}X_k + \gamma_{11}\cos(\Theta_1)+\gamma_{12}\sin(\Theta_1)+\ldots+ \gamma_{j1}\cos(\Theta_j)+\gamma_{j2}\sin(\Theta_j)\\
	& + \eta_{k+1}(X_{k+1})+\ldots+\eta_{r}(X_{r})+\rho_{j+1}(\Theta_{j+1})+\ldots+\rho_s(\Theta_s), 
	\end{aligned} $$
	where $\alpha_0,\ldots,\alpha_k,\gamma_{11},\gamma_{12},\ldots,\gamma_{j1},\gamma_{j2}$ are the parameters corresponding to the parametric part and $ \eta_{k+1},\ldots,\eta_{r},\rho_{j+1},\ldots,\rho_s$ represent  unknown functions.
	
	Model estimation can be carried out through a backfitting algorithm: first, the global parameters are estimated by maximizing the log-likelihood. By including the estimated global parameters on the model, one of the nonparametric functions is estimated by maximizing the local log-likelihood (with the approach of \citet{Fan_etal_1998} if the covariate is real-valued or with the method of Section~\ref{sec:estimation} if the covariate is circular). Now, the new estimated function is included in the model and the next nonparametric function is estimated as before, until one has estimated all smooth functions. The previous steps are repeated until convergence. This approach entails the selection of one bandwidth parameter for each real-valued covariate entering the model nonparametrically as well as one concentration parameter for each circular covariate with a nonparametric effect. 
	
	For the pm10 particles dataset, we have assumed a gamma likelihood and a partially linear model for the logarithm of the mean function:
	 \[g(X,\Theta)=\log(\mu(X,\Theta))=\alpha_0+\alpha_1X+\rho(\Theta),\] 
	where $X$ denotes wind speed and $\Theta$ denotes wind direction. The estimated average pm10 concentration is represented in the bottom left panel of Figure~\ref{fig:datospm10_semipar}, where the angle represents the wind direction, the distance to the center of the circle represents the wind speed and the colour indicates the estimated mean concentration of pm10 particles (as indicated in the legend). As it can be seen, larger values of the wind speed lead to an increase in the concentration of pm10 particles. The largest concentration is obtained for wind directions coming from the South-West/West direction, where the pulp factory is located.

	\subsection{Hyperspherical covariates}
	
	The methodology proposed in this work can be extended to account for hyperspherical covariates, which can be seen as a generalization of circular variables. Let $\bm{X}$ be a hyperspherical random variable  supported on $\mathbb{S}^d=\{\bm{x}\in\mathbb{R}^{d+1}: \bm{x}^{\top}\bm{x}=1\}$. Consider the regression setting where $Y$ is a real-valued variable, discrete or continuous, and $\bm{X}$ is a hyperspherical covariate. The estimation of a (conditional) characteristic of interest, $g(\bm{x})$ can be performed by constructing a  hyperspherical kernel weighted local likelihood estimator by following the approach presented in this paper and taking into account the following remarks. 
	
	Given a sample $\{(\bm{X}_i,Y_i)\}_{i=1}^n$, the Taylor-like expansion employed in (\ref{eq:approx_g}) can be substituted by a projected Taylor expansion, obtained after performing a radial projection of the target function from the $d$-dimensional sphere to $\mathbb{R}^{d+1}$  as in \citet{Garcia-Portugues_etal_2016}. The construction of the kernel weighted log-likelihood can be done by employing a spherical kernel $L_\kappa(\bm{x},\bm{X}_i)$, such as the von Mises-Fisher kernel. Note that condition (\ref{eq:condition_kernel}) must be replaced by
	$ L_\kappa(\bm{x},\bm{X}_i)=k_{\kappa,d}(L)L\left(\kappa(1-\bm{x}^{\top}\bm{X}_i)\right),$ where $\kappa$ is the concentration parameter, $k_{\kappa,d}(L)$ is a normalization constant and $L$ satisfies \eqref{eq:condition_K}. 
	
	Taking into account the hyperspherical kernel, an asymptotic normality result analogous to Theorem~\ref{prop:asymp_N} can be obtained where, in order to compute the expectations and variances of the quantities involved in the proof (see Section S2.1 of the Supplementary Material), the spherical change of variables in Lemma 2 of \cite{GarciaPortugues_etal2013} and in \citet[][pgs. 91-93]{Efthimiou_Frye2014} are employed. The approximation of the bias and variance of the estimator can be attained in an analogue way as in Section~\ref{seq:bias_variance}, which enables the construction of confidence intervals. A similar approach as the one in Section~\ref{sec:bw} can be derived to select the smoothing parameter.

	\section{Discussion}\label{sec:discussion}
	
	In this work, we have considered a general setting for kernel regression where the covariate presents a circular nature and the conditional distribution of the response variable given the covariate is a known and arbitrary parametric model. An estimator for a general conditional characteristic of interest is obtained by first approximating the target function by a Taylor-like sine-polynomial and, second, by maximizing the circular kernel weighted local likelihood. Particular cases of this estimator include the classical least-squares regression estimator or the logistic regression estimator (for circular covariate), already considered in the literature, but also encompass many other settings such as Poisson, gamma or beta regression, for example. The consideration of more complex settings with different types of covariates, including hyperspherical ones, has been also discussed. 
	
	Accurate approximations for the bias and variance of this general estimator were given, which allows to construct different inferential tools, such as pointwise confidence bands. In addition, an automatic rule to select the smoothing parameter, based on the approximated bias and variance, was provided. Simulation experiments for the Gaussian, Bernoulli, Poisson and gamma conditional distributions were conducted in order to ascertain the behaviour of the estimator with finite sample sizes and to compare the new selector to other alternatives. The empirical study showed that the performance of the refined rule to select the concentration parameter is much more satisfactory than cross-validation, except for the logistic case, where the refined rule is preferred only for large sample sizes. 
	
	Regarding the construction of the estimator, the sine-polynomial in \eqref{eq:sine_poly} is used to approximate the target function locally. However, any periodic local model could be employed in its place, such as one with different phase and frequency parameters. Since the estimator's performance depends heavily on the selection of the smoothing parameter, any appropriate local model would obtain a satisfactory behaviour, provided it comes along with a good selection of the concentration parameter.  
	The main benefits regarding the use of \eqref{eq:sine_poly} are that it arises naturally through a Taylor expansion and that it also allows for the estimation of the target function's derivatives, apart from a simple and inexpensive computation. 
	
	As mentioned above, the approximation of the bias and variance allows the computation of confidence intervals. Nevertheless, when employing the pointwise bands in Figures~\ref{fig:datospm10_semipar} and \ref{fig:Data} it is necessary to take into account that the coverage $1-\alpha$ is only for a given $\theta_0$. In order to assure that the target function is contained in the band for all the values of the covariate, one should consider simultaneous confidence bands. Such bands could be constructed by, for example, considering different results regarding the asymptotic distribution or through a bootstrap procedure \citep[see, e.g.,][]{Hall_etal_2000,Li_etal_2010}. It should be noted that, usually, simultaneous confidence bands tend to be quite conservative, achieving covering rates much lower than the desired coverage level. 
	
	On another note, the extension to partially linear models in Section~\ref{subsec:PLM} could be modified to account for other (partially) parametric models. For instance, in the case of real-valued covariates, some of the variables could enter the model parametrically through a quadratic or a polynomial effect, while the effect of the circular variables could be expressed through more sine and cosine components with different frequencies, as in truncated Fourier series. This truncated Fourier series approach is usually employed as a parametric regression model for circular covariates  known as the extended cosine model \citep[see, for example,][Ch. 8]{Pewsey_etal2013}. This is usually considered as an analogue to a polynomial regression model for circular covariates, and the use of too many terms would lead to overfitting issues.

	Finally, it also seems interesting to note that the truncated Fourier series approach, or extended cosine model, resembles the Fourier basis approach employed for nonparametric regression \citep[see, e.g.,][]{RICE1981353}. In fact, the idea behind the construction of both methodologies is the same, although in the latter a penalization on the curvature of the target function is added, with the penalization parameter controlling the smoothness of the estimator. This evidences that the kernel method presented in this paper is not the only form of nonparametric regression for circular covariates, and other methods, such as periodic splines, could be considered.  
	
	\bigskip
	\begin{center}
		{\large\bf SUPPLEMENTARY MATERIAL}
	\end{center}
	
	\begin{description}
		
		\item[Supplementary proofs and simulation results] Document containing technical proofs and complementary simulation results (PDF file)

	\end{description}

	\section*{Acknowledgments}
	The authors thank the Associate Editor and three anonymous
	reviewers for their helpful comments, which considerably improved the quality of
	the paper.

\begin{singlespace}
		\bibliographystyle{apalike}
		
		\bibliography{biblio}

\end{singlespace}

\vspace{-0.5cm}

\appendix

\section{Proof of Proposition 2}\label{ap:proof_prop}

We give the derivation of the result obtained in Proposition~\ref{prop:Expect_CRSC}. We make use of some preliminary results which we summarise in the following lemma.

\begin{lemma}\label{lem:snj_gamnj}
	Given a sample of circular data $\Theta_1,\ldots,\Theta_n$ with twice continuously density $f$ with $f(\theta_0)>0$ and a kernel $K_\kappa$ satisfying (\ref{eq:condition_kernel}) and (\ref{eq:condition_K}), we have
	\begin{itemize}[noitemsep,topsep=0pt]
		\item[a)] $s_{n,j}=\sum_{i=1}^{n}\sin^j(\Theta_i-\theta_0)K_\kappa(\Theta_i-\theta_0) = nf(\theta_0)2^{\frac{j}{2}}\kappa^{-\frac{j}{2}}[\widetilde{b}_{j}^*(K)+o_P(1)].$
		\item[b)] $ \gamma_{n,j}=\sum_{i=1}^{n}\sin^j(\Theta_i-\theta_0)K^2_\kappa(\Theta_i-\theta_0) = nf(\theta_0)2^{\frac{j-1}{2}}\kappa^{-\frac{j-1}{2}}[\widetilde{d}_{j}^*(K)+o_P(1)]. $
	\end{itemize}
\end{lemma}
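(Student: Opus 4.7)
The plan is to prove both identities by a standard moment-based argument: I will compute the expectation of each summand, then control the variance of the sum by Chebyshev's inequality. The $o_P(1)$ conclusion is interpreted in the asymptotic regime implicit in the paper, namely $\kappa\to\infty$ with $n\kappa^{-1/2}\to\infty$, which is the standing assumption later used in Proposition~\ref{prop:Expect_CRSC}.

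For part (a), I start from $\mathbb{E}[s_{n,j}]=n\int_0^{2\pi}\sin^j(\theta-\theta_0)K_\kappa(\theta-\theta_0)f(\theta)\,d\theta$, shift to $u=\theta-\theta_0\in(-\pi,\pi]$, and substitute $K_\kappa(u)=c_\kappa(K)K[\kappa(1-\cos u)]$ from (\ref{eq:condition_kernel}). Since the kernel concentrates near $u=0$, I Taylor-expand $f(\theta_0+u)=f(\theta_0)+f'(\theta_0)u+O(u^2)$ and then change variables $r=\kappa(1-\cos u)$ separately on $(0,\pi)$ and $(-\pi,0)$. This yields $du=dr/(\kappa\sin u)$ and $\sin^{j-1}u=(2r/\kappa-r^2/\kappa^2)^{(j-1)/2}=(2r/\kappa)^{(j-1)/2}(1+O(r/\kappa))$. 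Using the asymptotics $c_\kappa(K)\sim\kappa^{1/2}/\lambda(K)$ from (\ref{eq:cLambdEdu}), the leading constants collapse into $nf(\theta_0)\,2^{j/2}\kappa^{-j/2}b_j^*(K)$ for even $j$. For odd $j$ the contributions from $u>0$ and $u<0$ cancel at leading order, and the next Taylor term in $f$ produces an $O(\kappa^{-(j+1)/2})$ remainder, consistent with $b_j^*(K)=0$.

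For the variance I bound $\mbox{Var}(s_{n,j})\le n\,\mathbb{E}[\sin^{2j}(\Theta-\theta_0)K_\kappa^2(\Theta-\theta_0)]$. Repeating the change of variables with $K^2$ in place of $K$ and using $c_\kappa(K)^2\sim\kappa/\lambda(K)^2$, this expectation is of order $\kappa^{-(2j-1)/2}$, so $\mbox{Var}(s_{n,j}/n)=O(n^{-1}\kappa^{-(2j-1)/2})$. The ratio to the squared mean $\asymp\kappa^{-j}$ is $O(\kappa^{1/2}/n)=o(1)$, and Chebyshev delivers (a). Part (b) goes by the identical route applied to $\gamma_{n,j}$: the expectation uses $c_\kappa(K)^2$ together with $\int r^{(j-1)/2}K^2(r)\,dr$, producing the factor $2^{(j-1)/2}\kappa^{-(j-1)/2}d_j^*(K)$; the variance is controlled by $\mathbb{E}[\sin^{2j}K_\kappa^4]$, which is finite by the $l=4$ case of condition (\ref{eq:condition_K}), and is again dominated by the squared mean under $n\kappa^{-1/2}\to\infty$.

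The main obstacle will be tracking the remainders carefully so that the factor $(1-r/(2\kappa))^{(j-1)/2}\to 1$ and the replacement $f(\theta_0+u)\mapsto f(\theta_0)$ inside the integrals each contribute only $o(1)$ relative to the leading term; this requires a dominated-convergence argument anchored on the integrability conditions in (\ref{eq:condition_K}). A secondary technical point is the odd-$j$ case, where I must verify that the cancellation between the two halves of the circle does not leave a remainder at the same order as the even-$j$ leading term, which follows from the differentiability of $f$ and the parity of $\sin^j u$ about $u=0$.
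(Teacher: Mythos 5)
Your proposal is correct and follows essentially the same route as the paper's proof: compute $\mathbb{E}(s_{n,j})$ and $\mathbb{E}(\gamma_{n,j})$ by splitting the circle into the two half-arcs, substituting $r=\kappa(1-\cos u)$ on each, using $c_\kappa(K)^{-1}\sim\kappa^{-1/2}\lambda(K)$ and the parity cancellation for odd $j$, then bound the variances by $n\,\mathbb{E}[\sin^{2j}K_\kappa^2]$ and $n\,\mathbb{E}[\sin^{2j}K_\kappa^4]$ and conclude via Chebyshev under $n\kappa^{-1/2}\to\infty$. The technical points you flag (dominated convergence for the truncated integrals over $[0,2\kappa]$, and the odd-$j$ cancellation) are exactly the ones the paper handles, via its equation (\ref{eq:limit_Edu}) and the sign flip between the two half-arcs.
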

The proof is given in Section S2 of the Supplementary Material.

\begin{proof}[Proof of Proposition~\ref{prop:Expect_CRSC}]
	The proof is based on the proof of Theorem 1 in \citet{Fan_Gijbels_1995}, taking into account the results in Lemma~\ref{lem:snj_gamnj}. 	By denoting $d_n=\mbox{tr}[\bm
	W-\bm{W\Theta}(\bm{\Theta}^{\top}\bm{W\Theta})^{-1}\bm{\Theta}^{\top}\bm{W}]$, we can express $\hat{\sigma}^2(\theta_0)$ as 
	$$ \begin{aligned}
		\hat{\sigma}^2(\theta_0) & = d_n^{-1}\sum_{i=1}^{n}(Y_i-\widehat{Y}_i)^2K_\kappa(\Theta_i-\theta_0)= d_n^{-1}(\bm{Y}-\bm{\Theta}\hat{\bm{\beta}})^{\top}\bm{W}(\bm{Y}-\bm{\Theta}\hat{\bm{\beta}})\\
		& = d_n^{-1}\bm{Y}^{\top}[\bm{W}-\bm{W\Theta}(\bm{\Theta}^{\top}\bm{W\Theta})^{-1}\bm{\Theta}^{\top}\bm{W}]\bm{Y}.
	\end{aligned}  $$
	
	Now, because of the continuity of $\sigma^2(\theta)$, we have
	\begin{equation}
		\mathbb{E}[\hat{\sigma}^2(\theta_0) |\Theta_1,\ldots,\Theta_n]=d_n^{-1}\bm{m}^{\top}[\bm{W}-\bm{W\Theta}(\bm{\Theta}^{\top}\bm{W\Theta})^{-1}\bm{\Theta}^{\top}\bm{W}]\bm{m} + \sigma^2(\theta_0),
		\label{eq:proof_CRSC_expect_sig}
	\end{equation}
	where $\bm{m}=(m(\Theta_1),\ldots,m(\Theta_n))^{\top}$. Further, we can approximate $\bm{r}=\bm{m}-\bm{\Theta\beta}$ by the vector with elements
	\begin{equation}
		r(\Theta_i)=m(\Theta_i)-\sum_{j=0}^{p}\beta_j \sin^j(\Theta_i-\theta_0)=\beta_{p+1} \sin^{p+1}(\Theta_i-\theta_0) + O_P\left(\kappa^{-\frac{p+2}{2}}\right).
		\label{eq:proof_CRSC_expansion}
	\end{equation}
	In addition, 
	\begin{equation}
		\begin{aligned}
			d_n& =\mbox{tr}[\bm
			W-\bm{W\Theta}(\bm{\Theta}^{\top}\bm{W\Theta})^{-1}\bm{\Theta}^{\top}\bm{W}]\\
			& = s_{n,0} - \mbox{tr}[(\bm{\Theta}^{\top}\bm{W\Theta})^{-1}\bm{\Theta}^{\top}\bm{W}^2\bm{\Theta}]= nf(\theta_0)+O_P\left(\kappa^{\frac{1}{2}}\right),
		\end{aligned} 
		\label{eq:proof_CRSC_dn}
	\end{equation}
	where we have used Lemma~\ref{lem:snj_gamnj}. Then, starting from (\ref{eq:proof_CRSC_expect_sig}) and applying (\ref{eq:proof_CRSC_expansion}) and (\ref{eq:proof_CRSC_dn}),
	\begin{equation}
		\begin{aligned}
			\mathbb{E}[\hat{\sigma}^2(\theta_0) |\Theta_1,\ldots,\Theta_n]&=d_n^{-1}[s_{n,2p+2}-\bm{c}_n^{\top}\bm{S}_n^{-1}\bm{c}_n]\beta_{p+1}^2+\sigma^2(\theta_0) + o_P\left(\kappa^{-(p+1)}\right) \\
			& = C_p\beta_{p+1}^2 2^{p+1}\kappa^{-(p+1)} + \sigma^2(\theta_0) + o_P\left(\kappa^{-(p+1)}\right),
		\end{aligned}
		\label{eq:proof_CRSC_expect_sigma}
	\end{equation}
	where $\bm{c}_n=(s_{n,p+1},\ldots,s_{n,2p+1})^{\top}$. On the other hand, we have $$N^{-1}=	\bm{e}_1^{\top}(\bm{\Theta}^{\top}\bm{W}\bm{\Theta})^{-1}\bm{\Theta}^{\top}\bm{W}^2\bm{\Theta}(\bm{\Theta}^{\top}\bm{W}\bm{\Theta})^{-1}\bm{e}_1=\bm{e}_1^{\top}\bm{S}_n^{-1}\bm{\Gamma}_n\bm{S}_n^{-1}\bm{e}_1.$$
	By recalling Lemma~\ref{lem:snj_gamnj}, it is easy to see that 
	\begin{equation}
		\bm{S}_n=nf(\theta_0)[\bm{LBL}+o_P(\bm{L1L})], \quad \text{and} \quad
		\bm{\Gamma}_n=2^{-\frac{1}{2}}nf(\theta_0)\kappa^{\frac{1}{2}}[\bm{LDL}+o_P(\bm{L1L})],
		\label{eq:Sn_Gamman}
	\end{equation}
	where $\bm{L}=\mbox{diag}\left\{1,2^{\frac{1}{2}}\kappa^{-\frac{1}{2}},\ldots,2^{\frac{p}{2}}\kappa^{-\frac{p}{2}}\right\}$ and $\bm{1}$ denotes the $(p+1)\times(p+1)$ matrix with all elements equal to 1. Thus, 
	\begin{equation}
		\begin{aligned}
			N^{-1}&=\dfrac{\kappa^{\frac{1}{2}}}{2^{\frac{1}{2}}nf(\theta_0)}\bm{e}_1^{\top}\bm{L}^{-1}\bm{B}^{-1}\bm{D}\bm{B}^{-1}\bm{L}^{-1}\bm{e}_1+o_P\left(n^{-1}\kappa^{\frac{1}{2}}\right)\\
			&=\dfrac{\kappa^{\frac{1}{2}}}{2^{\frac{1}{2}}nf(\theta_0)}\bm{e}_1^{\top}\bm{B}^{-1}\bm{D}\bm{B}^{-1}\bm{e}_1+o_P\left(n^{-1}\kappa^{\frac{1}{2}}\right) = \dfrac{a_0\kappa^{\frac{1}{2}}}{2^{\frac{1}{2}}nf(\theta_0)} + o_P\left(n^{-1}\kappa^{\frac{1}{2}}\right).
		\end{aligned}
		\label{eq:proof_CRSC_N}
	\end{equation}
	The combination of (\ref{eq:proof_CRSC_expect_sigma}) and (\ref{eq:proof_CRSC_N}) leads to 
	\begin{equation*}
		\mathbb{E}[\mbox{CRSC}(\theta_0;\kappa)|\Theta_1,\cdots,\Theta_n]  =C_p\beta_{p+1}^2 2^{p+1} \kappa^{-(p+1)}+\sigma^2(\theta_0)+(p+1)\dfrac{\sigma^2(\theta_0)a_0\kappa^{\frac{1}{2}}}{2^{\frac{1}{2}}nf(\theta_0)} + o_P\left( \frac{1}{\kappa^{p+1}} +\frac{\kappa^{\frac{1}{2}}}{n} \right).
	\end{equation*}

\end{proof}

\newpage

\bigskip

\section*{ \centering \LARGE Supplementary Material}
\beginsupplement

\bigskip

\begin{abstract}
	This Supplementary Material for the paper ``A general framework for circular local likelihood regression'' provides further results complementing the main text. Section S1 gives details on the estimator presented in Section 2 of the main manuscript when the variable of interest follows a Bernoulli distribution. Section S2 provides technical proofs: Subsection S2.1 gives the proof of Theorem~1, in addition to some discussion on the extension of the proof to an even more general setting; Subsection S2.2 contains the proof of Lemma~1, needed for the derivation of Proposition~1 in the main manuscript; Subsection~S2.3 contains the derivation of the concentration parameter in the least-squares case. Finally, Section S3 provides additional graphical results of the simulation experiments carried out in Section 5 of the main text, including a detailed description of the results.
\end{abstract}
\noindent%
{\it Keywords:} \footnotesize Circular data, Data-driven smoothing selection,  Local likelihood, Nonparametric regression
\vfill

\section{Particular case: Bernoulli distribution}	

Section 2.1 of the main text contains details on the circular local likelihood estimator when the response variable follows Normal and Poisson distributions. Now, we consider the case where the variable of interest, $Y$, is a binary variable, with a Bernoulli conditional distribution, \textit{i.e.}, $[Y|\Theta=\theta_0]\sim \mbox{Bernoulli}(p(\theta_0))$. Consider the target function
\begin{equation}
	g(\theta_0)=\mbox{logit}(p(\theta_0))=\log\left(\frac{p(\theta_0)}{1-p(\theta_0)}\right).\label{eq:logit}
\end{equation}
Then, we have
$$ l(g(\theta_0),y)=yg(\theta_0)-\log\left(1+\exp\{g(\theta_0)\}\right) $$
and, thus, the local circular kernel log-likelihood  is given by
$$ \mathcal{L}_p(\bm{\beta};\kappa,\theta_0)=\sum_{i=1}^{n}\left(Y_i\bm{\Theta}_i^{\top}\bm{\beta}-\log\left(1+\exp\{\bm{\Theta}_i^{\top}\bm{\beta}\}\right)\right)K_\kappa(\Theta_i-\theta_0), $$
leading to a nonparametric logistic regression model with a circular covariate, studied by \citet{DiMarzio2018} in the context of nonparametric classification. 

\section{Technical proofs and derivations}

We will make extensive use of the Taylor-like sine expansion employed by \cite{DiMarzioetal2009}, which is based on the fact that, for small values of $\alpha$, we have $\sin\alpha\approx\alpha$, and then, for a function $v$,
\begin{equation}
	v(\theta+\alpha)=v(\theta)+v'(\theta)\sin\alpha + \ldots +\frac{1}{k!}v^{(k)}(\theta)\sin^{k}\alpha + O\left(\sin^{k+1}\alpha\right).
	\label{eq:expansion}
\end{equation}

\subsection{Proof of Theorem~1}

\begin{proof}[Proof of Theorem 1]
	The scheme of this proof is based on the proofs of Lemma 1 and Lemma 2 in \cite{Fan_etal1995}. The following notation will be used:
	
	$$\bar{g}(\theta_0,\Theta_i)=g(\theta_0)+g'(\theta_0)\sin(\Theta_i-\theta_0)+\ldots+\frac{g^{(p)}(\theta_0)}{p!}\sin^p(\Theta_i-\theta_0)$$
	and
	$$  \bm{B}_i=\begin{pmatrix}
		1 \\
		\sin(\Theta_i-\theta_0)\kappa^{\frac{1}{2}}\\
		\vdots \\
		\sin^p(\Theta_i-\theta_0)\kappa^{\frac{p}{2}}/p!\\
	\end{pmatrix}.  $$
	
	Recall that the normalized estimator is given by 
	$$\widehat{\bm{\beta}}_N=n^{\frac{1}{2}}\kappa^{-\frac{1}{4}}\left( \hat{\beta}_0-g(\theta_0), \kappa^{-\frac{1}{2}}[\hat{\beta}_1-g'(\theta_0)], \ldots, \kappa^{-\frac{p}{2}}[p!\hat{\beta}_p-g^{(p)}(\theta_0)] \right)^{\top}.$$
	It is clear that $\widehat{\bm{\beta}}_N$ maximizes
	$$  \ell(\bm{\beta}_N)=\sum_{i=1}^n\left\{ l[\bar{g}(\theta_0,\Theta_i)+n^{-\frac{1}{2}}\kappa^{\frac{1}{4}}\bm{\beta}_N^{\top}\bm{B}_i,Y_i] - l[\bar{g}(\theta_0,\Theta_i),Y_i] \right\}K[\kappa\{1-\cos(\Theta_i-\theta_0)\}] $$
	with respect to $\bm{\beta}_N$. 	After a Taylor expansion on the log-likelihood function, 
	\begin{equation}
		\begin{aligned}
			\ell(\bm{\beta}_N) & =  n^{-\frac{1}{2}}\kappa^{\frac{1}{4}}\sum_{i=1}^n l'[\bar{g}(\theta_0,\Theta_i),Y_i]\bm{\beta}_N^{\top}\bm{B}_iK[\kappa\{1-\cos(\Theta_i-\theta_0)\}]  \\
			& +  \dfrac{n^{-1}\kappa^{\frac{1}{2}}}{2}\sum_{i=1}^n l^{(2)}[\bar{g}(\theta_0,\Theta_i),Y_i](\bm{\beta}_N^{\top}\bm{B}_i)^2K[\kappa\{1-\cos(\Theta_i-\theta_0)\}] \\
			& +  \dfrac{n^{-\frac{3}{2}}\kappa^{\frac{3}{4}}}{6}\sum_{i=1}^n l^{(3)}(g_a,Y_i)(\bm{\beta}_N^{\top}\bm{B}_i)^3K[\kappa\{1-\cos(\Theta_i-\theta_0)\}]  \\ 
			&  = \bm{W}_n^{\top}\bm{\beta}_N^{\top} + \frac{1}{2}\bm{\beta}_N^{\top}\bm{A}_n\bm{\beta}_N +\dfrac{n^{-\frac{3}{2}}\kappa^{\frac{3}{4}}}{6}\sum_{i=1}^n l^{(3)}(g_a,Y_i)(\bm{\beta}_N^{\top}\bm{B}_i)^3K[\kappa\{1-\cos(\Theta_i-\theta_0)\}],
		\end{aligned} \label{eq:lbn}
	\end{equation}
	where $g_a$ is a random value between $\bar{g}(\theta_0,\Theta_i)$ and $\bar{g}(\theta_0,\Theta_i)+n^{-\frac{1}{2}}\kappa^{\frac{1}{4}}\bm{\beta}_N^{\top}\bm{B}_i$, 
	$$ \bm{W}_n=n^{-\frac{1}{2}}\kappa^{\frac{1}{4}}\sum_{i=1}^{n}l'[\bar{g}(\theta_0,\Theta_i),Y_i]\bm{B}_iK[\kappa\{1-\cos(\Theta_i-\theta_0)\}]  $$
	and 
	$$ \bm{A}_n=n^{-1}\kappa^{\frac{1}{2}} \sum_{i=1}^n l^{(2)}[\bar{g}(\theta_0,\Theta_i),Y_i]\bm{B}_i\bm{B}_i^{\top}K[\kappa\{1-\cos(\Theta_i-\theta_0)\}].  $$
	In the following it will be shown that, as $n\rightarrow \infty$,
	\begin{equation}
		\bm{A}_n= \bm{A} + o_P(1), \label{eq:An}
	\end{equation}
	with $\bm{A}$ defined in the main text. It holds that $(\bm{A}_n)_{ij}=\mathbb{E}[(\bm{A}_n)_{ij}] + O_P\left\{[\text{Var}(\bm{A}_n)]_{ij}^{1/2}\right\}$. For the expectation part, employing the law of iterated expectation and noting that $l^{(r)}[g(\theta),y]$ is linear in $y$ yields
	$$ \mathbb{E}[(\bm{A}_n)_{ij}] = \dfrac{\kappa^{\frac{1}{2}} \kappa^{\frac{i+j-2}{2}}}{(i-1)!(j-1)!} \mathbb{E}\left\{ l^{(2)}[\bar{g}(\theta_0,\Theta_1),\mu(\Theta_1)]\sin^{i+j-2}(\Theta_1-\theta_0)K[\kappa\{1-\cos(\Theta_1-\theta_0)\}] \right\}. $$
	A Taylor expansion gives
	$$  l^{(2)}[\bar{g}(\theta_0,\Theta_1),\mu(\Theta_1)] = l^{(2)}[g(\Theta_1),\mu(\Theta_1)] + [\bar{g}(\theta_0,\Theta_1)-g(\Theta_1)]l^{(3)}[g_b,\mu(\Theta_1)], $$
	where $g_b$ is between $\bar{g}(\theta_0,\Theta_1)$ and $g(\Theta_1)$. Consequently,
	\begin{equation}
		\begin{aligned}
			& \mathbb{E}[(\bm{A}_n)_{ij}] \\
			& = \dfrac{\kappa^{\frac{1}{2}} \kappa^{\frac{i+j-2}{2}}}{(i-1)!(j-1)!}\mathbb{E}\left\{l^{(2)}[g(\Theta_1 ),\mu(\Theta_1  )]\sin^{i+j-2}(\Theta_1  -\theta_0)K[\kappa\{1-\cos(\Theta_1  -\theta_0)\}] \right\} \\
			& +  \dfrac{\kappa^{\frac{1}{2}} \kappa^{\frac{i+j-2}{2}}}{(i-1)!(j-1)!}\mathbb{E}\left\{[\bar{g}(\theta_0,\Theta_1 )-g(\Theta_1 )]l^{(3)}[g_b,\mu(\Theta_1 )]\sin^{i+j-2}(\Theta_1  -\theta_0)K[\kappa\{1-\cos(\Theta_1 -\theta_0)\}]\right\} \\
			& = (A) + (B). 
			\label{eq:AmasB}
	\end{aligned}\end{equation}
	Now, the term $(A)$ can be computed as 
	\begin{equation*}
		\begin{aligned}
			(A) &  = \dfrac{\kappa^{\frac{1}{2}} \kappa^{\frac{i+j-2}{2}}}{(i-1)!(j-1)!}\int_0^{2\pi}l^{(2)}[g(\alpha ),\mu(\alpha )]\sin^{i+j-2}(\alpha -\theta_0)K[\kappa\{1-\cos(\alpha -\theta_0)\}] f(\alpha) d\alpha  \\
			& = \dfrac{\kappa^{\frac{1}{2}} \kappa^{\frac{i+j-2}{2}}}{(i-1)!(j-1)!} \int_0^{\pi}l^{(2)}[g(\theta_0+\varphi ),\mu(\theta_0+\varphi )]\sin^{i+j-2}(\varphi) K[\kappa\{1-\cos\varphi\}] f(\theta_0+\varphi) d\varphi\\
			& + \dfrac{\kappa^{\frac{1}{2}} \kappa^{\frac{i+j-2}{2}}}{(i-1)!(j-1)!} \int_{\pi}^{2\pi}l^{(2)}[g(\theta_0+\varphi ),\mu(\theta_0+\varphi )]\sin^{i+j-2}(\varphi) K[\kappa\{1-\cos\varphi\}]f(\theta_0+\varphi) d\varphi,
	\end{aligned}\end{equation*}
	where the change of variables $\varphi=\alpha-\theta_0$ was employed. 	Next, the change of variables $r=\kappa(1-\cos\varphi)$ is used. For the first integral, $\varphi\in[0,\pi]$, giving $\varphi=\arccos(1-r/\kappa)$ and $d\varphi=\kappa^{-1}\left(\frac{2r}{\kappa}-\frac{r^2}{\kappa^2}\right)^{-1/2}dr$. On the other hand, for the second integral $\varphi\in[\pi,2\pi]$, and thus,  $\varphi=-\arccos(1-r/\kappa)$ and $d\varphi=-\kappa^{-1}\left(\frac{2r}{\kappa}-\frac{r^2}{\kappa^2}\right)^{-1/2}dr$. Consequently, 
	$$ \begin{aligned}
		(A) &  =  \dfrac{\kappa^{-\frac{1}{2}} \kappa^{\frac{i+j-2}{2}}}{(i-1)!(j-1)!} \int_0^{2\kappa}\rho\left[\theta_0+\arccos\left(1-\frac{r}{\kappa}\right)\right]f\left[\theta_0+\arccos\left(1-\frac{r}{\kappa}\right)\right]\\
		& \times \left(\frac{2r}{\kappa}-\frac{r^2}{\kappa^2}\right)^{\frac{i+j-3}{2}} K(r) dr\\
		& + \dfrac{\kappa^{-\frac{1}{2}} \kappa^{\frac{i+j-2}{2}}}{(i-1)!(j-1)!} \int_0^{2\kappa}\rho\left[\theta_0-\arccos\left(1-\frac{r}{\kappa}\right)\right]f\left[\theta_0-\arccos\left(1-\frac{r}{\kappa}\right)\right]\\
		& \times\left[-\left(\frac{2r}{\kappa}-\frac{r^2}{\kappa^2}\right)^{\frac{1}{2}}\right]^{i+j-2}\left(\frac{2r}{\kappa}-\frac{r^2}{\kappa^2}\right)^{-\frac{1}{2}} K(r) dr.
	\end{aligned}$$
	It is now necessary to distinguish two separate cases: $i+j-2$ even or odd. In the situation where $i+j-2$ is even, it holds that 
	$$ \begin{aligned}
		(A) &  =  \dfrac{2\kappa^{-\frac{1}{2}} \kappa^{\frac{i+j-2}{2}}\rho(\theta_0)f(\theta_0)}{(i-1)!(j-1)!} \int_0^{2\kappa}\left(\frac{2r}{\kappa}-\frac{r^2}{\kappa^2}\right)^{\frac{i+j-3}{2}} K(r) dr[1+o(1)]\\
		& =\dfrac{2\rho(\theta_0)f(\theta_0)}{(i-1)!(j-1)!} \int_0^{2\kappa}r^{\frac{i+j-3}{2}}\left(2-\frac{r}{\kappa}\right)^{\frac{i+j-3}{2}} K(r) dr[1+o(1)]\\
		& =\dfrac{2\rho(\theta_0)f(\theta_0)}{(i-1)!(j-1)!}\left[ 2^{\frac{i+j-3}{2}} \int_0^{\infty}r^{\frac{i+j-3}{2}} K(r) dr + o(1) \right] [1+o(1)]\\
		& =\dfrac{2^{\frac{i+j-1}{2}}\rho(\theta_0)f(\theta_0)}{(i-1)!(j-1)!} {b}_{i+j-2}(K) + o(1),\\
	\end{aligned}$$
	where $ {b}_{j}(K)=\int_0^{\infty}r{\frac{j-1}{2}}K(r)dr<\infty$ (because of the assumption in equation (7) of the main text). Analogous arguments show that, when $i+j-2$ is odd, $(A) = o(1)$. Thus, it is possible to write
	$$ (A) = \dfrac{2^{\frac{i+j-1}{2}}\rho(\theta_0)f(\theta_0)}{(i-1)!(j-1)!} {b}^{*}_{i+j-2}(K) + o(1), $$
	with 
	$$  {b}_j^*(K)=\begin{cases}
		0 & \text{if}  \ j \ \text{is odd},\\
		{b}_j(K) & \text{if} \ j \ \text{is even}.
	\end{cases} $$ 
	Now, because of assumption $C2$, the term $l^{(3)}[g_b,\mu(\Theta_1 )]$ in the second addend of \eqref{eq:AmasB} is bounded. Thus, by noting that $\bar{g}(\theta_0,\Theta_1)-g(\Theta_1)=o_P\left(\kappa^{-\frac{p}{2}}\right)$, similar arguments to the ones above show that the  term  $(B)$ in \eqref{eq:AmasB} is $o(1)$.
	
	Next, the variance of $\bm{A}_n$ is considered.  It holds that
	\begin{equation*}
		\begin{aligned}
			&	\mbox{Var}[(\bm{A}_n)_{ij}] \\
			& = \mbox{Var}\left( \dfrac{n^{-1}\kappa^{\frac{1}{2}} \kappa^{\frac{i+j-2}{2}}}{(i-1)!(j-1)!} \sum_{i=1}^nl^{(2)}[\bar{g}(\theta_0,\Theta_i),Y_i]\sin^{i+j-2}(\Theta_i-\theta_0)K[\kappa\{1-\cos(\Theta_i-\theta_0)\}]  \right) \\
			& \leq \dfrac{n^{-1}\ \kappa^{i+j-1}}{[(i-1)!]^2[(j-1)!]^2}\mathbb{E}\left( \{{l^{(2)}}[\bar{g}(\theta_0,\Theta_1),Y_1]\}^2\sin^{2(i+j-2)}(\Theta_1-\theta_0)K^2[\kappa\{1-\cos(\Theta_1-\theta_0)\}]   \right)\\
			& = \dfrac{n^{-1}\ \kappa^{i+j-1}}{[(i-1)!]^2[(j-1)]!^2}\mathbb{E}\left[ \sin^{2(i+j-2)}(\Theta_1-\theta_0)K^2[\kappa\{1-\cos(\Theta_1-\theta_0)\}] \mathbb{E}\left( \left\{{l^{(2)}}[\bar{g}(\theta_0,\Theta_1),Y_1]\right\}^2 \bigg| \Theta_1 \right) \right]\\
			& = \dfrac{n^{-1}\ \kappa^{i+j-1}}{[(i-1)!]^2[(j-1)!]^2}\mathbb{E}\left[ \left\{{l^{(2)}}[\bar{g}(\theta_0,\Theta_1),\mu(\Theta_1)]\right\}^2\sin^{2(i+j-2)}(\Theta_1-\theta_0)K^2[\kappa\{1-\cos(\Theta_1-\theta_0)\}]   \right]\\
			& + \dfrac{n^{-1}\ \kappa^{i+j-1}}{[(i-1)!]^2[(j-1)!]^2}\mathbb{E}\left[ \mbox{Var}\{l^{(2)}[\bar{g}(\theta_0,\Theta_1),Y_1]|\Theta_1\}\sin^{2(i+j-2)}(\Theta_1-\theta_0)K^2[\kappa\{1-\cos(\Theta_1-\theta_0)\}]   \right],\\
			\label{eq:VarAn}
	\end{aligned}\end{equation*}
	where the linearity of $l^{(r)}[g(\theta),y] $ in $y$ was employed again. Note that $\mbox{Var}\{l^{(2)}[\bar{g}(\theta_0,\Theta_1),Y_1]|\Theta_1\}$ is a bounded term because of assumption $C2$. Then, analogous arguments  to those used for the expectation of $\bm{A}_n$ yield  $ \mbox{Var}[(\bm{A}_n)_{ij}]=O\left(n^{-1}\kappa^{\frac{1}{2}}\right)$, implying the result in \eqref{eq:An}. 
	
	In addition, the third term in \eqref{eq:lbn} is $O_P\left(n^{-\frac{1}{2}}\kappa^{\frac{1}{4}}\right)$. This is shown by first noting that the random vector $\bm{B}_i$ is bounded (because of assumption C4) and afterwards noticing that the expectation of the absolute value of last term in \eqref{eq:lbn} is bounded by 
	$$ O\left( n\kappa^{\frac{3}{4}}\mathbb{E}\left[|l^{(3)}(g_a,Y_1)K[\kappa\{1-\cos(\Theta_1-\theta_0)\}] |\right] \right),  $$
	which is $O\left(n^{-\frac{1}{2}}\kappa^{\frac{1}{4}}\right)$ by taking into account equation (5) of the main text and employing arguments similar to the ones above. Therefore, it holds
	$$\ell(\bm{\beta}_N)= \bm{W}_n^{\top}\bm{\beta}_N^{\top} +  \frac{1}{2}\bm{\beta}_N^{\top}\bm{A}\bm{\beta}_N + o_P(1).$$
	Therefore, employing the quadratic approximation lemma \citep[][page 210]{FanGijbels1996} yields that if $\bm{W}_n$ is a stochastically bounded sequence of random vectors, the maximizer of $\ell(\bm{\beta}_N)$, namely $\widehat{\bm{\beta}_N}$, is given by  
	$$ \widehat{\bm{\beta}_N}=\bm{A}^{-1}\bm{W}_n +o_P(1).$$
	Thus, by proving the asymptotic normality of $\bm{W}_n$, one can verify the asymptotic normality of $\widehat{\bm{\beta}_N}$. First, the expectation of the $j$th element of $\bm{W}_n$ is computed by taking into account the linearity of $l^{(r)}[g(\theta),y]$ in $y$:
	\begin{equation*}
		\begin{aligned}
			\mathbb{E}[(\bm{W}_n)_j] & = n^{\frac{1}{2}}\kappa^{\frac{1}{4}} \frac{\kappa^{\frac{j-1}{2}}}{(j-1)!}\int_0^{2\kappa}l'[\bar{g}(\theta_0,\alpha),\mu(\alpha)]  \sin^{j-1}(\alpha-\theta_0) )K[\kappa\{1-\cos(\alpha-\theta_0)\}] f(\alpha)d\alpha\\
			& = n^{\frac{1}{2}}\kappa^{\frac{1}{4}} \frac{\kappa^{\frac{j-1}{2}}}{(j-1)!}\int_0^{2\kappa}l'[\bar{g}(\theta_0,\theta_0+\varphi),\mu(\theta_0+\varphi)]  \sin^{j-1}\varphi )K[\kappa\{1-\cos\varphi\}] f(\theta_0+\varphi)d\varphi,\\
		\end{aligned}
	\end{equation*}
	where the change of variables $\varphi=\alpha-\theta_0$ was employed. Next, splitting the integral and performing the change of variables $r=\kappa[1-\cos\varphi]$ on each part gives
	\begin{equation*}
		\begin{aligned}
			\mathbb{E}[(\bm{W}_n)_j] & = n^{\frac{1}{2}}\kappa^{-\frac{3}{4}} \frac{\kappa^{\frac{j-1}{2}}}{(j-1)!}\int_0^{2\kappa}l'\left\{\bar{g}[\theta_0,\theta_0+\arccos(1-r/\kappa)],\mu[\theta_0+\arccos(1-r/\kappa)]\right\}  \\
			& \times \left( \frac{2r}{\kappa} - \frac{r^2}{\kappa^2} \right)^{\frac{j-2}{2}} K(r)f[\theta_0+\arccos(1-r/\kappa)]dr\\
			& +  n^{1/2}\kappa^{-\frac{3}{4}} \frac{\kappa^{\frac{j-1}{2}}}{(j-1)!}\int_0^{2\kappa}l'\left\{\bar{g}[\theta_0,\theta_0-\arccos(1-r/\kappa)],\mu[\theta_0-\arccos(1-r/\kappa)]\right\}  \\
			& \times \left[-\left( \frac{2r}{\kappa} - \frac{r^2}{\kappa^2} \right)^{\frac{1}{2}}\right]^{j-1}\left( \frac{2r}{\kappa} - \frac{r^2}{\kappa^2} \right)^{-\frac{1}{2}} K(r)f[\theta_0-\arccos(1-r/\kappa)]dr.\\
		\end{aligned}
	\end{equation*}
	Now, note that, by performing a Taylor expansion around $g[\theta_0\pm\arccos(1-r/\kappa)]$, we have
	$$ \begin{aligned}
		& l'\left\{\bar{g}[\theta_0,\theta_0 \pm \arccos(1-r/\kappa)],\mu[\theta_0 \pm \arccos(1-r/\kappa)]\right\} = \rho[\theta_0 \pm \arccos(1-r/\kappa)]\\
		& \times \left\{\bar{g}[\theta_0,\theta_0 \pm \arccos(1-r/\kappa)]-g[\theta_0 \pm \arccos(1-r/\kappa)]\right\},
	\end{aligned} $$
	where it was employed that $l'[g(\theta),\mu(\theta)]=0 \ \forall \ \theta$. In addition, $$
	\begin{aligned}
		&	\bar{g}[\theta_0,\theta_0 \pm \arccos(1-r/\kappa)]-g[\theta_0 \pm \arccos(1-r/\kappa)]\\
		& 	=-\left[\dfrac{g^{(p+1)}(\theta_0)}{(p+1)!}\sin^{p+1}[\pm \arccos(1-r/\kappa)] +o\left(\kappa^{-\frac{p+1}{2}}\right)\right],
	\end{aligned}
	$$
	which gives
	$$ \begin{aligned}
		& l'\left\{\bar{g}[\theta_0,\theta_0 \pm \arccos(1-r/\kappa)],\mu[\theta_0 \pm \arccos(1-r/\kappa)]\right\} = -\rho[\theta_0 \pm \arccos(1-r/\kappa)]\\
		& \times \left[\dfrac{g^{(p+1)}(\theta_0)}{(p+1)!}\sin^{p+1}[\pm \arccos(1-r/\kappa)] +o\left(\kappa^{-\frac{p+1}{2}}\right)\right].
	\end{aligned} $$
	Then, it holds
	\begin{equation*}
		\begin{aligned}
			\mathbb{E}[(\bm{W}_n)_j] & =\left(n^{\frac{1}{2}}\kappa^{-\frac{3}{4}} \frac{\kappa^{\frac{j-1}{2}}}{(j-1)!}\frac{g^{(p+1)}(\theta_0)}{(p+1)!}\int_0^{2\kappa} \rho[\theta_0 + \arccos(1-r/\kappa)]  \left( \frac{2r}{\kappa} - \frac{r^2}{\kappa^2} \right)^{\frac{p+j-1}{2}} \right. \\
			&  \times  K(r)f[\theta_0+\arccos(1-r/\kappa)]dr\bigg)\left[1+o\left(\kappa^{-\frac{p+1}{2}}\right)\right]\\
			& +\left(n^{\frac{1}{2}}\kappa^{-\frac{3}{4}} \frac{\kappa^{\frac{j-1}{2}}}{(j-1)!}\frac{g^{(p+1)}(\theta_0)}{(p+1)!}\int_0^{2\kappa} \rho[\theta_0 - \arccos(1-r/\kappa)]  \left[-\left( \frac{2r}{\kappa} - \frac{r^2}{\kappa^2} \right)^{\frac{1}{2}}\right]^{p+j}   \right. \\
			&  \times \left( \frac{2r}{\kappa} - \frac{r^2}{\kappa^2} \right)^{-\frac{1}{2}}  K(r)f[\theta_0-\arccos(1-r/\kappa)]dr\bigg)\left[1+o\left(\kappa^{-\frac{p+1}{2}}\right)\right].\\
		\end{aligned}
	\end{equation*}
	It will now be assumed that $p$ is odd (although derivations for an even value of $p$ can be obtained in an equivalent way). If $p$ is odd and $j$ is even, then
	\begin{equation*}
		\begin{aligned}
			\mathbb{E}[(\bm{W}_n)_j] & =\left(2n^{\frac{1}{2}}\kappa^{-\frac{3}{4}} \frac{\kappa^{\frac{j-1}{2}}}{(j-1)!}\rho(\theta_0)f(\theta_0)\frac{g^{(p+1)}(\theta_0)}{(p+1)!}\int_0^{2\kappa}  \left( \frac{2r}{\kappa} - \frac{r^2}{\kappa^2} \right)^{\frac{p+j-1}{2}}   K(r)dr\right)\\
			& \times\left[1+o\left(\kappa^{-\frac{p+1}{2}}\right)\right]\\
			& = \left(2n^{\frac{1}{2}}\kappa^{-\frac{3}{4}} \frac{\kappa^{\frac{j-1}{2}}}{(j-1)! \kappa^{\frac{p+j-1}{2}}}\rho(\theta_0)f(\theta_0)\frac{g^{(p+1)}(\theta_0)}{(p+1)!}\left[2^{\frac{p+j-1}{2}}\int_0^{\infty}  r^{\frac{p+j-1}{2}}   K(r)dr + o(1)\right]\right)\\
			& \times \left[1+o\left(\kappa^{-\frac{p+1}{2}}\right)\right]\\
			& = 2^{\frac{p+j+1}{2}}n^{\frac{1}{2}}\kappa^{-\frac{3}{4}} \frac{\kappa^{-\frac{p}{2}}}{(j-1)!}\rho(\theta_0)f(\theta_0)\frac{g^{(p+1)}(\theta_0)}{(p+1)!}{b}_{p+j}(K)[1+o(1)]
		\end{aligned}
	\end{equation*}
	where it was used that $ \rho[\theta_0 \pm \arccos(1-r/\kappa)]=\rho(\theta_0) + o(1)$ and $ f[\theta_0 \pm \arccos(1-r/\kappa)]=f(\theta_0) + o(1)$. Similar derivations show that if $j$ is odd, $\mathbb{E}[(\bm{W}_n)_j]=o(1)$ and, thus, for a general $j$ it can be written that 
	\begin{equation*}
		\mathbb{E}[(\bm{W}_n)_j] = 2^{\frac{p+j+1}{2}}n^{\frac{1}{2}}\kappa^{-\frac{3}{4}} \frac{\kappa^{-\frac{p}{2}}}{(j-1)!}\rho(\theta_0)f(\theta_0)\frac{g^{(p+1)}(\theta_0)}{(p+1)!}{b}^*_{p+j}(K)[1+o(1)].
	\end{equation*}
	
	Now, the variance of $\bm{W}_n$ is computed. It holds that
	\begin{equation}
		\mbox{Var}(\bm{W}_n)=\kappa^{\frac{1}{2}}\mathbb{E}\left(l'^2[\bar{g}(\theta_0,\Theta_1),Y_1]\bm{B}_1\bm{B}_1^{\top}K^2[\kappa\{1-\cos(\Theta_1-\theta_0)\}]\right) + O\left(\kappa^{-\frac{2p+3}{2}}\right). \label{eq:varWn}
	\end{equation}
	Then, the $(i,j)$th  element of the first term in \eqref{eq:varWn} is given by
	\begin{equation*}
		\dfrac{\kappa^{\frac{1}{2}}\kappa^{\frac{i+j-2}{2}}}{(i-1)!(j-1)!}\mathbb{E}\left[\sin^{i+j-2}(\Theta_1-\theta_0)K^2[\kappa\{1-\cos(\Theta_1-\theta_0)\}]\mathbb{E}\left(l'^2[\bar{g}(\theta_0,\Theta_1),Y_1]|\Theta_1\right)\right]
	\end{equation*}
	and, taking into account equation (5) of the main text gives
	\begin{equation*}
		\begin{aligned}
			&	[\mbox{Var}(\bm{W}_n)]_{ij}  \\
			& =  \dfrac{\mbox{Var}(Y|\Theta=\theta_0)\kappa^{\frac{1}{2}}\kappa^{\frac{i+j-2}{2}}}{(i-1)!(j-1)!\psi^2}\mathbb{E}\left(\sin^{i+j-2}(\Theta_1-\theta_0)K^2[\kappa\{1-\cos(\Theta_1-\theta_0)\}]\right) + o(1)\\
			& = \dfrac{\mbox{Var}(Y|\Theta=\theta_0)\kappa^{\frac{1}{2}}\kappa^{\frac{i+j-2}{2}}}{(i-1)!(j-1)!\psi^2}\int_0^{2\pi}\sin^{i+j-2}(\alpha-\theta_0)K^2[\kappa\{1-\cos(\alpha-\theta_0)\}]f(\alpha)d\alpha+ o(1)\\
			& = \dfrac{\mbox{Var}(Y|\Theta=\theta_0)\kappa^{\frac{1}{2}}\kappa^{\frac{i+j-2}{2}}}{(i-1)!(j-1)!\psi^2}\int_0^{2\pi}\sin^{i+j-2}\varphi K^2[\kappa\{1-\cos\varphi\}]f(\theta_0+\varphi)d\varphi+ o(1).\\
		\end{aligned}
	\end{equation*}
	Finally, splitting the integral into two and performing the change of variables $r=\kappa(1-\cos\varphi)$ yields
	\begin{equation*}
		[\mbox{Var}(\bm{W}_n)]_{ij}   = \dfrac{2^{\frac{i+j-1}{2}} b''[g(\theta_0)]f(\theta_0)}{(i-1)!(j-1)!\psi} {d}_{i+j-2}^*(K)+ o(1),
	\end{equation*}
	where
	$$ {d}_{j}^*(K)=\begin{cases}
		0 & \text{if}  \ j \ \text{is odd},\\
		\int_0^{\infty}r^{\frac{j-1}{2}}K^2(r)dr & \text{if} \ j \ \text{is even}.
	\end{cases} $$
	It only remains to prove that 
	\begin{equation*}
		[\mbox{Var}(\bm{W}_n)]^{-1/2}[\bm{W}_n-\mathbb{E}(\bm{W}_n)]\rightarrow^DN(0,I_{p+1}).
	\end{equation*}
	For that, the Cram\'er-Wold device is employed: it is enough to prove that, for any unit vector $\bm{c}$, 
	\begin{equation*}
		[\bm{c}^{\top}\mbox{Var}(\bm{W}_n)\bm{c}]^{-1/2}[\bm{c}^{\top}\bm{W}_n-\bm{c}^{\top}\mathbb{E}(\bm{W}_n)\bm{c}]\stackrel{D}{\rightarrow}N(0,1).
	\end{equation*}
	The previous statement can be proved by checking Lyapounov's condition. It holds that
	$$ \bm{c}^{\top}\bm{W}_n = \frac{1}{n^{\frac{1}{2}}}\sum_{i=1}^{n}\kappa^{\frac{1}{4}}l'[\bar{g}(\theta_0,\Theta_i),Y_i]\bm{c}^{\top}\bm{B}_iK[\kappa\{1-\cos(\Theta_i-\theta_0)\}] \equiv \frac{1}{n^{\frac{1}{2}}}\sum_{i=1}^{n} V_{i,n},  $$
	where the dependence of $V_{i,n}$ on $n$ is through $\kappa$. It is then enough to prove that, for some $\delta>0$, 
	\begin{equation}
		\lim\limits_{n\rightarrow \infty} \dfrac{\mathbb{E}\left[ |V_{i,n}-\mathbb{E}(V_{i,n})|^{2+\delta} \right]}{n^{\frac{\delta}{2}}\mbox{Var}(V_{i,n})^{1+\frac{\delta}{2}}}=0.
		\label{eq:limit}
	\end{equation}
	In addition, it can be proved that $\mathbb{E}\left[ |V_{i,n}-\mathbb{E}(V_{i,n})|^{2+\delta} \right]=O[\mathbb{E}(|V_{i,n}|^{2+\delta})]$ and
	$$ \mathbb{E}(|V_{i,n}|^{2+\delta}) = \mathbb{E}\left[\left| \kappa^{\frac{1}{4}}l'[\bar{g}(\theta_0,\Theta_i),Y_i]\bm{c}^{\top}\bm{B}_iK[\kappa\{1-\cos(\Theta_i-\theta_0)\}] \right|^{2+\delta}\right].  $$
	Moreover, a Taylor expansion gives 
	$$  l'[\bar{g}(\theta_0,\Theta_i),Y_i] = l'[g(\Theta_i),Y_i] + [\bar{g}(\theta_0,\Theta_i)-g(\Theta_i)]l'(g_c,Y_i),$$ 
	where $g_c$ is a random value between $\bar{g}(\theta_0,\Theta_i)$ and $g(\Theta_i)$. Now, taking into account the form of the log-likelihood in equation (5) of the main text and the first Barlett identity, it holds that 
	$$ l'[g(\Theta_i),Y_i] = \frac{1}{\psi}[Y_i-\mu(\Theta_i)]. $$
	Then, $O\left[\mathbb{E}(|V_{i,n}|^{2+\delta})\right]$ is given by
	$$  O\left[\mathbb{E}\left( \left|\kappa^\frac{1}{4}\frac{1}{\psi}[Y_i-\mu(\Theta_i)] c_1 K[\kappa\{1-\cos(\Theta_i-\theta_0)\}]\right|^{2+\delta} \right)\right][1+o(1)] = O\left(\kappa^{\frac{\delta}{4}}\right), $$
	by noting that $\mathbb{E}[|Y-\mu(\Theta)|^{2+\delta}]<\infty$ because of assumption C1. In addition, it holds that $ \mbox{Var}(V_{i,n})\leq \mathbb{E}(V_{i,n}^2)=O(1)$. Thus, the quotient in \eqref{eq:limit} is $O\left(n^{-\frac{\delta}{2}}\kappa^{\frac{\delta}{4}}\right)$ and hence converges to zero  when $n\kappa^{-\frac{1}{2}}\rightarrow \infty$ as $n\rightarrow \infty$. 
\end{proof}

\begin{remark}
	As stated in Section 2.2 of the main text, the assumptions of Theorem~1 can be relaxed so that the asymptotic normality result is valid for conditional densities not belonging to the exponential family. First, one should note that the assumption on the third derivative of the function $b$ (part of assumption C2) is placed to guarantee the continuity of $l^{(3)}$. Thus, in the case that the log-likelihood does not have the form in equation (5) of the main text, the continuity of $l^{(3)}$ should be assumed. In addition, the likelihood should satisfy Bartlett's first and second identities: the first identity is used to show that $	\mathbb{E}[(\bm{W}_n)_j]<\infty$  $\forall \ j=1,\ldots,p+1$ and to verify Lyapunov's condition; the second identity is employed to show that $\mbox{Var}[(\bm{A}_n)_{ij}]\rightarrow 0$ as $n\rightarrow \infty$. Lastly, a key step in the previous proof is that, when equation (5) of the main text is verified,  $l^{(r)}[g(\theta),y]$ is linear in $y$. Note, however, that this is needed to compute the expressions of the bias and variance of the estimator. Without this assumption, the asymptotic normality result would hold, but the calculation of general expressions for the bias and variance of the estimator would be much more tedious, and should be done for each specific log-likelihood. The finiteness of such expressions is, however, guaranteed by the continuity of the third partial derivative of the log-likelihood.
\end{remark}

\subsection{Proof of Lemma 1}

\begin{proof}[Proof of Lemma~1]
	
	We first obtain the proof of the statement in a) and afterwards give the derivations for statement b).
	
	\vspace{0.5cm}
	
	\noindent \underline{Statement a)}
	\vspace{0.5cm}
	
	We have
	$$ s_{n,j}=\mathbb{E}(s_{n,j})+O_P\left(\sqrt{\mbox{Var}(s_{n,j})}\right). $$
	For the expectation part, 
	$$ \mathbb{E}(s_{n,j})=n\int_{0}^{2\pi}\sin^j(\alpha-\theta_0)K_\kappa(\alpha-\theta_0)f(\alpha)d\alpha = nc_\kappa(K)\int_{0}^{2\pi}\sin^j(\varphi)K[\kappa(1-\cos\varphi)]f(\theta_0+\varphi)d\varphi, $$
	where the second equality was obtained by applying the change of variables $\varphi=\alpha-\theta_0$ and using  equation (6) of the main text. Now, by splitting the integral, we have
	\begin{equation}
		\small
		\mathbb{E}(s_{n,j})= nc_\kappa(K)\left[\int_{0}^{\pi}\sin^j(\varphi)K[\kappa(1-\cos\varphi)]f(\theta_0+\varphi)d\varphi + \int_{\pi}^{2\pi}\sin^j(\varphi)K[\kappa(1-\cos\varphi)]f(\theta_0+\varphi)d\varphi\right].
		\label{eq:theory_int_divided}
	\end{equation}
	For the first integral in (\ref{eq:theory_int_divided}), we apply the change of variables $r=\kappa(1-\cos\varphi)$ noting that, since $\varphi\in[0,\pi]$, we have $\varphi=\arccos(1-r/\kappa)$, and thus obtaining, for the first integral in (\ref{eq:theory_int_divided}):
	$$ \int_{0}^{\pi}\sin^j(\varphi)K[\kappa(1-\cos\varphi)]d\varphi = \frac{1}{\kappa}\int_{0}^{2\kappa}\left( \frac{2r}{\kappa}-\frac{r^2}{\kappa^2}\right)^{\frac{j-1}{2}}K(r)f(\theta_0+\arccos(1-r/\kappa))dr. $$
	Applying the expansion in (\ref{eq:expansion}), we have that $f(\theta_0+\arccos(1-r/\kappa))=f(\theta_0)+o(1)$ and, then, the first integral in (\ref{eq:theory_int_divided}) can be expressed as 
	\begin{equation}
		\frac{1}{\kappa}f(\theta_0)\int_{0}^{2\kappa}\left( \frac{2r}{\kappa}-\frac{r^2}{\kappa^2}\right)^{\frac{j-1}{2}}K(r)dr[1+o(1)].
		\label{eq:theory_integral1}
	\end{equation}
	
	Now we move on to the second integral in (\ref{eq:theory_int_divided}). By employing again the change of variables $r=\kappa(1-\cos\varphi)$, but taking into account that now $\varphi\in[\pi,2\pi]$ and, thus, having $\varphi=-\arccos(1-r/\kappa)$, we obtain
	\begin{equation}
		\int_{\pi}^{2\pi}\sin^j(\varphi)K[\kappa(1-\cos\varphi)]d\varphi = -\frac{1}{\kappa}\int_{0}^{2\kappa}\left[ -\left(\frac{2r}{\kappa}-\frac{r^2}{\kappa^2}\right)^{\frac{1}{2}}\right]^{j-1}K(r)f\left(\theta_0-\arccos(1-r/\kappa)\right)dr. 
		\label{eq:theory_intpi2pi}
	\end{equation}
	Note that we have two different scenarios depending on $j$ being odd or even. When $j$ is even ($j-1$ is odd), the right term in (\ref{eq:theory_intpi2pi}) can be expressed as 
	\[ \small  \frac{1}{\kappa}\int_{0}^{2\kappa}\left( \frac{2r}{\kappa}-\frac{r^2}{\kappa^2}\right)^{\frac{j-1}{2}}K(r)f(\theta_0-\arccos(1-r/\kappa))dr=\frac{1}{\kappa}f(\theta_0)\int_{0}^{2\kappa}\left( \frac{2r}{\kappa}-\frac{r^2}{\kappa^2}\right)^{\frac{j-1}{2}}K(r)dr[1+o(1)]\] 
	and, then, equation (\ref{eq:theory_int_divided}) for an even $j$ becomes
	\begin{equation}
		\begin{aligned}
			\mathbb{E}(s_{n,j})	& = nc_\kappa(K)\frac{2}{\kappa}f(\theta_0)\int_{0}^{2\kappa}\left( \frac{2r}{\kappa}-\frac{r^2}{\kappa^2}\right)^{\frac{j-1}{2}}K(r)dr[1+o(1)]\\
			& = nc_\kappa(K)2 \kappa^{-\frac{j+1}{2}}f(\theta_0)\int_{0}^{2\kappa}r^{\frac{j-1}{2}}\left( 2-\frac{r}{\kappa}\right)^{\frac{j-1}{2}}K(r)dr[1+o(1)]\\
			& = n2 \kappa^{-\frac{j}{2}}f(\theta_0)\left[\lambda(K)^{-1} 2^{\frac{j-1}{2}}\int_{0}^{\infty}r^{\frac{j-1}{2}}K(r)dr + o(1)\right][1+o(1)]\\
			&= nf(\theta_0)2^{\frac{j}{2}}\kappa^{-\frac{j}{2}}\widetilde{b}_j(K)+o\left(n\kappa^{-\frac{j}{2}}\right),
		\end{aligned}
		\label{eq:expectation_snj_even}
	\end{equation}
	where $\widetilde{b}_j(K)$ is given by
	$$ \widetilde{b}_j(K)=\dfrac{\int_{0}^{\infty}r^{\frac{j-1}{2}}K(r)dr}{\int_{0}^{\infty}r^{-\frac{1}{2}}K(r)dr}. $$
	Note that in the third equality of (\ref{eq:expectation_snj_even}) we have used equation (8) of the main text, the approximation $c_\kappa(K)^{-1}\sim\kappa^{-1/2}\lambda(K)$ and the fact that 
	\begin{equation}
		\lim_{\kappa\rightarrow\infty}\int_{0}^{2\kappa}r^{\frac{j-1}{2}}\left( 2-\frac{r}{\kappa}\right)^{\frac{j-1}{2}}K(r)dr = 2^{\frac{j-1}{2}}\int_{0}^{\infty} r^{\frac{j-1}{2}} K(r)dr,
		\label{eq:limit_Edu}
	\end{equation}
	for which the justification is analogous to the proof of Lemma 1 in \citet{GarciaPortugues_etal2013} by using assumption in equation (7) of the main text.
	
	\vspace{0.4cm}
	On the other hand, when $j$ is odd ($j-1$ is even), the right term in (\ref{eq:theory_intpi2pi}) is given by
	\begin{equation*}
		\begin{aligned}
			&-\frac{1}{\kappa}\int_{0}^{2\kappa}\left( \frac{2r}{\kappa}-\frac{r^2}{\kappa^2}\right)^{\frac{j-1}{2}}K(r)f(\theta_0-\arccos(1-r/\kappa))dr\\
			& = -\frac{1}{\kappa}f(\theta_0)\int_{0}^{2\kappa}\left( \frac{2r}{\kappa}-\frac{r^2}{\kappa^2}\right)^{\frac{j-1}{2}}K(r)dr[1+o(1)],
		\end{aligned} 
	\end{equation*}
	which coincides with the expression in (\ref{eq:theory_integral1}) but with opposite sign. Then, we have that for an odd $j$,
	$$ 	\mathbb{E}(s_{n,j})=o\left(n\kappa^{-\frac{j}{2}}\right). $$
	
	In order to have a more compact expression, for a general $j$ we can write 
	\begin{equation}
		\mathbb{E}(s_{n,j})= nf(\theta_0)2^{\frac{j}{2}}\kappa^{-\frac{j}{2}}[\widetilde{b}^*_j(K)+o(1)],
		\label{eq:bias_snj}
	\end{equation}
	where 
	$$  \widetilde{b}_j^*(K)=\begin{cases}
		0 & \text{if}  \ j \ \text{is odd},\\
		\widetilde{b}_j(K) & \text{if} \ j \ \text{is even}.
	\end{cases} $$ 
	
	Now, for the variance of $s_{n,j}$ we have, by analogous computations,
	\begin{equation}
		\begin{aligned}
			\mbox{Var}(s_{n,j}) \leq& n\mathbb{E}\left[\sin^{2j}(\Theta_1-\theta_0)K_\kappa^2(\Theta_1-\theta_0)\right] \\
			=& n\int_{0}^{2\pi}\sin^{2j}(\alpha-\theta_0)K_\kappa^2(\alpha-\theta_0)f(\alpha)d\alpha\\
			= & n c_\kappa^2(K) \int_{0}^{2\pi} \sin^{2j}(\varphi) K^2[\kappa(1-\cos\varphi)]f(\theta_0+\varphi)d\varphi\\
			= & n c_\kappa^2(K)  \frac{1}{\kappa}\int_{0}^{2\kappa}\left( \frac{2r}{\kappa}-\frac{r^2}{\kappa^2}\right)^{\frac{2j-1}{2}}K^2(r)f(\theta_0+\arccos(1-r/k))dr\\
			&- n c_\kappa^2(K)\frac{1}{\kappa} \int_{0}^{2\kappa}\left[ -\left(\frac{2r}{\kappa}-\frac{r^2}{\kappa^2}\right)^{\frac{1}{2}}\right]^{2j-1}K^2(r)f(\theta_0-\arccos(1-r/k))dr \\
			= & O\left(n\kappa^{-\frac{2j-1}{2}}\right).
		\end{aligned} \label{eq:variance_snj}
	\end{equation}
	Note that we have used equation (8) of the main text, the approximation $c_\kappa(K)^{-1}\sim\kappa^{-1/2}\lambda(K)$ and the assumption in equation (7) of the main text. Finally, putting (\ref{eq:bias_snj}) and (\ref{eq:variance_snj}) together, we obtain
	$$ \begin{aligned}
		s_{n,j}= & nf(\theta_0)2^{\frac{j}{2}}\kappa^{-\frac{j}{2}}\left[\widetilde{b}^*_j(K)+o(1)\right]+O_P\left(n^{\frac{1}{2}}\kappa^{-\frac{2j-1}{4}}\right)\\
		=	& nf(\theta_0)2^{\frac{j}{2}}\kappa^{-\frac{j}{2}}\left[\widetilde{b}^*_j(K)+o(1)+ O_P\left(\frac{\kappa^{1/4}}{n^{1/2}}\right)\right]\\
		= & nf(\theta_0)2^{\frac{j}{2}}\kappa^{-\frac{j}{2}}[\widetilde{b}^*_j(K)+o_P(1)].
	\end{aligned}$$

	\vspace{0.5cm}
	
	\noindent \underline{Statement b)}

	By using the same changes of variables as in the previous calculations, we have that the expectation of $\gamma_{n,j}$ can be expressed as
	$$
	\begin{aligned}
		\mathbb{E}(\gamma_{n,j}) = &n\int_{0}^{2\pi}\sin^j(\alpha-\theta_0)K^2_\kappa(\alpha-\theta_0)f(\alpha)d\alpha \\
		=&  nc^2_\kappa(K)\int_{0}^{2\pi}\sin^j(\varphi)K^2[\kappa(1-\cos\varphi)]f(\theta_0+\varphi)d\varphi\\
		= &  nc^2_\kappa(K)\int_{0}^{\pi}\sin^j(\varphi)K^2[\kappa(1-\cos\varphi)]f(\theta_0+\varphi)d\varphi \\
		& + nc^2_\kappa(K)\int_{\pi}^{2\pi}\sin^j(\varphi)K^2[\kappa(1-\cos\varphi)]f(\theta_0+\varphi)d\varphi \\
		= & n\frac{c^2_\kappa(K)}{\kappa}\int_{0}^{2\kappa}\left(\frac{2r}{\kappa}-\frac{r^2}{\kappa^2}\right)^{\frac{j-1}{2}}K^2(r)f(\theta_0+\arccos(1-r/\kappa))dr  \\
		& - n\frac{c^2_\kappa(K)}{\kappa}\int_{0}^{2\kappa}\left[-\left(\frac{2r}{\kappa}-\frac{r^2}{\kappa^2}\right)^{\frac{1}{2}}\right]^{j-1}K^2(r)f(\theta_0-\arccos(1-r/\kappa))dr.
	\end{aligned} $$
	Now, if $j$ is even, by applying the expansion in (\ref{eq:expansion}), the expectation of $\gamma_{n,j}$ is given by
	\begin{equation}
		\begin{aligned}
			\mathbb{E}(\gamma_{n,j}) = & 2nf(\theta_0)\frac{c^2_\kappa(K)}{\kappa^{\frac{j+1}{2}}}\int_{0}^{2\kappa}r^{\frac{j-1}{2}}\left(2-\frac{r}{\kappa}\right)^{\frac{j-1}{2}}K^2(r)dr[1+o(1)] \\
			=& 2nf(\theta_0)\kappa^{-\frac{j-1}{2}}\left[\lambda(K)^{-2}2^{\frac{j-1}{2}}\int_{0}^{\infty}r^{\frac{j-1}{2}}K^2(r)dr+o(1)\right][1+o(1)]\\
			= & n f(\theta_0)2^{\frac{j-1}{2}}\kappa^{-\frac{j-1}{2}}\widetilde{d}_j(K)+o\left(n\kappa^{-\frac{j-1}{2}}\right),
		\end{aligned} 
		\label{eq:expectation_gammaj}
	\end{equation}
	where 	
	$$ \widetilde{d}_j(K)= \dfrac{\int_{0}^{\infty}r^{\frac{j-1}{2}}K^2(r)dr}{\left(\int_{0}^{\infty}r^{-\frac{1}{2}}K(r)dr\right)^2}.$$
	Note that in the second equality of (\ref{eq:expectation_gammaj}) we have used equation (8) of the main text and (\ref{eq:limit_Edu}), and the quantities $\widetilde{d}_j(K)$ exist due to assumption (7) in the main text.
	
	\vspace{0.4cm}
	On the other hand, when $j$ is odd we have that, by analogous derivations, 
	$$ \mathbb{E}(\gamma_{n,j})=n\frac{c^2_\kappa(K)}{\kappa}\int_{0}^{2\kappa}\left(\frac{2r}{\kappa}-\frac{r^2}{\kappa^2}\right)^{\frac{j-1}{2}}K^2(r)dr  o(1)= o\left(n\kappa^{-\frac{j-1}{2}}\right). $$
	Then, for a general $j$, we can write
	$$ \mathbb{E}(\gamma_{n,j})=n f(\theta_0)2^{\frac{j-1}{2}}\kappa^{-\frac{j-1}{2}}[\widetilde{d}^*_j(K)+o(1)], \quad \text{with} \quad  \widetilde{d}_j^*(K)=\begin{cases}
		0 & \text{if}  \ j \ \text{is odd}\\
		\widetilde{d}_j(K) & \text{if} \ j \ \text{is even}.
	\end{cases} $$
	
	Further, with analogous derivations, the variance of $\gamma_{n,j}$ can be expressed as
	$$ \begin{aligned}
		\mbox{Var}(\gamma_{n,j}) & \leq \mathbb{E}\left[\sin^{2j}(\Theta_1-\theta_0)K_\kappa^4(\Theta_1-\theta_0)\right]\\
		= & n c_\kappa^4(K)\int_{0}^{2\pi}\sin^{2j}(\alpha-\theta_0)K^4[\kappa(1-\cos(\alpha-\theta_0))]f(\alpha)d\alpha\\
		= & n c_\kappa^4(K)\int_{0}^{2\pi}\sin^{2j}(\varphi)K^4[\kappa(1-\cos\varphi)]f(\theta_0+\varphi)d\varphi\\
		=& n\frac{c_\kappa^4(K)}{\kappa}\int_{0}^{2\kappa}\left(\frac{2r}{\kappa}-\frac{r^2}{\kappa^2}\right)^{\frac{2j-1}{2}}K^4(r)f(\theta_0+\arccos(1-r/\kappa))dr\\
		& - n\frac{c_\kappa^4(K)}{\kappa}\int_{0}^{2\kappa}\left[-\left(\frac{2r}{\kappa}-\frac{r^2}{\kappa^2}\right)^{\frac{1}{2}}\right]^{2j-1}K^4(r)f(\theta_0-\arccos(1-r/\kappa))dr\\
		= & O\left(n \kappa^{-\frac{2j-3}{2}}\right).
	\end{aligned} $$
	
	Lastly, we have  
	$$\begin{aligned}
		\gamma_{n,j} &=\mathbb{E}(\gamma_{n,j})+O_P\left(\sqrt{\mbox{Var}(\gamma_{n,j})}\right)\\
		& = n f(\theta_0)2^{\frac{j-1}{2}}\kappa^{-\frac{j-1}{2}}[\widetilde{d}^*_j(K)+o(1)] + O_P\left(n^{\frac{1}{2}}\kappa^{-\frac{2j-3}{4}}\right)\\
		& = n f(\theta_0)2^{\frac{j-1}{2}}\kappa^{-\frac{j-1}{2}}\left[\widetilde{d}^*_j(K)+o(1)+O_P\left( \frac{\kappa^\frac{1}{4}}{n^{\frac{1}{2}}} \right)\right]\\
		& =   n f(\theta_0)2^{\frac{j-1}{2}}\kappa^{-\frac{j-1}{2}}\left[\widetilde{d}^*_j(K)+o_P(1)\right].
	\end{aligned}  $$

\end{proof}

\subsection{Derivation of the optimal smoothing parameter in the least-squares case}\label{ap:optimal_param}

In this section we derive the expression of the optimal smoothing parameter in the least-squares case, given in equation (19) of the main text. We start by computing the bias of estimator in equation (16) of the main manuscript, which is given by
$$ \mbox{Bias}[\hat{\bm{\beta}}|\Theta_1,\ldots,\Theta_n]=(\bm{\Theta}^{\top}\bm{W}\bm{\Theta})^{-1}\bm{\Theta}^{\top}\bm{W}\bm{r}, $$
with 
$$ \bm{r}=\left[ \beta_{p+1}\sin^{p+1}(\Theta_i-\theta_0) + o_P\left(\sin^{p+1}(\Theta_i-\theta_0)\right) \right]_{i=1,\ldots,n}. $$
Then, we have
$$ \mbox{Bias}[\hat{\bm{\beta}}|\Theta_1,\ldots,\Theta_n]=\bm{S}_n^{-1}\left[\beta_{p+1}\bm{c}_n + o_P\left(n\kappa^{-\frac{p+1}{2}}\right)\right], $$
where $\bm{c}_n=(s_{n,p+1},\ldots,s_{n,2p+1})^{\top}$. Now, by using the expression of $\bm{S}_n$ in equation (27) of the main text and Lemma~1, we have
$$ \mbox{Bias}[\hat{\bm{\beta}}|\Theta_1,\ldots,\Theta_n]=\beta_{p+1}\bm{L}^{-1}\bm{B}^{-1}\bm{c}_p 2^{\frac{p+1}{2}}\kappa^{-\frac{p+1}{2}}[1+o_P(1)]. $$
Thus, since $\hat{g}^{(\nu)}(\theta_0)=\nu!\bm{e}_{\nu+1}^{\top}\hat{\bm{\beta}}$, we know
$$ \mbox{Bias}[\hat{g}^{(\nu)}(\theta_0)|\Theta_1,\ldots,\Theta_n]=\nu!\beta_{p+1}\bm{e}_{\nu+1}^{\top}\bm{B}^{-1}\bm{c}_p 2^{\frac{p+1-\nu}{2}}\kappa^{-\frac{p+1-\nu}{2}}+o_P\left(\kappa^{-\frac{p+1-\nu}{2}}\right). $$ 

On the other hand, the variance of $\hat{\bm{\beta}}$ is given by
$$ \mbox{Var}[\hat{\bm{\beta}}|\Theta_1,\ldots,\Theta_n]=(\bm{\Theta}^{\top}\bm{W}\bm{\Theta})^{-1}\bm{\Theta}^{\top}\bm{W}\mbox{Var}(\bm{Y})\bm{W}\bm{\Theta}(\bm{\Theta}^{\top}\bm{W}\bm{\Theta})^{-1} =\bm{S}_n^{-1}\bm{\Theta}^{\top}\bm{\Sigma}\bm{\Theta}\bm{S}_n^{-1},$$
where $\bm{\Sigma}=\mbox{diag}\{K^2_\kappa(\Theta_i-\theta_0)\sigma^2(\Theta_i)\}$. Note that the $(i,j)th$ element of  $\bm{\Theta}^{\top}\bm{\Sigma}\bm{\Theta}$ is given by $\delta_{n,i+j-2}$ where
$$ \delta_{n,j}=\sum_{i=0}^{n}\sin^j(\Theta_i-\theta_0)K^2_\kappa(\Theta_i-\theta_0)\sigma^2(\Theta_i). $$
Using arguments analogue to the proof of statement b) in Lemma~1, we can write
$$\delta_{n,j}=nf(\theta_0)\sigma^2(\theta_0)2^{\frac{j-1}{2}}\kappa^{-\frac{j-1}{2}}[d_j^*(K)+o_P(1)]$$
and, thus, 
$$ \bm{\Theta}^{\top}\bm{\Sigma}\bm{\Theta}=nf(\theta_0)\sigma^2(\theta_0)2^{-\frac{1}{2}}\kappa^{\frac{1}{2}}[\bm{L}\bm{D}\bm{L}+o_P(\bm{L}\bm{1}\bm{L})]. $$
By using the previous equation in addition to the expression of $\bm{S}_n$ in equation (27) of the main text, we have
$$ \mbox{Var}[\hat{\bm{\beta}}|\Theta_1,\ldots,\Theta_n]=\frac{\sigma^2(\theta_0)2^{-\frac{1}{2}}\kappa^{\frac{1}{2}}}{nf(\theta_0)}[\bm{L}^{-1}\bm{B}^{-1}\bm{D}\bm{B}^{-1}\bm{L}^{-1}+o_P(\bm{L}^{-1}\bm{1}\bm{L}^{-1})]. $$ 
Now, recalling that $\hat{g}^{(\nu)}(\theta_0)=\nu!\bm{e}_{\nu+1}^{\top}\hat{\bm{\beta}}$, we obtain
$$
\begin{aligned}
	\mbox{Var}[\hat{g}^{(\nu)}(\theta_0)|\Theta_1,\ldots,\Theta_n]&=\frac{\nu!^2\sigma^2(\theta_0)2^{-\frac{1+2\nu}{2}}\kappa^{\frac{1+2\nu}{2}}}{nf(\theta_0)}\bm{e}_{\nu+1}^{\top}\bm{L}^{-1}\bm{B}^{-1}\bm{D}\bm{B}^{-1}\bm{L}^{-1}\bm{e}_{\nu+1} + o_P\left(n^{-1}\kappa^{\frac{1+2\nu}{2}}\right)\\
	& = \frac{\nu!^2\sigma^2(\theta_0)2^{-\frac{1+2\nu}{2}}\kappa^{\frac{1+2\nu}{2}}}{nf(\theta_0)}a_\nu + o_P\left(n^{-1}\kappa^{\frac{1+2\nu}{2}}\right).
\end{aligned}  $$ 
Consequently, the Mean Squared Error of the estimator in equation (12) of the main text can be expressed as
$$ 
\begin{aligned}
	\mbox{MSE}[\hat{g}^{(\nu)}(\theta_0)|\Theta_1,\ldots,\Theta_n]& = \nu!^2\beta_{p+1}^2[\bm{e}_{\nu+1}^{\top}\bm{B}^{-1}\bm{c}_p]^2 2^{p+1-\nu}\kappa^{-(p+1-\nu)}\\
	&+ \frac{\nu!^2\sigma^2(\theta_0)2^{-\frac{1+2\nu}{2}}\kappa^{\frac{1+2\nu}{2}}}{nf(\theta_0)}a_\nu +o_P\left(\kappa^{-\frac{p+1-\nu}{2}}+n^{-1}\kappa^{\frac{1+2\nu}{2}}\right).
\end{aligned}
$$
Finally, the concentration which minimizes the asymptotic version of the MSE is given by the expression in equation (19) of the main manuscript.

\section{Additional simulation results}


In this section, additional material supporting the findings of the simulation study in the main paper is presented. Four cases are distinguished, where the conditional likelihood is given by the normal, Bernoulli, Poisson and gamma distributions. The code for all methods can be found at \url{https://anonymous.4open.science/r/CircLocalLikelihood-2424} and also as supplementary material.

\paragraph{Normal likelihood.}

First, we focus on the concentration parameters selected by each method. Figures~\ref{fig:simus_Normal_kappa} and \ref{fig:simus_Normal_kappa2} show kernel density estimators of the selected smoothing parameters when using the refined rule, the CRSC criterion and the cross-validation method in models N1 and N2, for the different sample sizes. The optimal concentration parameters given by equation (19) of the main text were also computed and represented as a vertical line. We observe that, for model N1, the parameters selected by the refined rule are usually smaller than those obtained by the CRSC rule or the cross-validation criterion, and also than the optimal concentration minimizing the MISE of the estimator. For model N2, the parameters obtained by the refined rule are usually larger than the ones obtained by the other two methods, and are fairly close to the optimal concentration. However, in all scenarios, the distribution of the parameters selected by the refined rule is highly peaked and symmetric, while being skewed for the other two methods. Note that sometimes the CRSC and cross-validation selectors lead to high concentrations (occasionally selecting the maximum possible value) and this behavior is not observed with the refined rule. As expected, the selected parameters are generally larger when increasing the sample size.

\begin{figure}[!h]
	\centering
	
	\subfloat[$n=70$]{
		\includegraphics[width=0.3\textwidth]{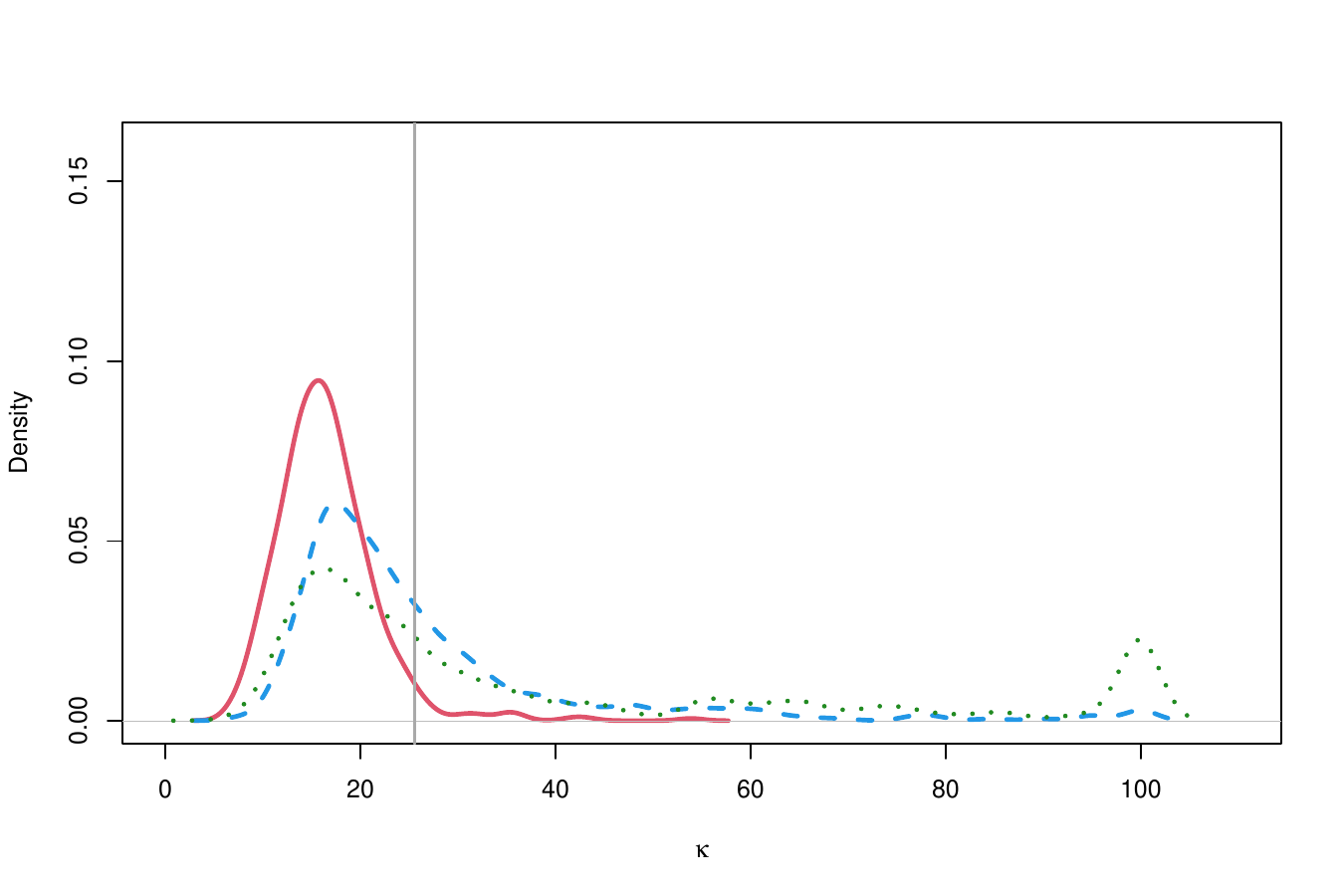}}
	\hfill
	\subfloat[$n=100$]{
		\includegraphics[width=0.3\textwidth]{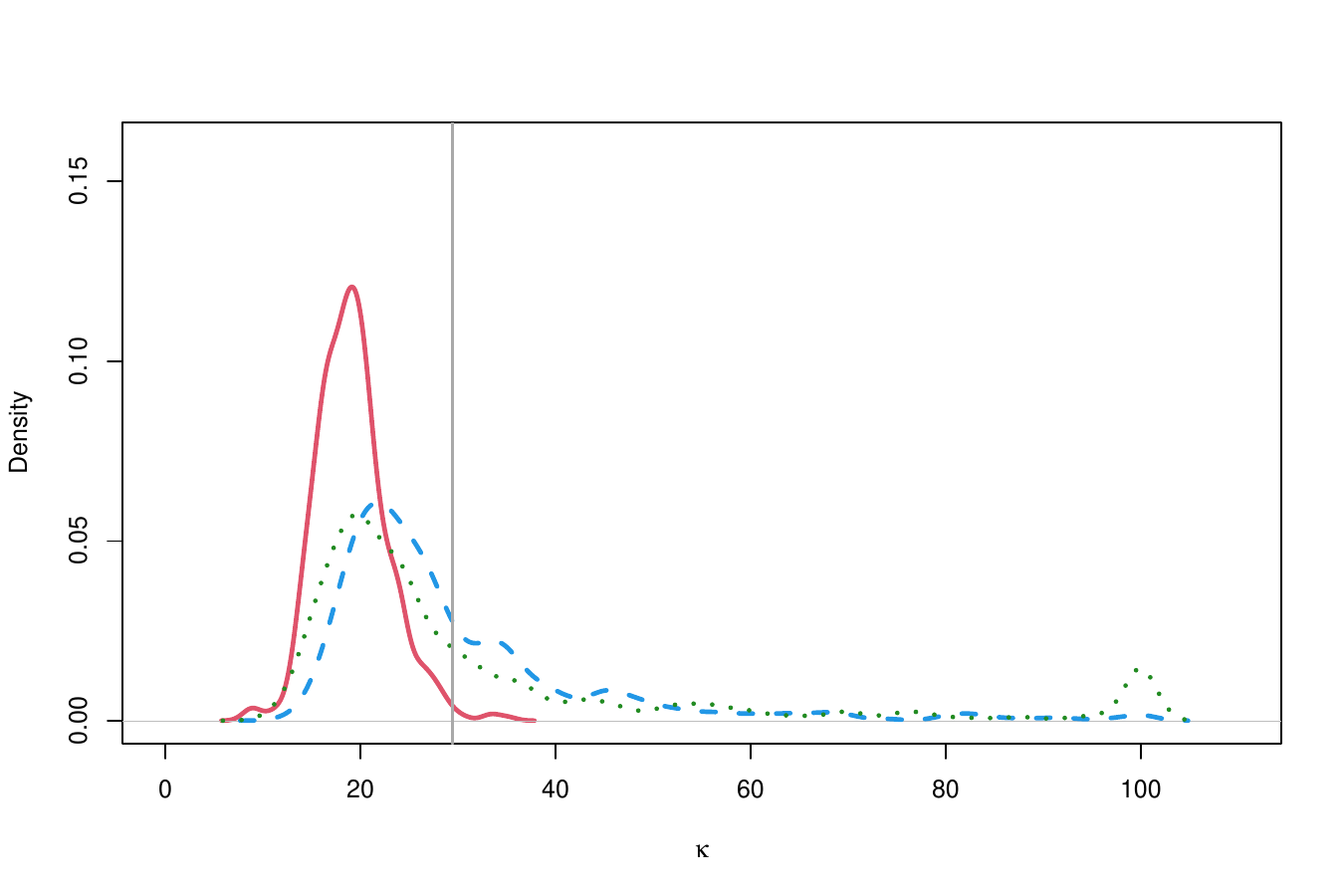}}
	\hfill
	\subfloat[$n=250$]{
		\includegraphics[width=0.3\textwidth]{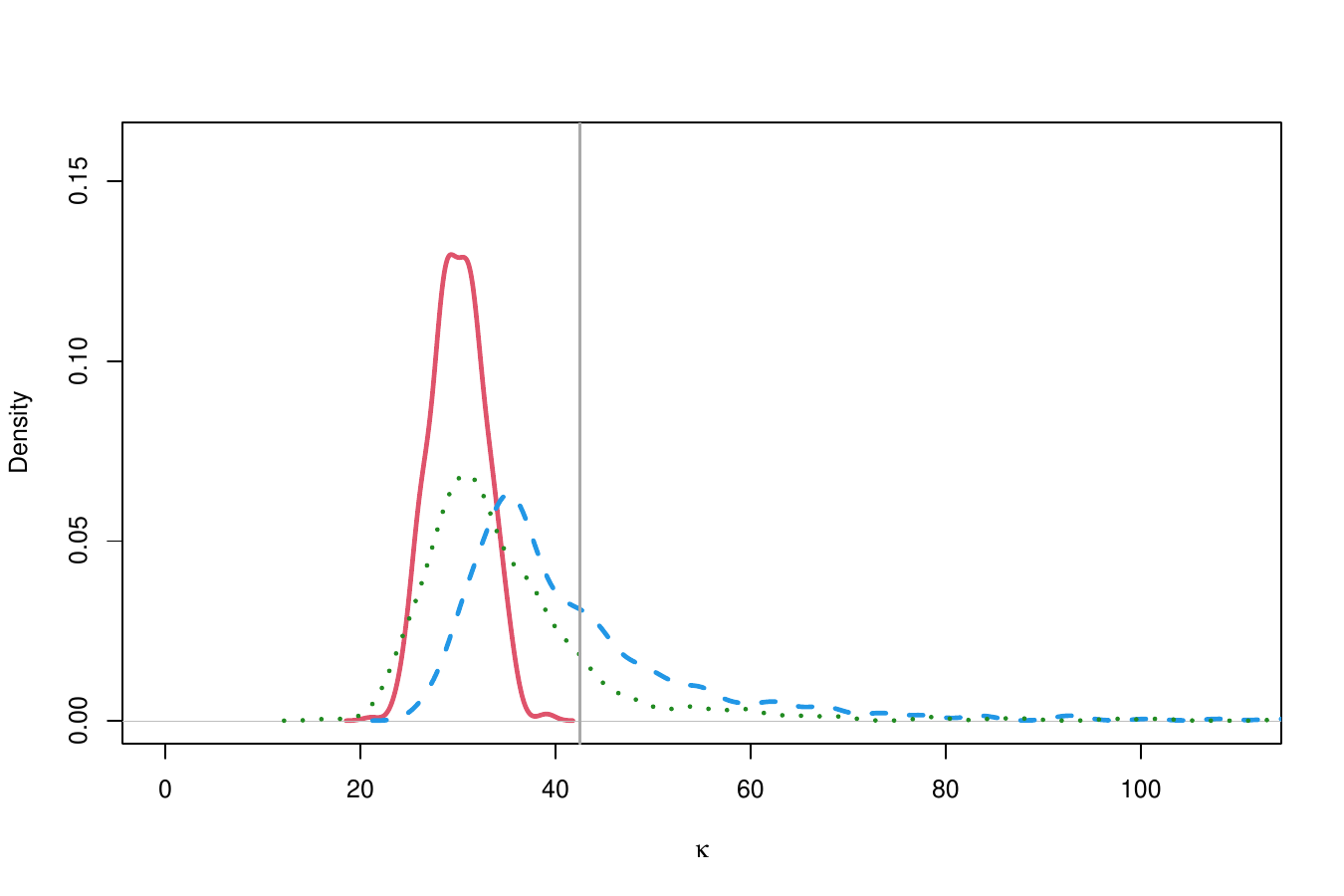}}
	
	\vspace{-0.75cm}
	\bigskip
	
	\hspace{1.75cm}
	\subfloat[$n=500$]{
		\includegraphics[width=0.3\textwidth]{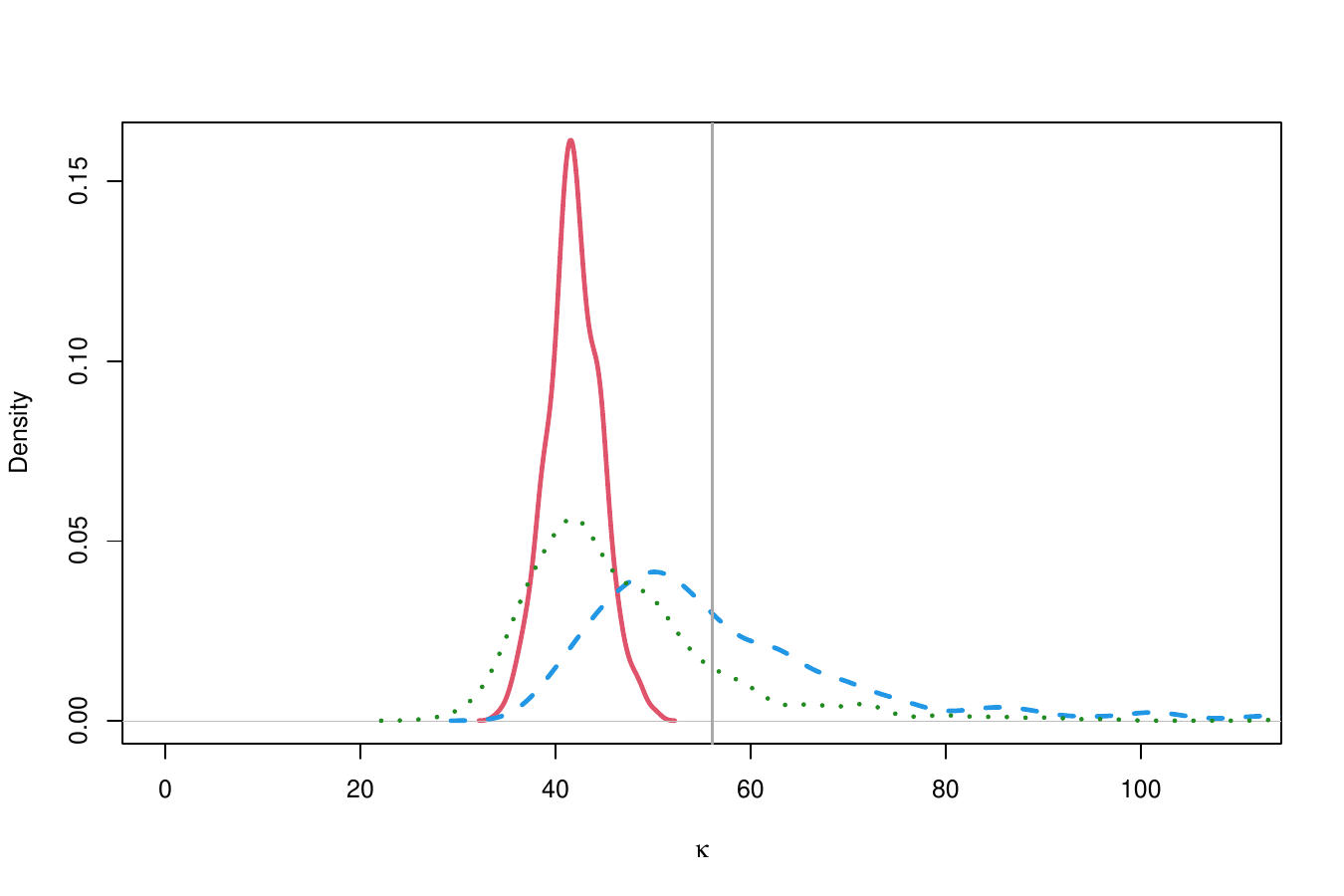}}
	\hfill
	\subfloat[ $n=1500$]{
		\includegraphics[width=0.3\textwidth]{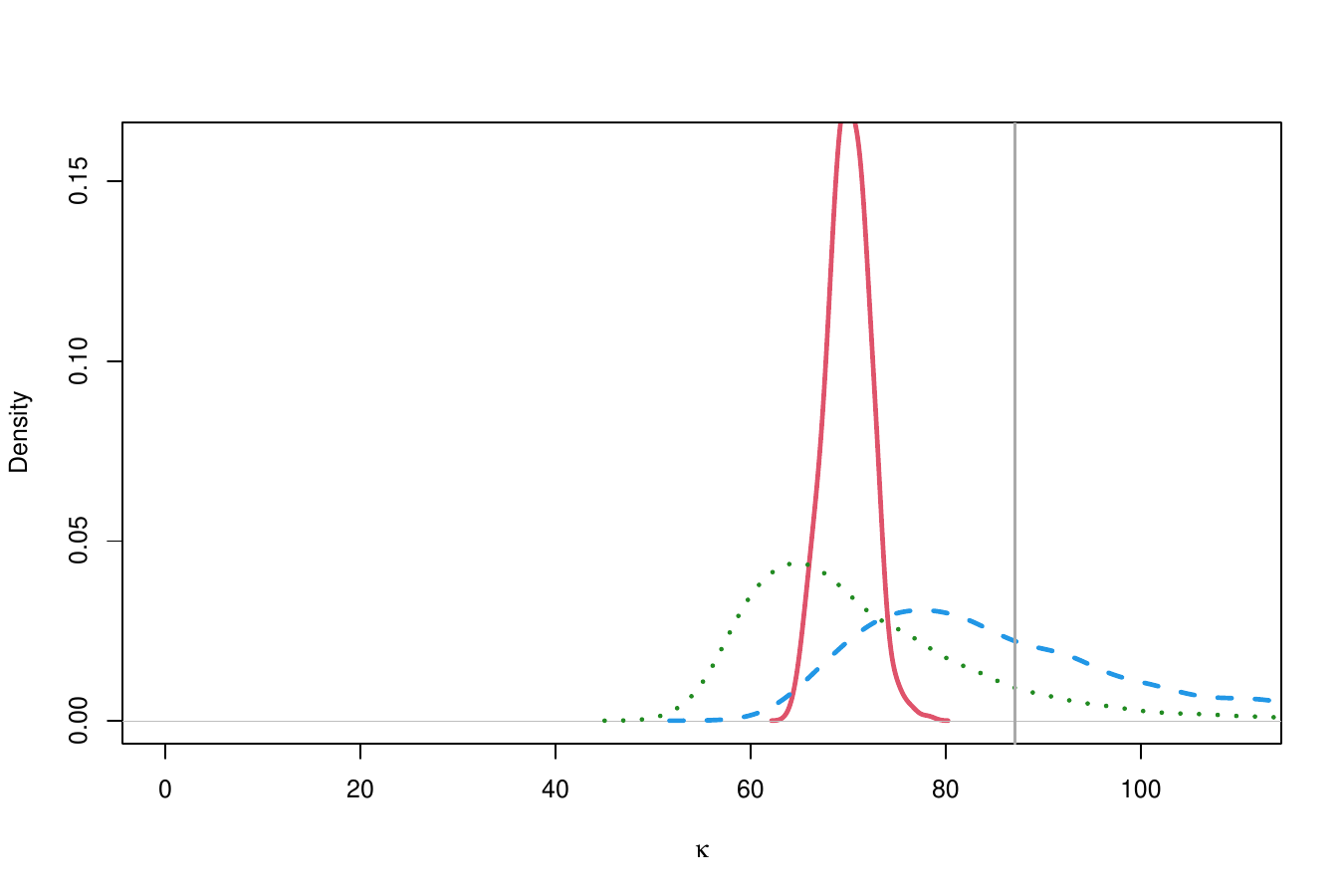}}
	\hspace{1.75cm}
	
	\vspace{-0.75cm}
	\bigskip

	\caption{Kernel density estimators of the obtained values of $\kappa$ for model N1  with therefined rule (red, continuous line), ECRSC (green, dotted line) and cross-validation (blue, dashed line). Grey vertical line represents the optimal concentration parameters. }
	\label{fig:simus_Normal_kappa}
\end{figure}	

\begin{figure}[!h]
	\centering
	
	\subfloat[$n=70$]{
		\includegraphics[width=0.3\textwidth]{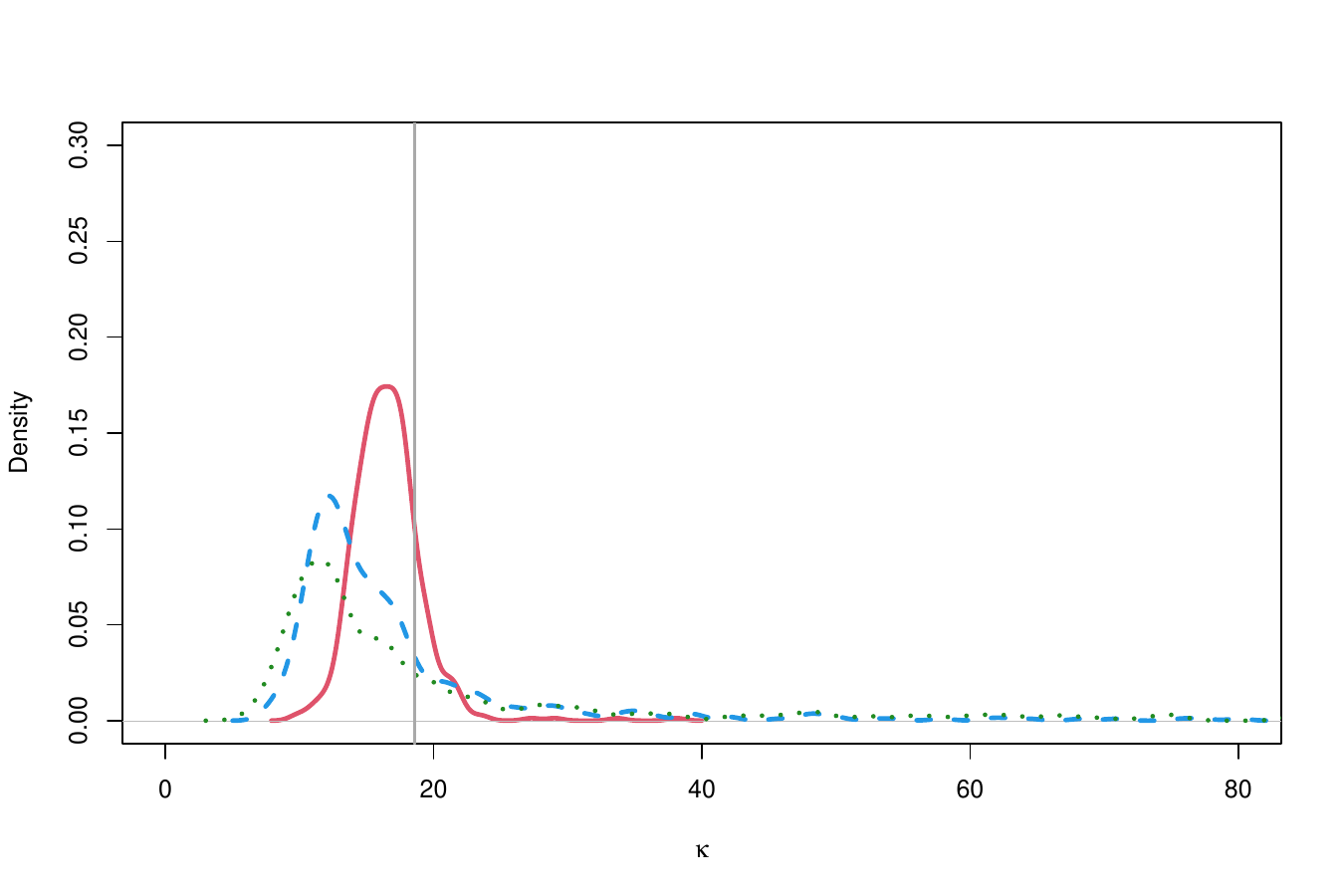}}
	\hfill
	\subfloat[$n=100$]{
		\includegraphics[width=0.3\textwidth]{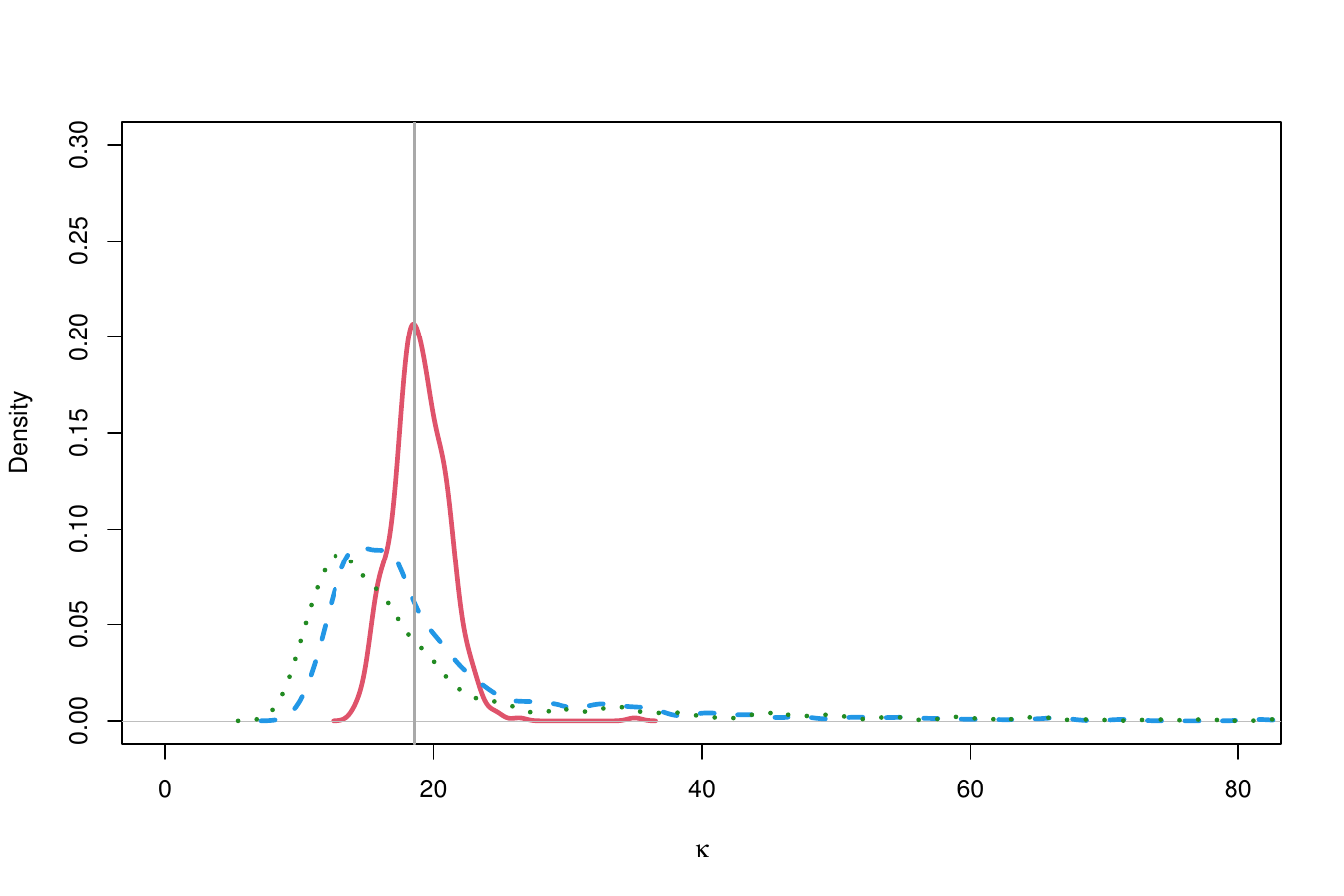}}
	\hfill
	\subfloat[$n=250$]{
		\includegraphics[width=0.3\textwidth]{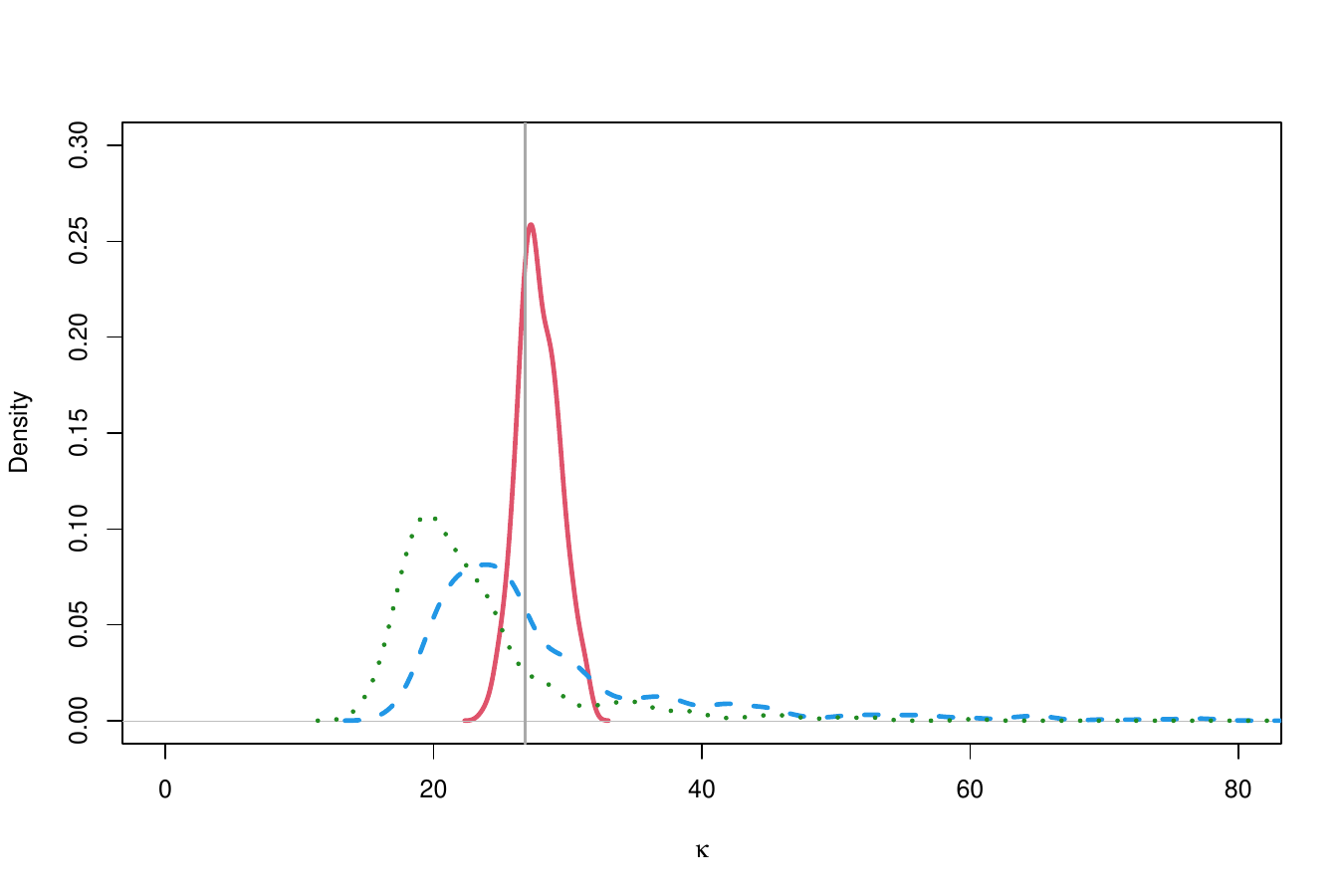}}
	
	\vspace{-0.75cm}
	\bigskip
	
	\hspace{1.95cm}
	\subfloat[$n=500$]{
		\includegraphics[width=0.3\textwidth]{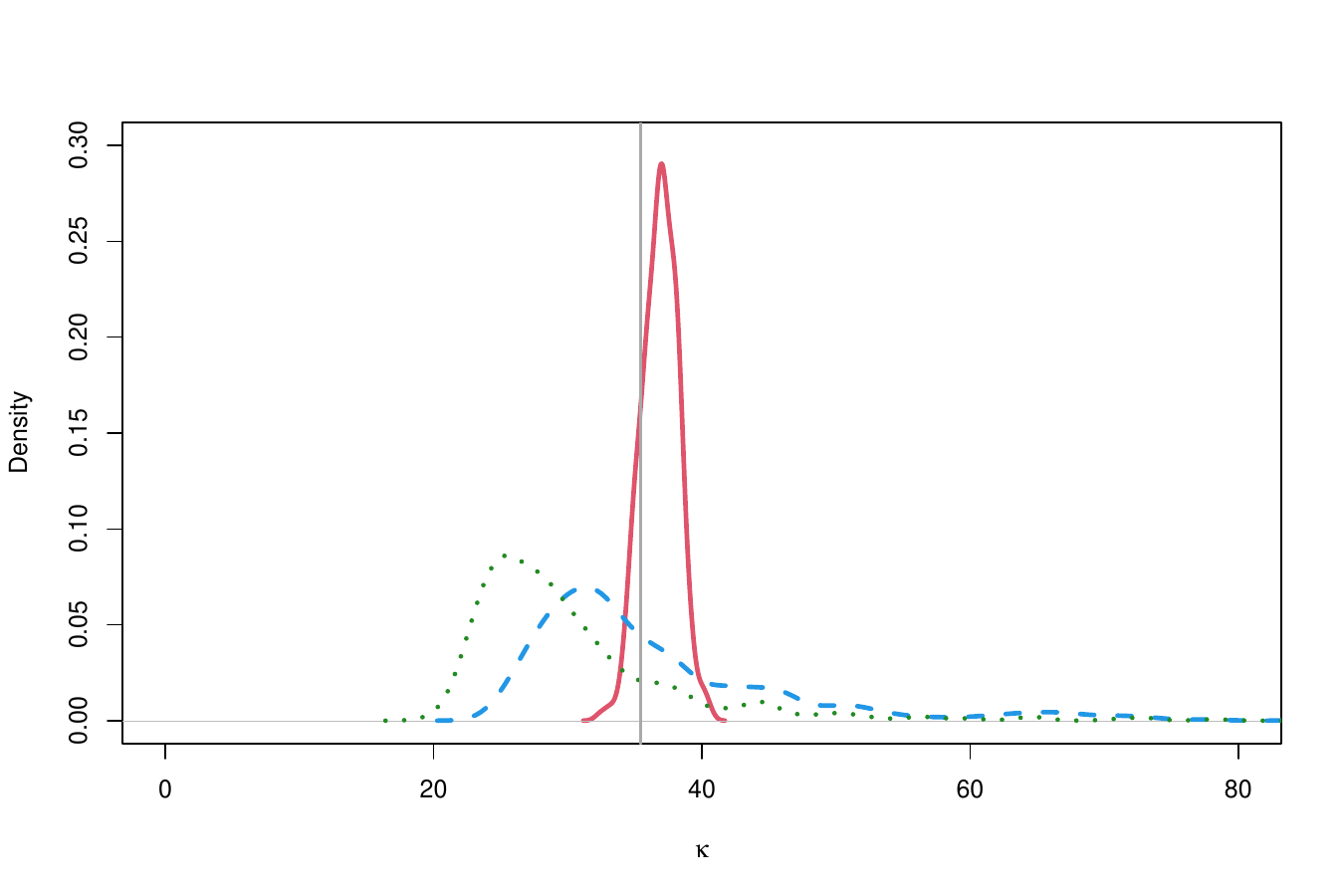}}
	\hfill
	\subfloat[$n=1500$]{
		\includegraphics[width=0.3\textwidth]{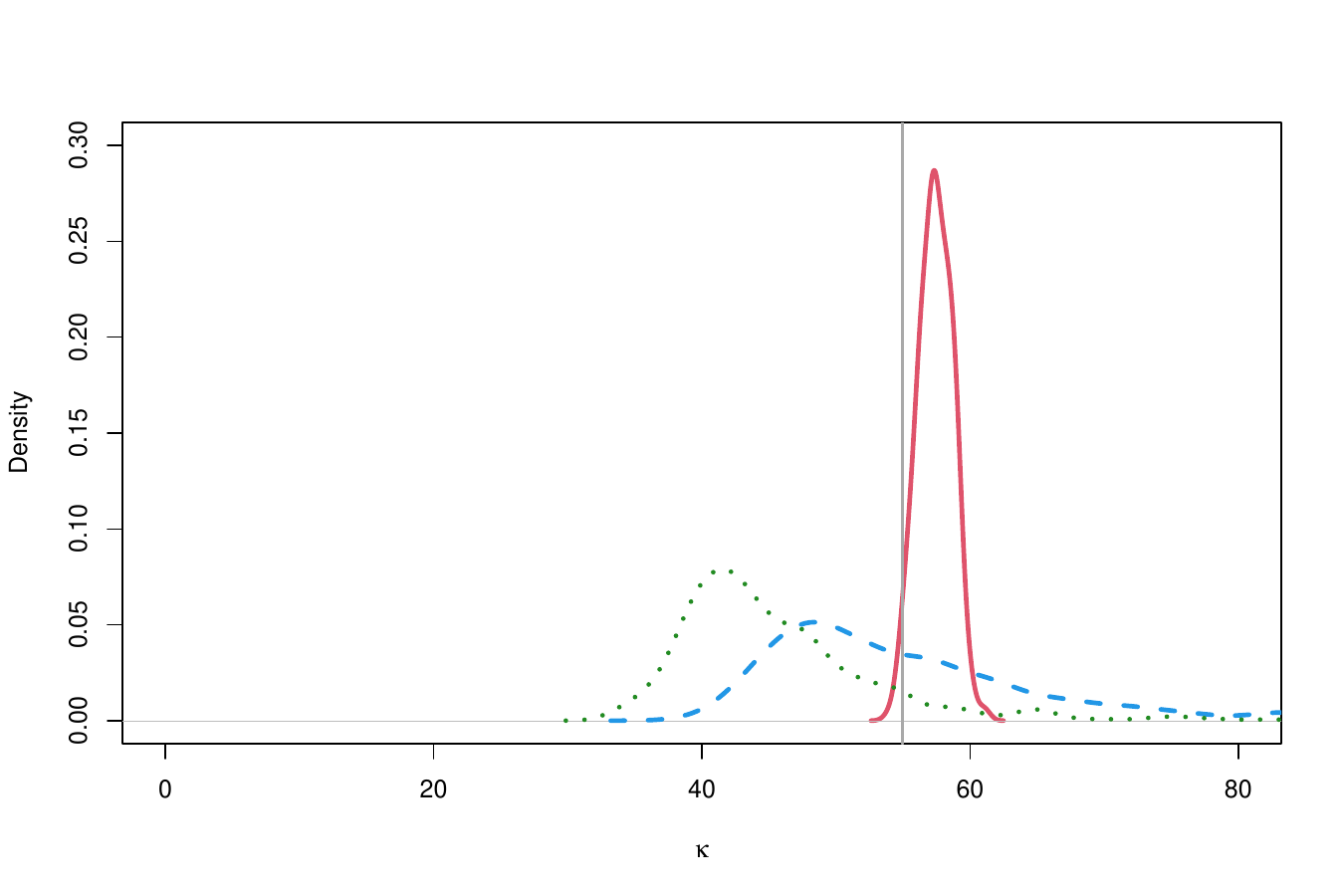}}
	\hspace{1.95cm}

	\vspace{-0.75cm}
	\bigskip

	\caption{Kernel density estimators of the obtained values of $\kappa$ for model N2  with the refined rule (red, continuous line), ECRSC (green, dotted line) and cross-validation (blue, dashed line). Grey vertical line represents the optimal concentration parameters. }
	\label{fig:simus_Normal_kappa2}
\end{figure}

Regarding the performance of each criterion in terms of the approximated ISE, computed as in equation (22) of the main text, Figures~\ref{fig:simus_Normal_ISE} and \ref{fig:simus_Normal_ISE2}  show boxplots of the approximated ISE for models N1 and N2, the different sample sizes and for each concentration selection method. It can be observed that for model N1 the distribution of the approximated ISE is similar for the three methods, while for model N2 it seems that the values of the approximated ISE are moderately lower for the refined rule for all sample sizes, while the difference is more noticeable for the lowest sample sizes.

\begin{figure}[!h]
	\centering
	\subfloat[$n=70$]{
		\includegraphics[width=0.3\textwidth]{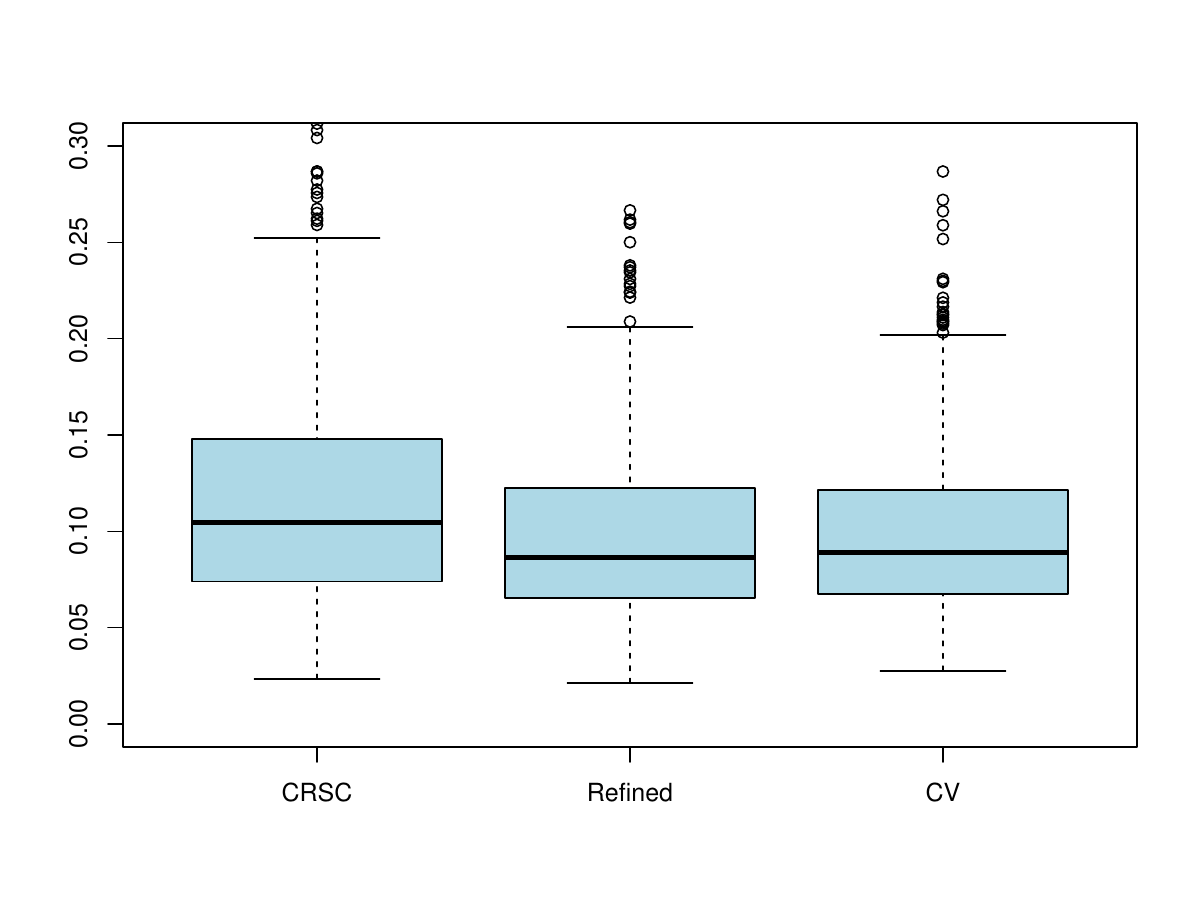}}
	\hfill
	\subfloat[$n=100$]{
		\includegraphics[width=0.3\textwidth]{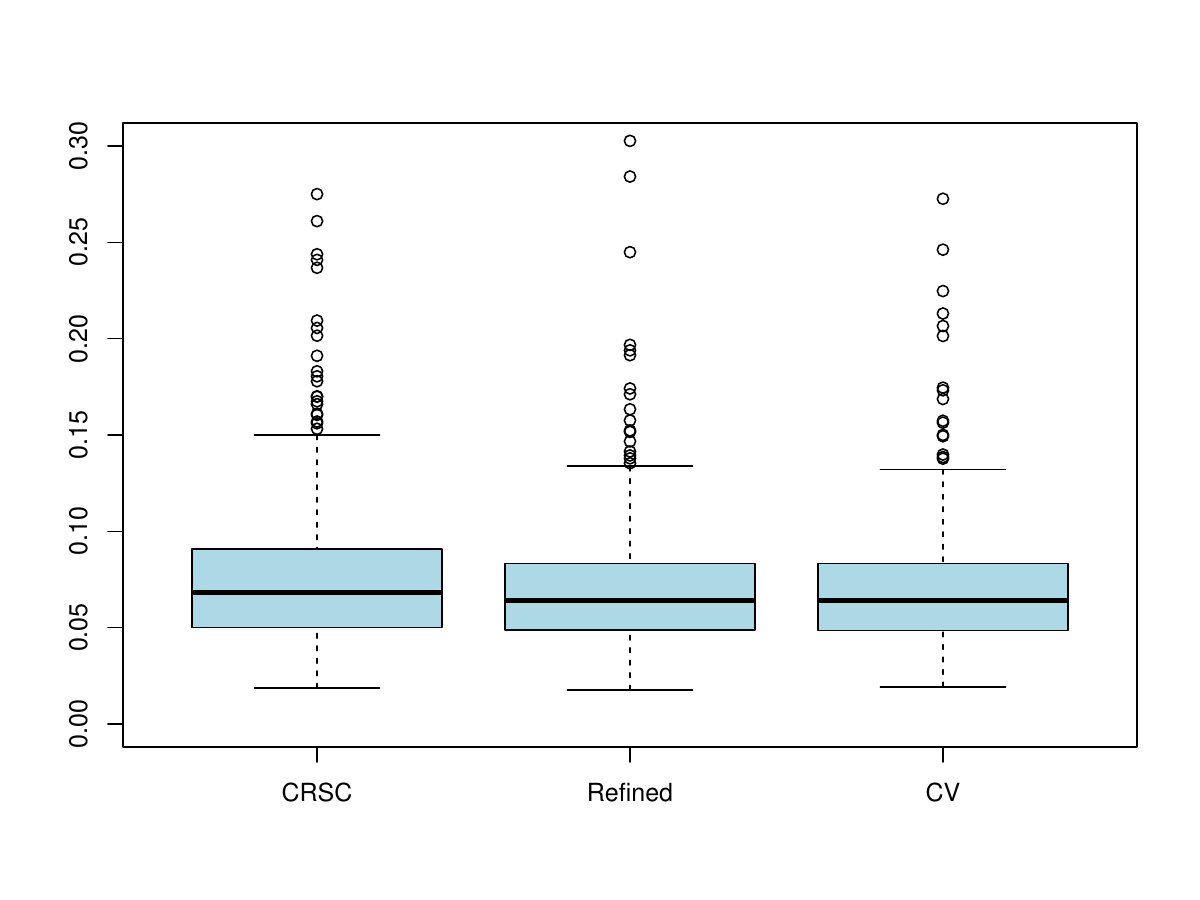}}
	\hfill
	\subfloat[$n=250$]{
		\includegraphics[width=0.3\textwidth]{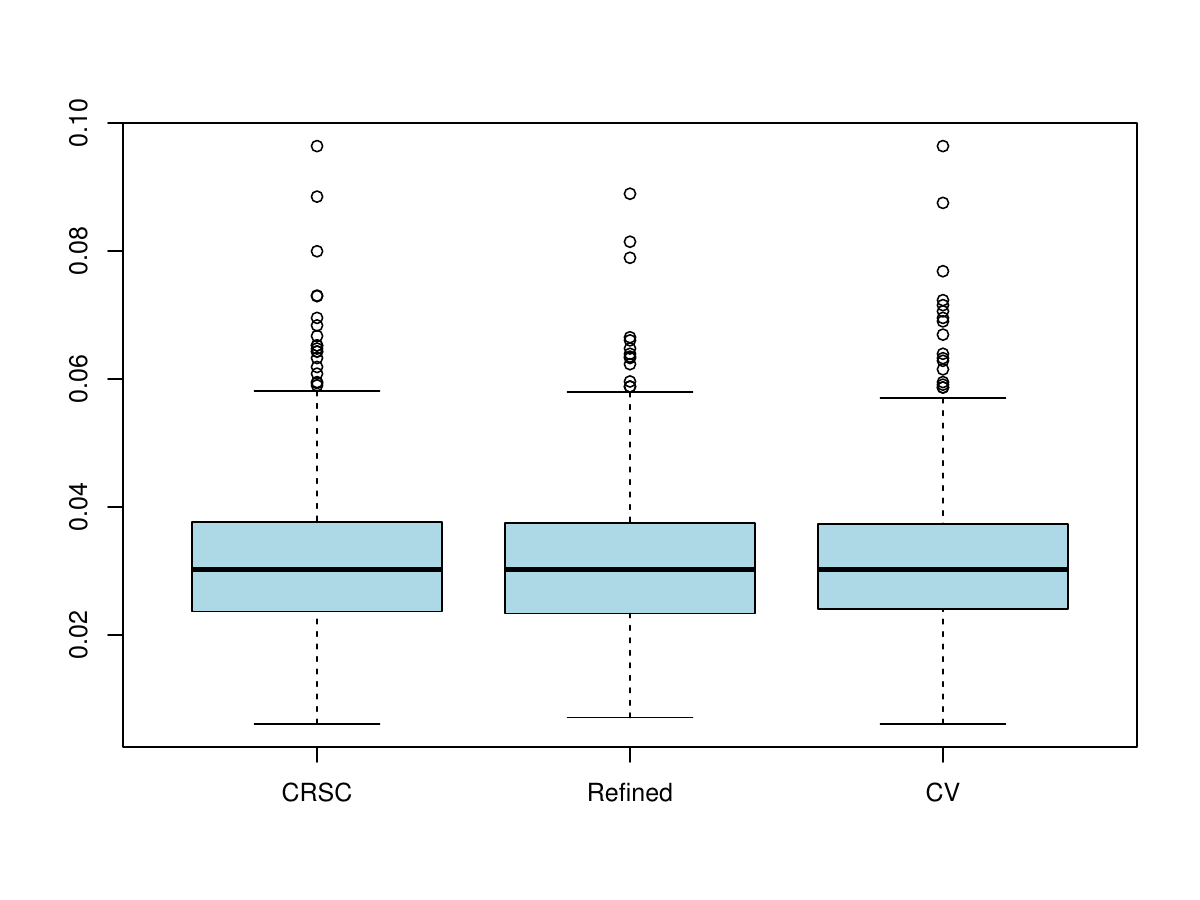}}

	\vspace{-0.75cm}
	\bigskip
	
	\hspace{1.75cm}
	\subfloat[$n=500$]{
		\includegraphics[width=0.3\textwidth]{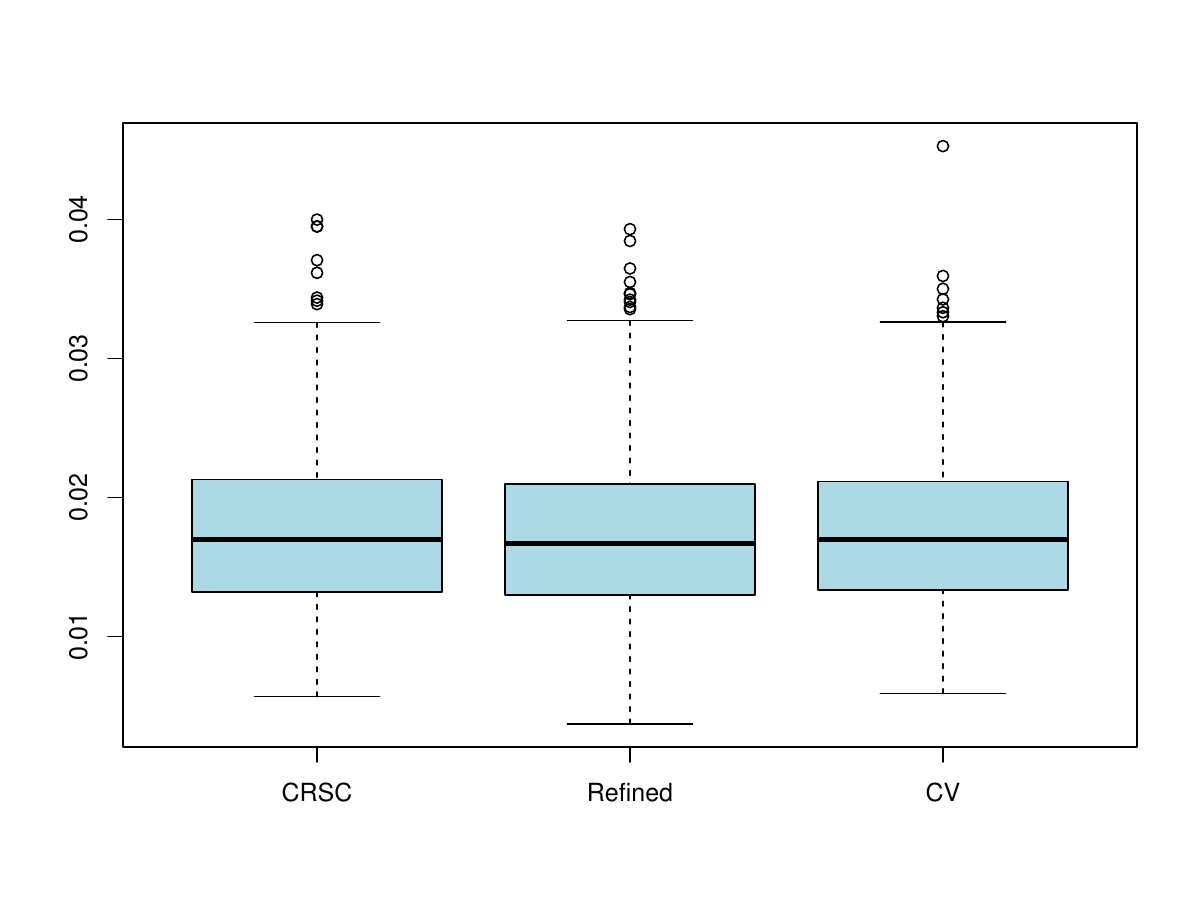}}
	\hfill
	\subfloat[$n=1500$]{
		\includegraphics[width=0.3\textwidth]{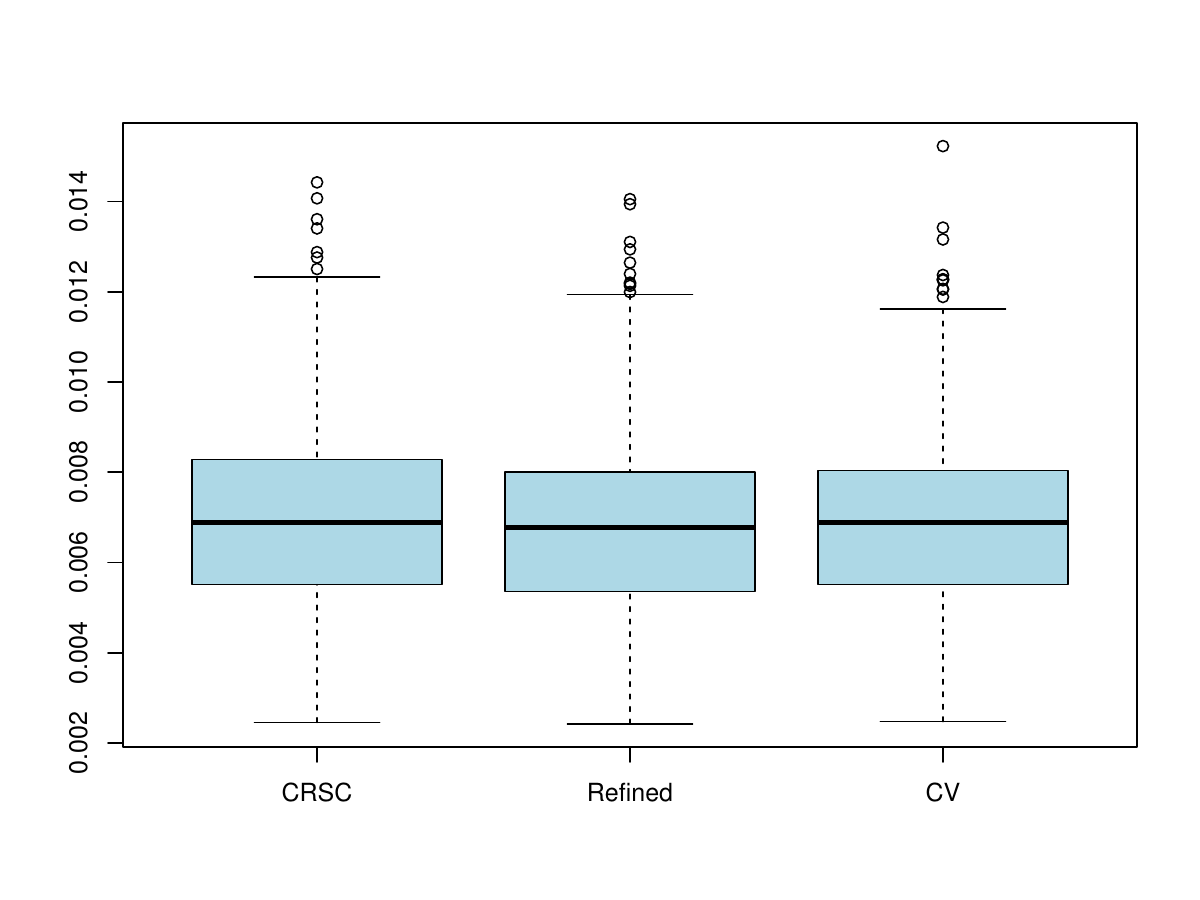}}
	\hspace{1.75cm}
	
	\caption{Boxplots of the estimated ISE for model N1  with the CRSC, refined rule and cross-validation.  }
	\label{fig:simus_Normal_ISE}
\end{figure}

\begin{figure}[!h]
	\centering
	\subfloat[$n=70$]{
		\includegraphics[width=0.3\textwidth]{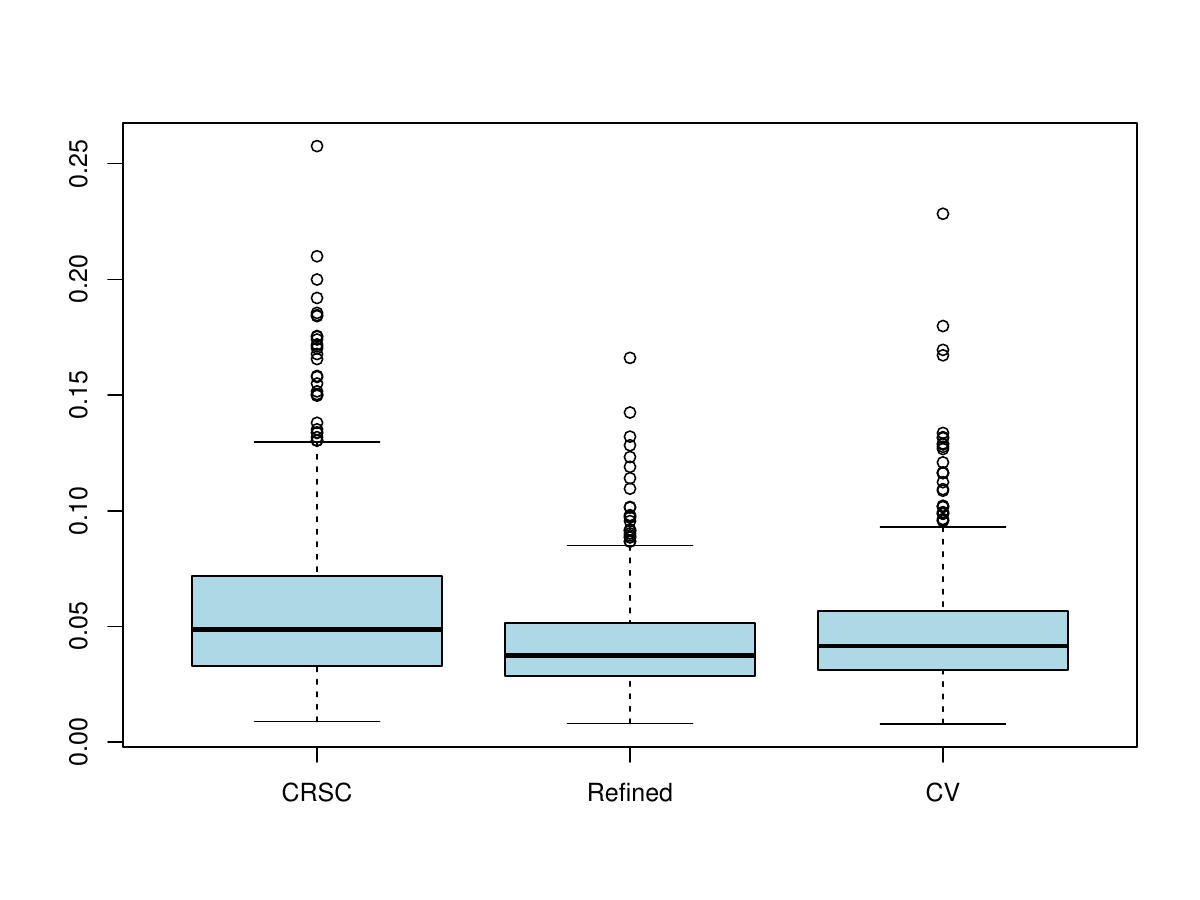}}
	\hfill
	\subfloat[$n=100$]{
		\includegraphics[width=0.3\textwidth]{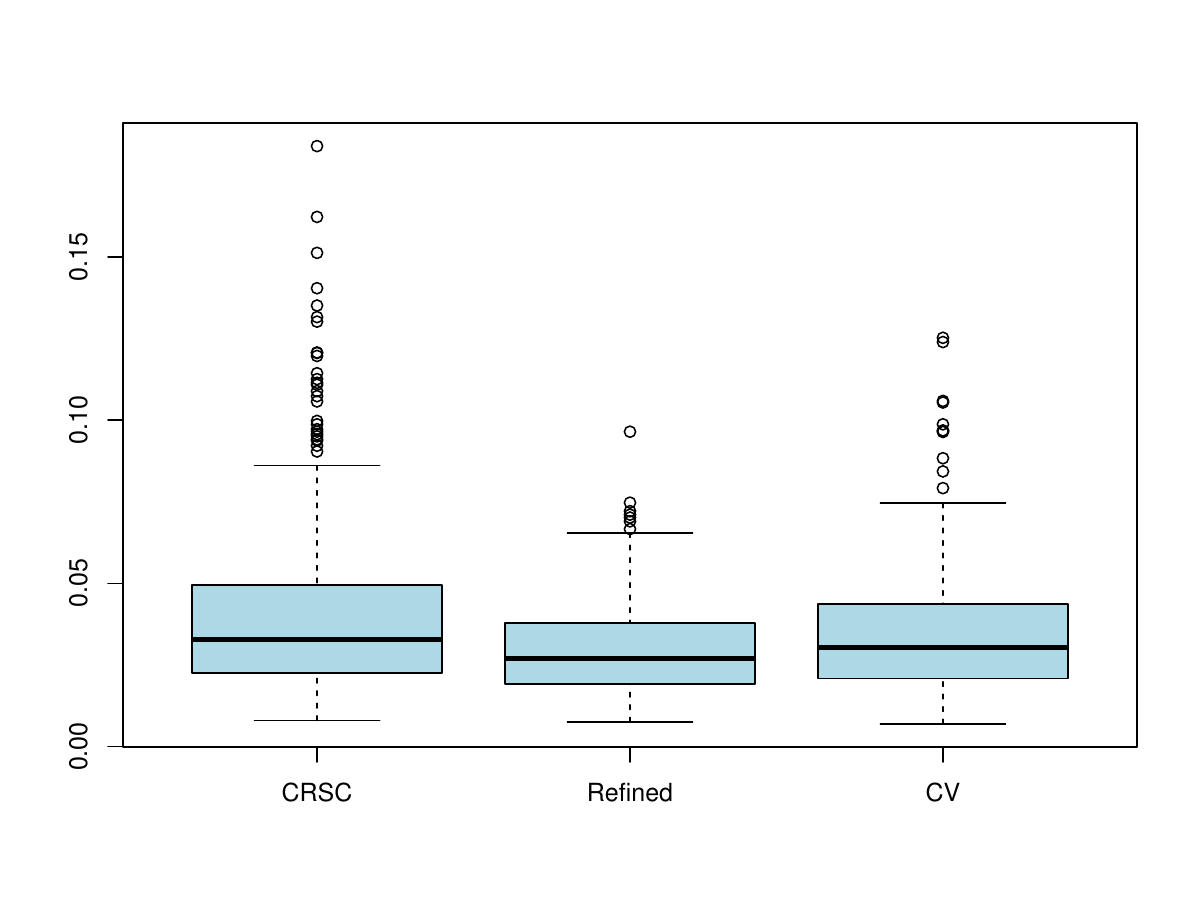}}
	\hfill
	\subfloat[$n=250$]{
		\includegraphics[width=0.3\textwidth]{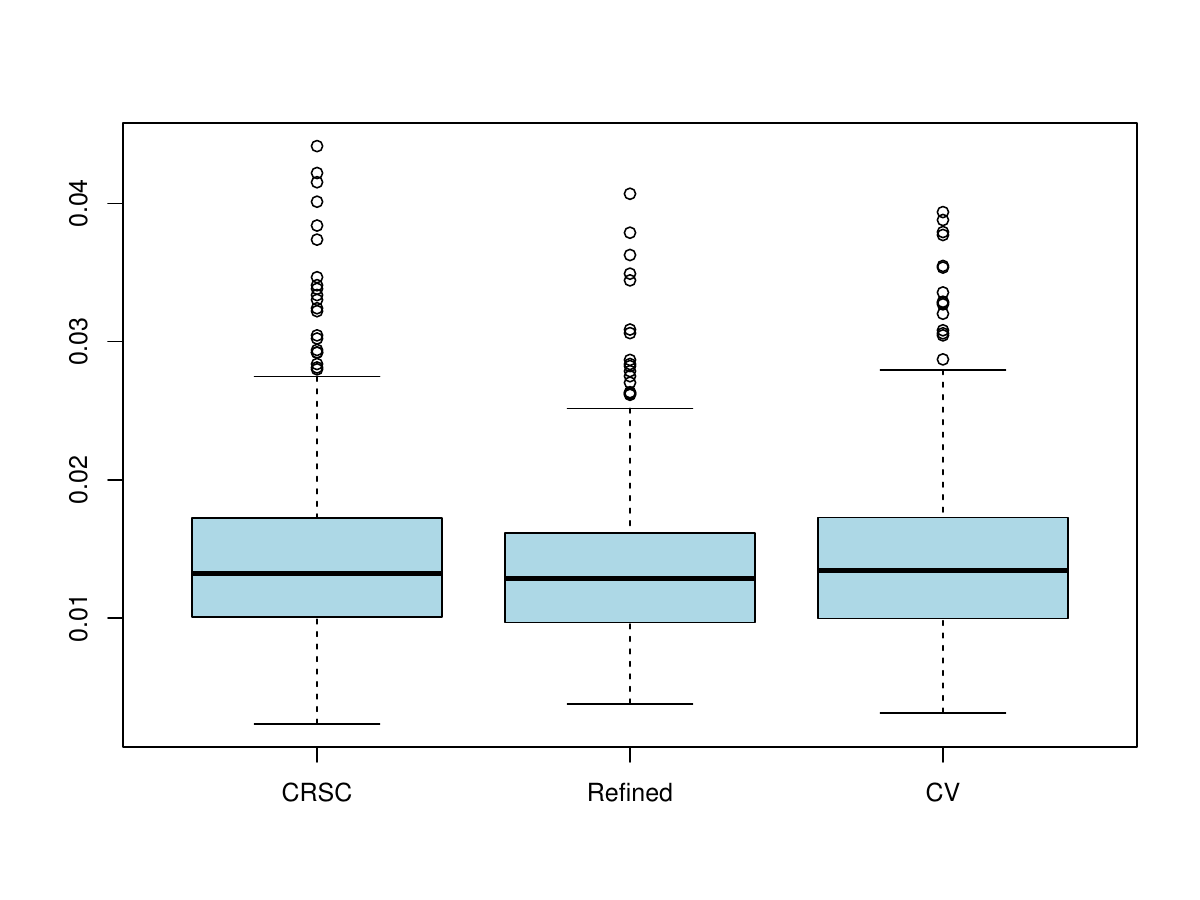}}

	\vspace{-0.75cm}
	\bigskip
	
	\hspace{1.75cm}
	\subfloat[$n=500$]{
		\includegraphics[width=0.3\textwidth]{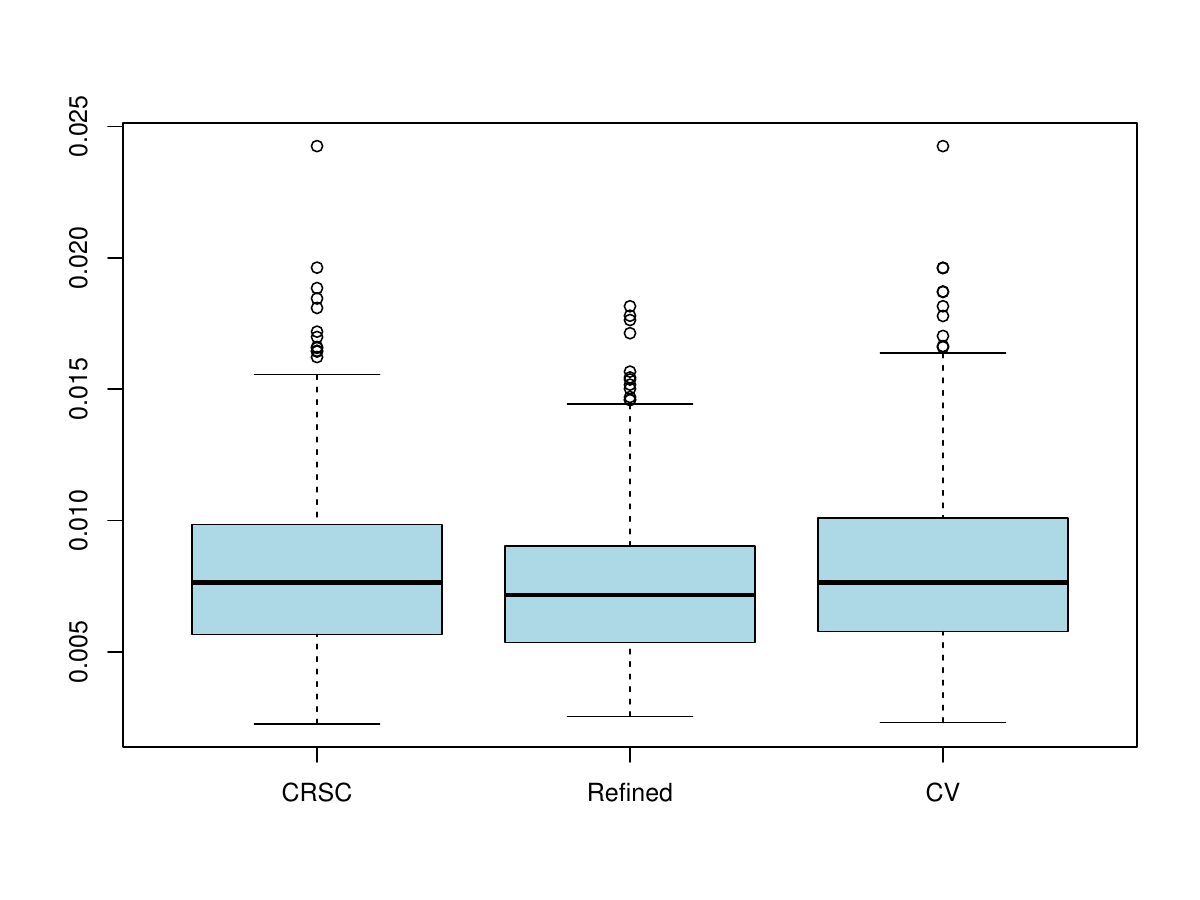}}
	\hfill
	\subfloat[$n=1500$]{
		\includegraphics[width=0.3\textwidth]{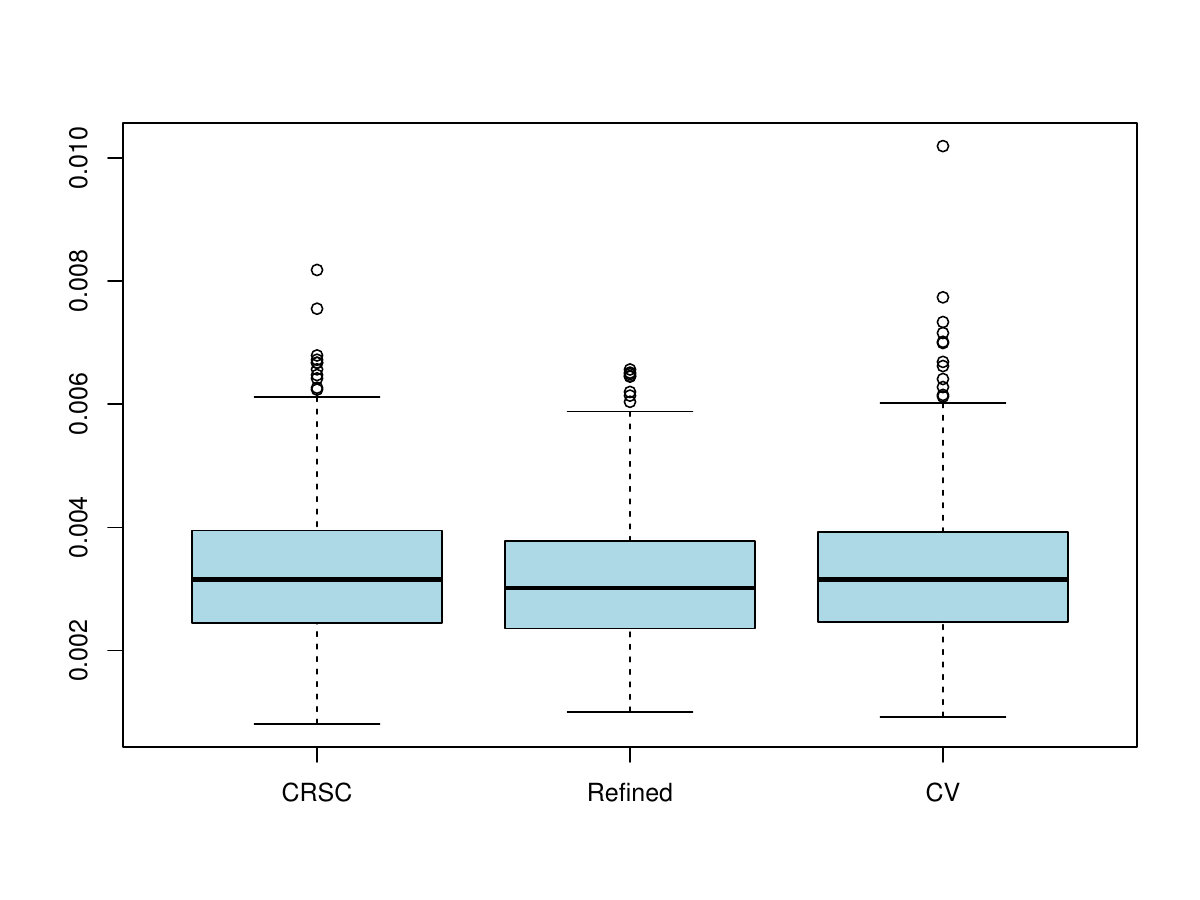}}
	\hspace{1.75cm}
	
	\caption{Boxplots of the estimated ISE for model N2  with the CRSC, refined rule and cross-validation.  }
	\label{fig:simus_Normal_ISE2}
\end{figure}

\paragraph{Bernoulli likelihood.}

Regarding the concentration parameters selected by each method, Figures~\ref{fig:simus_Bernoulli_kappa} and \ref{fig:simus_Bernoulli_kappa2} show kernel density estimators of the obtained smoothing parameters. In this case, there is not a closed form available for the optimal concentration. For model B1, the behavior of the cross-validation method and the refined rule is quite similar for lower sample sizes, while the distribution of the parameters selected by the ECRSC is less concentrated, often selecting the maximum possible value. Surprisingly, when the sample size increases, the estimated distribution of the selected parameters is less peaked, specially for the cross-validation method. For model B2, the parameters selected by the ECRSC and refined rules are centered around the same values, while the concentrations obtained by cross-validation are usually smaller. While the latter method produces less peaked densities when increasing the sample size, the density of the parameters selected by the refined rule is more peaked for larger $n$.

\begin{figure}[!h]
	\centering
	\subfloat[$n=70$]{
		\includegraphics[width=0.3\textwidth]{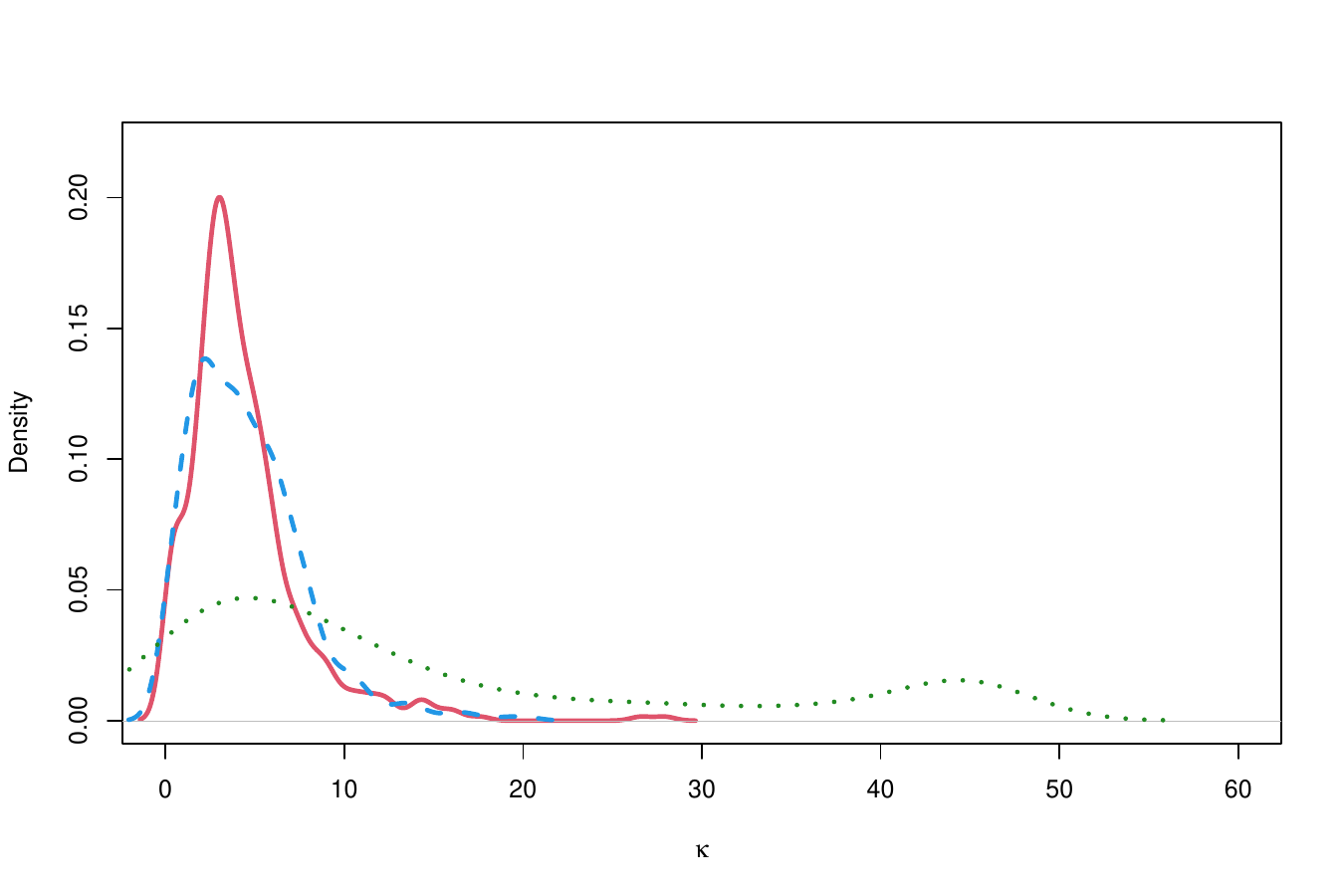}}
	\hfill
	\subfloat[ $n=100$]{
		\includegraphics[width=0.3\textwidth]{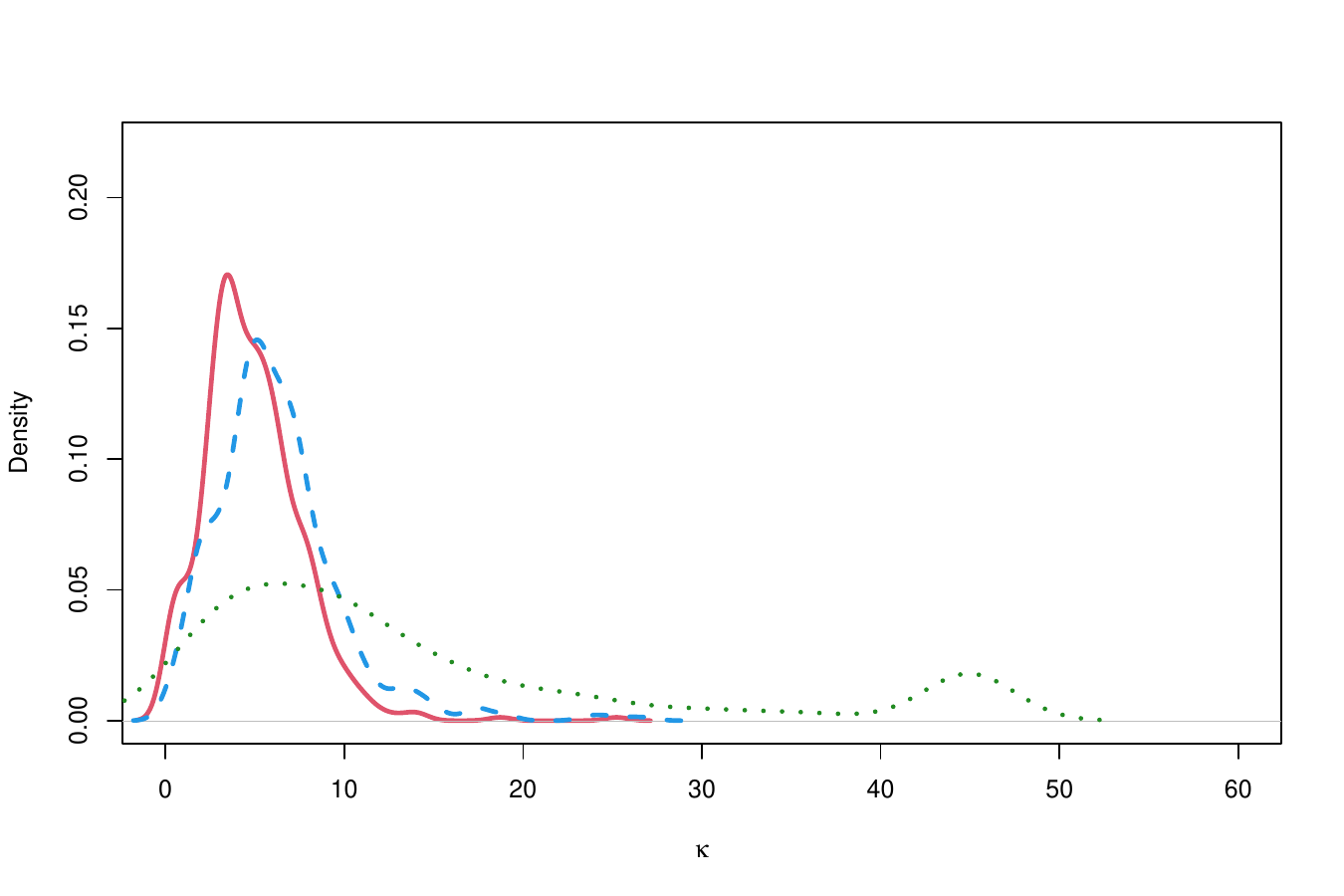}}
	\hfill
	\subfloat[$n=250$]{
		\includegraphics[width=0.3\textwidth]{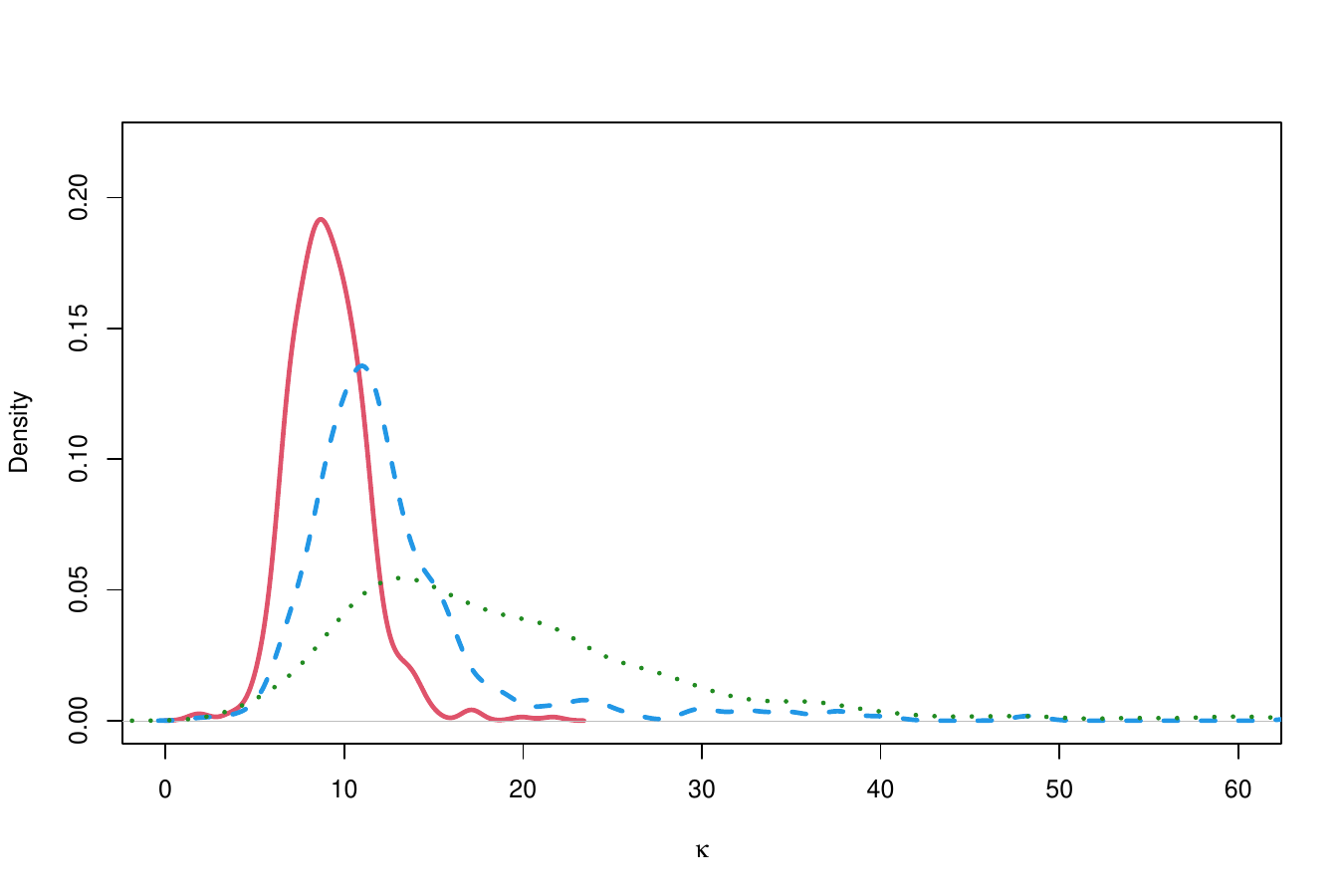}}
	
	\vspace{-0.75cm}
	\bigskip

	\hspace{1.75cm}
	\subfloat[ $n=500$]{
		\includegraphics[width=0.3\textwidth]{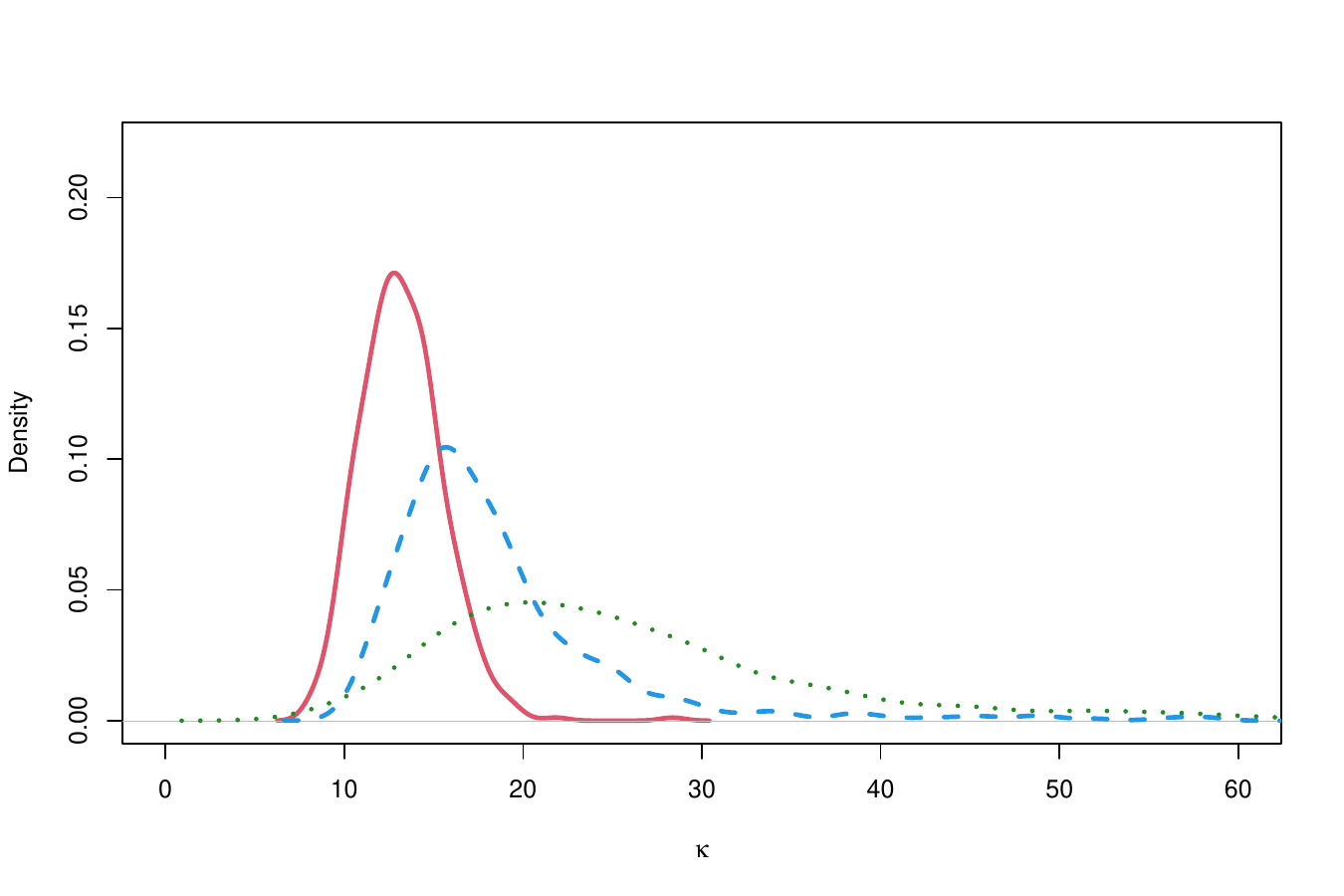}}
	\hfill
	\subfloat[$n=1500$]{
		\includegraphics[width=0.3\textwidth]{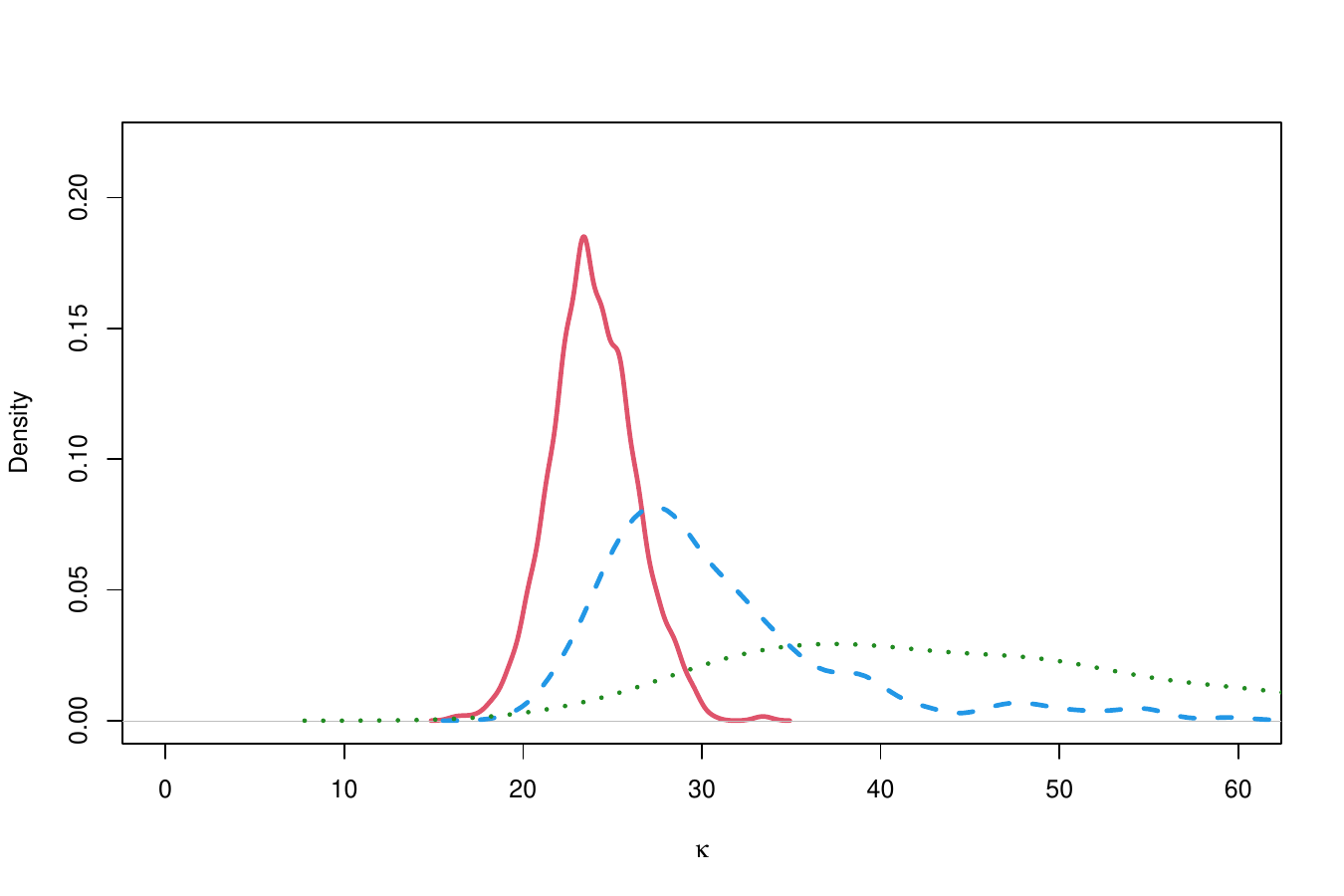}}
	\hspace{1.75cm}
	
	\caption{Kernel density estimators of the obtained values of $\kappa$ for model B1  with the refined rule (red, continuous line), ECRSC (green, dotted line) and cross-validation (blue, dashed line).}
	\label{fig:simus_Bernoulli_kappa}
\end{figure}

\begin{figure}[!h]
	\centering
	\subfloat[$n=70$]{
		\includegraphics[width=0.3\textwidth]{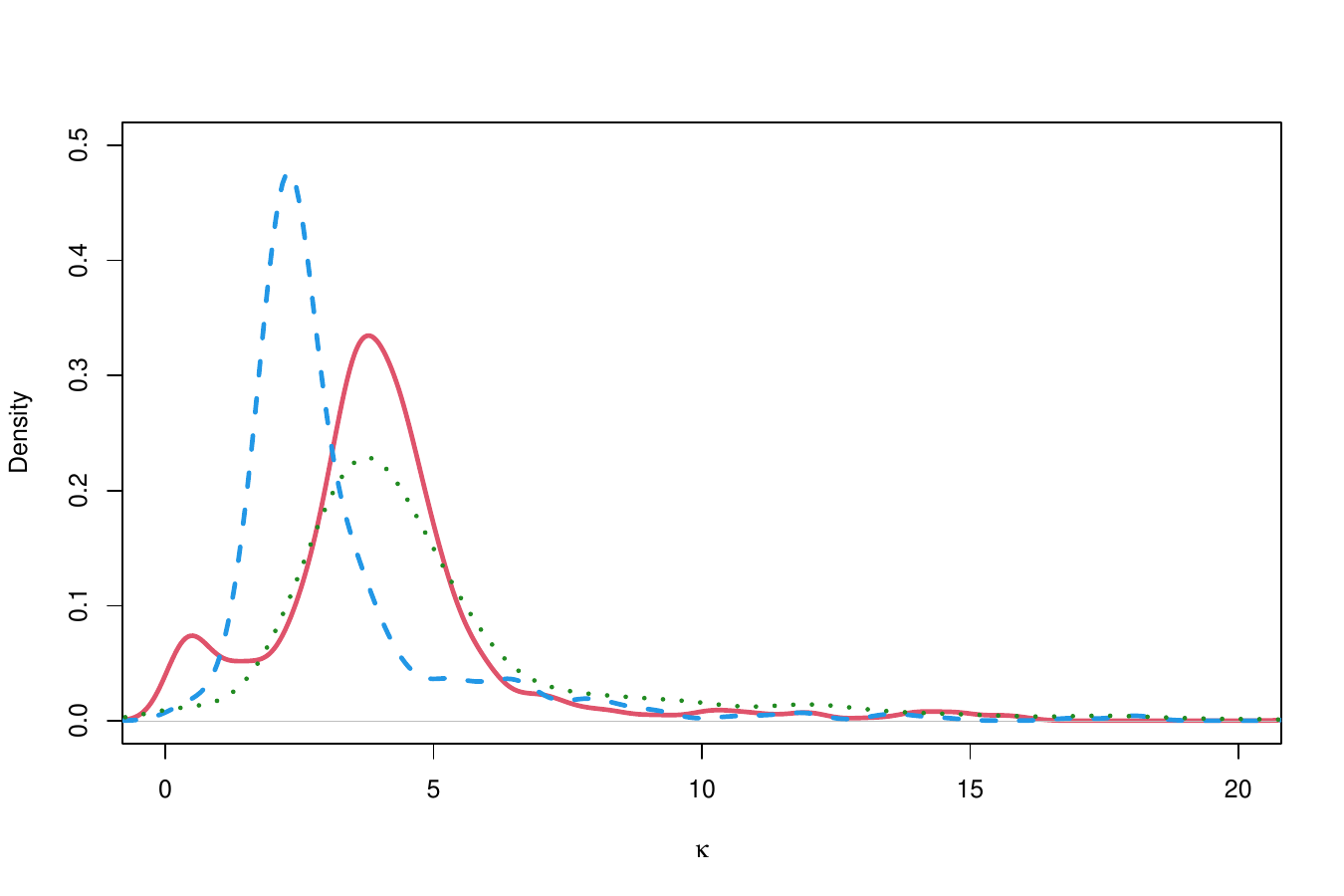}}
	\hfill
	\subfloat[ $n=100$]{
		\includegraphics[width=0.3\textwidth]{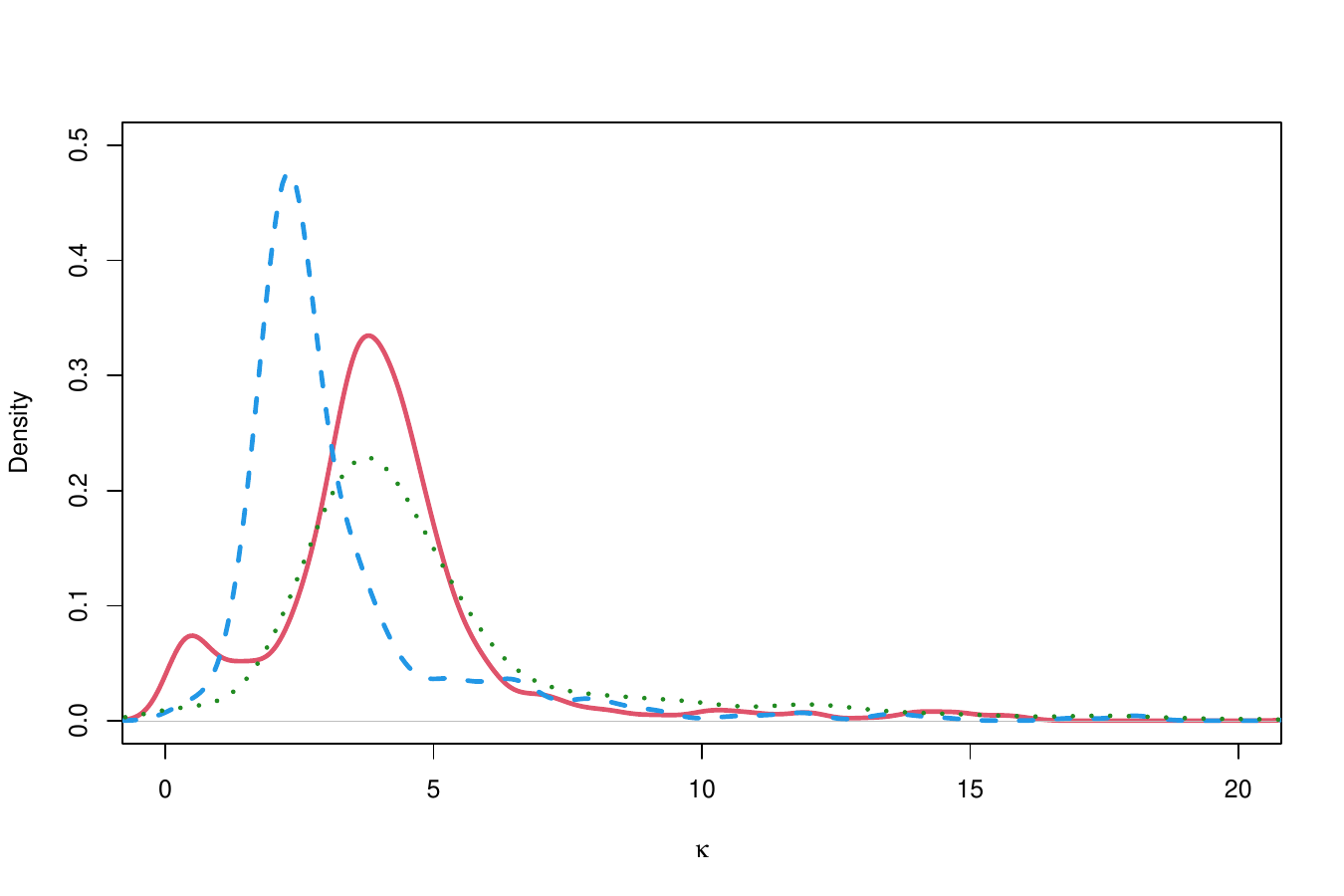}}
	\hfill
	\subfloat[$n=250$]{
		\includegraphics[width=0.3\textwidth]{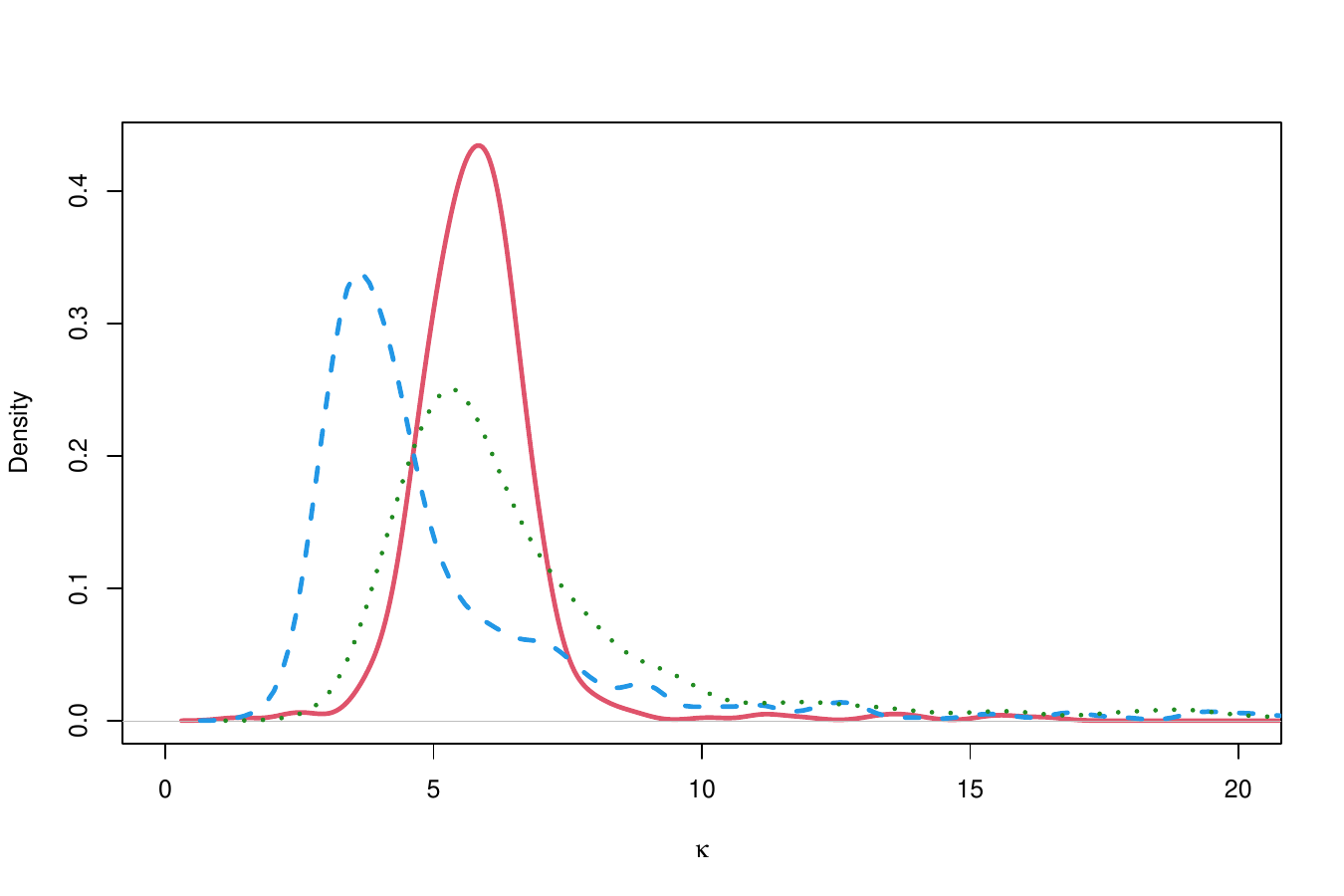}}
	
	\vspace{-0.75cm}
	\bigskip

	\hspace{1.75cm}
	\subfloat[ $n=500$]{
		\includegraphics[width=0.3\textwidth]{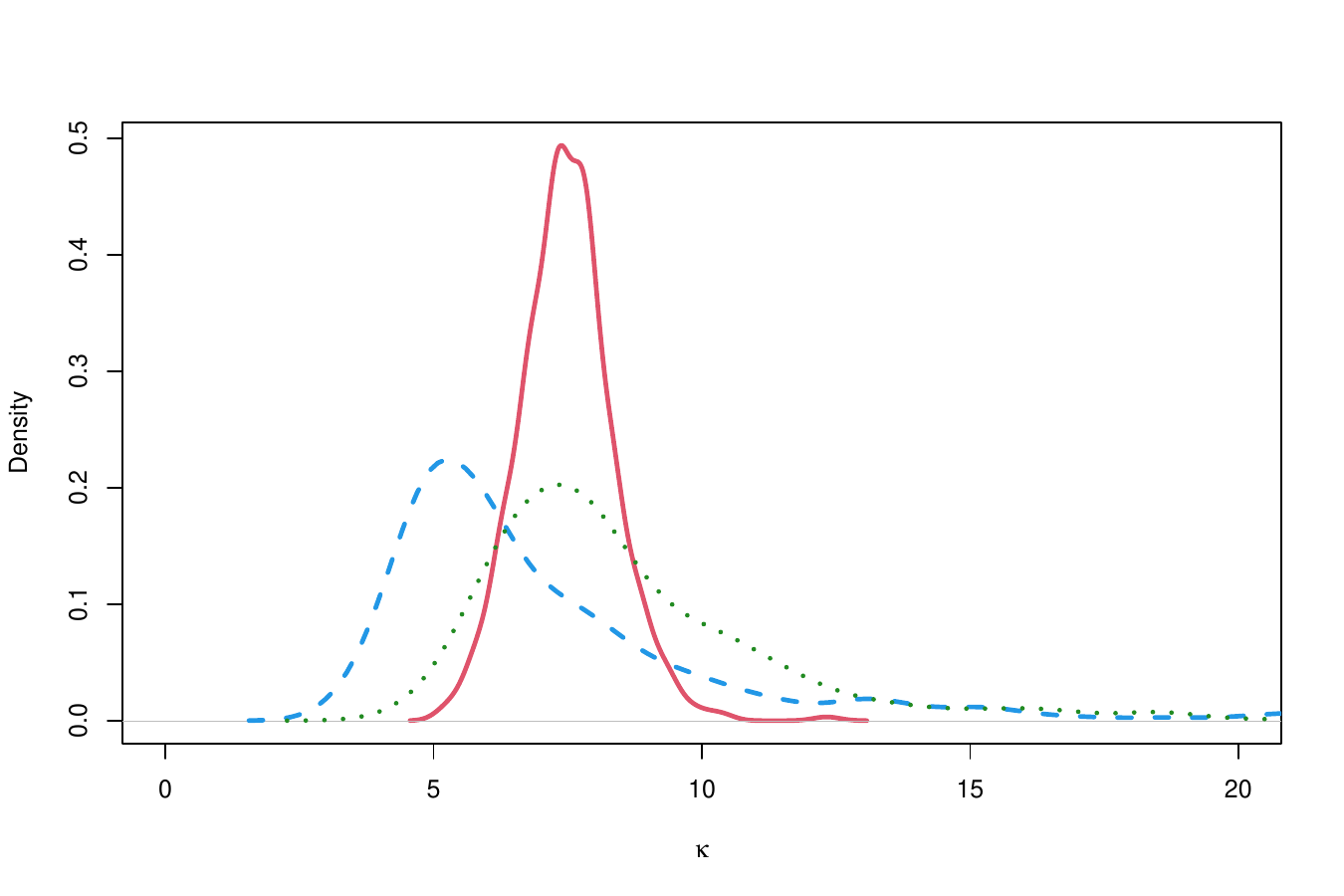}}
	\hfill
	\subfloat[$n=1500$]{
		\includegraphics[width=0.3\textwidth]{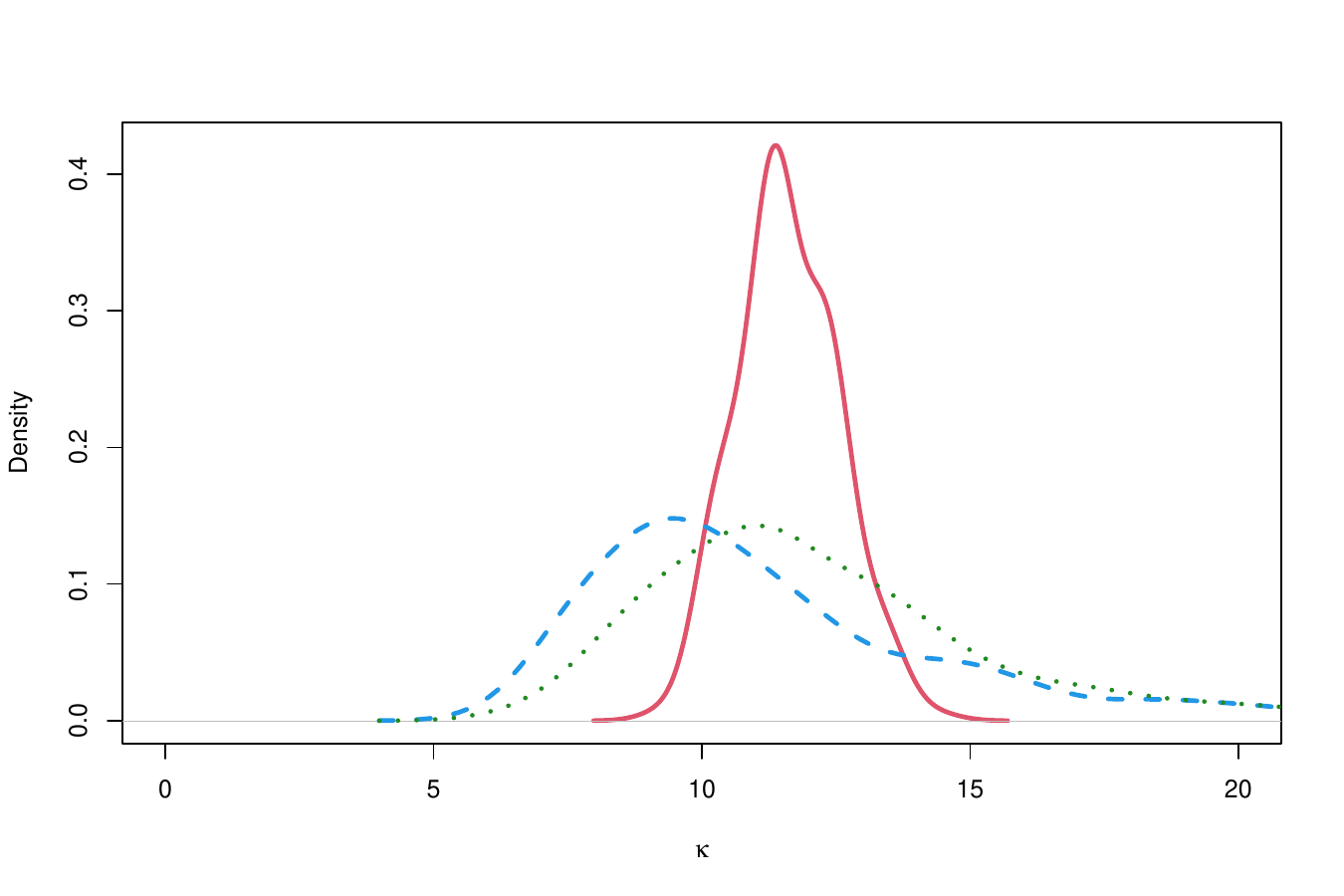}}
	\hspace{1.75cm}
	
	\caption{Kernel density estimators of the obtained values of $\kappa$ for model B2  with the refined rule (red, continuous line), ECRSC (green, dotted line) and cross-validation (blue, dashed line).}
	\label{fig:simus_Bernoulli_kappa2}
\end{figure}

Now we move on to the performance of the three methods in terms of the approximated ISE, for which boxplots are represented in Figures~\ref{fig:simus_Bernoulli_ISE} and \ref{fig:simus_Bernoulli_ISE2}. For model B1, it is observed that cross-validation and the refined rule obtain very similar values of the approximated ISE, specially for large sample sizes, with cross-validation slightly outperforming the other two methodswhen the sample size is small. Although the behavior is similar in model B2, the refined rule obtains smaller values of the approximated ISE when $n=250$, $n=500$ and $n=1500$. In all cases, the largest values of the approximated ISE are obtained with the ECRSC method.

\begin{figure}[!h]
	\centering
	\subfloat[$n=70$]{
		\includegraphics[width=0.3\textwidth]{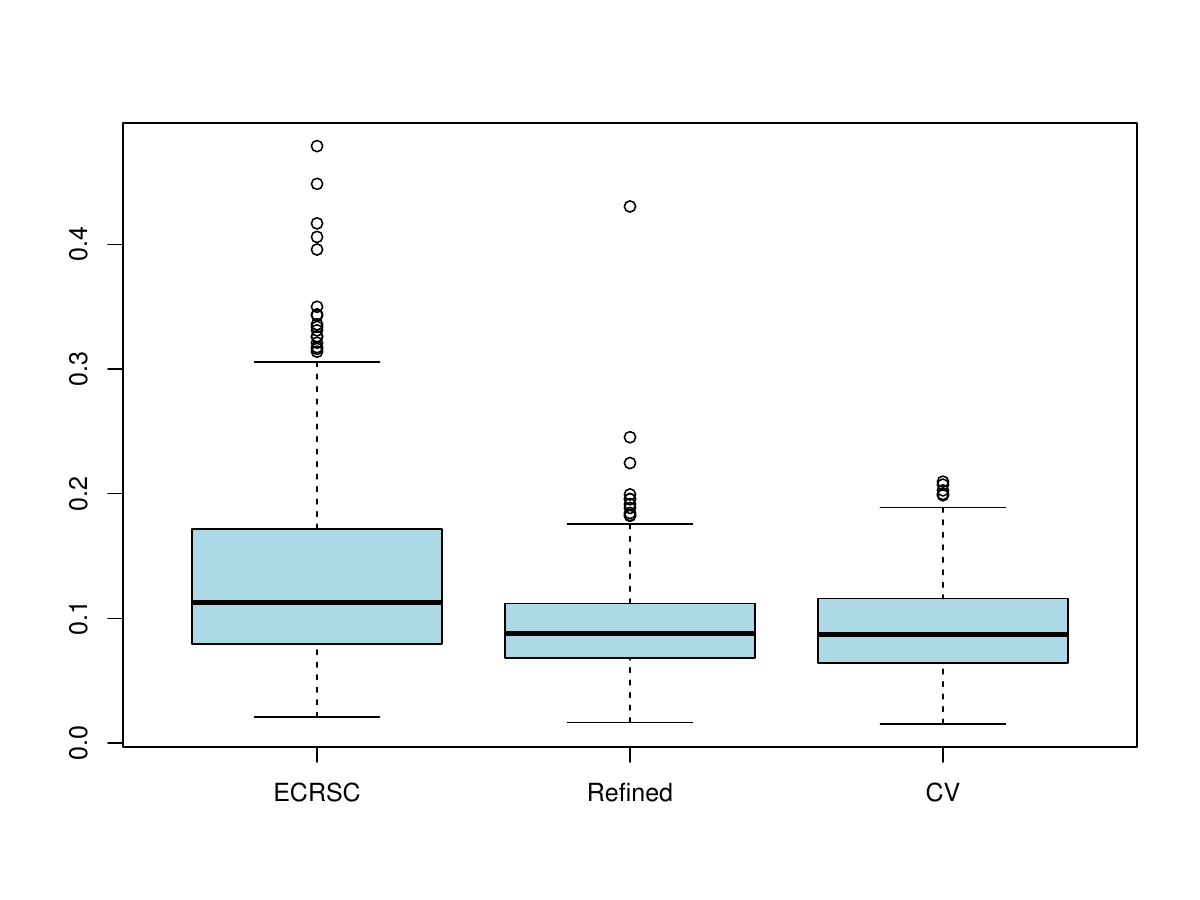}}
	\hfill
	\subfloat[ $n=100$]{
		\includegraphics[width=0.3\textwidth]{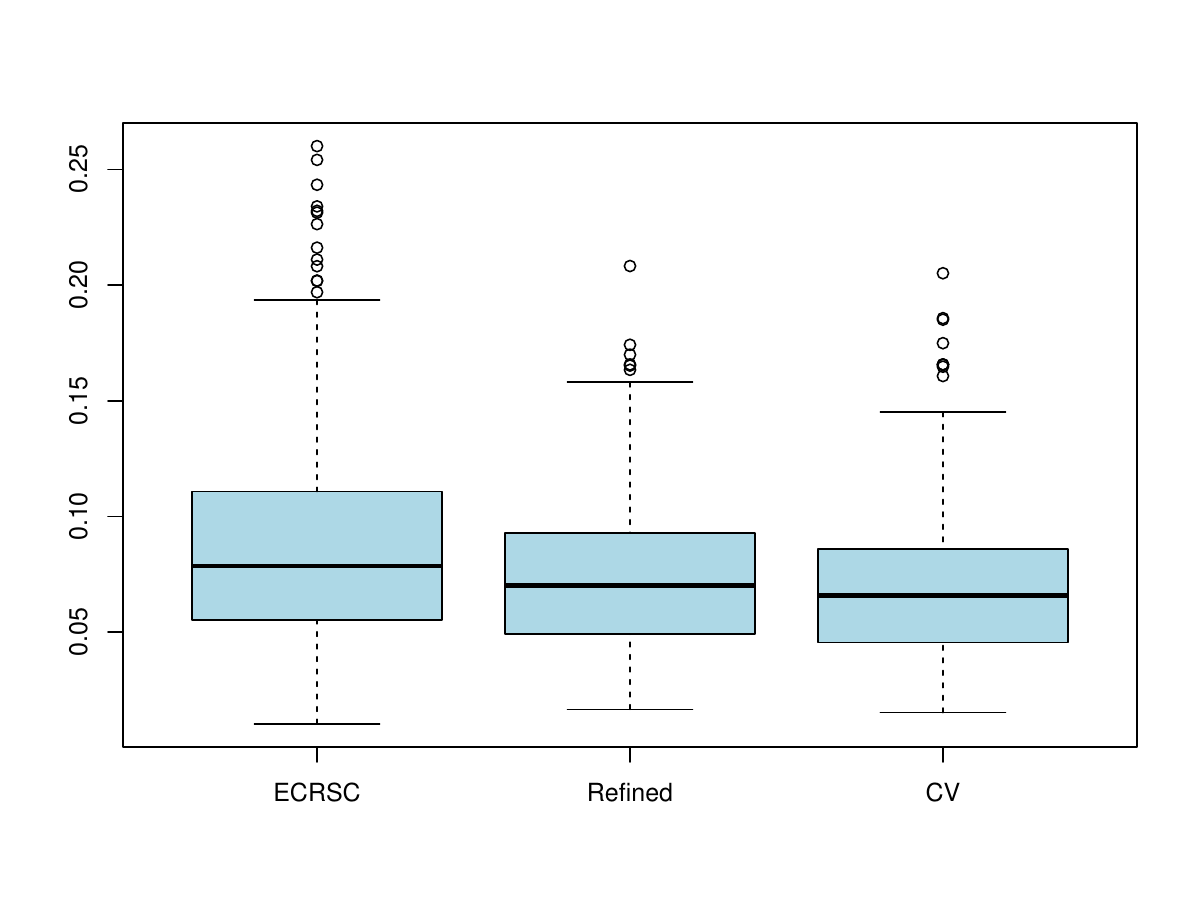}}
	\hfill
	\subfloat[ $n=250$]{
		\includegraphics[width=0.3\textwidth]{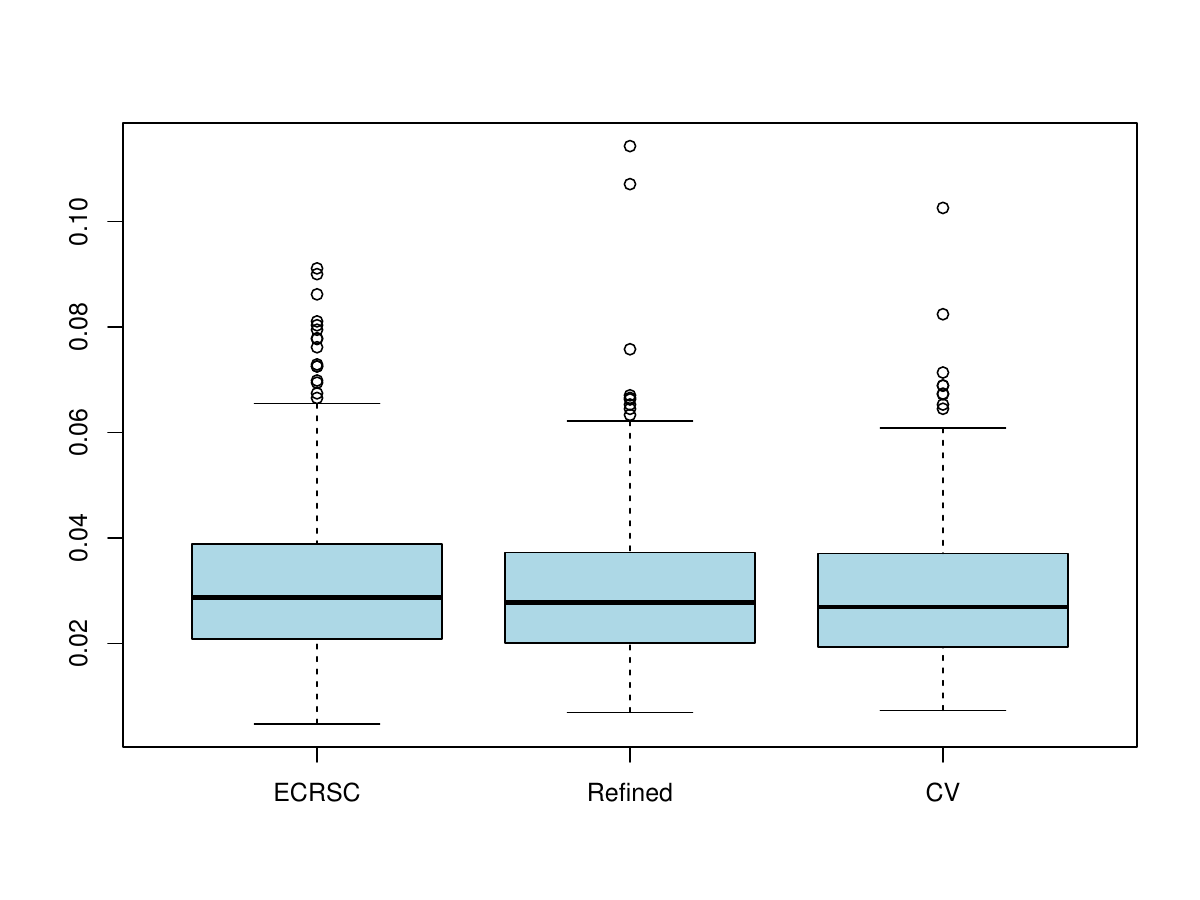}}

	\vspace{-0.75cm}
	\bigskip

	\hspace{1.75cm}
	\subfloat[$n=500$]{
		\includegraphics[width=0.3\textwidth]{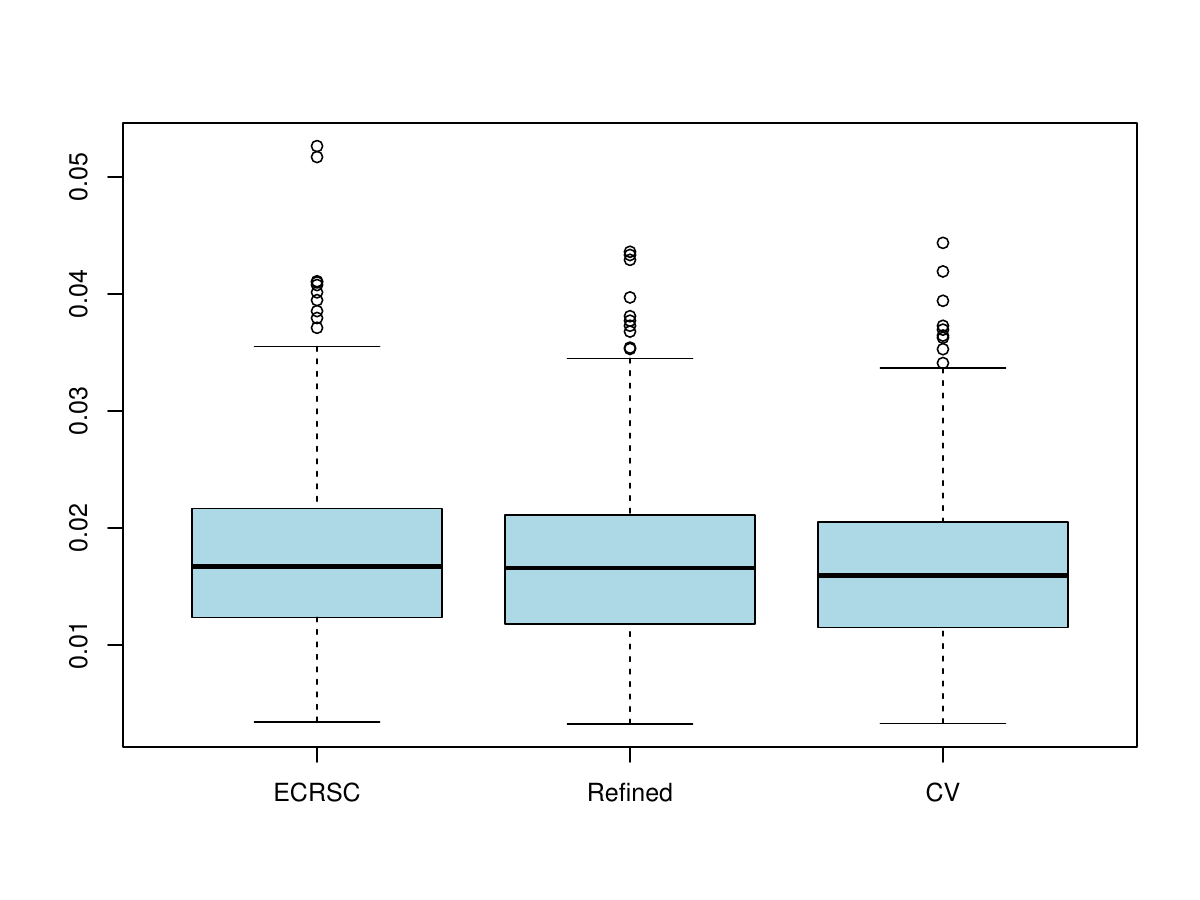}}
	\hfill
	\subfloat[$n=1500$]{
		\includegraphics[width=0.3\textwidth]{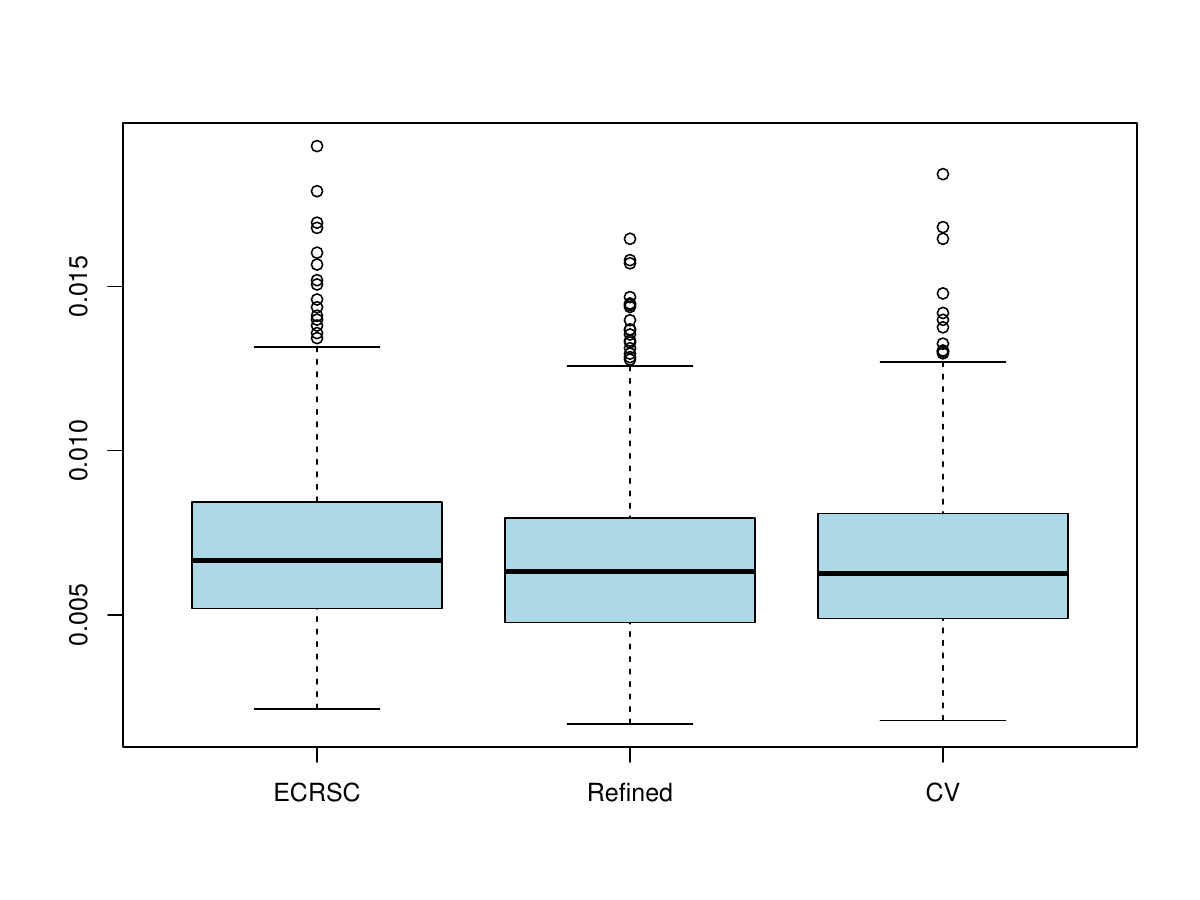}}
	\hspace{1.75cm}
	
	\caption{Boxplots of the estimated ISE for model B1 with the ECRSC, refined rule and cross-validation.   }
	\label{fig:simus_Bernoulli_ISE}
\end{figure}

\begin{figure}[!h]
	\centering
	\subfloat[$n=70$]{
		\includegraphics[width=0.3\textwidth]{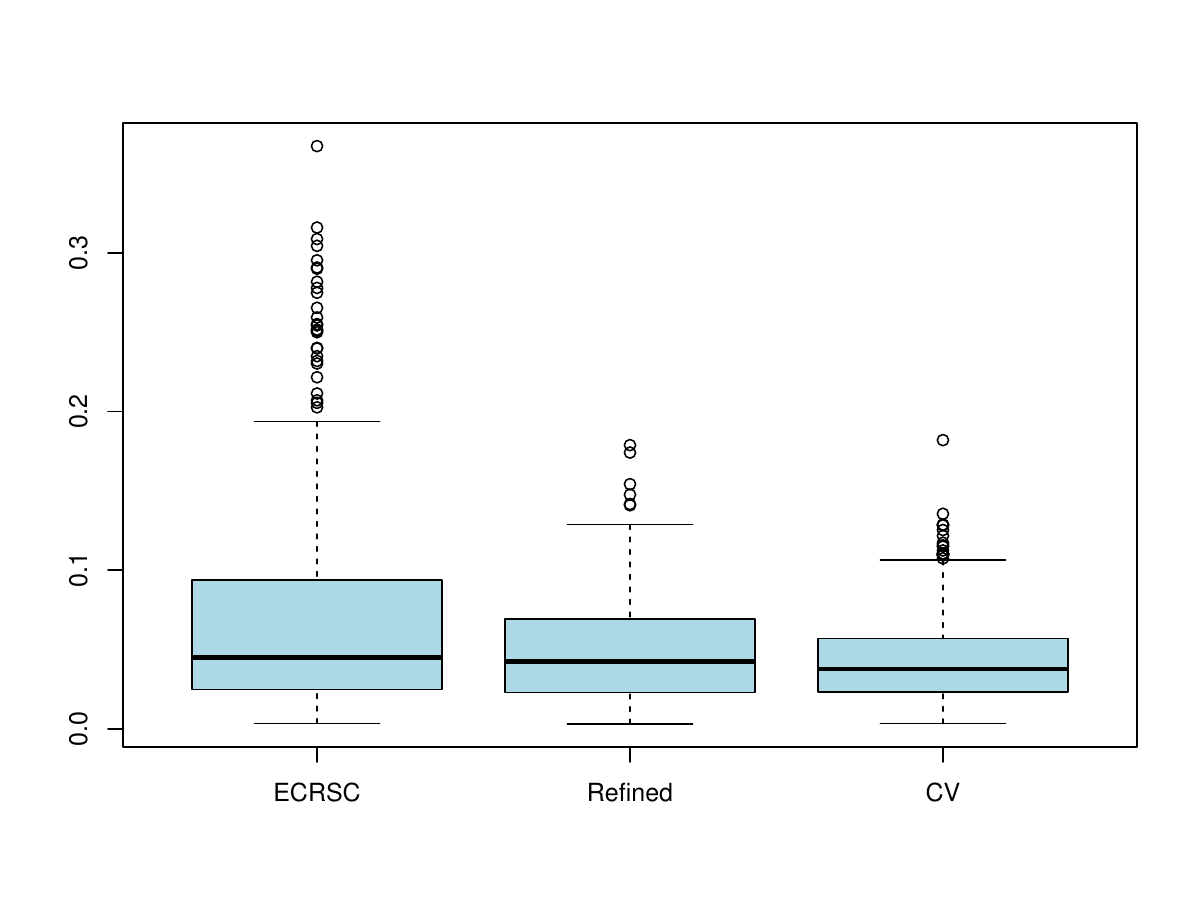}}
	\hfill
	\subfloat[ $n=100$]{
		\includegraphics[width=0.3\textwidth]{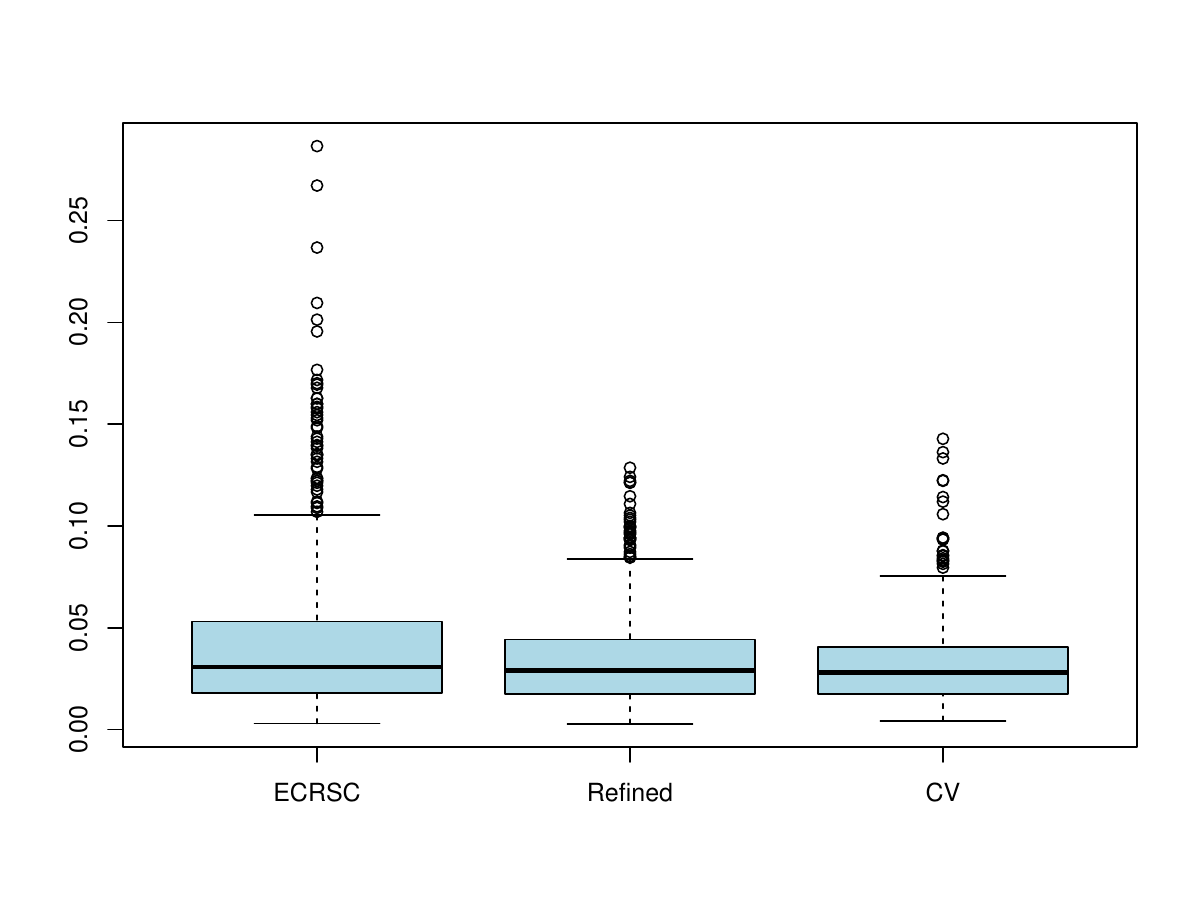}}
	\hfill
	\subfloat[ $n=250$]{
		\includegraphics[width=0.3\textwidth]{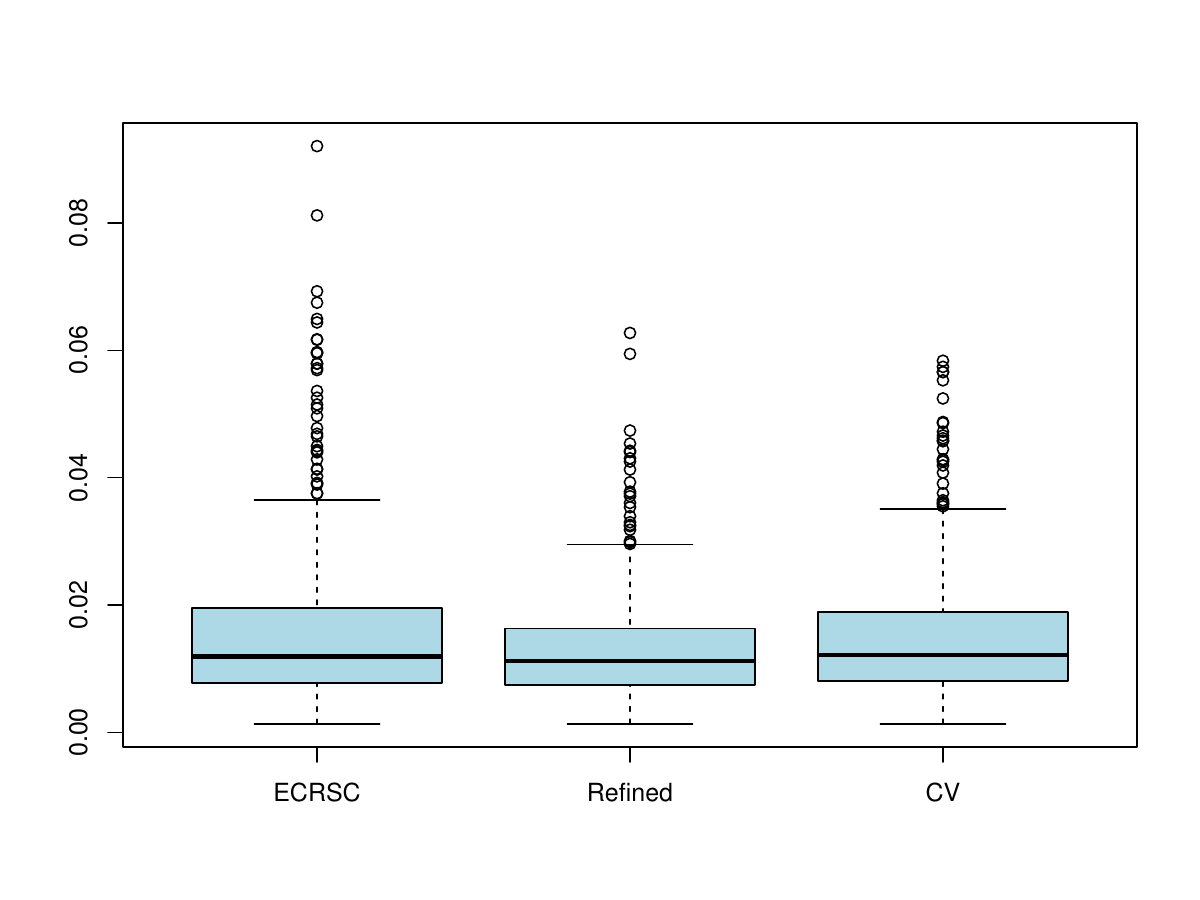}}

	\vspace{-0.75cm}
	\bigskip

	\hspace{1.75cm}
	\subfloat[$n=500$]{
		\includegraphics[width=0.3\textwidth]{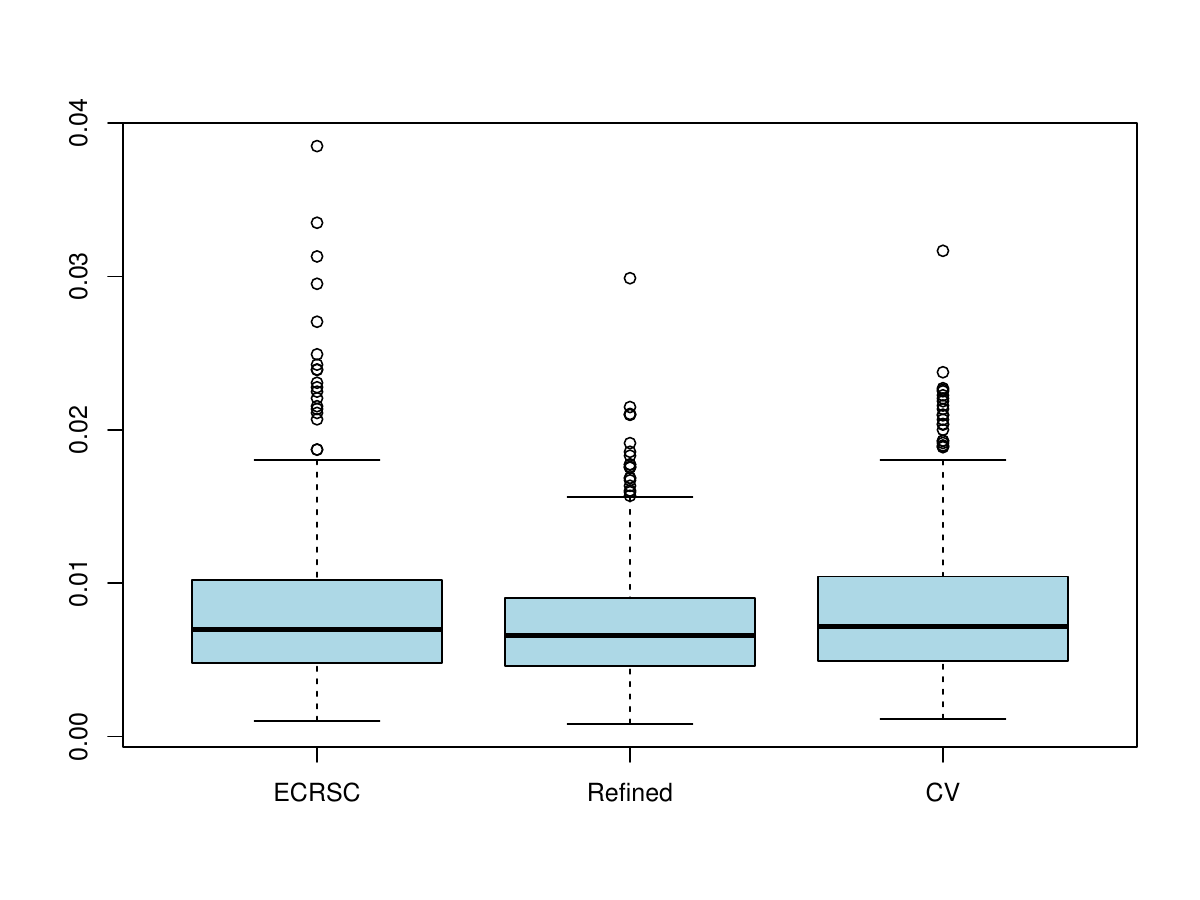}}
	\hfill
	\subfloat[$n=1500$]{
		\includegraphics[width=0.3\textwidth]{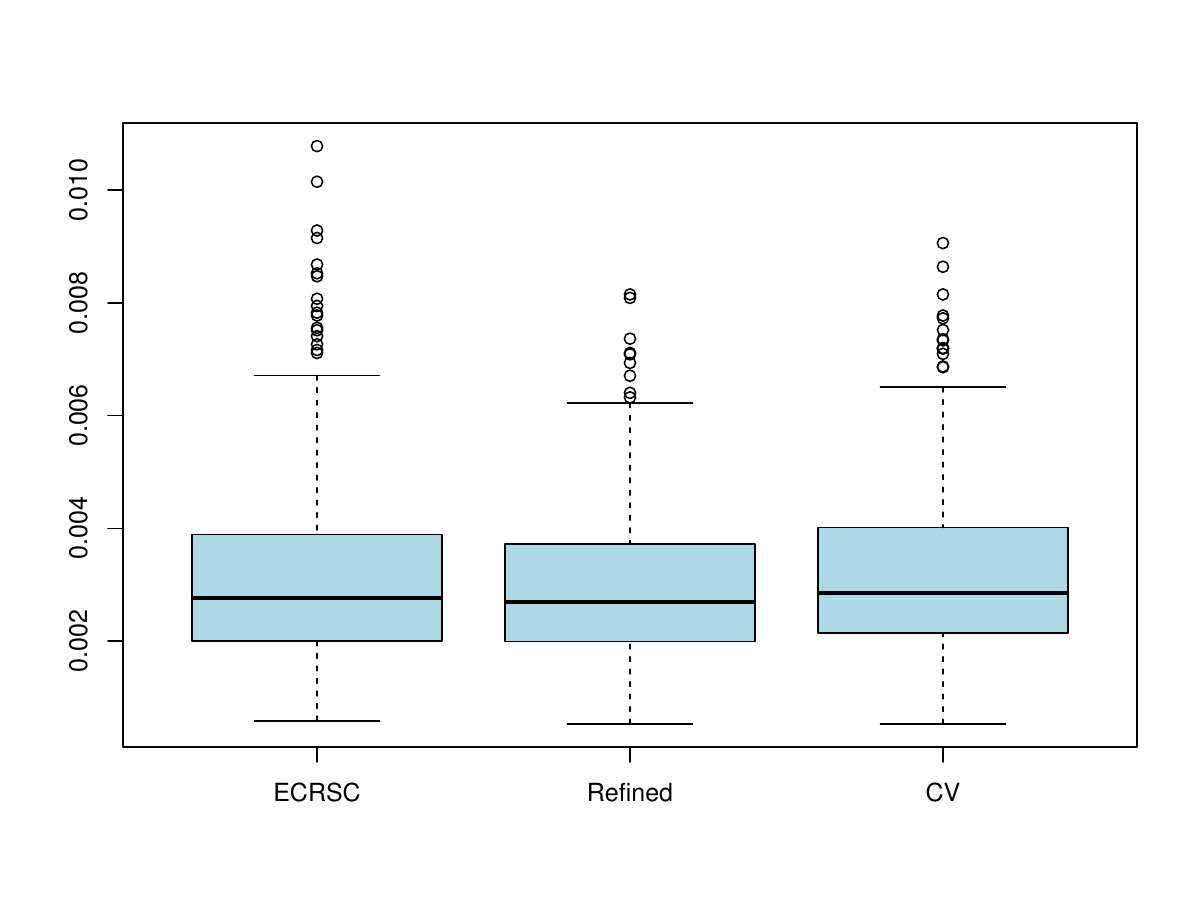}}
	\hspace{1.75cm}
	
	\caption{Boxplots of the estimated ISE for model B2 with the ECRSC, refined rule and cross-validation.   }
	\label{fig:simus_Bernoulli_ISE2}
\end{figure}

\paragraph{Poisson likelihood.}

Regarding the Poisson scenario, the kernel density estimators of the selected $\kappa$ with each selection method are shown in Figures~\ref{fig:simus_Poisson_kappa} and \ref{fig:simus_Poisson_kappa2}. The estimated distribution of the obtained parameters is similar to the ones achieved in the normal likelihood case, since the estimated density obtained for the refined rule is generally symmetric and highly peaked, while the other two are skewed, selecting sometimes concentration parameters which are overly large.

\begin{figure}[!h]
	\centering
	
	\subfloat[$n=70$]{
		\includegraphics[width=0.3\textwidth]{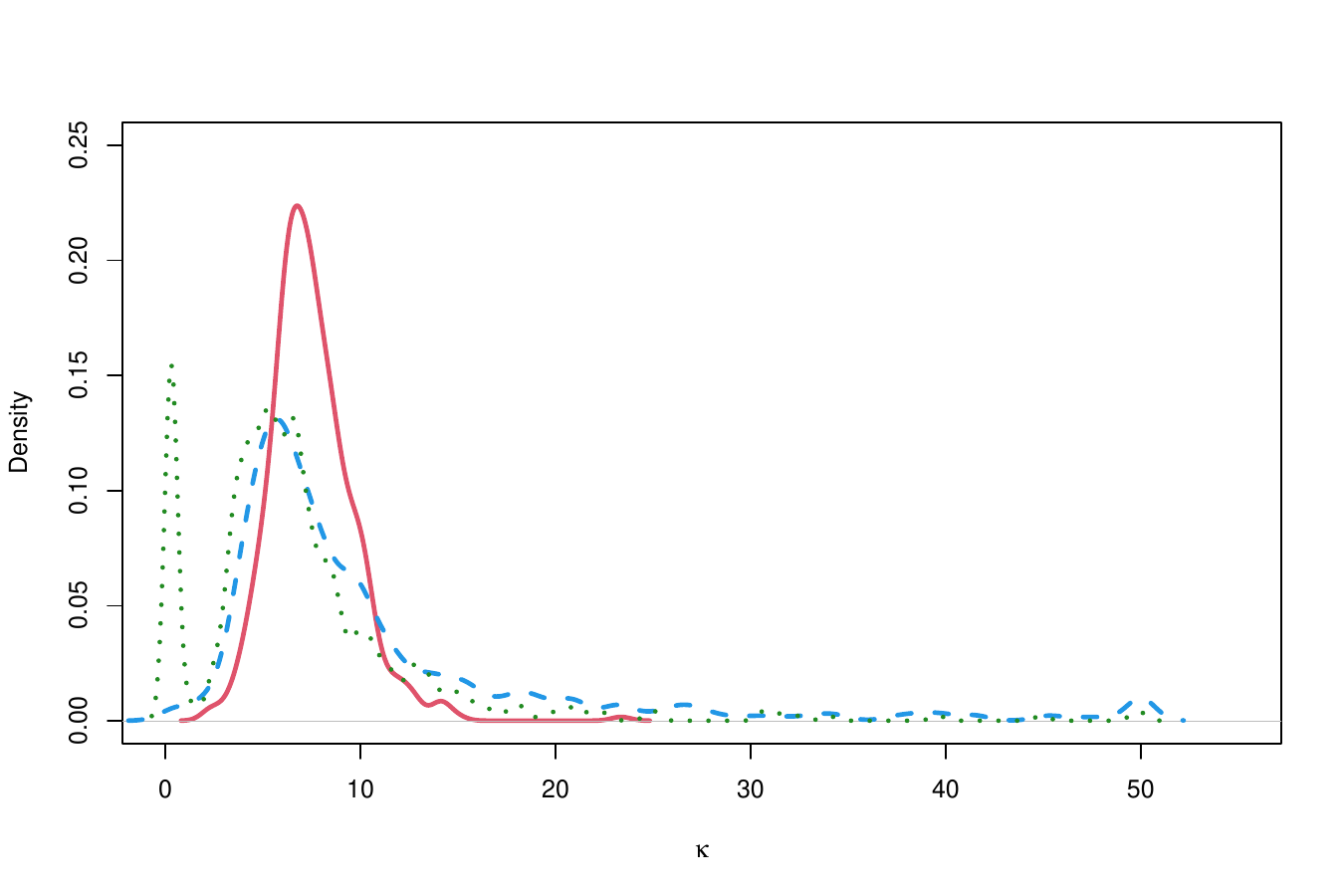}}
	\hfill
	\subfloat[$n=100$]{
		\includegraphics[width=0.3\textwidth]{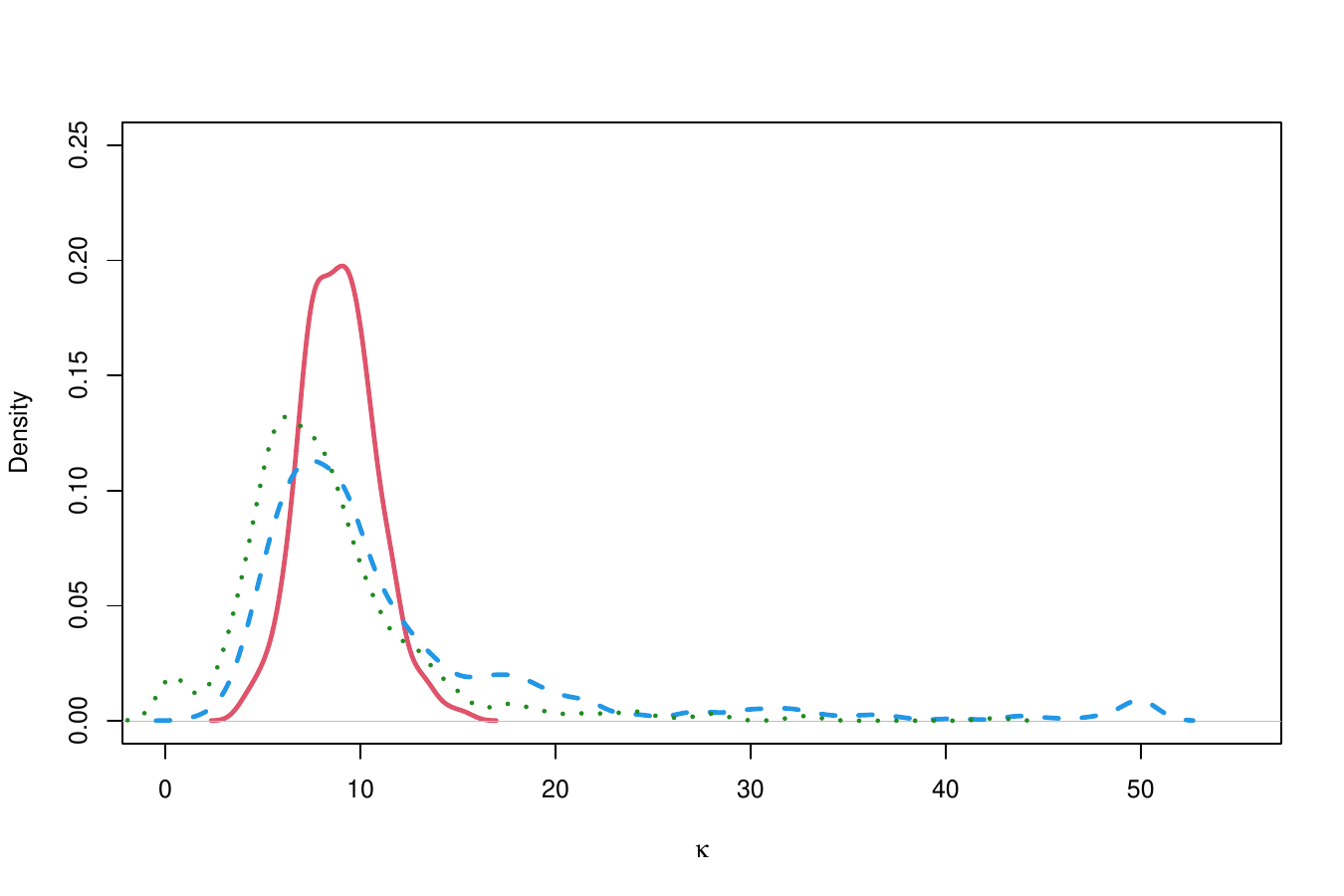}}
	\hfill
	\subfloat[$n=250$]{
		\includegraphics[width=0.3\textwidth]{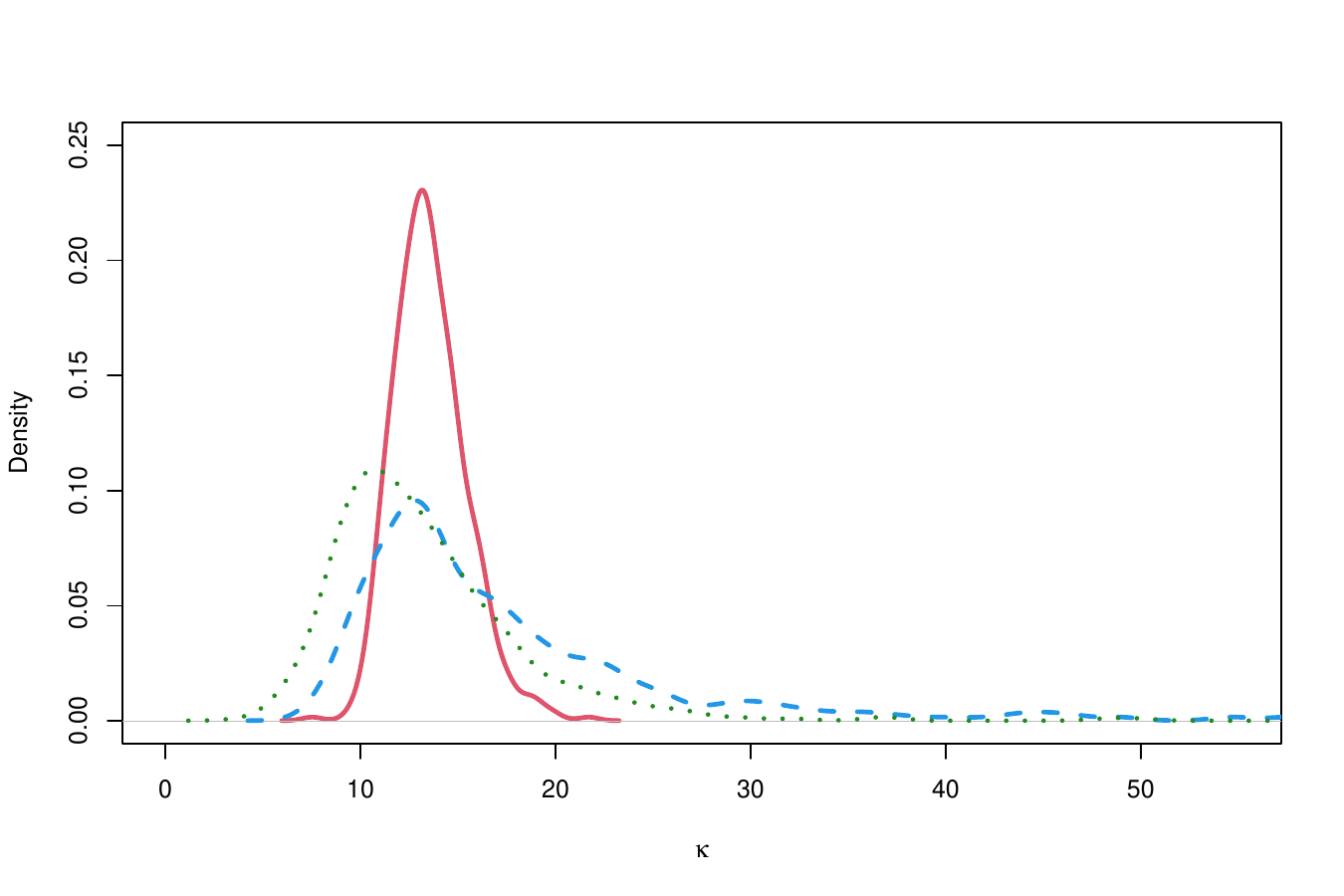}}

	\vspace{-0.75cm}
	\bigskip

	\hspace{1.75cm}
	\subfloat[$n=500$]{
		\includegraphics[width=0.3\textwidth]{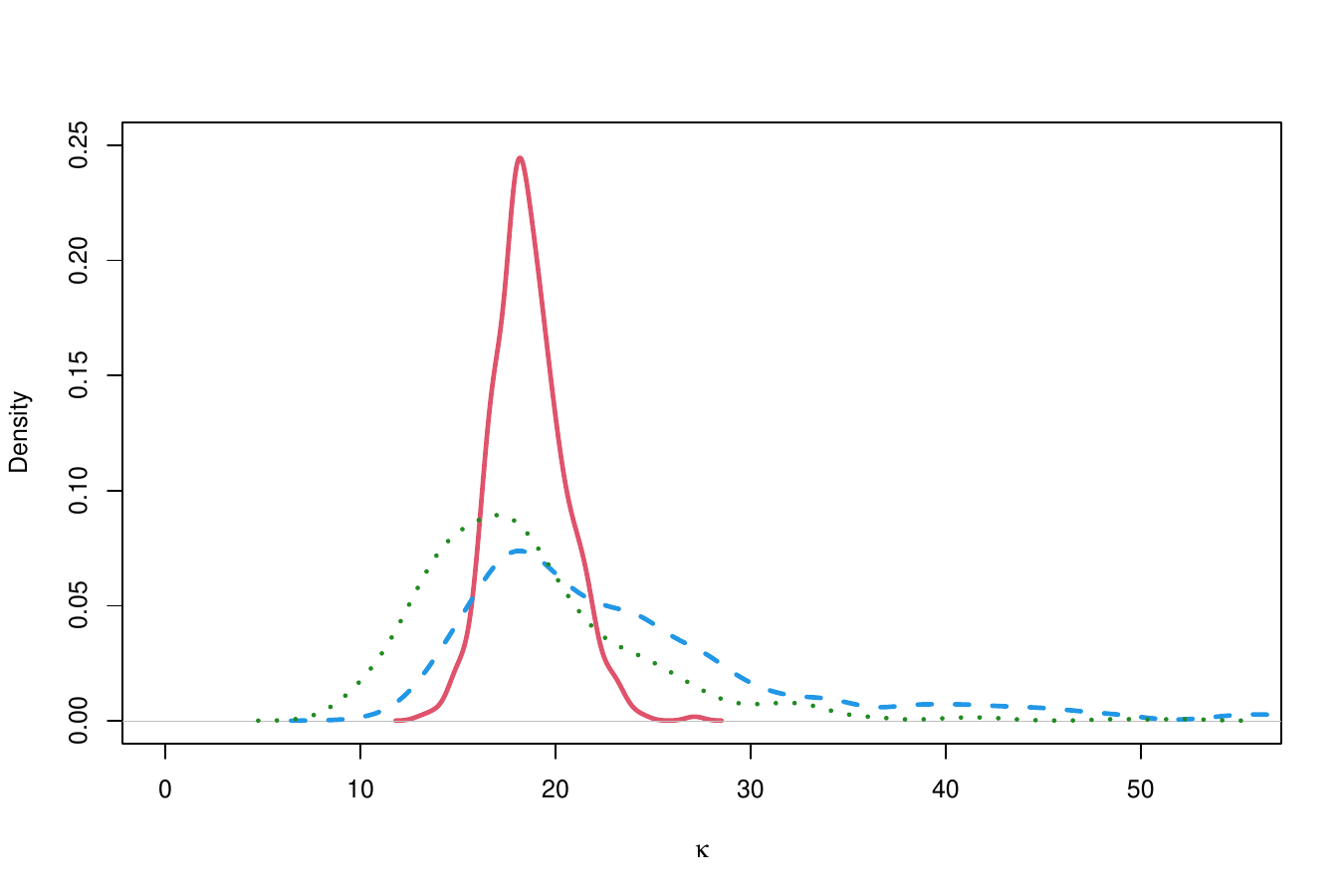}}	
	\hfill
	\subfloat[$n=1500$]{
		\includegraphics[width=0.3\textwidth]{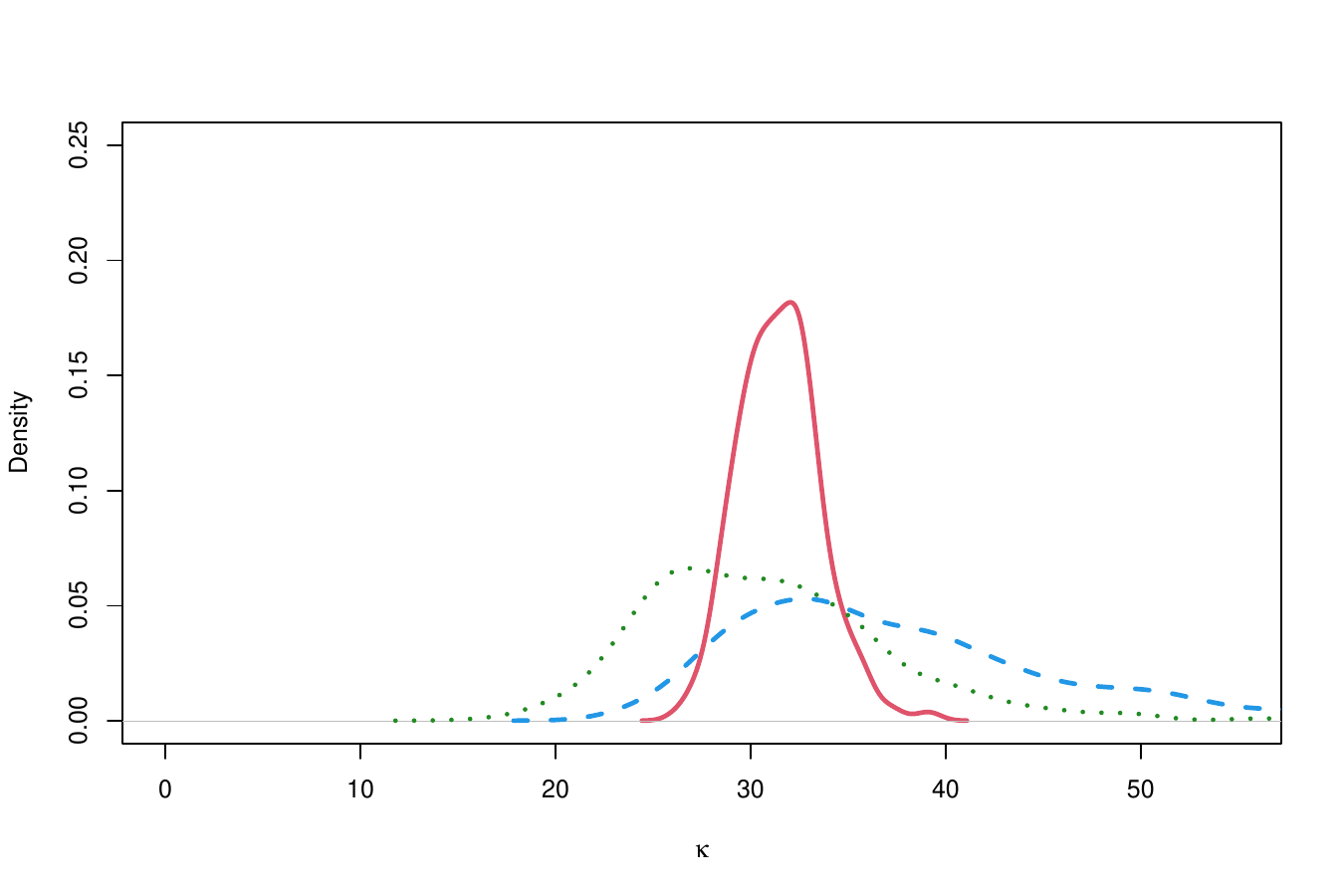}}
	\hspace{1.75cm}

	\caption{Kernel density estimators of the obtained values of $\kappa$ for model P1  with the refined rule (red, continuous line), ECRSC (green, dotted line) and cross-validation (blue, dashed line).}
	\label{fig:simus_Poisson_kappa}
\end{figure}

\begin{figure}[!h]
	\centering
	
	\subfloat[$n=70$]{
		\includegraphics[width=0.3\textwidth]{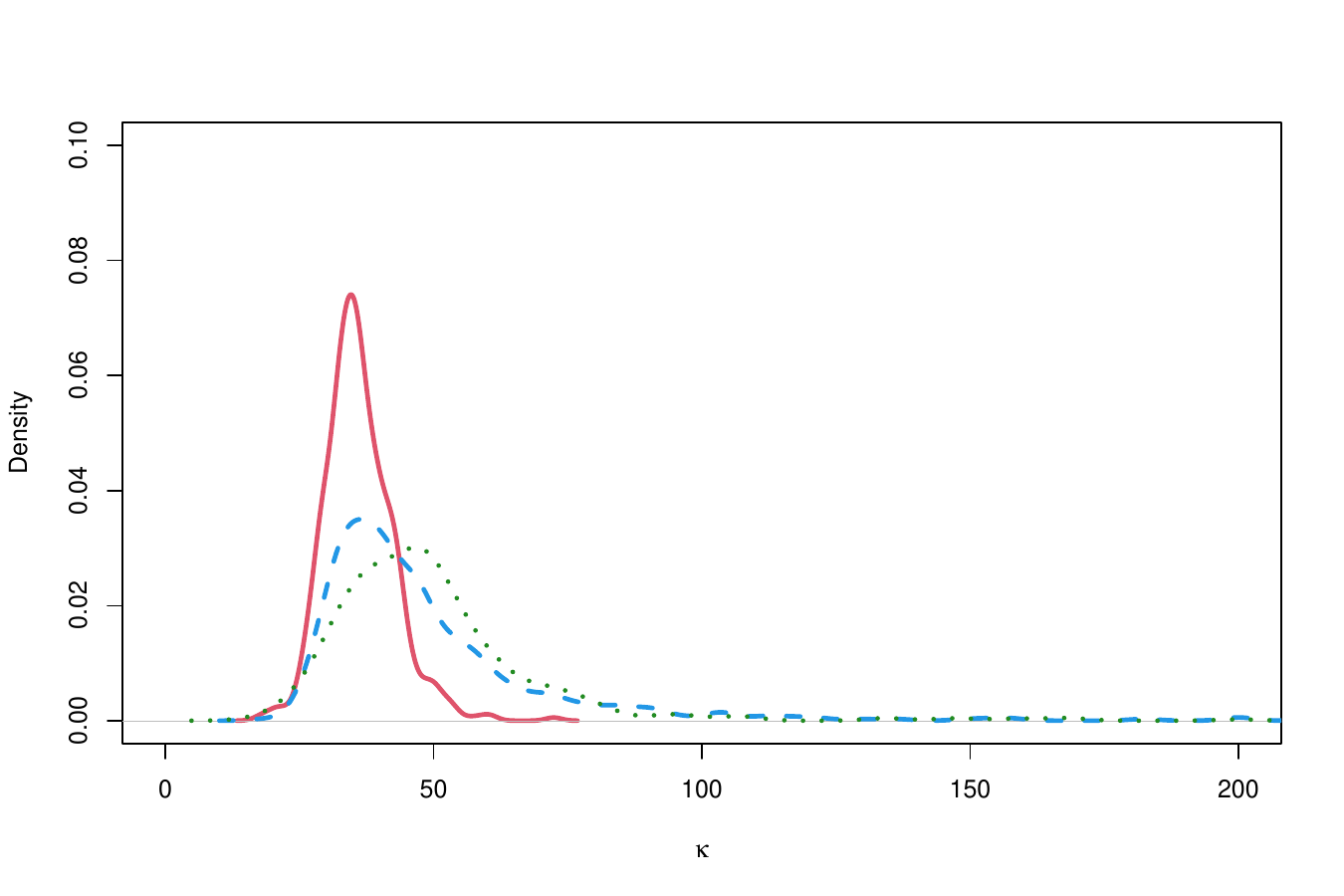}}
	\hfill
	\subfloat[$n=100$]{
		\includegraphics[width=0.3\textwidth]{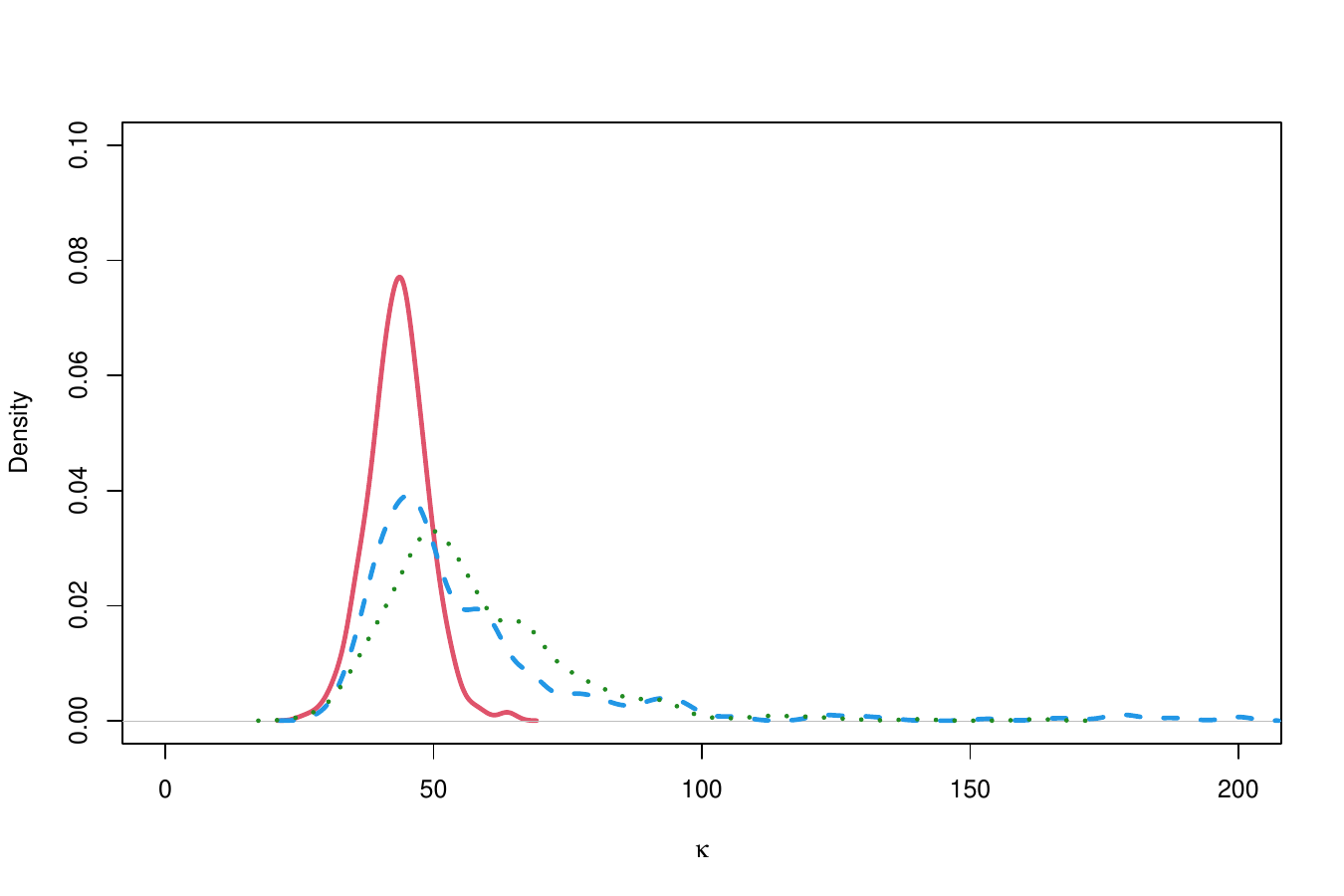}}
	\hfill
	\subfloat[$n=250$]{
		\includegraphics[width=0.3\textwidth]{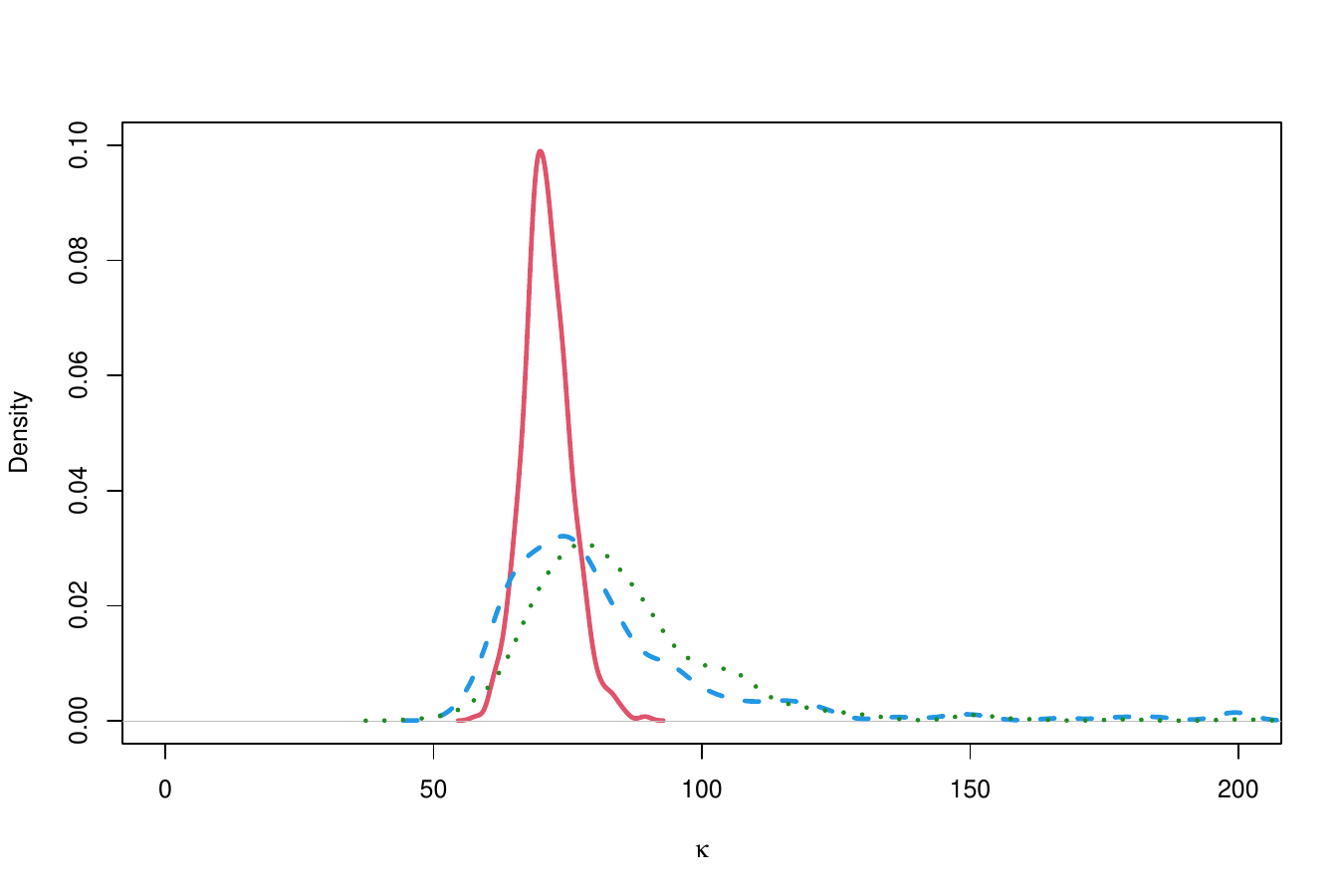}}

	\vspace{-0.75cm}
	\bigskip

	\hspace{1.75cm}
	\subfloat[$n=500$]{
		\includegraphics[width=0.3\textwidth]{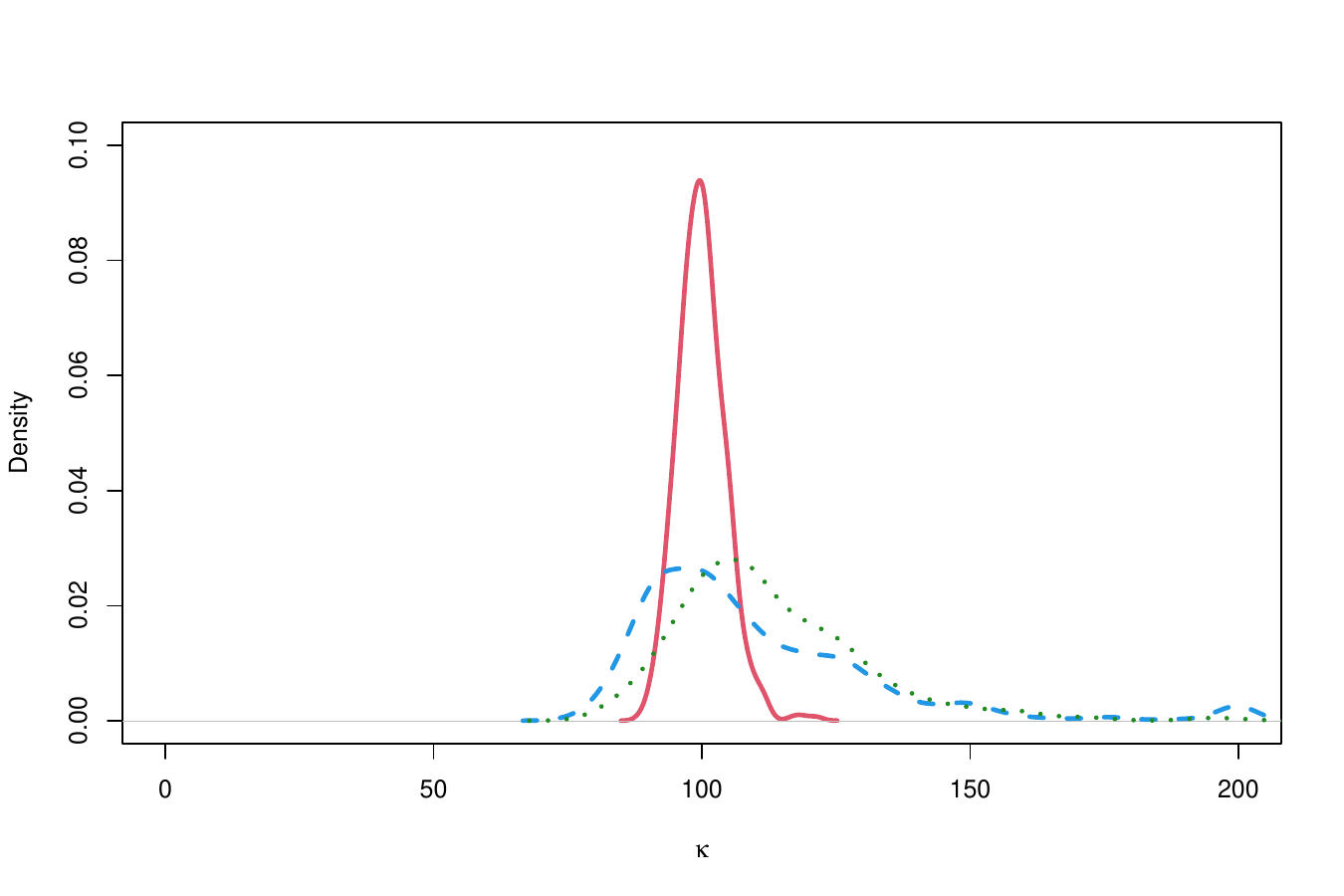}}	
	\hfill
	\subfloat[$n=1500$]{
		\includegraphics[width=0.3\textwidth]{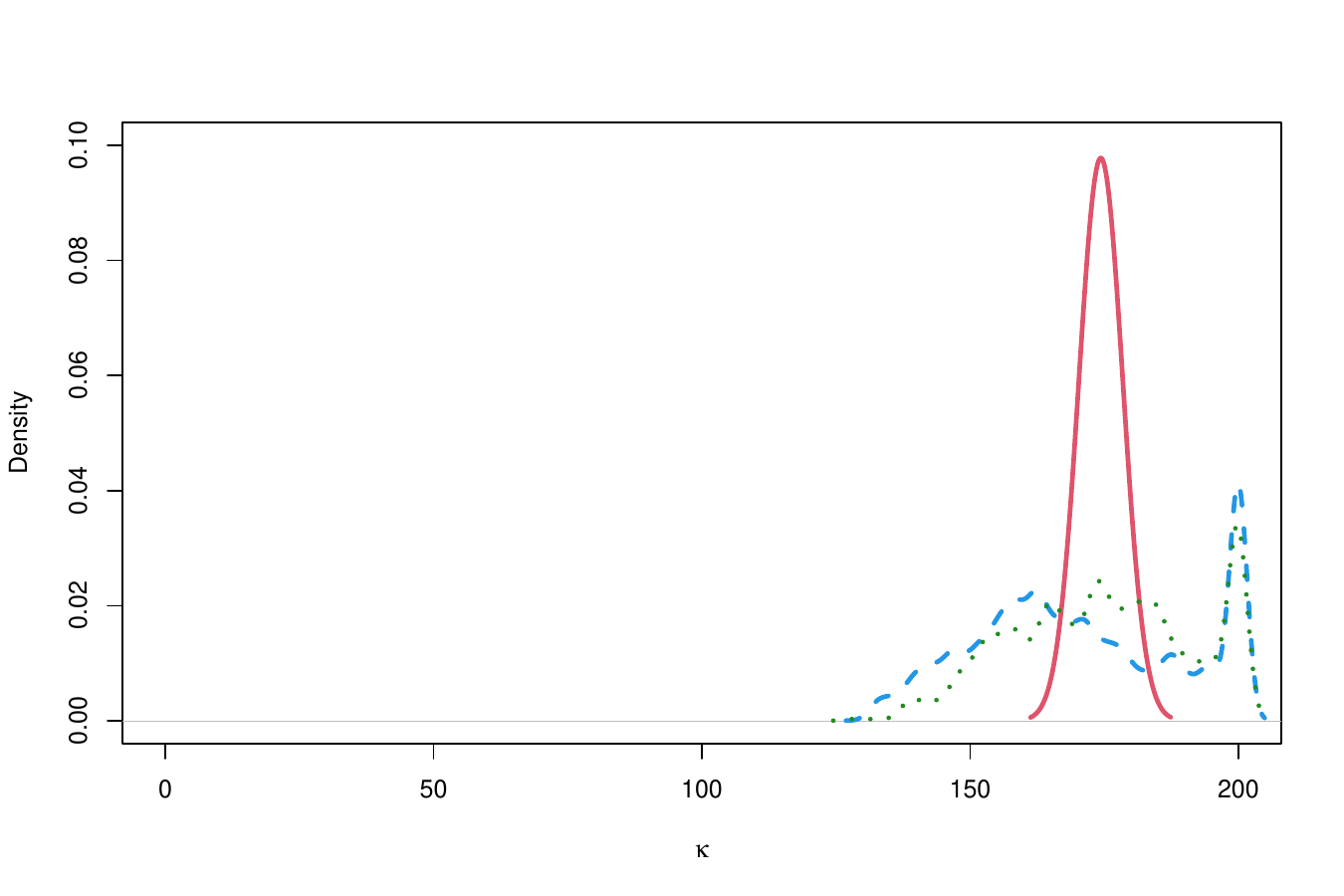}}
	\hspace{1.75cm}

	\caption{Kernel density estimators of the obtained values of $\kappa$ for model P2  with the refined rule (red, continuous line), ECRSC (green, dotted line) and cross-validation (blue, dashed line).}
	\label{fig:simus_Poisson_kappa2}
\end{figure}

Boxplots of the approximated ISE obtained with each method are displayed in Figures~\ref{fig:simus_Poisson_ISE} and \ref{fig:simus_Poisson_ISE2}. For model P1, it can be observed that the approximated ISEs obtained with the refined rule are slightly smaller than the ones obtained with the ECRSC and the cross-validation method, for all sample sizes. On the other hand, for model P2, it seems that the three methods achieve a similar performance in terms of ISE, although the refined rule visibly outperfomrs the other two methods for $n=70$.

\begin{figure}[!h]
	\centering
	
	\subfloat[$n=70$]{
		\includegraphics[width=0.3\textwidth]{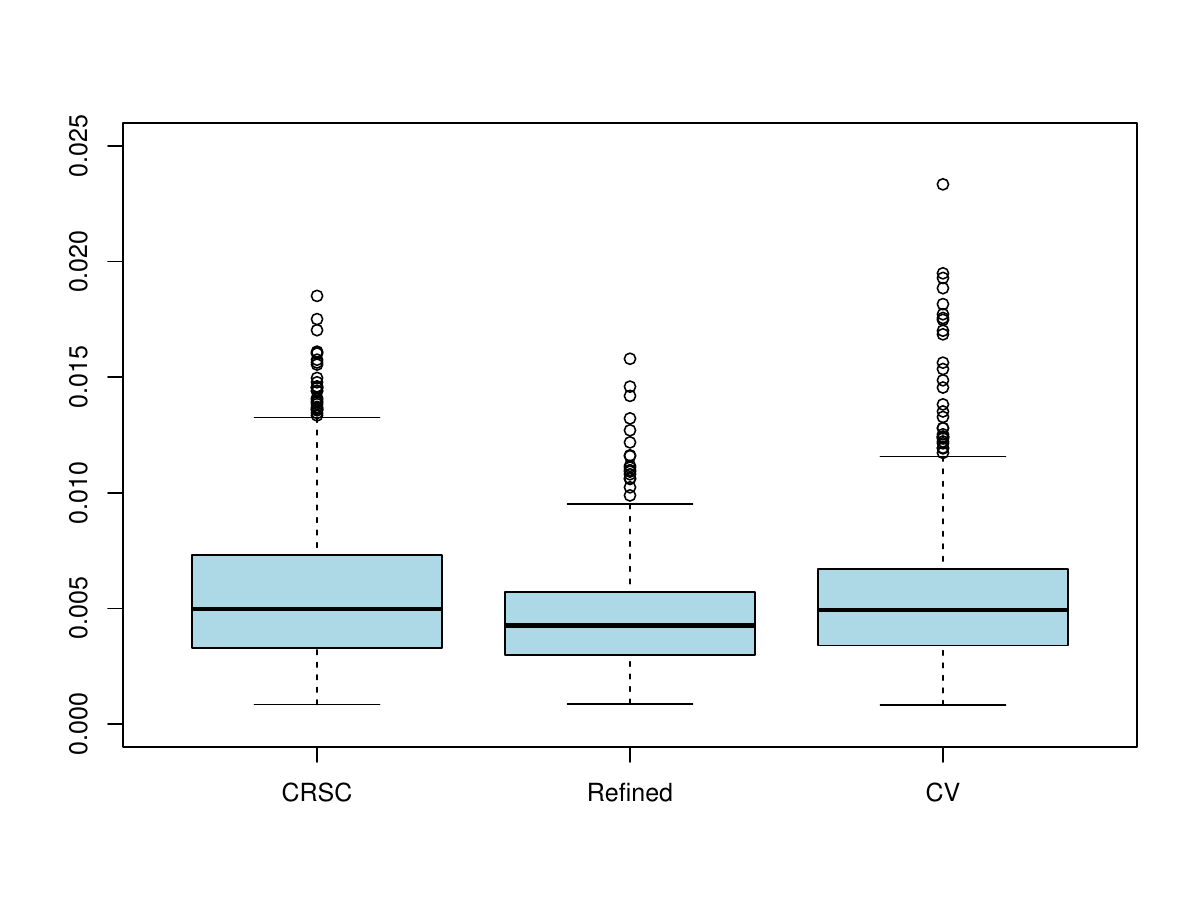}}
	\hfill
	\subfloat[$n=100$]{
		\includegraphics[width=0.3\textwidth]{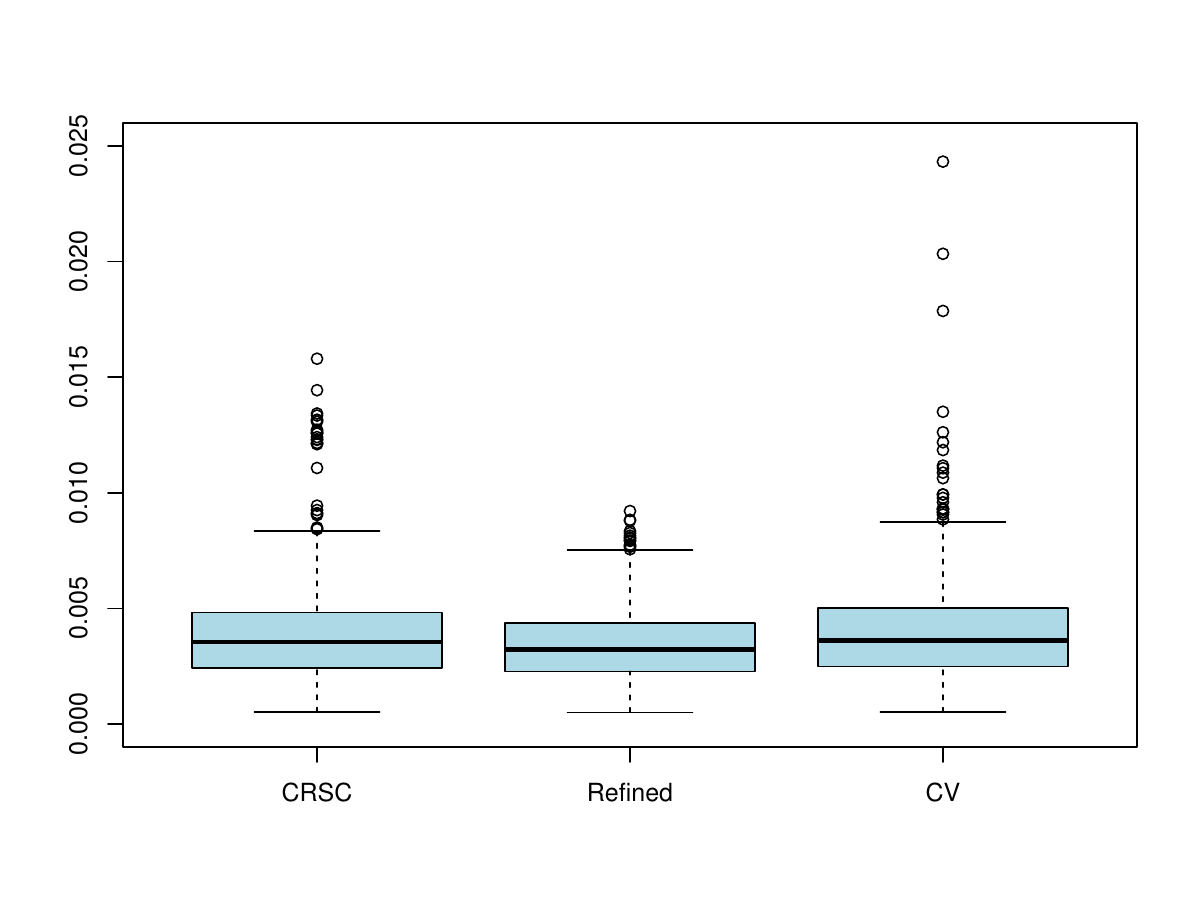}}
	\hfill
	\subfloat[$n=250$]{
		\includegraphics[width=0.3\textwidth]{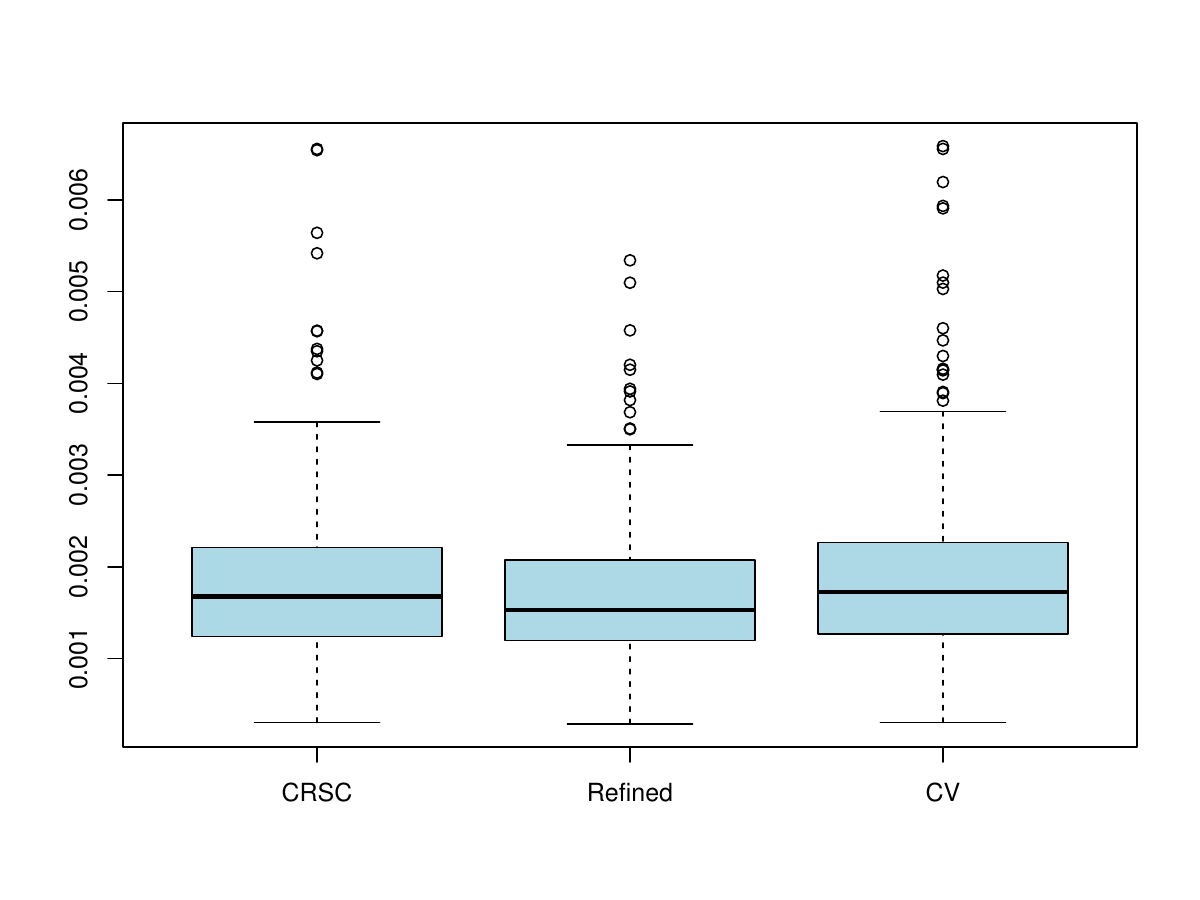}}		
	
	\vspace{-0.75cm}
	\bigskip
	
	\hspace{1.75cm}
	\subfloat[$n=500$]{
		\includegraphics[width=0.3\textwidth]{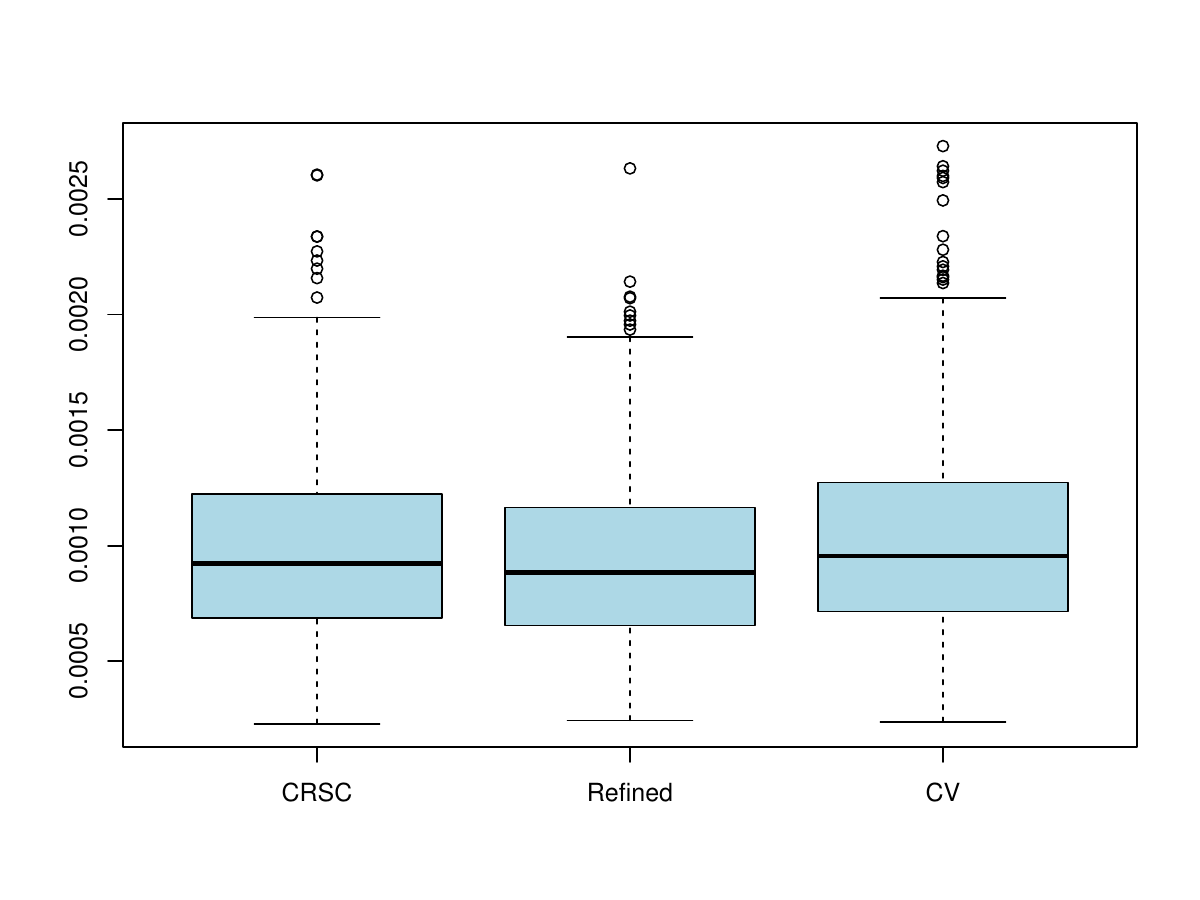}}
	\hfill
	\subfloat[$n=1500$]{
		\includegraphics[width=0.3\textwidth]{PlotsSimusPoisson/ISE_P1_n500.pdf}}
	\hspace{1.75cm}

	\caption{Boxplots of the approximated ISE for model P1 with the ECRSC, refined rule and cross-validation.   }
	\label{fig:simus_Poisson_ISE}
\end{figure}

\begin{figure}[!h]
	\centering
	
	\subfloat[$n=70$]{
		\includegraphics[width=0.3\textwidth]{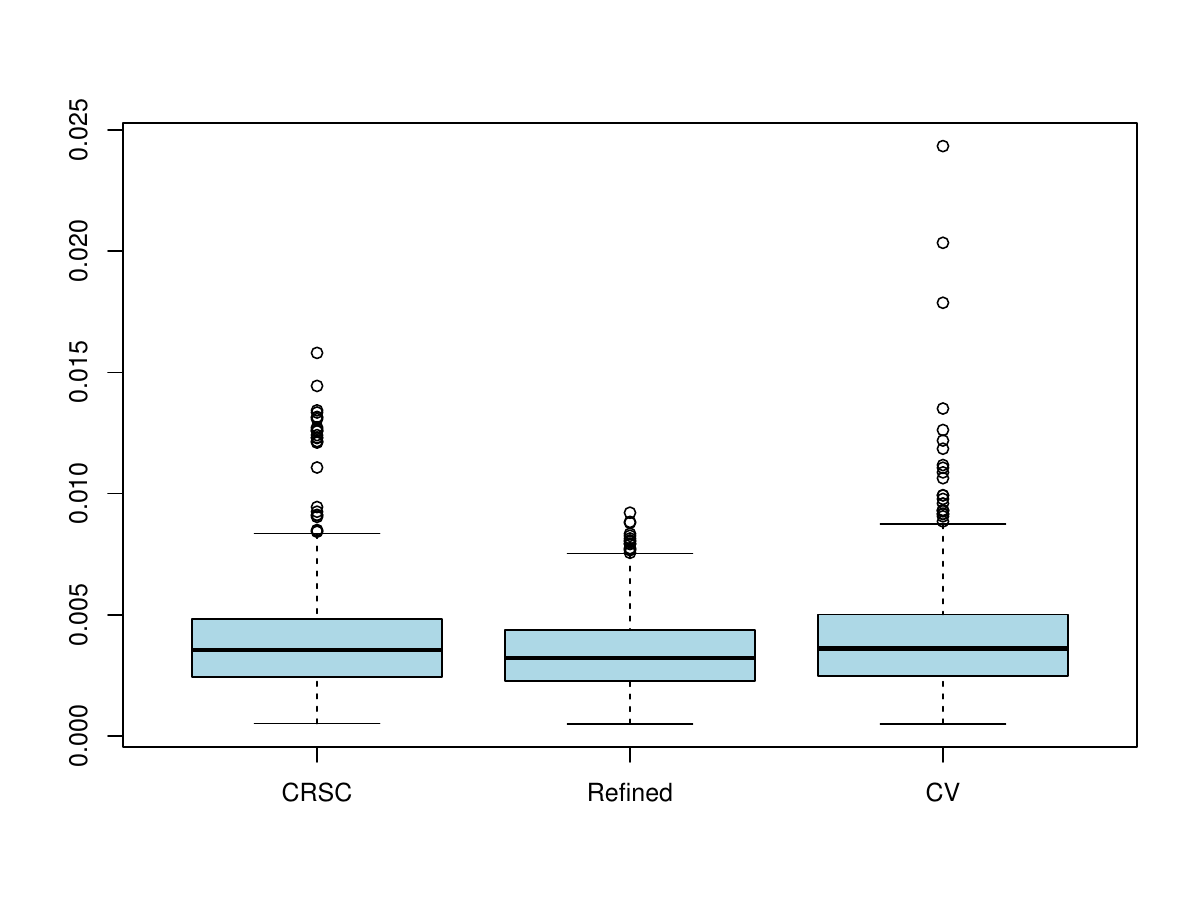}}
	\hfill
	\subfloat[$n=100$]{
		\includegraphics[width=0.3\textwidth]{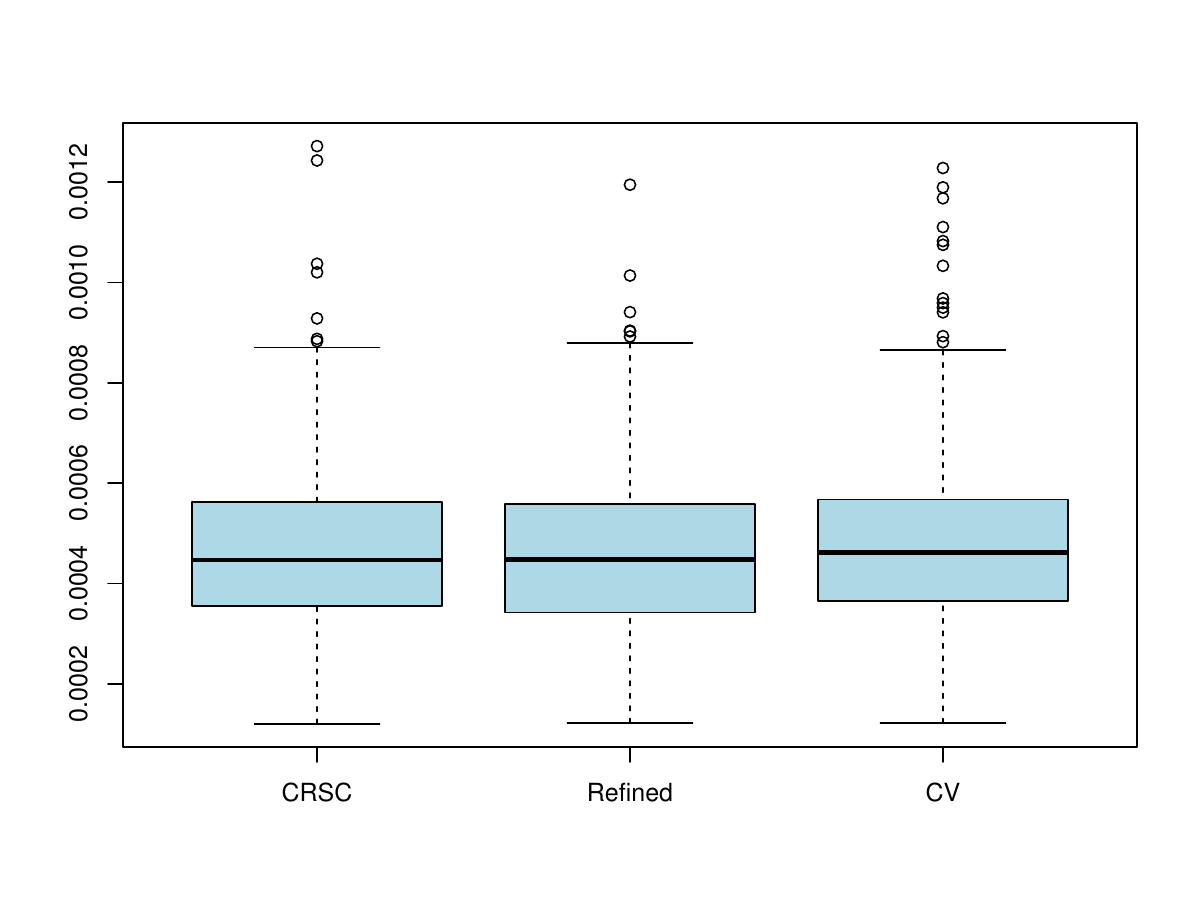}}
	\hfill
	\subfloat[$n=250$]{
		\includegraphics[width=0.3\textwidth]{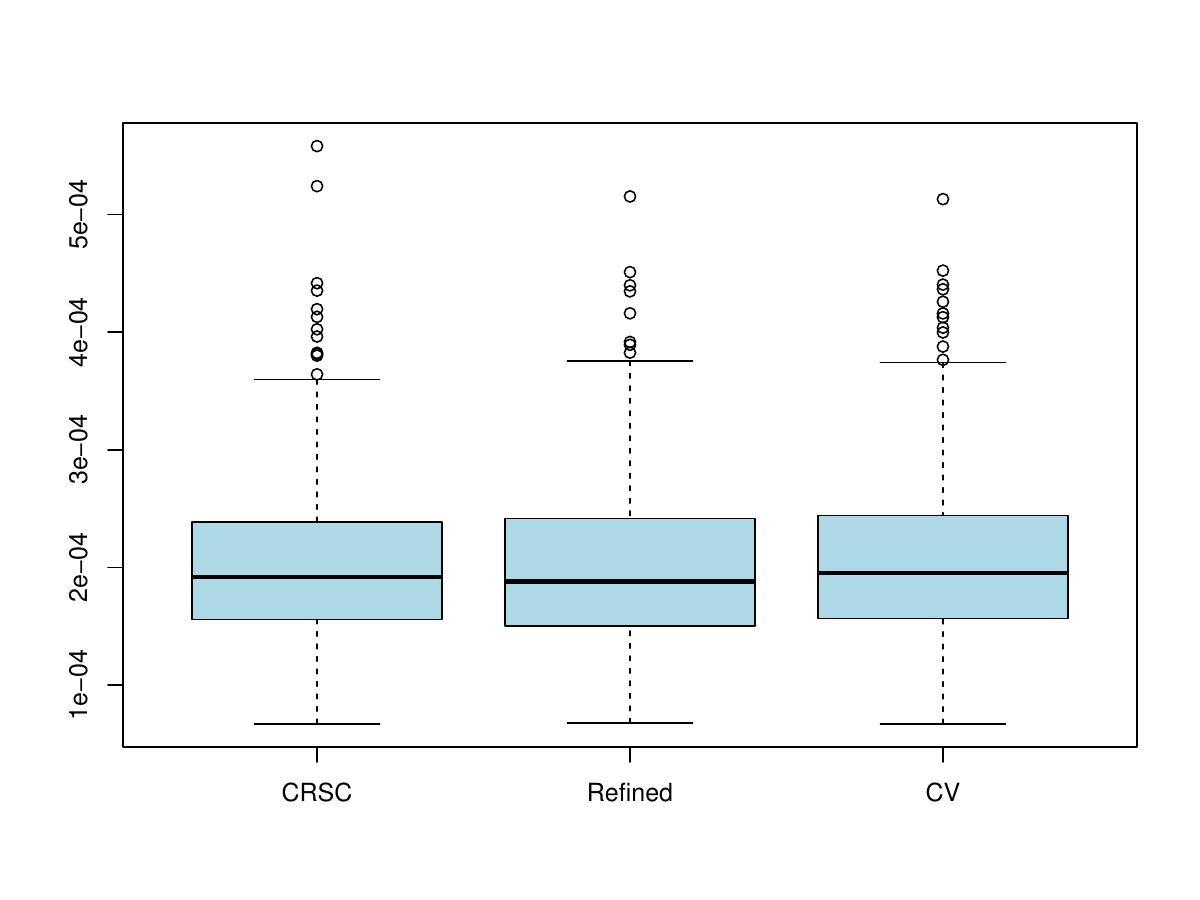}}		
	
	\vspace{-0.75cm}
	\bigskip
	
	\hspace{1.75cm}
	\subfloat[$n=500$]{
		\includegraphics[width=0.3\textwidth]{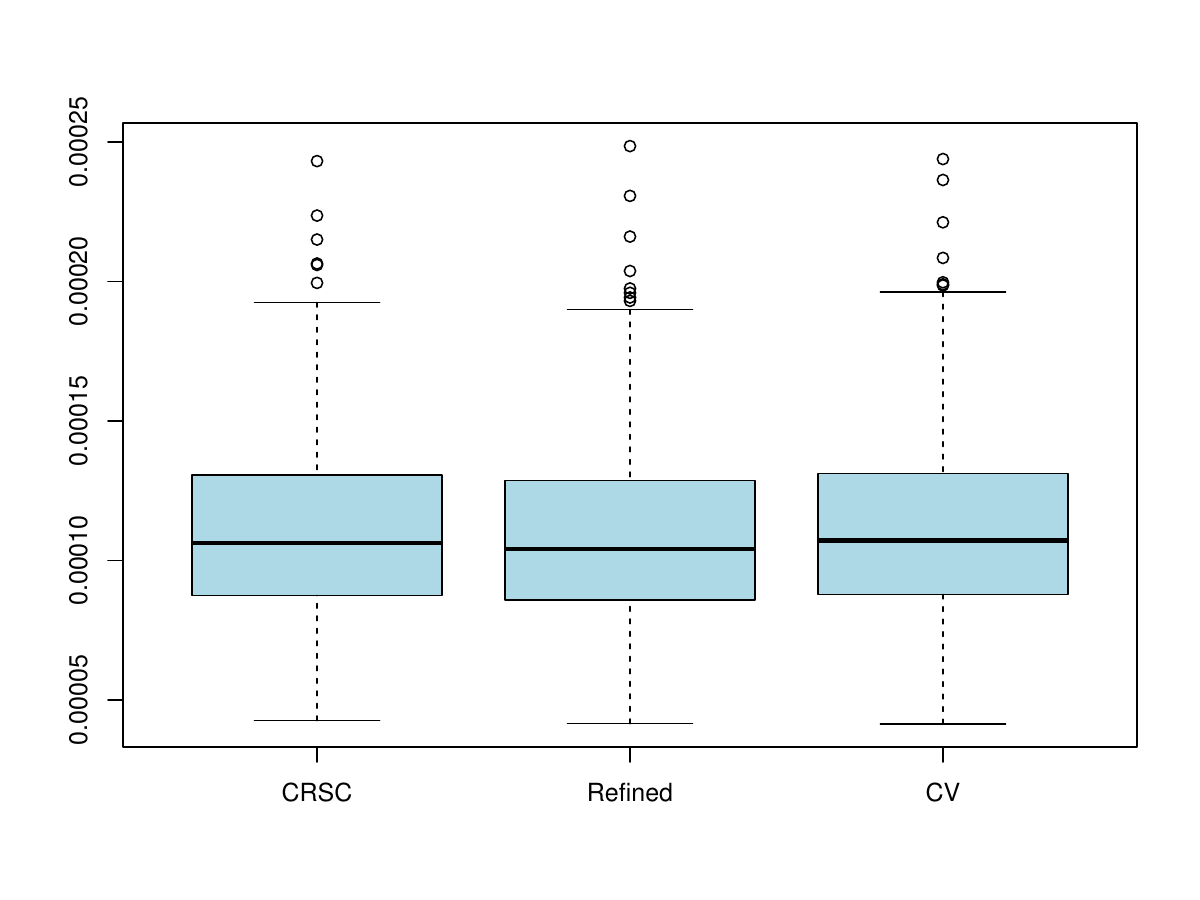}}
	\hfill
	\subfloat[$n=1500$]{
		\includegraphics[width=0.3\textwidth]{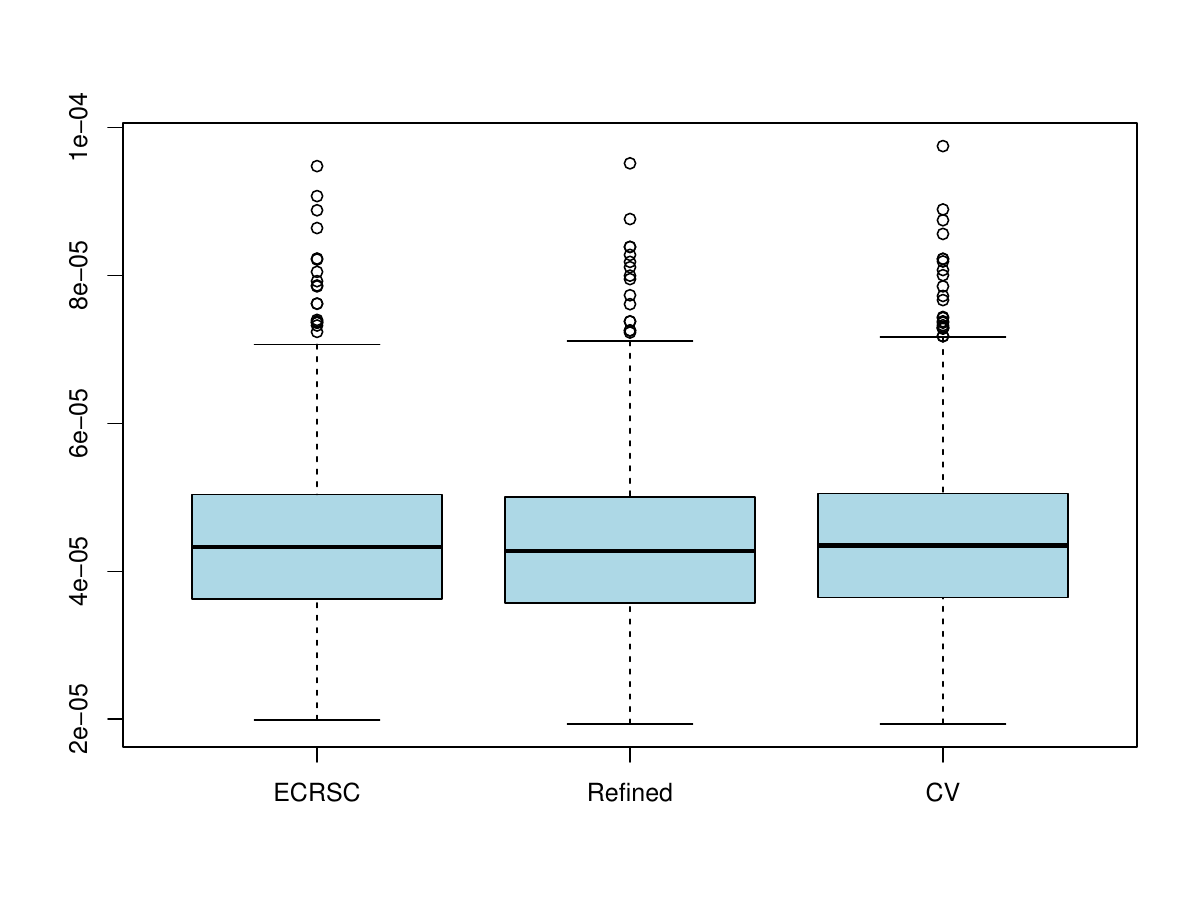}}
	\hspace{1.75cm}

	\caption{Boxplots of the approximated ISE for model P2 with the ECRSC, refined rule and cross-validation.   }
	\label{fig:simus_Poisson_ISE2}
\end{figure}

\paragraph{Gamma likelihood.}

The kernel density estimators of the selected concentration parameters with each method are shown in Figures~\ref{fig:simus_gamma_kappa} and \ref{fig:simus_gamma_kappa2}. For model G1, the shape of the estimated density for the cross-validation method is more peaked than in the previous scenarios, although it is still quite asymmetric, selecting sometimes values of the smoothing parameter which are too large. On the other hand, the values of the concentration obtained by the ECRSC rule are highly variable, and even more when increasing the sample size. On the contrary, values computed with the refined rule are reasonably concentrated, and usually larger than those selected by cross-validation. Regarding results for model G2, the selection of $\kappa$ seems more complicated when $n=70$ and $n=100$, with the ECRSC rule and the cross-validation method frequently selecting values of the concentration very close to zero. Although the refined rule also leads to very small values of the concentration sometimes, this behavior is not shown as often as with the other two methods. As usual, both the cross-validation method and the ECRSC rule sometimes select exceedingly large concentration parameters.

\begin{figure}[!h]
	\centering
	\subfloat[ $n=70$]{
		\includegraphics[width=0.3\textwidth]{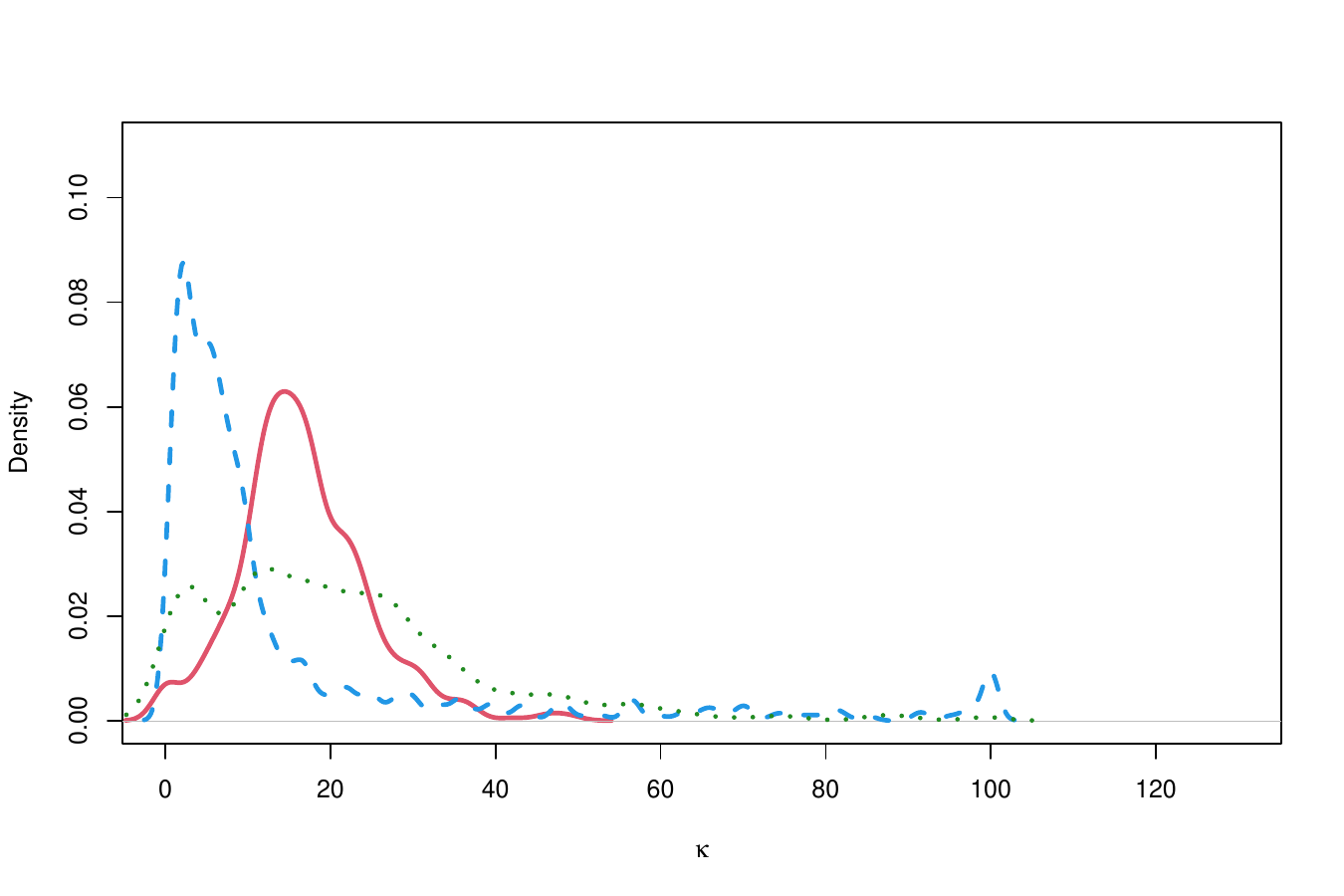}}
	\hfill
	\subfloat[ $n=100$]{
		\includegraphics[width=0.3\textwidth]{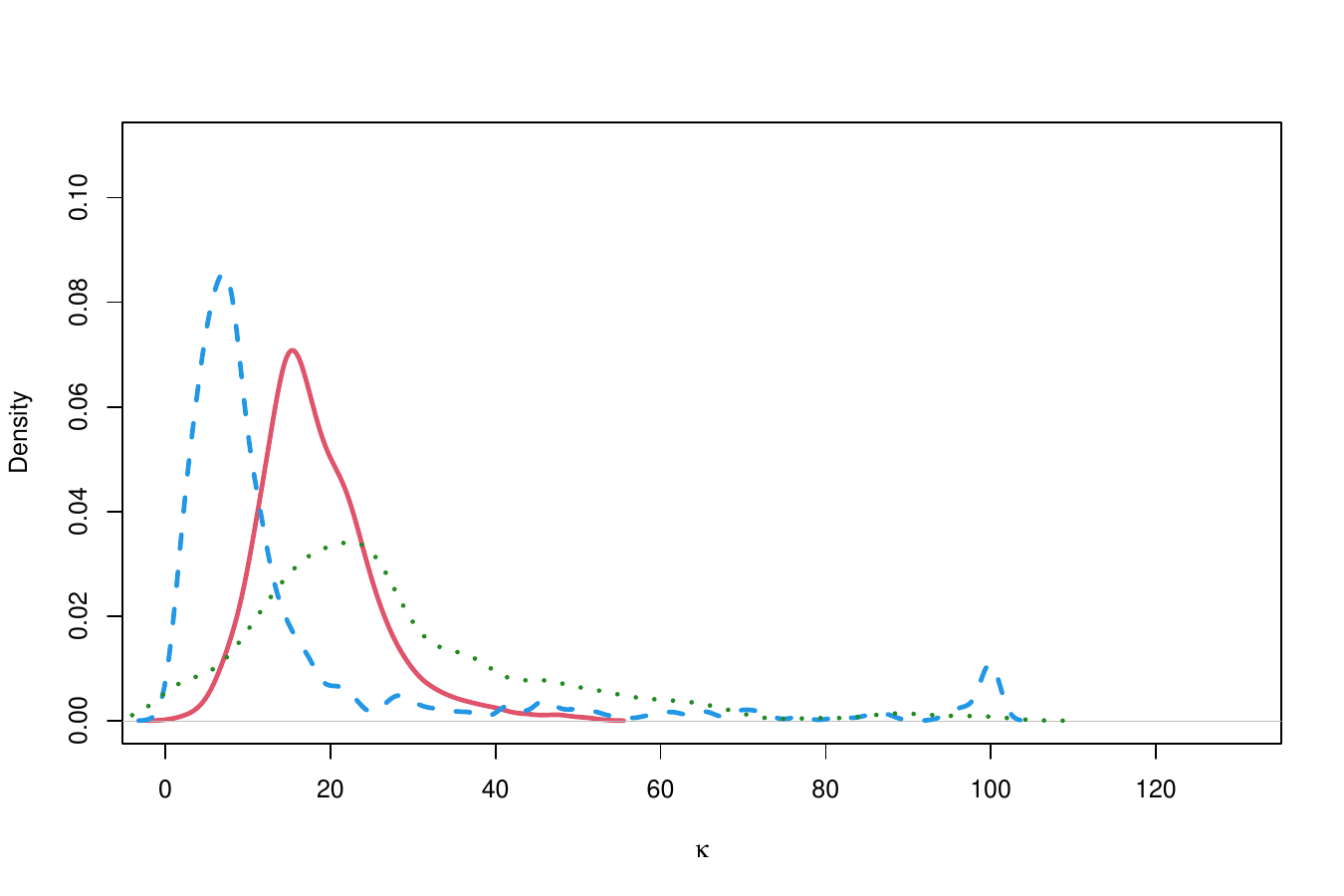}}
	\hfill
	\subfloat[$n=250$]{
		\includegraphics[width=0.3\textwidth]{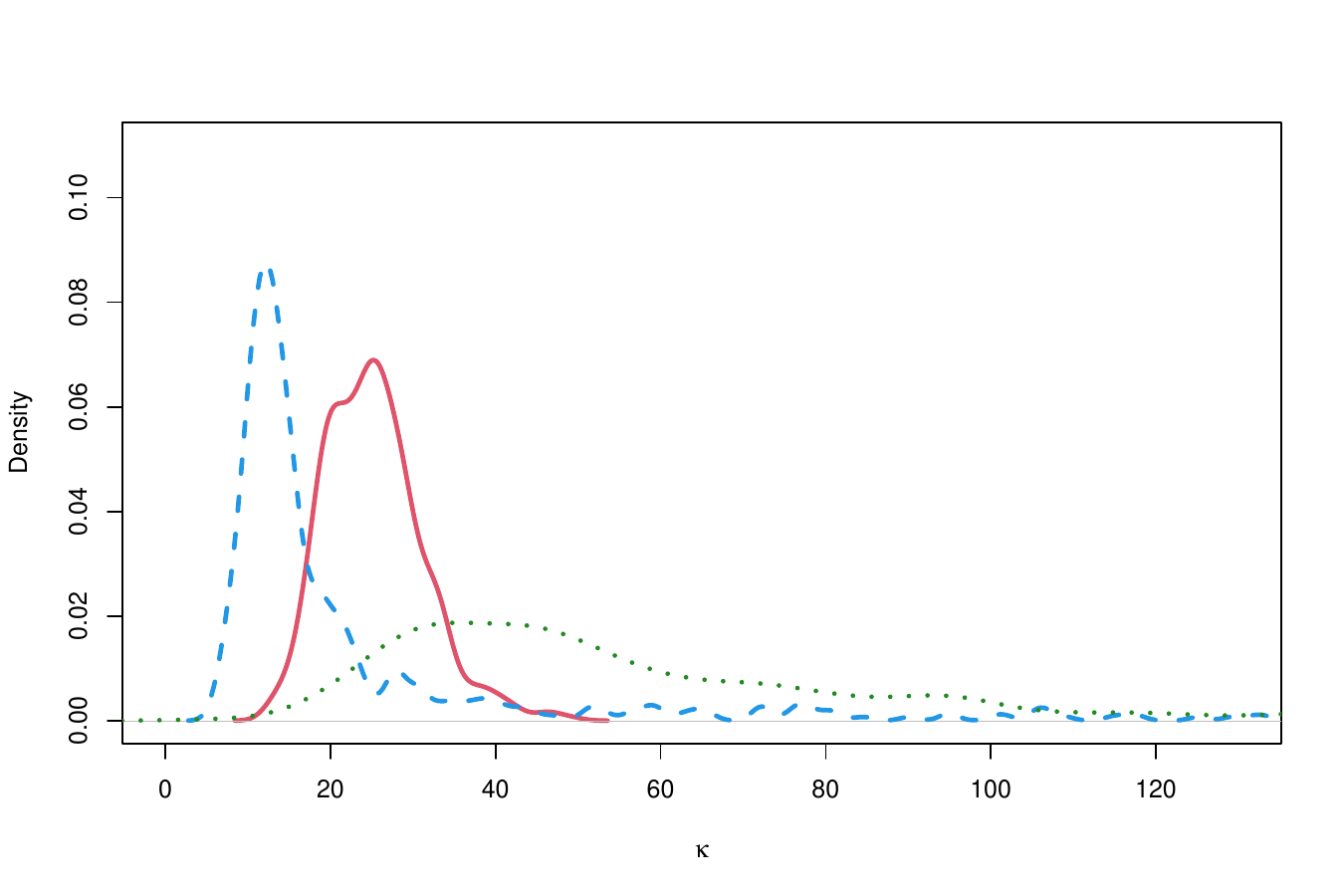}}

	\vspace{-0.75cm}
	\bigskip

	\hspace{1.75cm}
	\subfloat[$n=500$]{
		\includegraphics[width=0.3\textwidth]{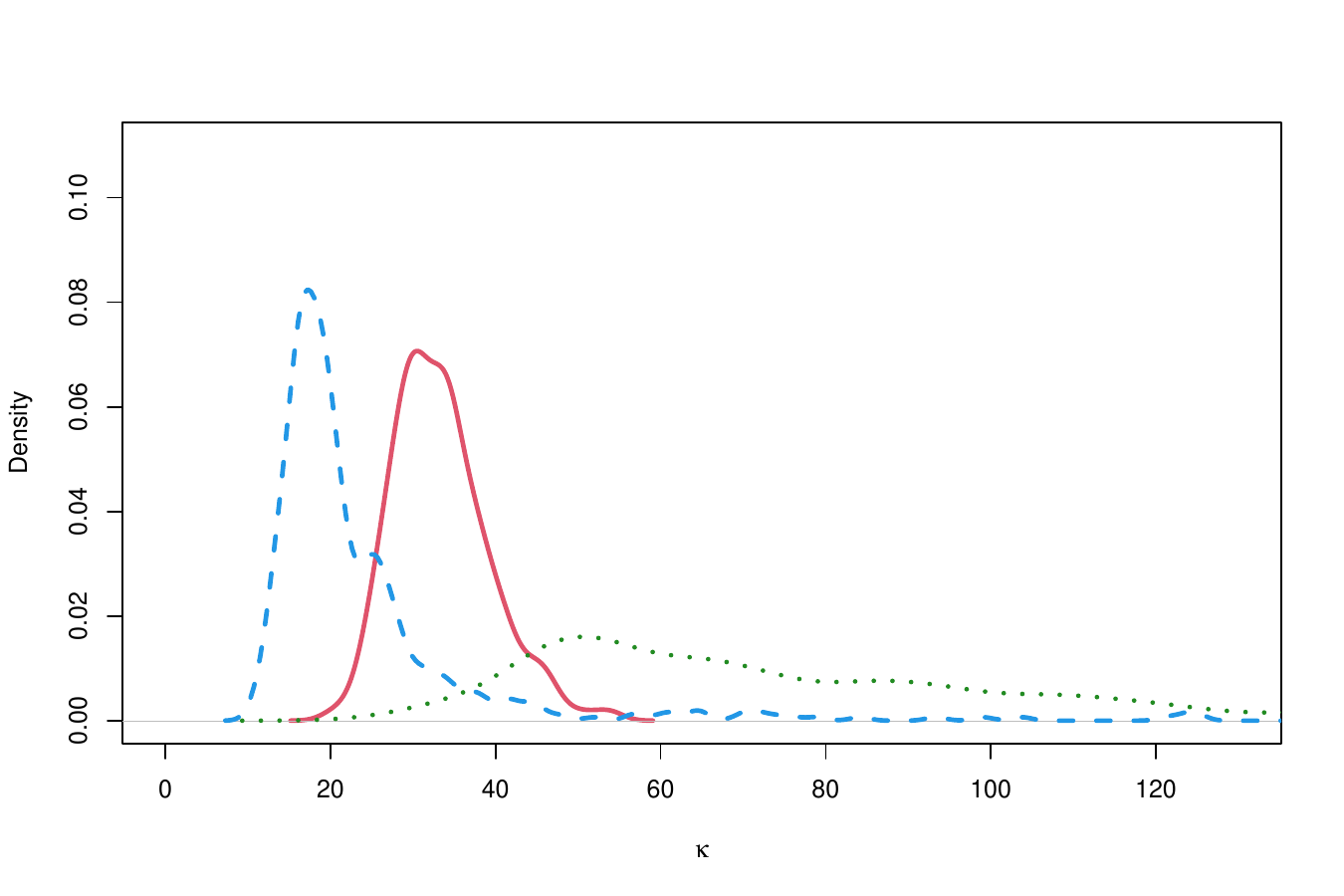}}
	\hfill
	\subfloat[$n=1500$]{
		\includegraphics[width=0.3\textwidth]{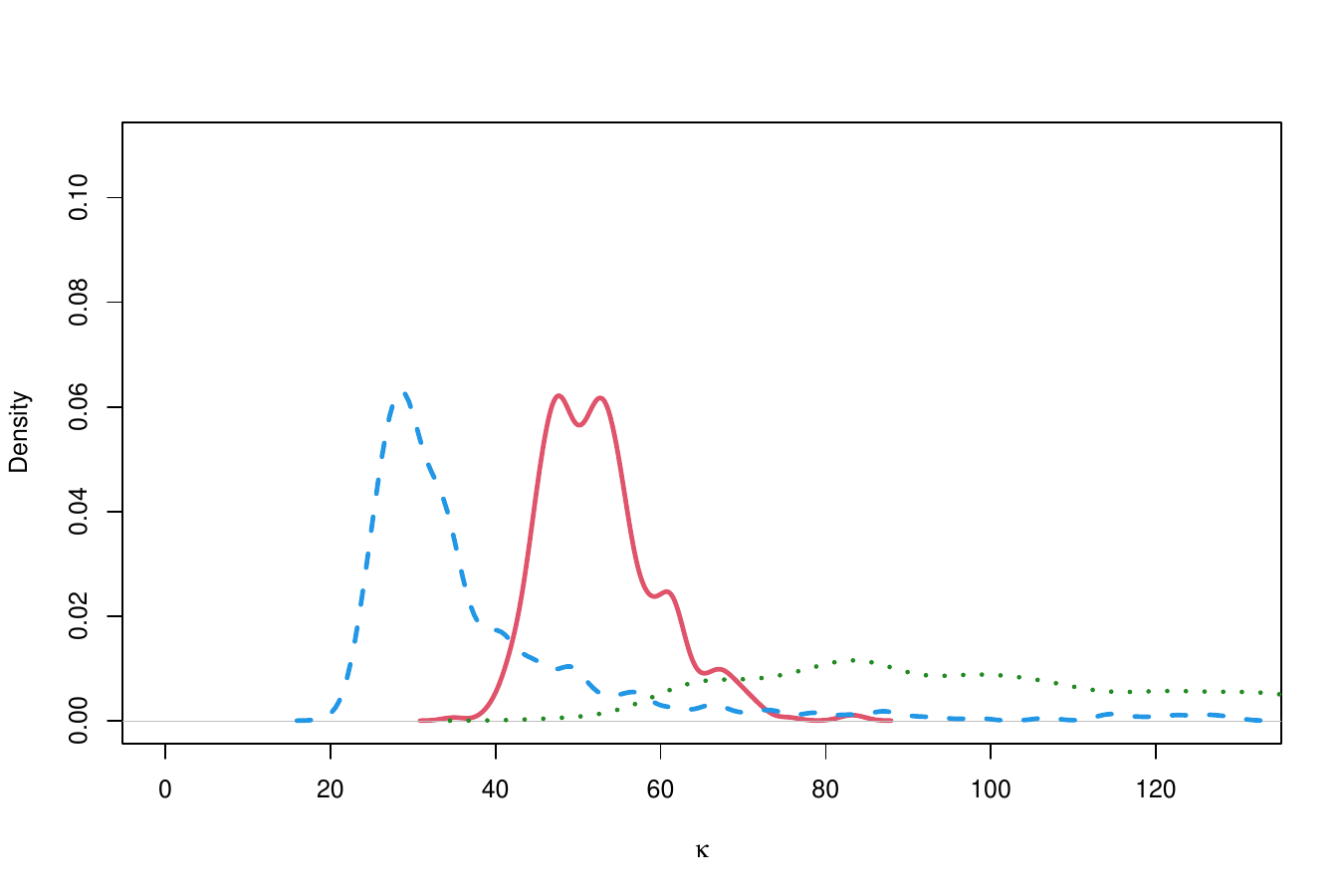}}
	\hspace{1.75cm}
	
	\caption{Kernel density estimators of the obtained values of $\kappa$ for model G1  with the refined rule (red, continuous line), ECRSC (green, dotted line) and cross-validation (blue, dashed line).}
	\label{fig:simus_gamma_kappa}
\end{figure}

\begin{figure}[!h]
	\centering
	\subfloat[ $n=70$]{
		\includegraphics[width=0.3\textwidth]{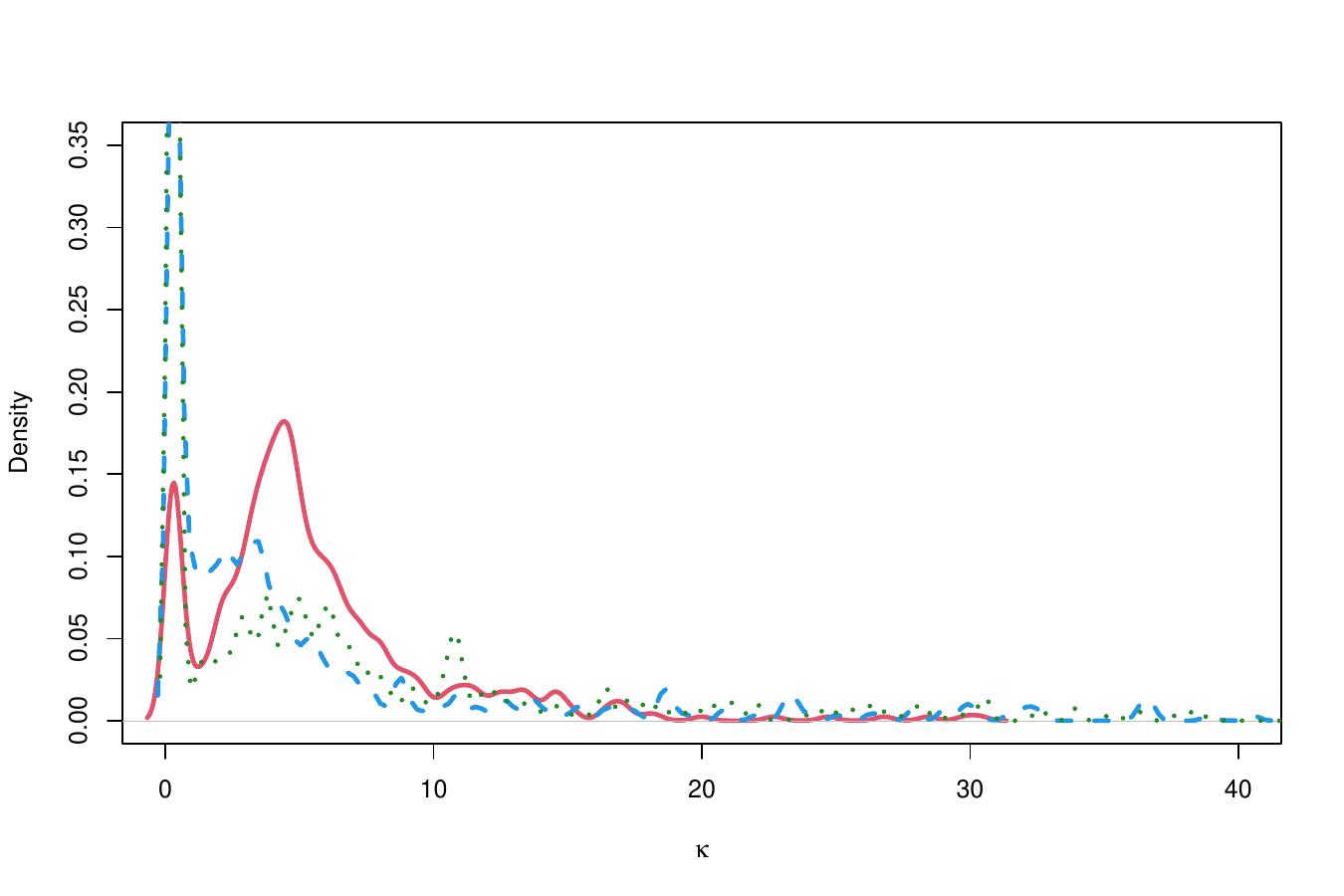}}
	\hfill
	\subfloat[ $n=100$]{
		\includegraphics[width=0.3\textwidth]{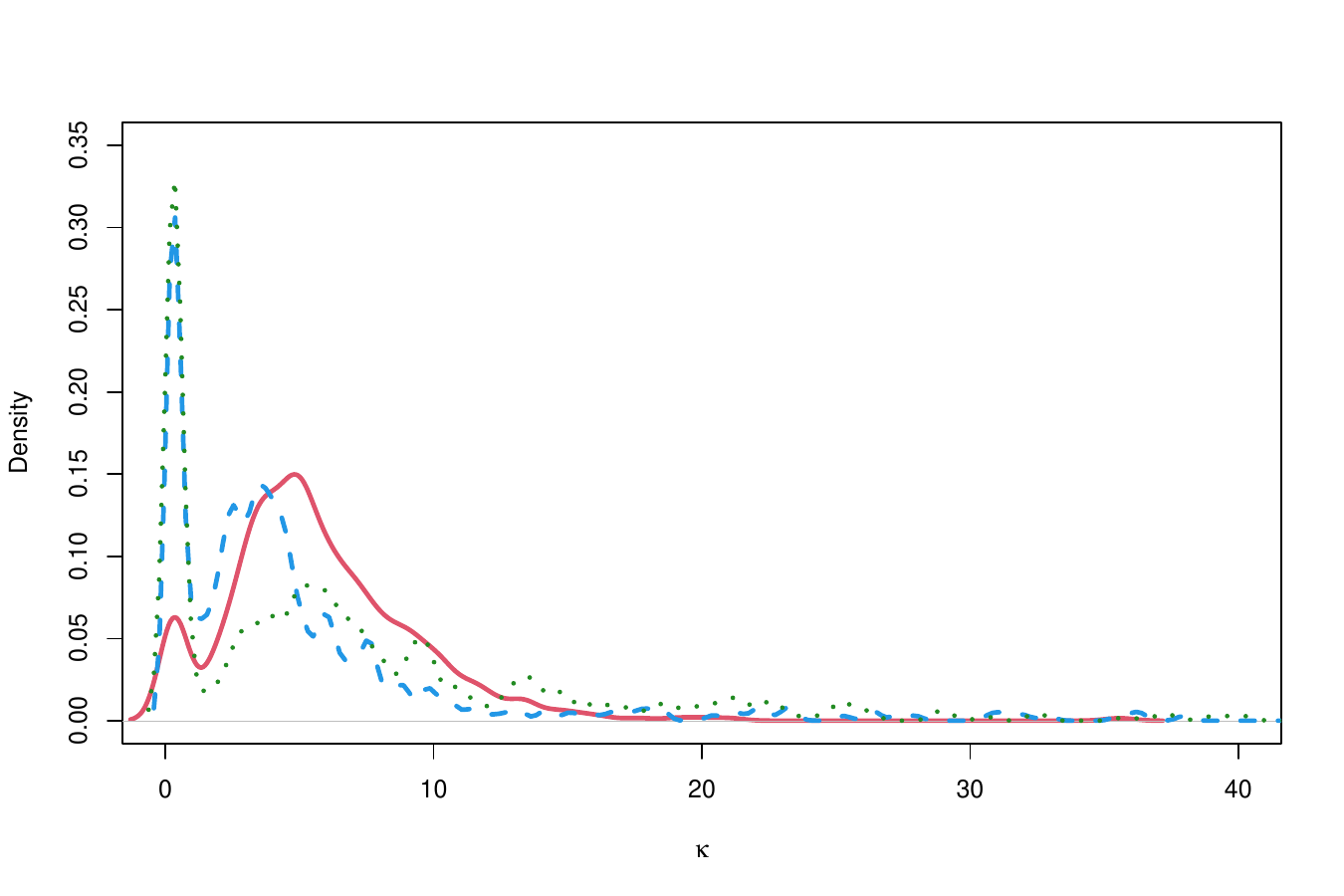}}
	\hfill
	\subfloat[$n=250$]{
		\includegraphics[width=0.3\textwidth]{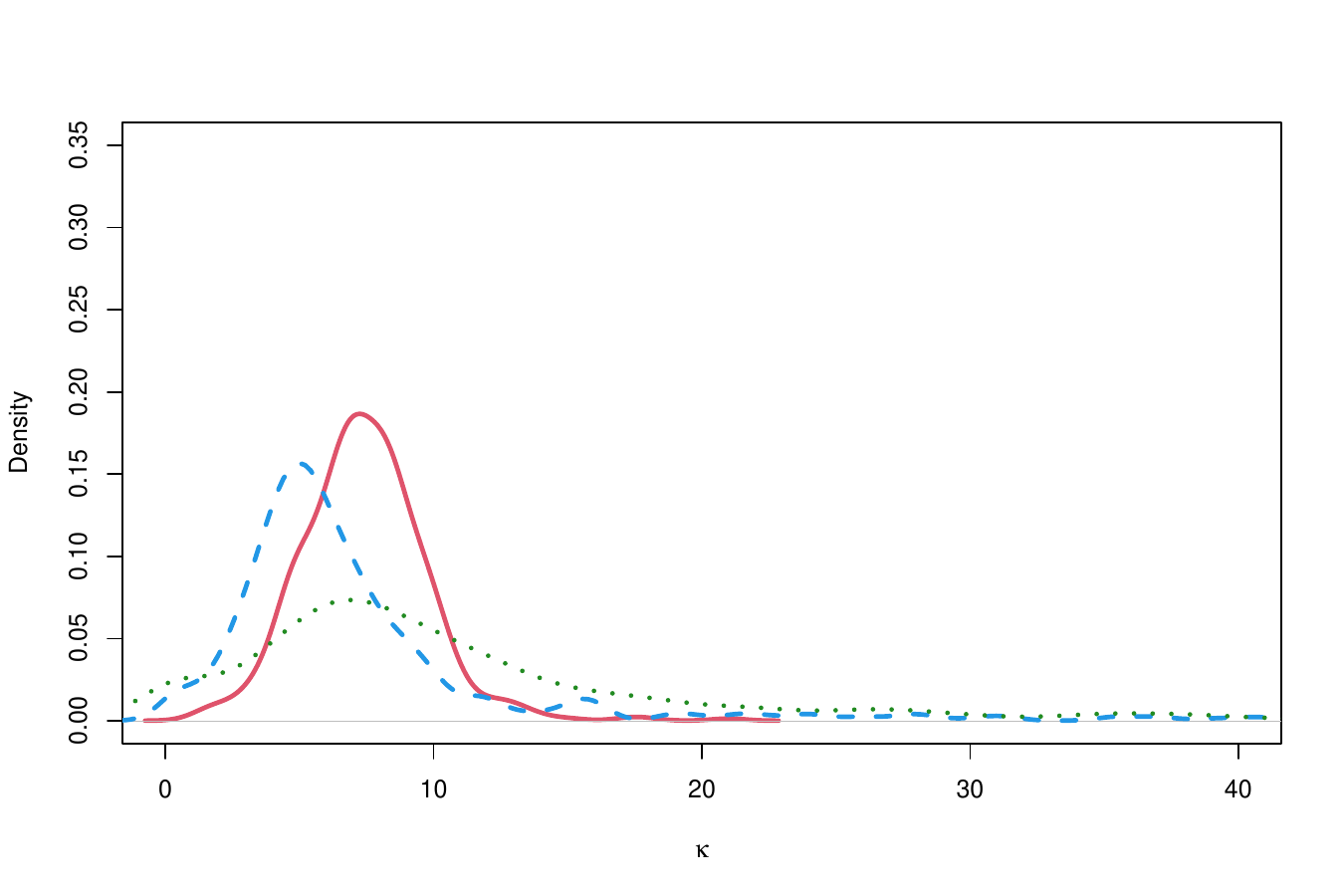}}

	\vspace{-0.75cm}
	\bigskip

	\hspace{1.75cm}
	\subfloat[$n=500$]{
		\includegraphics[width=0.3\textwidth]{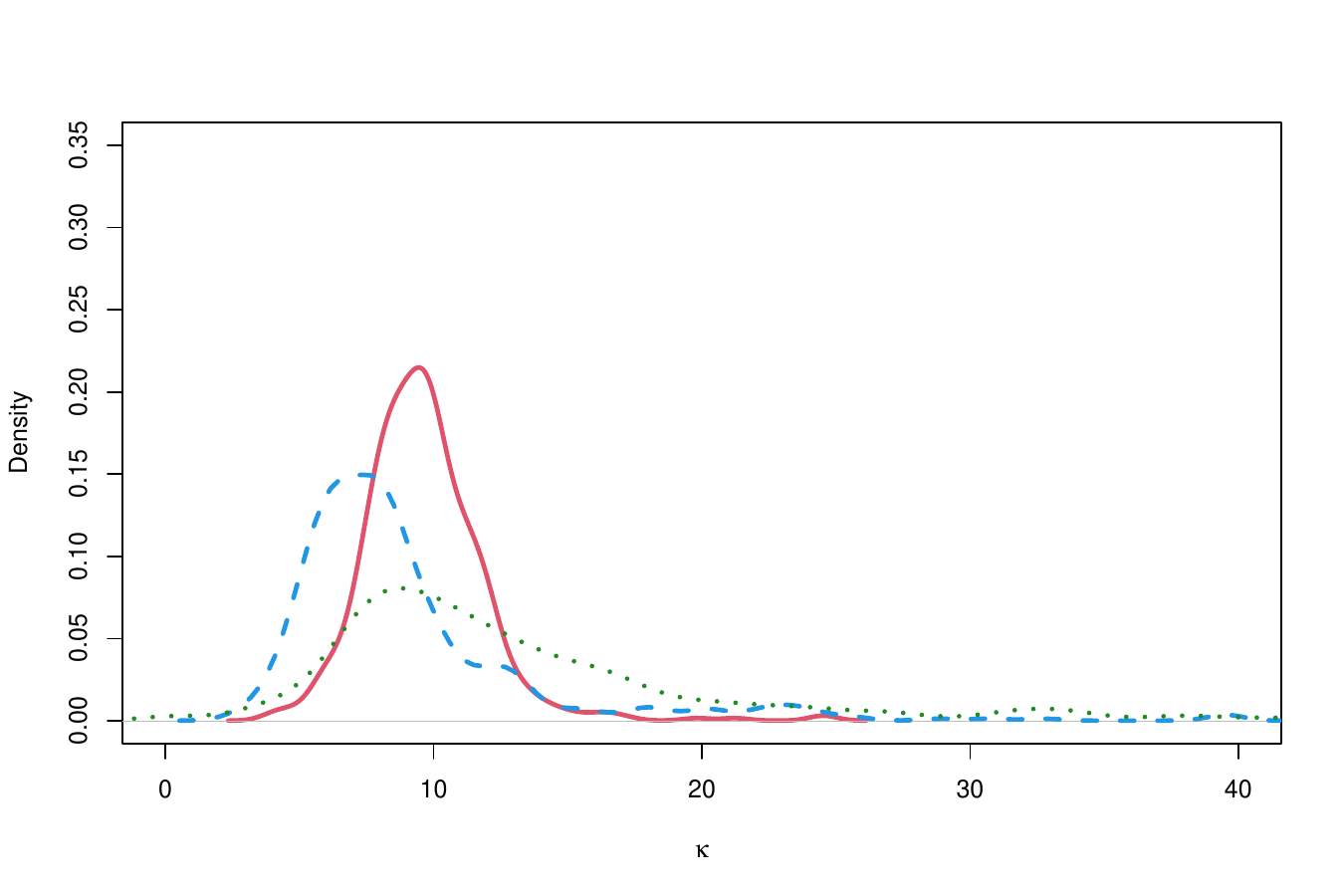}}
	\hfill
	\subfloat[$n=1500$]{
		\includegraphics[width=0.3\textwidth]{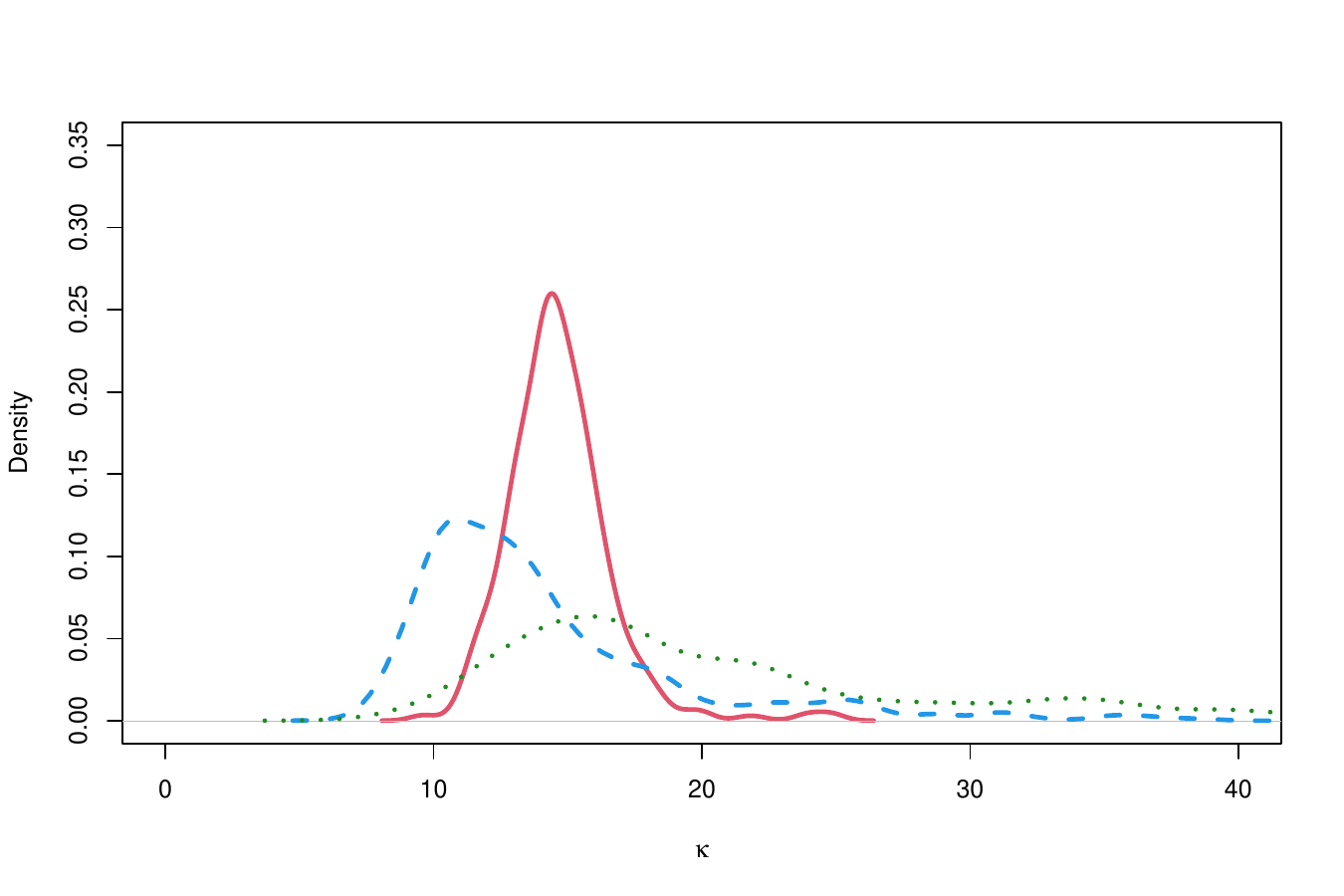}}
	\hspace{1.75cm}
	
	\caption{Kernel density estimators of the obtained values of $\kappa$ for model G2  with the refined rule (red, continuous line), ECRSC (green, dotted line) and cross-validation (blue, dashed line).}
	\label{fig:simus_gamma_kappa2}
\end{figure}

Boxplots of the approximated ISE for each method are represented in Figures~\ref{fig:simus_Gamma_ISE} and \ref{fig:simus_Gamma_ISE2}. For both models and all sample sizes, it seems that values of the approximated ISE are usually smaller when employing the refined rule and the ECRSC.

\begin{figure}[!h]
	\centering
	\subfloat[ $n=70$]{
		\includegraphics[width=0.3\textwidth]{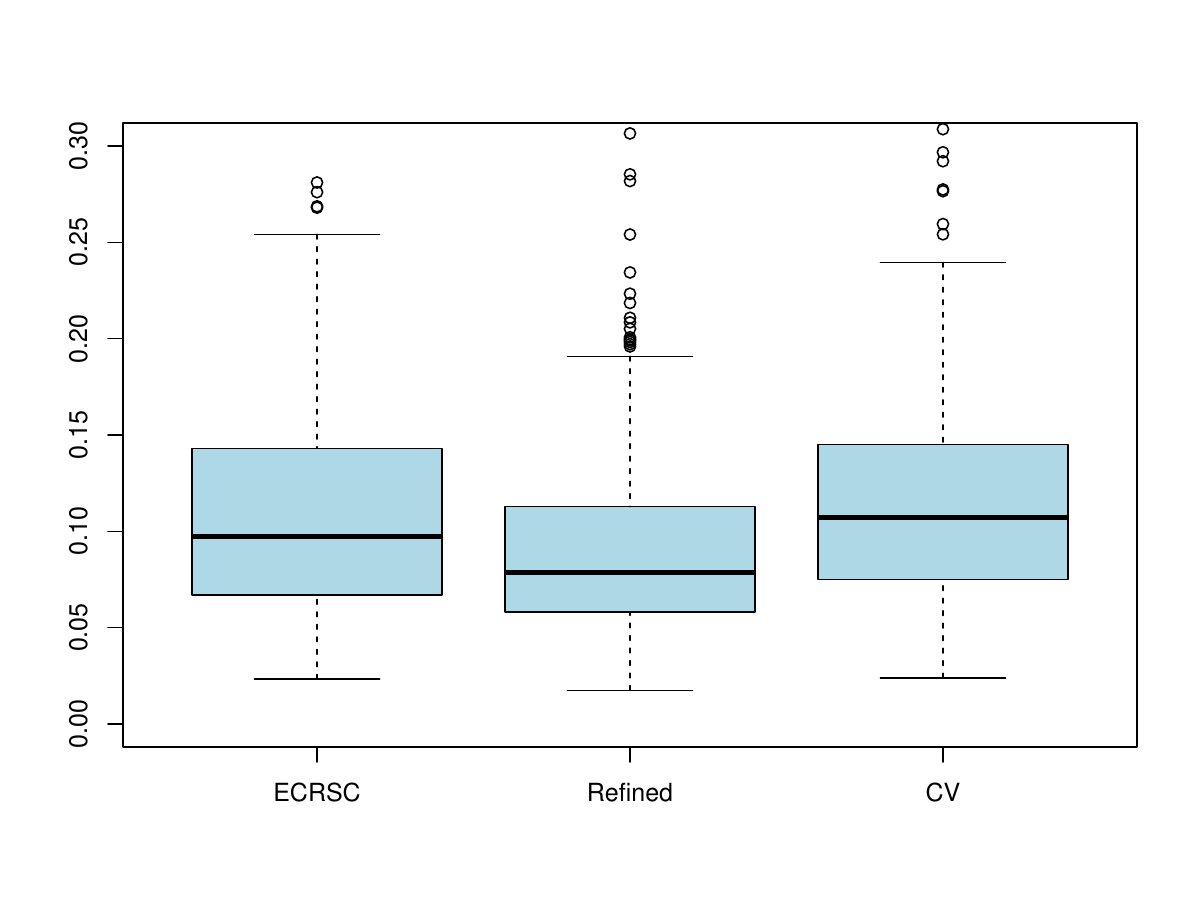}}
	\hfill
	\subfloat[ $n=100$]{
		\includegraphics[width=0.3\textwidth]{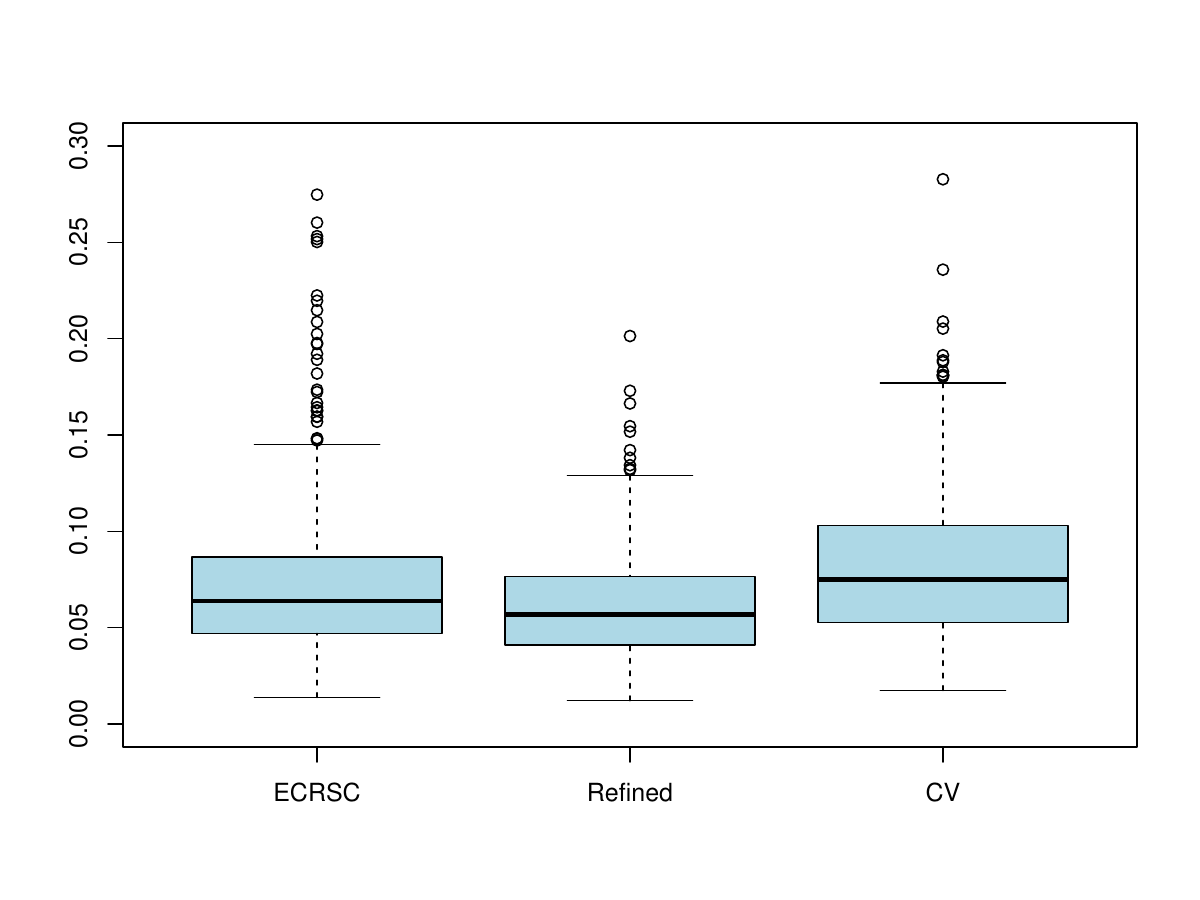}}
	\hfill
	\subfloat[$n=250$]{
		\includegraphics[width=0.3\textwidth]{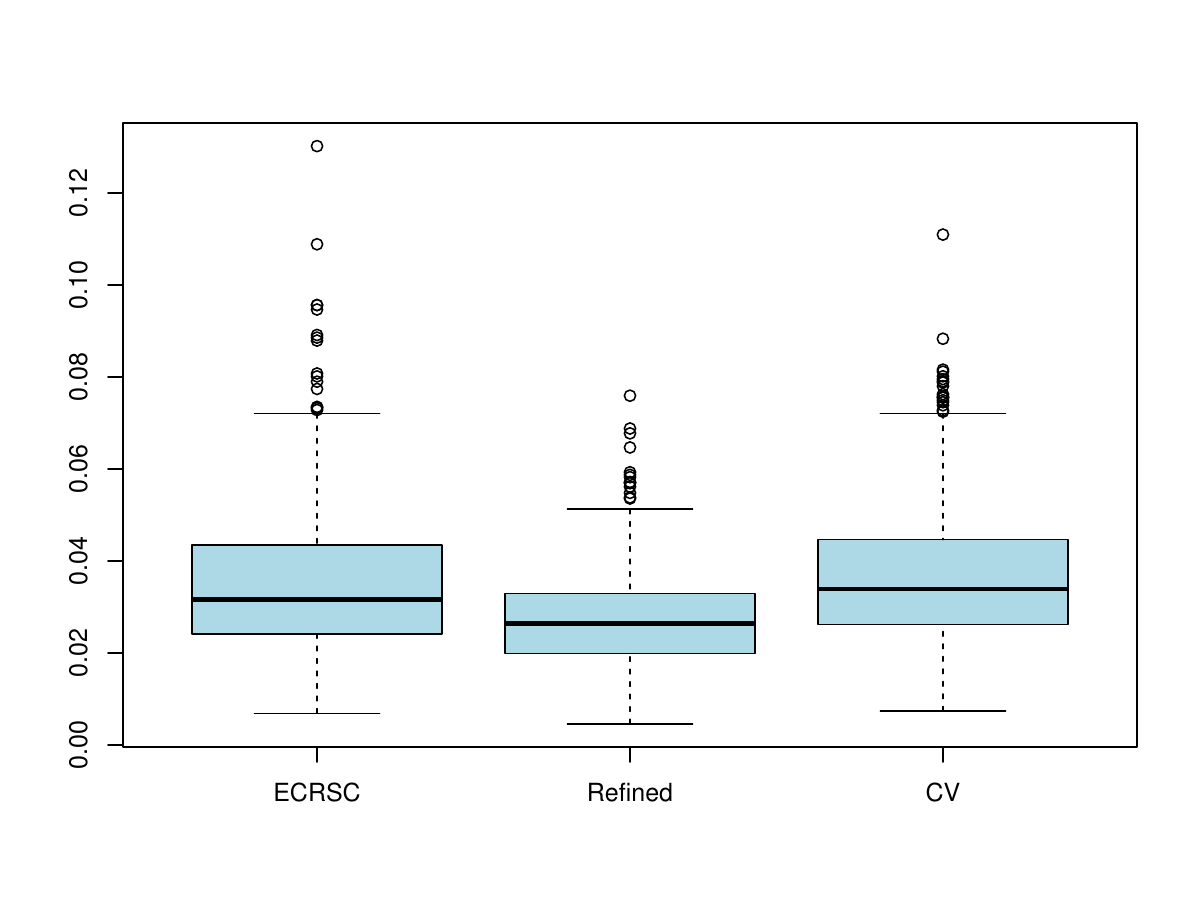}}

	\vspace{-0.75cm}
	\bigskip

	\hspace{1.75cm}
	\subfloat[$n=500$]{
		\includegraphics[width=0.3\textwidth]{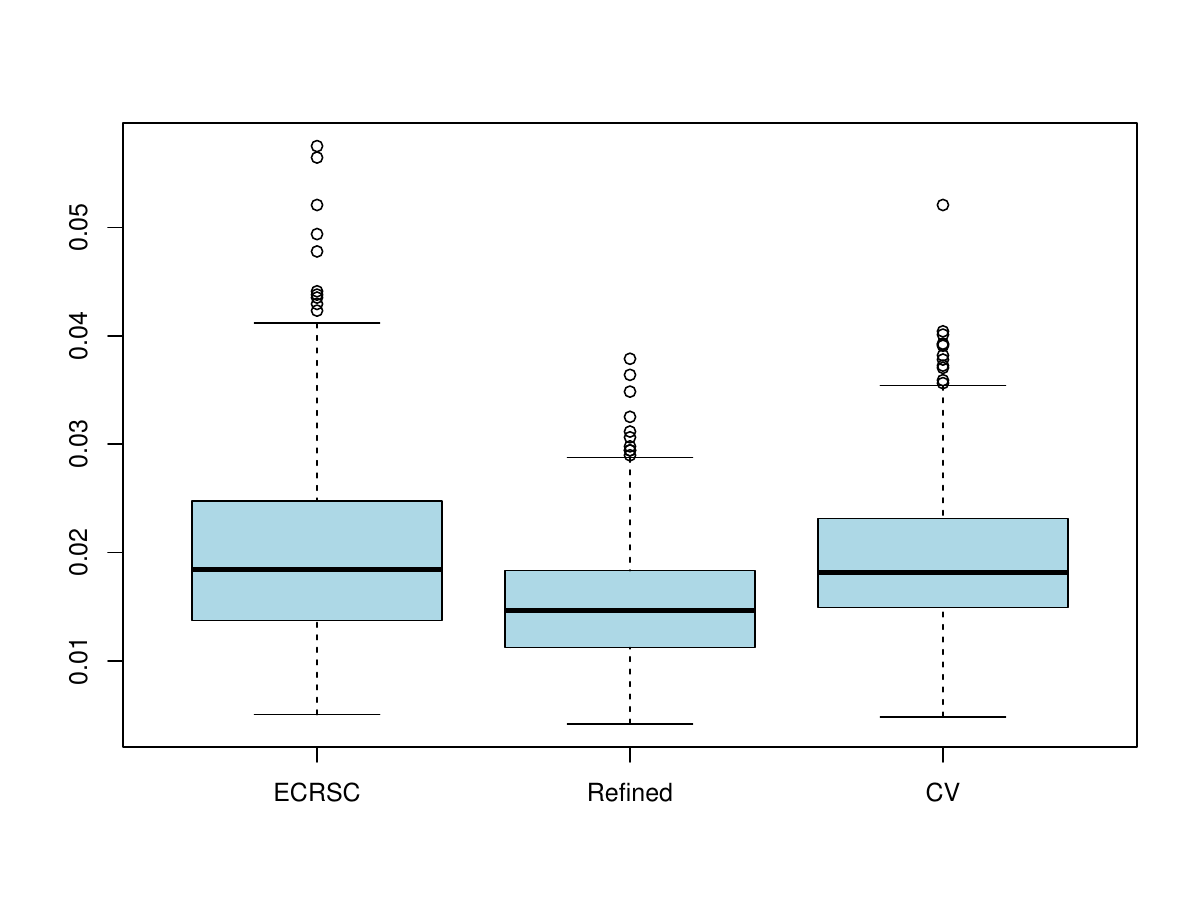}}
	\hfill
	\subfloat[$n=1500$]{
		\includegraphics[width=0.3\textwidth]{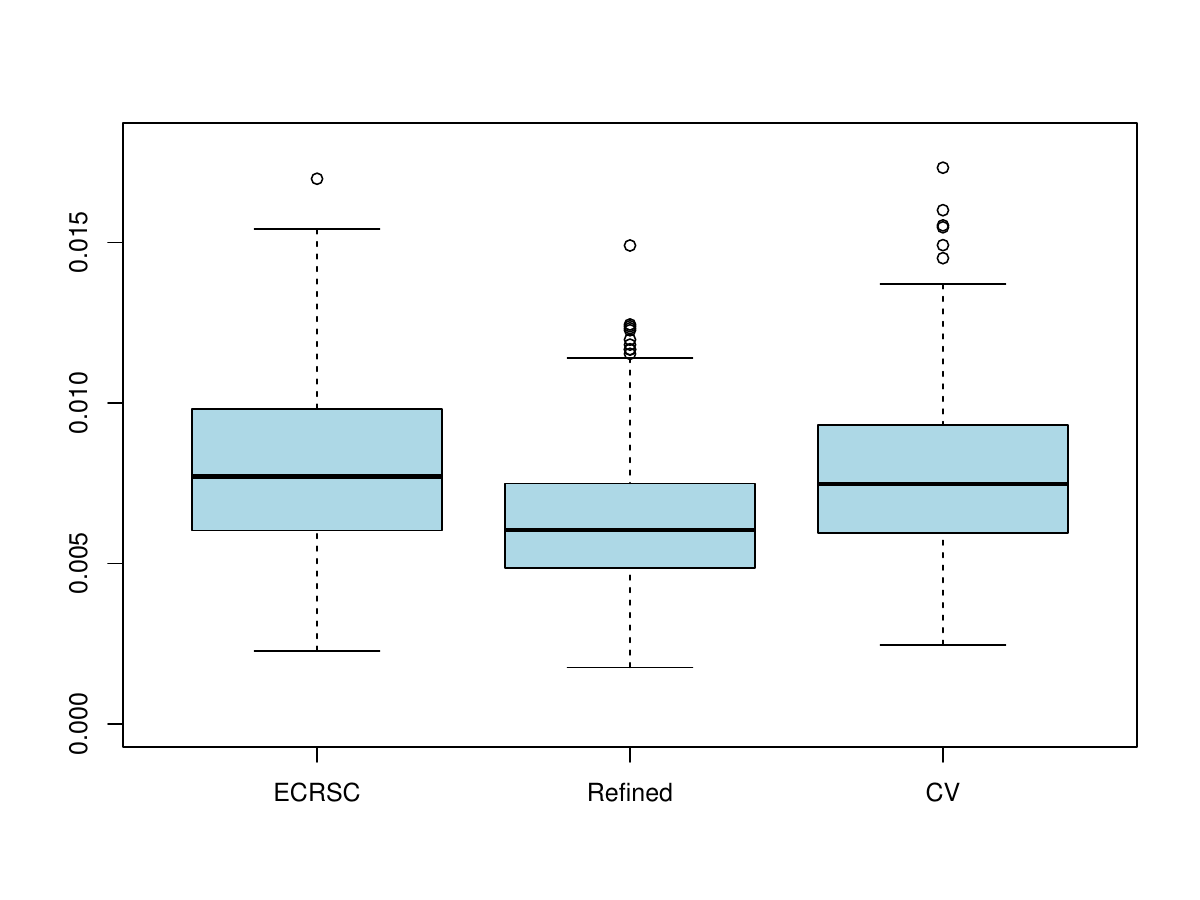}}
	\hspace{1.75cm}
	
	\caption{Boxplots of the estimated ISE for model G1 with the ECRSC, refined rule and cross-validation.  }
	\label{fig:simus_Gamma_ISE}
\end{figure}

\begin{figure}[!h]
	\centering
	\subfloat[ $n=70$]{
		\includegraphics[width=0.3\textwidth]{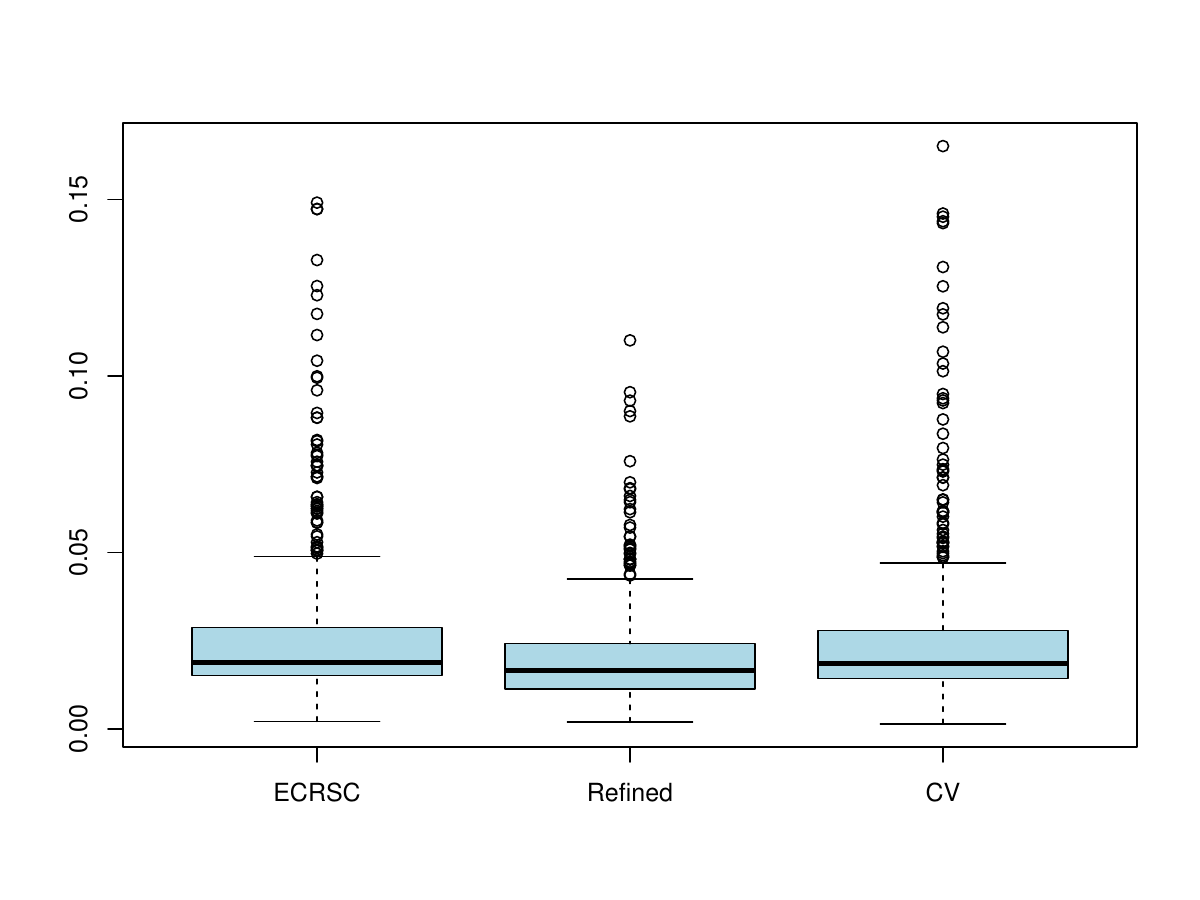}}
	\hfill
	\subfloat[ $n=100$]{
		\includegraphics[width=0.3\textwidth]{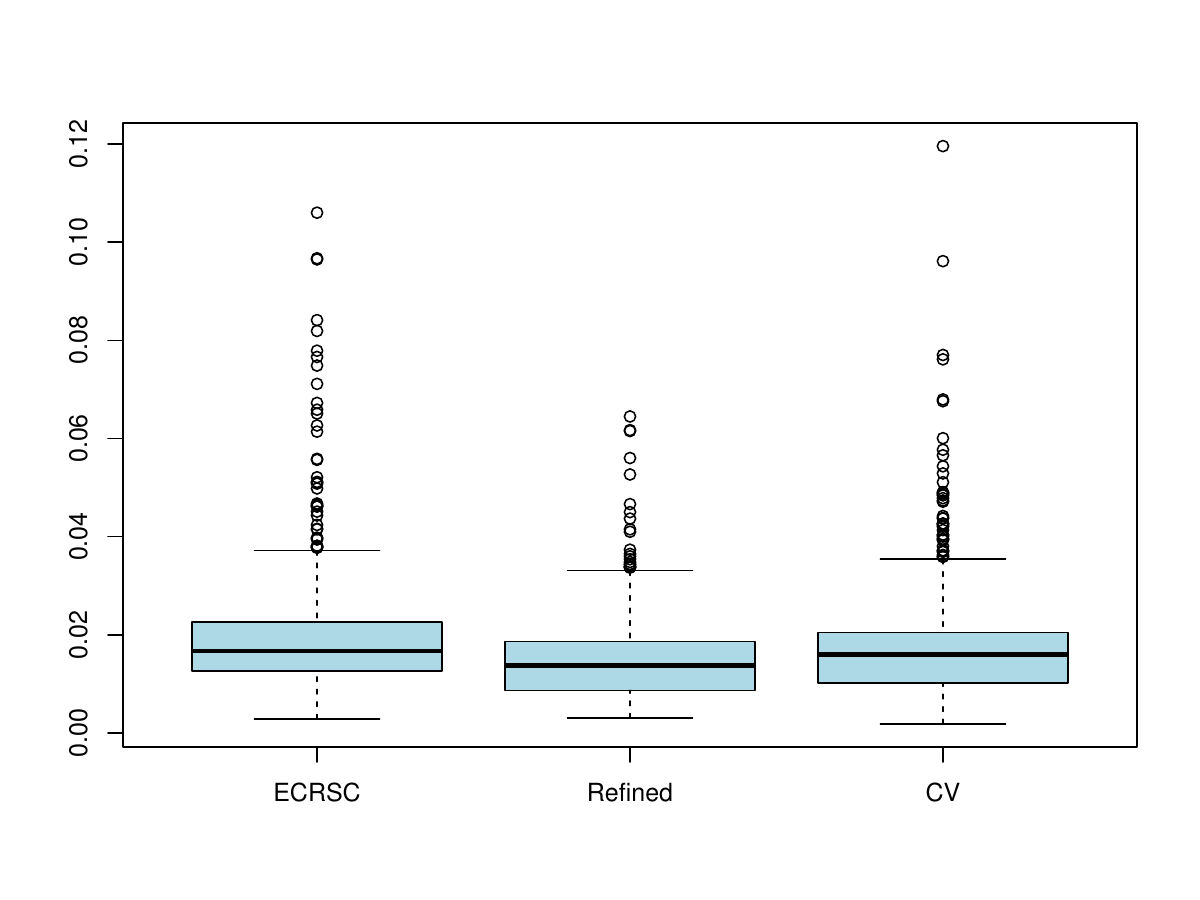}}
	\hfill
	\subfloat[$n=250$]{
		\includegraphics[width=0.3\textwidth]{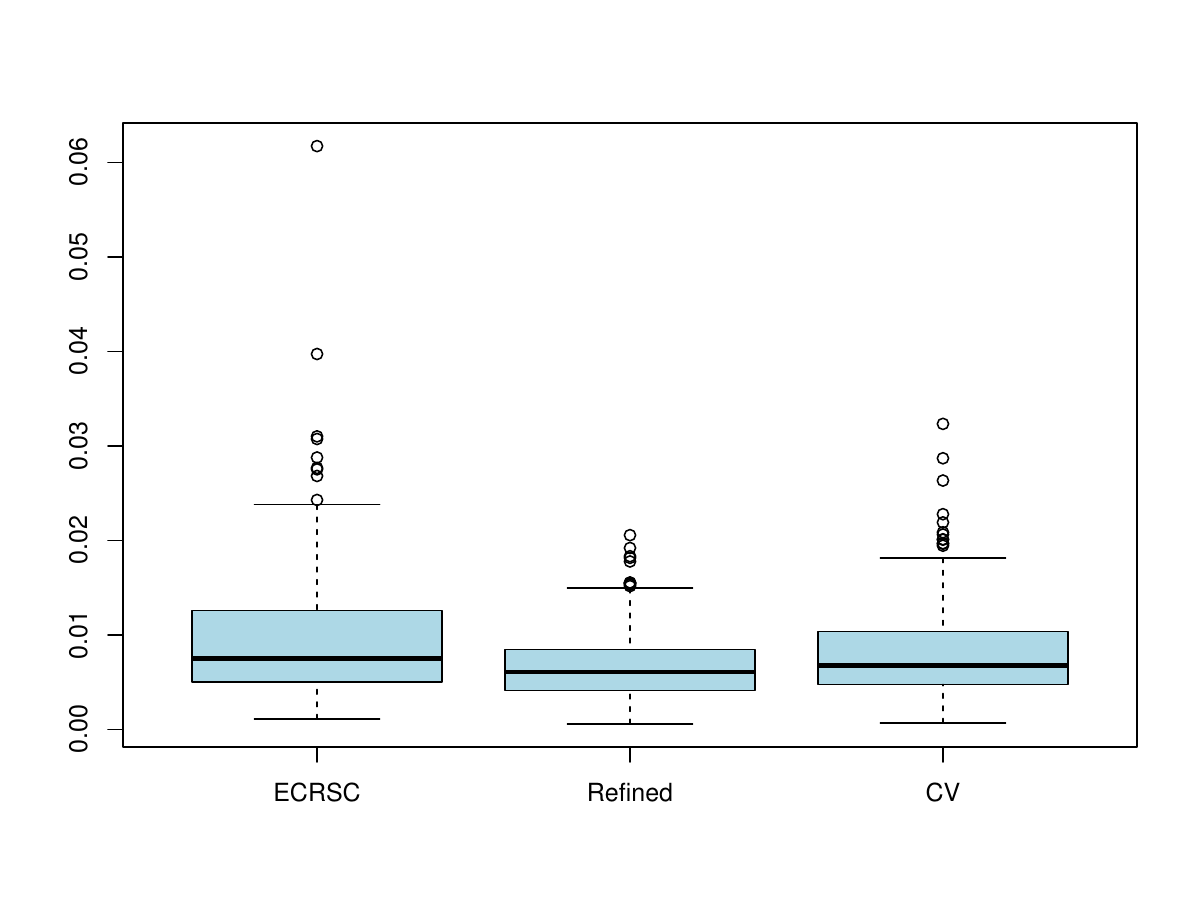}}

	\vspace{-0.75cm}
	\bigskip

	\hspace{1.75cm}
	\subfloat[$n=500$]{
		\includegraphics[width=0.3\textwidth]{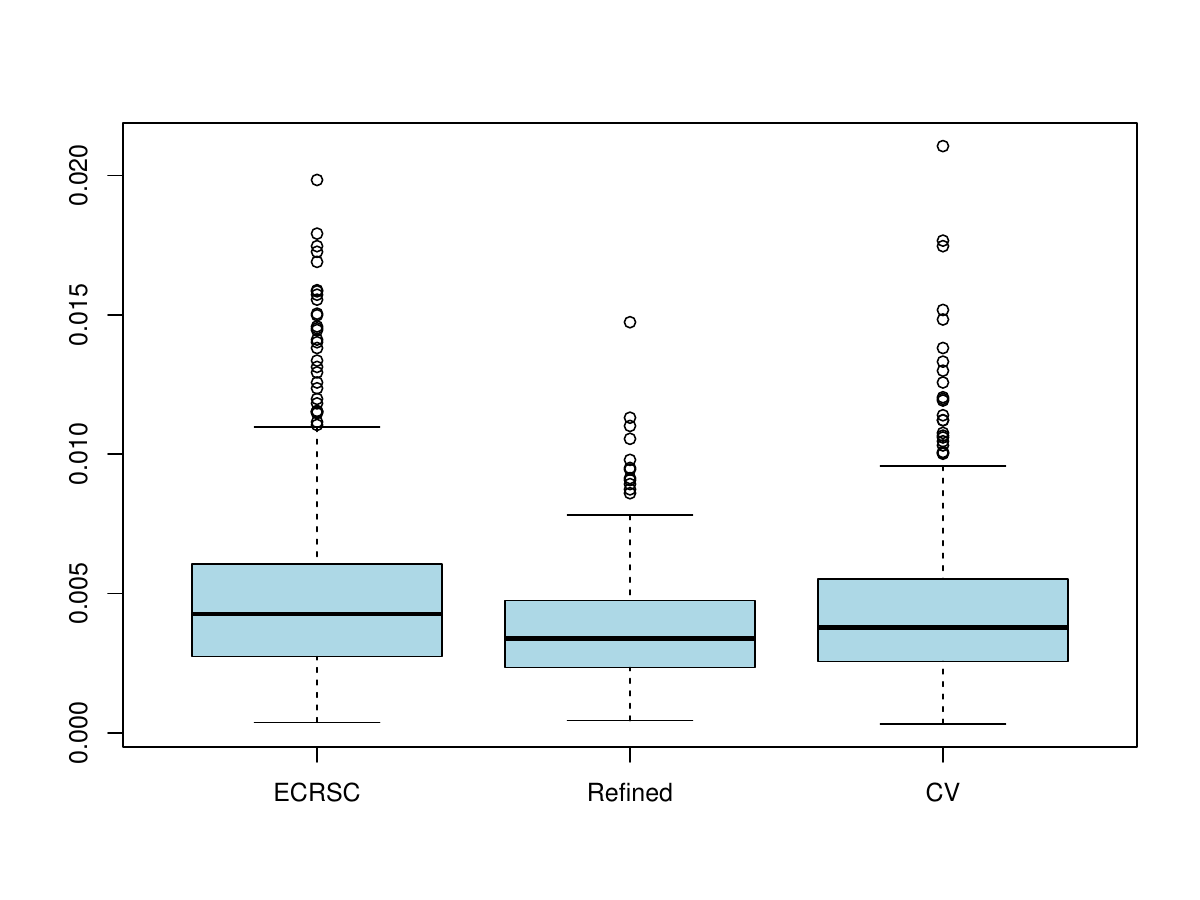}}
	\hfill
	\subfloat[$n=1500$]{
		\includegraphics[width=0.3\textwidth]{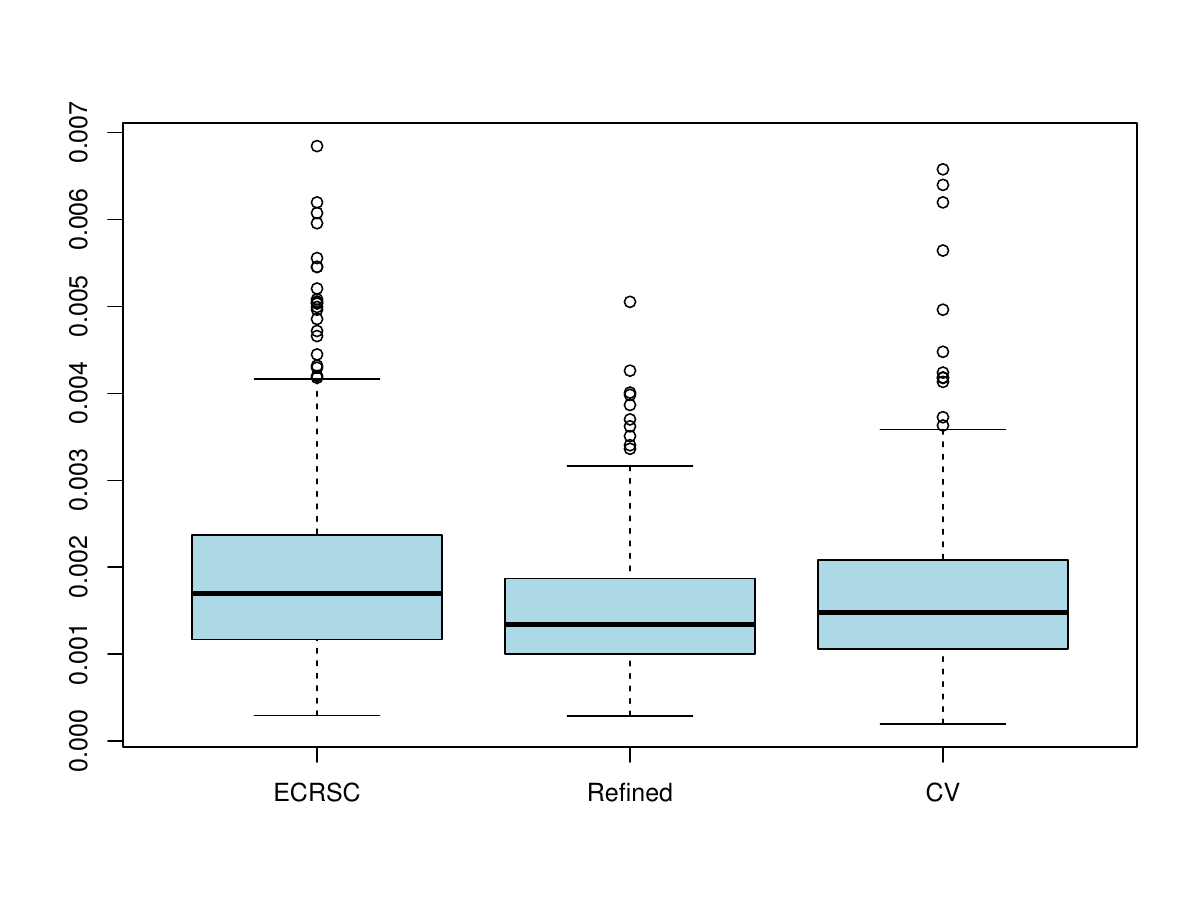}}
	\hspace{1.75cm}
	
	\caption{Boxplots of the estimated ISE for model G2 with the ECRSC, refined rule and cross-validation.  }
	\label{fig:simus_Gamma_ISE2}
\end{figure}

\end{document}